\newcolumntype{Z}{>{\setbox0=\hbox\bgroup}c<{\egroup}@{\hspace*{-\tabcolsep}}}
\titlespacing*\section{0pt}{6pt plus 2pt minus 2pt}{0pt plus 2pt minus 2pt}
\titlespacing*\subsection{0pt}{4pt plus 2pt minus 2pt}{0pt plus 2pt minus 2pt}
\titlespacing*\subsubsection{0pt}{2pt plus 2pt minus 2pt}{0pt plus 2pt minus 2pt}
\newcommand{\R}{\mathbb{R}}
\newcommand{\N}{\mathbb{N}}
\def\bftau{\mathbb{\pmb{\tau}}}
\def\bftheta{\mathbb{\pmb{\theta}}}
\newtheorem{thm}{Theorem}[section]
\newtheorem{lemma}{Lemma}[section]
\newtheorem{ass}{Assumption}[section]
\newenvironment{thmbis}[1]
{\renewcommand{\theass}{\ref{#1}$^*$}%
	\addtocounter{ass}{-1}%
	\begin{ass}}
	{\end{ass}}
\newtheorem{pro}{Proposition}[section]
\newtheorem{defn}{Definition}[section]
\DeclareMathOperator*{\argmax}{arg\,max}
\date{\vspace{-5ex}}
\newcommand{\blind}{1}
\newcommand*{\addFileDependency}[1]{
  \typeout{(#1)}
  \@addtofilelist{#1}
  \IfFileExists{#1}{}{\typeout{No file #1.}}
}
\newcommand*\showfontsize{\f@size{} point}
\newcommand*{\myexternaldocument}[1]{%
    \externaldocument{#1}%
    \addFileDependency{#1.tex}%
    \addFileDependency{#1.aux}%
}
\begin{document}
\setlength{\abovedisplayskip}{4pt}
\setlength{\belowdisplayskip}{4pt}
\setlength{\abovedisplayshortskip}{2pt}
\setlength{\belowdisplayshortskip}{2pt}

\if1\blind
{
	\title{Segmenting Time Series via Self-Normalization}
	\author[1]{Zifeng Zhao}
    \author[2]{Feiyu Jiang}
    \author[3]{Xiaofeng Shao}
    
    \affil[1]{Mendoza College of Business, University of Notre Dame}
    \affil[2]{Department of Statistics and Data Science, Fudan University}
    \affil[3]{Department of Statistics, University of Illinois at Urbana Champaign}
		
	\date{}	
	\maketitle
	
} \fi

\if0\blind
{
	\title{Segmenting Time Series via Self-Normalization}	
	\author{}
	\date{}
	\maketitle
} \fi
\vspace{5em}

\begin{abstract}
We propose a novel and unified framework for change-point estimation in multivariate time series. The proposed method is fully nonparametric, robust to temporal dependence and avoids the demanding consistent estimation of long-run variance. One salient and distinct feature of the proposed method is its versatility, where it allows change-point detection for a broad class of parameters (such as mean, variance, correlation and quantile) in a unified fashion. At the core of our method, we couple the self-normalization (SN) based tests with a novel nested local-window segmentation algorithm, which seems new in the growing literature of change-point analysis. Due to the presence of an inconsistent long-run variance estimator in the SN test, non-standard theoretical arguments are further developed to derive the consistency and convergence rate of the proposed SN-based change-point detection method. Extensive numerical experiments and relevant real data analysis are conducted to illustrate the effectiveness and broad applicability of our proposed method in comparison with state-of-the-art approaches in the literature.
\end{abstract}

\noindent\textit{Keywords}: Binary segmentation; Change-point detection; Scanning; Studentization; Long-run variance; Temporal dependence

\section{Introduction}\label{sec:intro}
Change-point detection has been identified as one of the major challenges for modern data applications~\citep{Council2013}. There is a vast literature on change-point estimation and testing in statistics, in part due to its broad applications in bioinformatics, climate science, economics, finance, genetics, medical science, and signal processing among many other areas. See \cite{Csoergoe1997}, \cite{brodsky2013nonparametric} and \cite{tartakovsky2014sequential} for book-length treatments of the subject. We also refer to \cite{aue:13}, \cite{Casini2019structural} and \cite{truong2020} for excellent reviews.

In this paper, we study the problem of time series segmentation, also known as (offline) change-point estimation, where the task is to partition a sequence of potentially non-homogeneous ordered observations into piecewise homogeneous segments. Many change-point problems arise within a time series context~(e.g.\ climate, epidemiology, economics and financial data), where there is a natural temporal ordering in the observations. Although temporal dependence  is the norm rather than the exception for time series, most literature in change-point analysis assume and require independence of observations $\{Y_t\}_{t=1}^{n}$ over time for methodological and theoretical validity; see for example \cite{Olshen2004}, \cite{Killick2012}, \cite{Matteson2014}, \cite{Fryzlewicz2014}, and \cite{baranowski2019narrowest} among others. One stream of literature addresses temporal dependence via the assumption of parametric models, see \cite{Davis2006} and \cite{Yau2016} for change-point detection in AR process and \cite{Fryzlewicz2014a} in ARCH process. However, parametric approaches generally require stronger conditions and potential violation of parametric assumptions can inevitably cast doubts on the estimation result.

Existing nonparametric approaches for change-point estimation in temporally dependent observations primarily focus on first or second-order moments, see \cite{Bai1998}, \cite{Eichinger2018} for change-point estimation in mean, \cite{Aue2009}, \cite{Preuss2015} in (auto)-covariance, {and \cite{Cho2012}, \cite{Casini2021change} in spectral density function (thus second-order properties)}. However, for many applications, the key interest can go beyond mean or covariance. For example, detecting potential changes in extreme quantiles is critical for monitoring systemic risk~(i.e.\ Value-at-Risk) in finance and for studying evolving behavior of severe weather systems such as hurricanes in climate science. Moreover, existing nonparametric methods are mostly designed for detecting only one specific type of change~(e.g.\ mean or variance) and cannot be universally used for examining changes in different aspects of the data, which may limit its applications and cause inconvenience of implementation for practitioners. Additionally, existing nonparametric procedures typically involve certain tuning or smoothing parameters, such as the bandwidth parameter involved in the consistent estimation of the long-run variance, and how to choose these tuning parameters is important yet highly challenging in practice.  

To fill in the gap in the literature, we propose a new multiple change-point estimation framework that is fully nonparametric, robust to temporal dependence, enjoys effortless tuning, and works universally for various parameters of interest for a multivariate time series $\{Y_t\}_{t=1}^n$ where $Y_t \in \mathbb{R}^p$ with a fixed dimension $p\geq 1$. Specifically, denote $F_t$ as the cumulative distribution function (CDF) of $Y_t$, the proposed procedure allows change-point detection for any $\theta$ such that $\theta=\theta(F_t)$, where $\theta(\cdot)$ is a functional that takes value in $\mathbb{R}^d$ with $d\geq 1.$ This is a broad framework that covers important quantities such as mean, variance, quantile, (auto)-correlation and (auto)-covariance among others, see \cite{Kunsch1989} and \cite{shao2010self}.

As in the standard change-point literature, we assume the change happens in a piecewise constant fashion. Specifically, we assume $\{Y_t\}_{t=1}^n$ is a piecewise stationary time series and there exist $m_o\geq 0$ unknown number of change-points $0 <k_1<\cdots<k_{m_o} <n$ that partition $\{Y_t\}_{t=1}^n$ into $m_o+1$ stationary segments. Define $k_0=0$ and $k_{m_o+1}=n$, the $i$th segment contains stationary observations $\{Y_t\}_{t=k_{i-1}+1}^{k_i}$ that share common behavior characterized by $\theta_i$~(e.g.\ mean, variance, correlation, quantile), where we require $\theta_i\not= \theta_{i+1}$ for $i=1,\cdots,m_o$ due to the structural break. Our primary interest is to recover the unknown number and locations of the change-points. 

To achieve broad applicability and robustness against temporal dependence, our proposed multiple change-point estimation method is built upon self-normalization (SN, hereafter), a nascent inference technique for time series~\citep{shao2010self,shao2015self}. We note that since its first proposal in \cite{shao2010self}, SN has been extended to retrospective change-point testing by \cite{shao2010testing}, \cite{Hoga2018}, \cite{Betken2018},  \cite{Zhang2018}, and \cite{Dette2020b}, and to sequential change-point monitoring by \cite{dette2020a} and \cite{Chan2021}. However, the primary focus of these papers is to construct SN-based change-point testing procedures (either retrospective or sequential) but not change-point estimation. Compared to change-point testing, change-point estimation is a much more challenging task both methodologically and theoretically: it further requires the estimation of the unknown number and locations of change-points, which involves substantially different techniques and analysis. 

Indeed, the use of SN for time series segmentation (i.e.\ multiple change-point estimation) seems largely unexplored, with the exception of \cite{jiang2020time, jiang2022modelling} for piecewise linear and quantile trend models designed for COVID-19 time series. One notable reason for the scarcity of SN-based time series segmentation algorithms is that, unlike the classical CUSUM-based change-point test, the SN-based change-point testing cannot be easily extended to multiple change-point estimation by combining with the standard binary segmentation algorithm~\citep{vostrikova1981detecting}. Such a combination simply fails due to the potential inflation of the self-normalizer under the presence of multiple change-points. We discuss this point in more details later in Section \ref{sec:multiple} and provide further illustration via both theory and numerical experiments in Section \ref{sec: bs_powerloss} of the supplementary material.

To bypass this difficulty, we propose a novel nested local-window segmentation algorithm, which is then combined with an SN test to achieve multiple change-point estimation. We name the procedure SNCP. Through a series of carefully designed nested local-windows, the proposed procedure can isolate each true change-point adaptively and thus achieves respectable detection power and estimation accuracy. The statistical and computational efficiency of the nested local-window segmentation algorithm is further illustrated via extensive numerical comparison with popular segmentation algorithms such as SaRa in \cite{Niu2012}, WBS in \cite{Fryzlewicz2014} and SBS in \cite{Kovacs2020}.

In addition to methodological advances, new theoretical arguments based on the partial influence functions~\citep{pires2002partial} are further developed to establish the consistency and convergence rate of the proposed change-point estimation procedure, which seems to be the first in the SN literature. The proof is non-standard and built on a subtle analysis of the behavior of SN-based test statistic around change-points. It differs from existing techniques in the change-point literature due to the presence of the self-normalizer~(an inconsistent long-run variance estimator) and is of independent interest.

To our best knowledge, the proposed method~(SNCP) is the first to address multiple change point estimation for a general parameter in the time series setting. One salient and distinct feature of SNCP is its versatility: it allows the user to examine potential change in virtually any parameter of interest in an effortless fashion. This is valuable as in practice, the ground truth is unknown and it is important to examine the behavior change of the data via different angles. In addition, due to its versatility and robustness to temporal dependence, SNCP can serve as a numerically credible and theoretically valid benchmark for almost all algorithms designed for multiple change-point estimation in a fixed-dimensional time series, which is of interest to both practical applications and academic research.

The rest of the paper is organized as follows. We first provide background of SN and introduce the SN-based detection method for single change-point estimation in  Section~\ref{sec:single}. Building upon a novel nested local-window segmentation algorithm, Section \ref{sec:multiple} proposes a unified SN-based framework~(SNCP) for multiple change-point estimation and further studies its theoretical properties. Extensive numerical experiments are conducted in Section \ref{sec:simulation} to demonstrate the promising performance of SNCP when compared with state-of-the-art methods for change-point estimation in mean, variance, quantile of univariate time series, and correlation and covariance matrix of multivariate time series. Section \ref{sec:conclusion} concludes. Technical  proofs and additional simulation and real data application results can be found in the supplement.

Some notations used throughout the paper are defined as follows. Let $D[0,1]$ denote the space of functions on $[0,1]$ which are right continuous with left limits, endowed with the Skorokhod topology \citep{Billingsley1968}. We use $\Rightarrow$ to denote weak convergence in $D[0,1]$ or more generally in $\R^m$-valued function space $D^m[0,1]$, where $m\in \N$. We use $\overset{\mathcal{D}}{\longrightarrow}$ to denote convergence in distribution. We use $\|\cdot\|_2$ to denote the $l_2$ norm of a vector and use $\|\cdot\|$ to denote the spectral norm of a matrix.

\section{Single Change-point Estimation}\label{sec:single}
In this section, we provide some background on the SN test and propose an SN test based method for single change-point estimation, which serves as a building block for the proposed multiple change-point estimation procedure in Section \ref{sec:multiple}. Model assumptions and consistency results are discussed in details to provide intuition and foundation for more involved results in Section \ref{sec:multiple}. For ease of presentation, in the following we assume $d=1$, in other words, the parameter of interest $\theta$ is univariate, and postpone the results for the multivariate case of $d>1$ to Section \ref{subsec:vector}.

\subsection{An SN-based estimation procedure}\label{subsec:general}
We start with single change-point estimation in a general parameter $\theta=\theta(F_{t})$  for a univariate time series $\{Y_t\}_{t=1}^n$, where $F_t$ denotes the CDF of $Y_t$ and $\theta(\cdot)$ is a general functional. Under the no change-point scenario, $\{Y_t\}_{t=1}^n$ is a stationary time series. Under the single change-point alternative, we follow the framework of \cite{dette2020a} and assume $\{Y_t\}_{t=1}^n$ is generated by
\begin{align}\label{eq:generalcase}
	Y_t=\begin{cases}
		Y_t^{(1)},&1\le t\le k_1\\
		Y_t^{(2)},& k_1+1\le t\le n,\\
	\end{cases}
\end{align}
where $\{Y_t^{(i)}\}_{t\in\mathbb{Z}}$ is a stationary time series with $Y_t^{(i)}\sim F^{(i)}$ for $i=1,2$. Thus we have $F_t=F^{(1)}\mathbf{1}(t\leq k_1)+F^{(2)}\mathbf{1}(t> k_1)$. Denote $\theta_1=\theta(F^{(1)})$ and $\theta_2=\theta(F^{(2)})$, we have $\delta=\theta_2-\theta_1\neq 0$ and the change-point $k_1=\lfloor n\tau_1\rfloor$ with $\tau_1\in(0,1)$. Note that the dependence between $\{Y_t^{(1)}\}$ and $\{Y_t^{(2)}\}$ is deliberately left unspecified, as the validity of the proposed method does not rely on the specification of the dependence~(see Assumption \ref{ass_influence}(i) for more details).

To detect the existence and further estimate the location of the (potential) single change-point $k_1=\lfloor n\tau_1\rfloor$, we propose an SN-based testing approach. Specifically, we define
\begin{align}\label{statgeneral}
	SN_n= \max_{k=1,\cdots,n-1}T_n(k),  \quad T_n(k)= D_n(k)^2/V_{n}(k),
\end{align}
where
\begin{flalign}\label{statistic}
	\begin{split}
		D_n(k)=&\frac{k(n-k)}{n^{3/2}}(\widehat{\theta}_{1,k}-\widehat{\theta}_{k+1,n}),\\
		V_{n}(k)=&\sum_{i=1}^{k}\frac{i^2(k-i)^2}{n^2k^2}(\widehat{\theta}_{1,i}-\widehat{\theta}_{i+1,k})^2+\sum_{i=k+1}^{n}\frac{(n-i+1)^2(i-k-1)^2}{n^2(n-k)^2}(\widehat{\theta}_{i,n}-\widehat{\theta}_{k+1,i-1})^2,
	\end{split}
\end{flalign}
and for any $1\leq a<b\leq n$, $\widehat{\theta}_{a,b}=\theta(\widehat{F}_{a,b})$ where $\widehat{F}_{a,b}$ is the empirical distribution of $\{Y_t\}_{t=a}^b$. In other words, $\widehat{\theta}_{a,b}$ denotes the nonparametric estimator of $\theta$ based on the subsample $\{Y_t\}_{t=a}^b$.

When $\theta(\cdot)$ is the mean functional, i.e., $\theta(F_t)=\int x F_t(dx)$, the newly defined contrast-based test $SN_n$ in \eqref{statgeneral} reduces to the CUSUM-based SN test statistic in \cite{shao2010testing} (cf. equation (4) therein). 
However, for a nonlinear functional $\theta(\cdot)$, such as variance, correlation and quantile, $SN_n$ is not equivalent to the CUSUM-based counterpart and is preferred due to its contrast nature. We refer to \cite{Zhang2018} for more discussion.

Built upon the test statistic defined in \eqref{statgeneral}, the SN-based change-point detection procedure proceeds as follows. For a pre-specified threshold $K_n$, we declare no change-point if $SN_n\leq K_n$. Given that $SN_n$ exceeds the threshold, we estimate the single change-point location via
$$\widehat{k}=\arg\max_{k=1,\cdots,n-1}T_n(k).$$

This SN-based procedure provides a general and unified change-point estimation framework, as it can be implemented for any functional $\theta(\cdot)$ with a nonparametric estimator based on the empirical distribution.

\subsection{Assumptions and theoretical results}\label{subsec:theory_SCP}

To establish the consistency of the SN-based estimation procedure under the general functional setting \eqref{eq:generalcase}, the key is to track the asymptotic behavior of $\widehat{\theta}_{a,b}$ for $1\leq a < b\leq n$. To achieve this, we operate under the framework of approximately linear functional, which covers important quantities such as mean, variance, covariance, correlation and quantile~\citep{Kunsch1989,shao2010self}.

Specifically, we assume the subsample estimator $\widehat{\theta}_{a,b}$ admits the following expansion on the stationary time series $\{Y_t^{(i)}\}, i=1,2,$ where
\begin{flalign}\label{inf1}
	\begin{split}
		\widehat{\theta}_{a,b}=&\theta_1+\frac{1}{b-a+1}\sum_{t=a}^{b}\xi_1(Y_t^{(1)})+r^{(1)}_{a,b},\quad \text{for}~b\leq k_1,\\
		\widehat{\theta}_{a,b}=&\theta_2+\frac{1}{b-a+1}\sum_{t=a}^{b}\xi_2(Y_t^{(2)})+r^{(2)}_{a,b},\quad \text{for}~a>k_1.
	\end{split}
\end{flalign}
In other words, $\widehat{\theta}_{a,b}$ is approximately linear when the subsample $\{Y_t\}_{t=a}^b$ is stationary. Note that $\xi_1(Y_t^{(1)})$ and $\xi_2(Y_t^{(2)})$ are indeed the influence functions of the functional $\theta(\cdot)$~\citep{Hampel1986}, which is the leading term for asymptotic behavior of $\widehat{\theta}_{a,b}$, and $r^{(1)}_{a,b}, r^{(2)}_{a,b}$ are the remainder terms.

To further regulate the behavior of $\widehat{\theta}_{a,b}$ when the subsample $\{Y_t\}_{t=a}^b$ is a mixture of two stationary segments, we utilize the concept of \textit{partial influence functions} originated from the robust statistics literature~\citep{pires2002partial}. Specifically, for $a\leq k_1<b$, we assume
\begin{align}\label{inf3}
	\widehat{\theta}_{a,b}=&\theta(\omega_{a,b})+\frac{1}{b-a+1}\left[\sum_{t=a}^{k_1}\xi_1(Y_t^{(1)},\omega_{a,b})+\sum_{t=k_1+1}^{b}\xi_2(Y_t^{(2)},\omega_{a,b})\right]+r_{a,b}(\omega_{a,b}),
\end{align}
where $\omega_{a,b}=\left(\omega_{a,b}^{(1)},\omega_{a,b}^{(2)}\right)^\top=\left(\frac{k_1-a+1}{b-a+1},\frac{b-k_1}{b-a+1}\right)^{\top}$ denotes the proportion of each stationary segment in  $\{Y_t\}_{t=a}^b$, $\theta(\omega_{a,b})$ denotes $\theta(\cdot)$ evaluated at the mixture distribution  $F^{\omega_{a,b}}=\omega_{a,b}^{(1)}F^{(1)}+\omega_{a,b}^{(2)}F^{(2)}$ and $r_{a,b}(\omega_{a,b})$ is the remainder term. The terms $\xi_1(Y_t^{(1)},\omega_{a,b})$ and $\xi_2(Y_t^{(2)},\omega_{a,b})$ are related to the partial influence functions of the functional $\theta(\cdot)$ evaluated at the mixture distribution $F^{\omega_{a,b}}$. See detailed discussion later.

Note that the expansion \eqref{inf3} generalizes \eqref{inf1} under the single change-point scenario. Specifically, define $\omega_{a,b}=(1,0)^{\top}$ and $(0,1)^{\top}$ for $b\leq k_1$ and $a>k_1$ respectively, \eqref{inf1} can be viewed as a special case of \eqref{inf3} where the mixture distribution is \textit{pure} such that $\xi_1(Y_t^{(1)})=\xi_1(Y_t^{(1)},(1,0)^\top)$, $r_{a,b}^{(1)}=r_{a,b}((1,0)^\top)$ and $\xi_2(Y_t^{(2)})=\xi_2(Y_t^{(2)},(0,1)^\top)$, $r_{a,b}^{(2)}=r_{a,b}((0,1)^\top)$ respectively.

We now work out the explicit formulation of the expansion \eqref{inf3} under the framework of partial influence function~\citep{pires2002partial}. Denote the mixture weight $\omega=(\omega^{(1)},\omega^{(2)})^{\top}$ such that $\omega^{(i)}\in [0,1]$, $i=1,2$ and $\omega^{(1)}+\omega^{(2)}=1$. Denote $\theta(\omega,F^{(1)},F^{(2)}):=\theta(\omega^{(1)} F^{(1)}+\omega^{(2)}F^{(2)})$ as the functional $\theta(\cdot)$ evaluated at the mixture $F^{\omega}:=\omega^{(1)} F^{(1)}+\omega^{(2)}F^{(2)}$. Definition \ref{def:PIF} defines the partial influence function as in \cite{pires2002partial}.
\begin{defn}\label{def:PIF}
	The partial influence functions of the functional $\theta(F^\omega)=\theta(\omega,F^{(1)},F^{(2)})$ with relation to $F^{(1)}$ and $F^{(2)}$, respectively, are given by
	\begin{flalign*}
		IF_1\left(y,\theta(\omega,F^{(1)},F^{(2)})\right)=&\lim\limits_{\epsilon\to0}\epsilon^{-1}\Big[\theta\left(\omega,(1-\epsilon)F^{(1)}+\epsilon\delta_y,F^{(2)}\right)-\theta(\omega,F^{(1)},F^{(2)})\Big],\\
		IF_2\left(y,\theta(\omega,F^{(1)},F^{(2)})\right)=&\lim\limits_{\epsilon\to0}\epsilon^{-1}\Big[\theta\left(\omega,F^{(1)},(1-\epsilon)F^{(2)}+\epsilon\delta_y\right)-\theta(\omega,F^{(1)},F^{(2)})\Big],
	\end{flalign*}
	provided the limits exist, where $\delta_y$ is the Dirac mass at $y$. 
\end{defn}
\noindent To understand the partial influence functions, define $\zeta=\omega^{(1)}\epsilon$, by Definition \ref{def:PIF}, we have
\begin{flalign*}
	IF_1\left(y,\theta(\omega,F^{(1)},F^{(2)})\right)=&\omega^{(1)}\lim\limits_{\zeta\to0}\zeta^{-1}\Big[\theta\left((\delta_y-F^{(1)})\zeta+F^{\omega}\right)-\theta\left(F^{\omega}\right)\Big]=\omega^{(1)}\xi_1(y,\omega),
\end{flalign*}
where $\xi_1(y,\omega)$ is the G\^ateaux derivative of $\theta\big(F^{\omega}\big)$ in the direction $\delta_y-F^{(1)}$. Similarly, $IF_2\left(y,\theta(\omega,F^{(1)},F^{(2)})\right)=\omega^{(2)}\xi_2(y,\omega)$, where $\xi_2(y,\omega)$ is the G\^ateaux derivative of $\theta\big(F^{\omega}\big)$ in the direction $\delta_y-F^{(2)}$.

To establish the expansion \eqref{inf3}, note that $\widehat{\theta}_{a,b}=\theta(\widehat{F}_{a,b})$, where $\widehat{F}_{a,b}$ denotes the empirical CDF based on the subsample $\{Y_t\}_{t=a}^b$. The key observation is that $\widehat{F}_{a,b}=\omega^{(1)}_{a,b} \widehat{F}_{a,k_1}+\omega^{(2)}_{a,b}\widehat{F}_{k_1+1,b}$ with $\omega_{a,b}=\left(\frac{k_1-a+1}{b-a+1},\frac{b-k_1}{b-a+1}\right)^{\top}$. In other words, $\widehat{F}_{a,b}$ can be viewed as a mixture of two empirical CDFs $\widehat{F}_{a,k_1}$ and $\widehat{F}_{k_1+1,b}$ based on stationary segments with CDF $F^{(1)}$ and $F^{(2)}$ respectively. Thus, by the results in \cite{pires2002partial}, we have  
\begin{flalign*}
	\theta(\widehat{F}_{a,b})=&\theta\left(F^{\omega_{a,b}}\right)+\frac{1}{k_1-a+1}\sum_{t=a}^{k_1}IF_1\left(Y_t^{(1)},\theta(\omega_{a,b},F^{(1)},F^{(2)})\right)\\&+\frac{1}{b-k_1}\sum_{t=k_1+1}^{b}IF_2\left(Y_t^{(2)},\theta(\omega_{a,b},F^{(1)},F^{(2)})\right)+R( \widehat{F}_{a,b}-F^{\omega_{a,b}}),
\end{flalign*}
where $R( \widehat{F}_{a,b}-F^{\omega_{a,b}})$ denotes the remainder term. The expansion \eqref{inf3} follows immediately by substituting the partial influence functions with the G\^ateaux derivatives $\xi_1(y,\omega_{a,b})$ and $\xi_2(y,\omega_{a,b})$. 

We proceed by imposing the following Assumptions \ref{ass_influence}-\ref{ass_theta} on the approximately linear functional $\theta(\cdot)$, which are further verified in Section \ref{sec:verify} of the supplement for {the smooth function model~(including mean, variance, (auto)-covariance, (auto)-correlation) and in Section \ref{sec:verify_quantile} of the supplement for quantile. We refer to Remark 1 in Section \ref{subsec:theory} for more detailed discussion on the verification of assumptions.}

\begin{ass}\label{ass_influence}
	$(\mathrm{i})$ For some $\sigma_1>0$ and $\sigma_2>0$, we have
	$$\frac{1}{\sqrt{n}} \sum_{t=1}^{[nr]}\Big(\xi_1(Y_t^{(1)}), \xi_2(Y_t^{(2)})\Big)\Rightarrow (\sigma_1 B^{(1)}(r), \sigma_2 B^{(2)}(r)), $$
	where  $B^{(1)}(\cdot)$ and $B^{(2)}(\cdot)$ are standard Brownian motions. 
	
	\noindent$(\mathrm{ii})$	$\sup_{k< k_1}\left|\sum_{t=k+1}^{k_1}\xi_1\left(Y_t^{(1)},\omega_{k+1,n}\right)+\sum_{t=k_1+1}^{n}\xi_2\left(Y_t^{(2)},\omega_{k+1,n}\right)\right|=O_p(n^{1/2}),$\\
	\text{\hspace{0.38cm} } $\sup_{k>k_1}\left|\sum_{t=1}^{k_1}\xi_1(Y_t^{(1)},\omega_{1,k})+\sum_{t=k_1+1}^{k}\xi_2(Y_t^{(2)},\omega_{1,k})\right|=O_p(n^{1/2}).$
\end{ass}

\begin{ass}\label{ass_remainder}
	$\sup_{1\leq k\leq n}k|r_{1,k}(\omega_{1,k})|+\sup_{1\leq k\leq n}(n-k+1)|r_{k,n}(\omega_{k,n})|=o_p(n^{1/2}).$
\end{ass}

Assumption \ref{ass_influence} regulates the behavior of the (partial) influence function $\xi_1(\cdot)$ and $\xi_2(\cdot)$. Specifically, Assumption \ref{ass_influence}(i) requires the invariance principle to hold for each stationary segment. Note that the dependence of the two Brownian motions $B^{(1)}(\cdot)$ and $B^{(2)}(\cdot)$ are left unspecified as we do not require a specific dependence structure on $\{Y_t^{(1)}\}$ and $\{Y_{t}^{(2)}\}$. Assumption \ref{ass_influence}(ii) are tailored to regulate $\widehat{\theta}_{a,b}$ estimated on a mixture of two stationary segments. Assumption \ref{ass_remainder} requires that the remainder term is asymptotically negligible and is a commonly used assumption in the SN literature~\citep{shao2010self, shao2015self}.

\begin{ass}\label{ass_theta}
	Denote $\theta(\omega)=\theta(\omega^{(1)} F^{(1)}+\omega^{(2)}F^{(2)})$, where $\omega=(\omega^{(1)},\omega^{(2)})^{\top}$ is the mixture weight with $\omega^{(i)}\in[0,1], i=1,2$ and $\omega^{(1)}+\omega^{(2)}=1$. There exist some constants $0<C_1<C_2<\infty$ such that for any mixture weight $\omega$, we have
	\begin{align*}
		C_1\omega^{(2)}|\theta_1-\theta_2|\leq|\theta_1-\theta(\omega)|\leq C_2\omega^{(2)}|\theta_1-\theta_2| \text{ and }
		C_1\omega^{(1)}|\theta_1-\theta_2|\leq|\theta_2-\theta(\omega)|\leq C_2\omega^{(1)}|\theta_1-\theta_2|.
	\end{align*}
\end{ass}

Assumption \ref{ass_theta} regulates the smoothness of $\theta(\omega)$. Intuitively, it means that the functional $\theta(\cdot)$ can distinguish the mixture distribution $w^{(1)}F^{(1)}+w^{(2)}F^{(2)}$ from $F^{(1)}$ and $F^{(2)}$. For mean functional, we have $\theta(\omega)=\omega^{(1)}\theta_1+\omega^{(2)}\theta_2$, thus we can set $C_1=C_2=1$ as $\theta(\omega)$ is linear in $\omega$. 
\begin{ass}\label{ass_K}
	$n\delta^2\to\infty$ as $n\to\infty$, and $K_n$ satisfies $K_n=(n\delta^2)^{\kappa}$ for some $\kappa\in (\frac{1}{2},1)$.
\end{ass}

Assumption \ref{ass_K} quantifies the asymptotic order of the change size $\delta$ and the threshold $K_n$. Under Assumptions \ref{ass_influence}-\ref{ass_K}, Theorem \ref{thm_onechange} gives the consistency results of the SN-based change-point estimation method for approximately linear functionals.
\begin{thm}\label{thm_onechange}
	$(\mathrm{i})$ Under the no change-point scenario, suppose Assumptions \ref{ass_influence}(i) and \ref{ass_remainder} hold, we have
	$SN_n \overset{\mathcal{D}}{\longrightarrow} G=\sup_{r\in [0,1]}\{B(r)-rB(1)\}^2/V^{}(r),$ where $B(\cdot)$ denotes a standard Brownian motion and $V(r)=\int_0^r[B(s)-(s/r) B(r)]^2 ds + \int_{r}^{1} [B(1)-B(s)-(1-s)/(1-r)\{B(1)-B(r)\}]^2 ds$. \\
	\noindent $(\mathrm{ii})$ Under the one change-point scenario, suppose Assumptions \ref{ass_influence}-\ref{ass_K} hold, we have
	$$\lim\limits_{n\to\infty}P(T_n(\widehat{k})>K_n\quad \text{and}\quad|\widehat{k}-k_1|\leq \iota_n)= 1,$$
	for any sequence $\iota_n$ such that $\iota_n/n \to 0$ and $\iota_n^{-2}\delta^{-2}n\to0$ as $n\to\infty$. 
\end{thm}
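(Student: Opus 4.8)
The plan is to treat the two parts separately, in each case reducing the statistic to functionals of partial sums of the influence functions via the approximately linear expansions \eqref{inf1}--\eqref{inf3}, and then exploiting the scale-invariance of the ratio $T_n(k)=D_n(k)^2/V_n(k)$. The unifying theme is that the self-normalizer $V_n$ carries a factor equal to the long-run variance under the null (which cancels) but becomes inflated by a deterministic $\delta$-signal whenever the split point misses the true change-point.

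For part (i), under the null there is a single stationary regime with one influence function $\xi$ and long-run standard deviation $\sigma$. Writing $S_n(r)=n^{-1/2}\sum_{t=1}^{\lfloor nr\rfloor}\xi(Y_t)$, Assumption \ref{ass_influence}(i) gives $S_n(\cdot)\Rightarrow\sigma B(\cdot)$ in $D[0,1]$. Substituting \eqref{inf1} into \eqref{statgeneral}--\eqref{statistic} and discarding the remainder terms uniformly via Assumption \ref{ass_remainder}, a direct computation shows $D_n(\lfloor nr\rfloor)\Rightarrow\sigma\{B(r)-rB(1)\}$, while each summand of $V_n(\lfloor nr\rfloor)$ collapses, after telescoping the weights, to $\sigma^2\{B(s)-(s/r)B(r)\}^2$ and its time-reversed analogue; integrating gives $V_n(\lfloor nr\rfloor)\Rightarrow\sigma^2 V(r)$ jointly with the numerator. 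The factor $\sigma^2$ then cancels in $T_n$, so $T_n(\lfloor nr\rfloor)\Rightarrow\{B(r)-rB(1)\}^2/V(r)$, and the continuous mapping theorem applied to the supremum over $r$ yields $SN_n\overset{\mathcal{D}}{\longrightarrow}G$. The only delicate point is continuity of the $\sup$-of-ratio functional near the endpoints $r\in\{0,1\}$, where numerator and denominator both degenerate; this is handled as in the existing SN literature.

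For part (ii) I would isolate the two claims. Detection, $T_n(\widehat k)>K_n$, follows from a sharp lower bound on $T_n$ at the truth: expansion \eqref{inf1} gives $\widehat\theta_{1,k_1}-\widehat\theta_{k_1+1,n}=-\delta+O_p(n^{-1/2})$, so $D_n(k_1)^2=\tau_1^2(1-\tau_1)^2 n\delta^2(1+o_p(1))$, whereas both sums defining $V_n(k_1)$ use only within-segment subsample estimators (the first over $i\le k_1$, the second over $i\ge k_1+1$), so $V_n(k_1)=O_p(1)$ by the null-type analysis of part (i) applied separately to each regime. Hence $T_n(\widehat k)\ge T_n(k_1)\asymp n\delta^2\gg(n\delta^2)^{\kappa}=K_n$ by Assumption \ref{ass_K}, establishing the first claim.

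The localization claim is the crux, and the main obstacle, precisely because $V_n$ is an inconsistent (self-normalizing) quantity that must be tracked explicitly rather than replaced by a limit. The strategy is to show that, uniformly over $|k-k_1|>\iota_n$, $T_n(k)<T_n(k_1)$ with probability tending to one, which forces $\widehat k$ into the $\iota_n$-window. Writing $k=k_1+m$, the numerator stays of order $n\delta^2$ (by \eqref{inf3} and Assumption \ref{ass_theta}, $\widehat\theta_{1,k}-\widehat\theta_{k+1,n}$ only contracts by a bounded factor for $m\ll n$). The essential step is a lower bound on the self-normalizer inflation: when the split $k$ sits off the change-point, a window of the subsample estimators entering $V_n(k)$ straddles $k_1$, and by Assumption \ref{ass_theta} their contrasts carry a deterministic signal of order $\delta$; summing the weighted squared contrasts and identifying the dominant contribution gives $V_n(k)\gtrsim (k-k_1)^2\delta^2/n$, with the random and remainder parts being $O_p(1)$ uniformly through Assumptions \ref{ass_influence}(ii) and \ref{ass_remainder}. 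Consequently $T_n(k)\lesssim n\delta^2/\{1+(k-k_1)^2\delta^2/n\}$, and comparison with $T_n(k_1)\asymp n\delta^2$ shows $T_n(k)<T_n(k_1)$ exactly when $(k-k_1)^2\delta^2/n\to\infty$, i.e.\ for $|k-k_1|>\iota_n$ under the rate condition $\iota_n^{-2}\delta^{-2}n\to0$; the complementary condition $\iota_n/n\to0$ legitimizes the ``$m\ll n$'' expansions, with macroscopic $m$ being easier since there $V_n(k)\asymp n\delta^2$ forces $T_n(k)=O_p(1)$. I expect the bulk of the work to lie in making the inflation bound $V_n(k)\gtrsim(k-k_1)^2\delta^2/n$ uniform in $k$ and in separating its deterministic signal from the $O_p(1)$ stochastic fluctuations, which is exactly where the partial-influence-function expansion \eqref{inf3} and the two-sided control in Assumption \ref{ass_theta} do the heavy lifting.
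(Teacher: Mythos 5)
Your proposal is correct and follows essentially the same route as the paper's proof: part (i) via the invariance principle, cancellation of $\sigma^2$ in the ratio, and continuous mapping; part (ii) via the three-step comparison $T_n(k_1)=O_p^s(n\delta^2)$ (from $D_n(k_1)^2\asymp n\delta^2$ and $V_n(k_1)^{-1}=O_p^s(1)$), the uniform bound $\max_{|k-k_1|>\iota_n}D_n(k)^2\le Cn\delta^2+O_p(1)$ from Assumption \ref{ass_theta}, and the key self-normalizer inflation bound $V_n(k)\gtrsim (k-k_1)^2\delta^2/n$, which together give $\max_{|k-k_1|>\iota_n}T_n(k)=o_p(n\delta^2)$ under $\iota_n^{-2}\delta^{-2}n\to0$. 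The paper carries out the separation of the deterministic signal from the stochastic fluctuations in $V_n(k)$ via Hua's identity and a Cauchy--Schwarz bound on the cross term, which is exactly the technical work you correctly flag as the remaining bulk of the argument.
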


Theorem \ref{thm_onechange}(i) indicates that the asymptotic distribution of $SN_n$ for a general functional $\theta(\cdot)$ coincides with the asymptotic distribution of the CUSUM-based SN test for mean~(see Theorem 3.1 in \cite{shao2010testing}). This implies that the same threshold $K_n$ can be used to control false positives~(i.e.\ Type-I error) for change-point detection in various parameters and thus greatly simplifies the implementation of the proposed method. In practice, we recommend to set $K_n$ as the 90\% or 95\% quantile of $G$, which can be obtained via simulation as $G$ is pivotal. See \cite{shao2010testing} for tabulated critical values of $G$.

Theorem \ref{thm_onechange}(ii) gives the convergence rate of the estimated change-point $\widehat{k}$, providing a unified theoretical justification of the SN-based method for a broad class of functionals. Due to the presence of the self-normalizer $V_n(k)$, which is complex and further varies by $k$, nonstandard technical arguments different from existing techniques in the change-point literature are developed to establish the consistency result. It involves a simultaneous analysis of the contrast statistic $D_n(k)$ and the self-normalizer $V_n(k)$. {In general, the localization error rate of SNCP is not optimal (at least for change in mean). However, a simple local refinement procedure can be performed to help achieve the optimal rate. We refer to the discussion following \Cref{thm1} in \Cref{subsec:theory} for more details on this matter. }


The traditional CUSUM based estimation procedure in the change-point literature typically admits the form $\max_{k=1,\cdots,n-1}|D_n(k)|/\widehat{\sigma}_n$, where theoretical results are derived under the assumption that $\widehat{\sigma}_n$ is a consistent estimator of the long-run variance~(LRV), leading to less involved technical analysis than the proposed SN-based estimation. However, in practice, the construction of a consistent $\widehat{\sigma}_n$ involves a bandwidth tuning parameter that is difficult to select, especially under the presence of change-points. For example, in the mean case, using a data-driven bandwidth with the estimation-optimal bandwidth formula in \cite{andrew1991} could lead to non-monotonic power under the change-point alternative and large size distortion under the null, see \cite{cv2007} and \cite{shao2010testing}. {\cite{Casini2021theory} and \cite{casini2021minimax} further provide a comprehensive theoretical analysis of such phenomenon based on Edgeworth expansion.} Additionally, different construction of $\widehat{\sigma}_n$ is required for different functional $\theta(\cdot)$, which can be highly involved and non-trivial for parameters such as correlation and quantile, making the practical implementation challenging.

{In contrast, thanks to the self-normalizer $V_n(k)$, the proposed SN-based procedure avoids the challenging estimation of LRV and provides a robust framework that works universally for a broad class of functionals under temporal dependence.}





\section{Multiple Change-point Estimation}\label{sec:multiple}

In this section, we further extend the proposed SN-based test to multiple change-point estimation. As in standard change-point literature, we assume $\{Y_t\}_{t=1}^n$ is a piecewise stationary time series and there exist $m_o\geq 0$ unknown number of change-points $0 <k_1<\cdots<k_{m_o} <n$ that partition $\{Y_t\}_{t=1}^n$ into $m_o+1$ stationary segments. Define $k_0=0$ and $k_{m_o+1}=n$, the $i$th segment contains stationary observations $\{Y_t\}_{t=k_{i-1}+1}^{k_i}$ that share common behavior characterized by $\theta_i$, for $i=1,\cdots,m_o+1$.

More specifically, we operate under the following data generating process for $\{Y_t\}_{t=1}^{n}$ such that
\begin{equation}
	Y_t=Y_t^{(i)}, ~ k_{i-1}+1\leq t\leq k_{i}, ~ \text{ for } i=1,\cdots, m_o+1,
\end{equation}
where  $\{Y_t^{(i)}\}_{t\in\mathbb{Z}}$ is a stationary time series with CDF $F^{(i)}$ and we require $\theta_i=\theta(F^{(i)}) \not= \theta_{i+1}=\theta(F^{(i+1)})$ for $i=1,\cdots,m_o$ due to the structural break. Our primary interest is to recover the unknown number and locations of the change-points.

To proceed, we first introduce some notations. For $1\leq t_1<k<t_2\leq n$, we define
\begin{align}\label{eq:SN_subsample}
	T_n(t_1,k,t_2)={D_{n}(t_1,k,t_2)^2}/{V_{n}(t_1,k,t_2)},
\end{align}
where $D_n(t_1,k,t_2)=\frac{(k-t_1+1)(t_2-k)}{(t_2-t_1+1)^{3/2}}(\widehat{\theta}_{t_1,k}-\widehat{\theta}_{k+1,t_2})$, $
V_n(t_1,k,t_2)=L_n(t_1,k,t_2)+R_n(t_1,k,t_2),$ and
\begin{flalign*}
	L_n(t_1,k,t_2)=&\sum_{i=t_1}^{k}\frac{(i-t_1+1)^2(k-i)^2}{(t_2-t_1+1)^{2}(k-t_1+1)^2}(\widehat{\theta}_{t_1,i}-\widehat{\theta}_{i+1,k})^2,\\
	R_n(t_1,k,t_2)=&\sum_{i=k+1}^{t_2}\frac{(t_2-i+1)^2(i-1-k)^2}{(t_2-t_1+1)^{2}(t_2-k)^2}(\widehat{\theta}_{i,t_2}-\widehat{\theta}_{k+1,i-1})^2.
\end{flalign*}
Note that $T_n(t_1,k,t_2)$ is essentially the proposed SN test defined on subsample $\{Y_t\}_{t=t_1}^{t_2}$. Set $t_1=1$ and $t_2=n$, $T_n(t_1,k,t_2)=T_n(1,k,n)$ reduces to the \textit{global} SN test defined in \eqref{statgeneral} of Section \ref{subsec:general}. 


The key observation is that, due to the presence of the self-normalizer $V_n$, the \textit{global} test statistic $T_n(1,k,n)$ may experience severe power loss under multiple change-point scenarios. The intuition is as follows. Suppose $k$ is a true change-point and  $\{Y_t\}_{t=1}^{n}$ has other change-points besides $k$. Intuitively, $V_{n}(1,k,n)$ may observe significant inflation as $L_{n}(1,k,n)$ and $R_{n}(1,k,n)$ are based on contrast statistics and their values could significantly inflate due to the existence of other change-points besides $k$. This can in turn cause $T_{n}(1,k,n)$ to suffer severe deflation and thus a loss of power. Consequently, a naive combination of the standard binary segmentation~\citep{vostrikova1981detecting} and the SN test cannot serve as a viable option for multiple change-point estimation~(see both theoretical evidence and numerical illustration of this phenomenon in Section \ref{sec: bs_powerloss} of the supplement).

\subsection{The nested local-window segmentation algorithm}\label{subsec: nested_alg}
To bypass this issue, we combine the SN test with a novel nested local-window segmentation algorithm, where for each $k$, instead of one global SN test $T_n(1,k,n)$, we compute a maximal SN test based on a collection of nested windows covering $k$. Specifically, fix a small $\epsilon\in (0,1/2)$ such as $\epsilon=0.05, 0.1$, define the window size $h=\lfloor n\epsilon\rfloor$. For each $k=h,\cdots, n-h$, we define its nested window set $H_{1:n}(k)$ where
\begin{align*}
	H_{1:n}(k)=\biggl\{ (t_1,t_2)\bigg\vert t_1=k-j_1 h+1, j_1=1,\ldots, \lfloor k/h\rfloor;t_2=k+j_2 h, j_2=1,\ldots, \lfloor (n-k)/h \rfloor  \biggl\}.
\end{align*}
Note that for $k<h$ and $k>n-h$, by definition, we have $H_{1:n}(k)=\varnothing$.

For each $k=1,\cdots,n$, based on its nested window set $H_{1:n}(k)$, we define a maximal SN test statistic $T_{1,n}(k)$ such that
\begin{align*}
	T_{1,n}(k)&=\max\limits_{(t_1,t_2)\in H_{1:n}(k)}T_n(t_1,k,t_2),
\end{align*}
where we set $\max\limits_{(t_1,t_2)\in \varnothing}T_n(t_1,k,t_2):=0.$ Note that unlike the standard binary segmentation, the test statistic $T_{1,n}(k)$ is calculated based on a set of nested \textit{local}-window observations $\{Y_t\}_{t=t_1}^{t_2}$ surrounding the time point $k$ instead of directly based on the \textit{global} observations $\{Y_t\}_{t=1}^{n}$.

This mechanism is precisely designed to alleviate the inflation of the self-normalizer $V_n$ for the SN test under multiple change-point scenarios. With a sufficiently small window size $\epsilon$, for any change-point $k$, there exists some local-window $(\tilde{t}_1,\tilde{t}_2)$ which contains $k$ as the only change-point, thus the maximal statistic $T_{1,n}(k)$ remains effective thanks to $T_n(\tilde{t}_1,k,\tilde{t}_2)$. In the literature, there exists \textit{pure} local-window based segmentation algorithms, e.g.\ SaRa in \cite{Niu2012} for change in mean, LRSM in \cite{Yau2016} for change in AR models. The pure local-window approach only considers the smallest local-window $(k-h+1,k+h)$ when constructing change-point tests for $k$ given a window size $h$. {Such an approach is also employed in the literature of ``piecewise smooth" change, see \cite{wuzhao2007}, \cite{Bibinger2017} and \cite{Casini2021change}. }

Compared to the \textit{pure} local-window approach, the constructed nested window set $H_{1:n}(k)$ makes our algorithm more adaptive as it helps $T_{1,n}(k)$ retain more power when $k$ is far away from other change-points by utilizing larger windows that cover $k$. We refer to Section \ref{subsec:bs_sara} of the supplement for more detailed discussion of this point and numerical evidence of {the substantial advantage in detection power and estimation accuracy of the proposed nested local-window approach over the pure local-window approach. In addition, since the nested local-window algorithm examines a set of expanding windows instead of a single window, its performance is more robust to the choice of the bandwidth $h$. This is confirmed by numerical experiments in Section \ref{subsec: sensitivity} of the supplement, where we conduct sensitivity analysis of $h$ and it is seen that performance of the nested local-window is robust and stable w.r.t.\ the choice of $h$.}

{Note that the nested window-based SN statistic $T_{1,n}(k)$ can be viewed as a discretized version of the SN test statistic $\widetilde{T}_{1,n}(k)=\max_{1\leq t_1<k<t_2\leq n}T_n(t_1,k,t_2)$,} which is related to the scan statistics~\citep{Chan2013} and multiscale statistics~\citep{Frick2014}. However, $\widetilde{T}_{1,n}(k)$ is computationally impractical, thus we instead approximate $\widetilde{T}_{1,n}(k)$ by $T_{1,n}(k)$ computed on the nested window set $H_{1:n}(k)$. 

Based on the maximal test statistic $T_{1,n}(k)$ and a prespecified threshold $K_n$, the SN-based multiple change-point estimation~(SNCP) proceeds as follows. Starting with the full sample $\{Y_t\}_{t=1}^{n}$, we calculate $T_{1,n}(k), k=1,\cdots, n.$ Given that $\max_{k=1,\ldots,n} T_n(k)\leq K_n$, SNCP declares no change-point. Otherwise, SNCP sets $\widehat{k}=\arg\max_{k=1,\ldots,n} T_{1,n}(k)$ and we recursively perform SNCP on the subsample $\{Y_t\}_{t=1}^{\widehat{k}}$ and $\{Y_{t}\}_{t=\widehat{k}+1}^{n}$ until no change-point is declared.

Denote $W_{s,e}=\bigl\{(t_1,t_2)\big\vert s\leq t_1<t_2\leq e \bigl\}$ and $H_{s:e}(k)=H_{1:n}(k)\bigcap W_{s,e}$, which is the nested window set of $k$ on the subsample $\{Y_{t}\}_{t=s}^e$. Define the subsample maximal SN test statistic as $T_{s,e}(k)=\max\limits_{(t_1,t_2)\in H_{s:e}(k)}T_n(t_1,k,t_2)$. Algorithm \ref{alg_SNMCP} gives the formal description of SNCP.

\begin{algorithm}[h]
	\caption{SNCP for multiple change-point estimation}
	\label{alg_SNMCP}
	\KwIn{Time series $\{Y_t\}_{t=1}^{n}$, threshold $K_n$, window size $h=\lfloor n\epsilon \rfloor$.}
	\KwOut{Estimated change-points set $\widehat{\bf{k}}=(\widehat{k}_1,\cdots,\widehat{k}_{\widehat{m}})$}
	\KwIni{SNCP($1,n,K_n,h$), $\widehat{\bf{k}}=\varnothing$}
	\KwPro{SNCP($s,e,K_n,h)$}
	\eIf{$e-s+1< 2h$}{Stop}
	{$\widehat{k}^*=\argmax_{k=s,\cdots,e} T_{s,e}(k)$\;
		\eIf{$T_{s,e}(\widehat{k}^*)\leq K_n$ }{Stop}
		{$\widehat{\bf{k}}=\widehat{\bf{k}}\cup\widehat{k}^*$\;
			SNCP($s, \widehat{k}^*,K_n,h)$\; SNCP($\widehat{k}^*+1,e,K_n,h)$\;}
	}
\end{algorithm}


\textbf{Comparison with popular segmentation algorithms in the literature}: We remark that it is possible to combine the proposed SN test statistic with other segmentation algorithms designed for multiple change-point estimation, such as wild binary segmentation (WBS)~\citep{Fryzlewicz2014} or its variants including narrowest-over-threshold (NOT)~\citep{baranowski2019narrowest} and seeded binary segmentation (SBS)~\citep{Kovacs2020}. WBS and NOT use randomly generated intervals for searching multiple change-points, whereas SBS employs deterministic intervals. However, theoretical guarantees for such procedures can be challenging to establish as the above-mentioned segmentation algorithms are mainly used for change-point estimation in a sequence of independent data. Nevertheless, in Section~\ref{subsec: wbs_not} of the supplement, we provide an extensive numerical comparison between the proposed nested local-window segmentation algorithm (SNCP) and the combinations of the SN test with WBS, NOT and SBS, where the performance of SNCP is seen to be very competitive in terms of both statistical accuracy and computational efficiency.

\subsection{Assumptions and theoretical results}\label{subsec:theory}
In this section, we study the theoretical properties of the proposed SNCP for multiple change-point estimation. We operate under the  classical infill framework where we assume $k_i/n\to \tau_i \in (0,1)$ for $i=1,\ldots,m_o$ as $n\to \infty.$ Define $\tau_0=0$ and $\tau_{m_o+1}=1$, we further assume that $\min_{1\leq i\leq m_o+1}(\tau_i-\tau_{i-1})=\epsilon_o>\epsilon$, where $\epsilon$ is the window size parameter used in SNCP, which imposes an implicit upper bound for $m_o$ such that $m_o\leq 1/\epsilon.$ This is a common assumption in the literature for change-point testing and estimation under temporal dependence, see \cite{andrews1993}, \cite{Bai2003}, \cite{Davis2006} and \cite{Yau2016}. In practice, we set $\epsilon$ to be a small constant such as $\epsilon=0.05, 0.10, 0.15$, which can be based on prior information about the minimum spacing between consecutive change-points.

In Section~\ref{subsec: sensitivity} of the supplement, we conduct an extensive sensitivity analysis of SNCP w.r.t.\ the window size $\epsilon$ and the threshold $K_n$, and the result indicates SNCP is rather robust to the choices of $(\epsilon, K_n)$ as long as $\epsilon_o>\epsilon$, the violation of which could lead to unsatisfactory segmentation results. This suggests that the assumption $\epsilon_o>\epsilon$ is necessary both theoretically and empirically, and hence the proposed SNCP may not be suitable for time series with frequent change-points where $\epsilon_o$ is vanishing with $\epsilon_o=o(1)$; see \cite{Fryzlewicz2020} for a recent contribution to detecting frequent change-points.

Denote the true parameter for the $i$th segment by $\theta_i$ and denote the change size by $\delta_i=\theta_{i+1}-\theta_i$ for $i=1,\ldots,m_o.$ For ease of presentation, we assume that $\delta_i=c_i\delta$ for $i=1,\ldots,m_o$, where $c_i\neq 0$ is a fixed constant. Thus, the overall change size is controlled by $\delta.$ 

We assume the following expansions for the empirical functional $\widehat{\theta}_{a,b}=\theta(\widehat{F}_{a,b})$, which is a natural extension of the expansions \eqref{inf1} and \eqref{inf3} from the single change-point setting in Section \ref{subsec:theory_SCP} to the multiple change-point setting. Specifically, for $\widehat{\theta}_{a,b}$ computed exclusively on the $i$th stationary segments with $i=1,\cdots,m_o+1$, we assume
\begin{align}\label{theta_null}
	\widehat{\theta}_{a,b}=\theta_i+\frac{1}{b-a+1}\sum_{t=a}^{b}\xi_i(Y_t^{(i)})+r^{(i)}_{a,b},\quad \text{for}~k_{i-1}+1\leq a < b\leq k_{i},
\end{align}
where $\xi_i(Y_t^{(i)})$ is the influence function of the functional $\theta(\cdot)$ for the $i$th segment and $r^{(i)}_{a,b}$ denotes the remainder term. For $\widehat{\theta}_{a,b}$ computed based on a mixture of stationary segments, we further assume
\begin{align}\label{theta}
	\widehat{\theta}_{a,b}=&\theta(\omega_{a,b})+\frac{1}{b-a+1}\left[\sum_{t=a}^{k_{i}}\xi_{i}(Y_t^{(i)},\omega_{a,b})+\sum_{l=1}^{j-i}\sum_{t=k_{l+i-1}+1}^{k_{l+i}}\xi_{i+l}(Y_t^{(l+i)},\omega_{a,b})+\sum_{t=k_{j}+1}^{b}\xi_{j+1}(Y_t^{(j+1)},\omega_{a,b})\right]\nonumber\\
	&+r_{a,b}(\omega_{a,b}):=\theta_{a,b}+\bar{\xi}_{a,b}(\omega_{a,b})+r_{a,b}(\omega_{a,b}),
\end{align}
where $(k_i,k_{i+1},\cdots,k_j)$ with $i\leq j$ denotes the $j-i+1$ true change-points between $a$ and $b$ such that $k_{i-1}+1\leq a\leq k_i$ and $k_j+1\leq b\leq k_{j+1}$, and
\begin{align*}
	\omega_{a,b}=\left(\omega_{a,b}^{(1)},\cdots, \omega_{a,b}^{(m_o+1)} \right)^\top=\left(\overbrace{0,\cdots,0}^{\mbox{of}~i-1},\frac{k_{i}-a+1}{b-a+1},\frac{k_{i+1}-k_{i}}{b-a+1},\cdots,\frac{k_j-k_{j-1}}{b-a+1},\frac{b-k_j}{b-a+1},\overbrace{0,\cdots,0}^{\mbox{of}~m_o-j}\right)^\top,
\end{align*} 
denotes the proportion of each stationary segment in $\{Y_t\}_{t=a}^b$, $\theta(\omega_{a,b})$ denotes $\theta(\cdot)$ evaluated at the mixture distribution $F^{\omega_{a,b}}=\sum_{i=1}^{m_o+1}\omega_{a,b}^{(i)}F^{(i)}$ and $r_{a,b}(\omega_{a,b})$ denotes the remainder term.

Similar to the single change-point scenario, the expansion \eqref{theta_null} of $\widehat{\theta}_{a,b}$ with $k_{i-1}+1\leq a < b\leq k_{i}$ can be viewed as a special case of \eqref{theta} where the mixture distribution is pure and $\omega_{a,b}$ is defined as $\omega_{a,b}^{(i)}=1$ and $\omega_{a,b}^{(i')}=0, i'\neq i$. We proceed by making the following assumptions.
\begin{ass}\label{ass_no1}
	$\mathrm{(i)}$ 	For some $\sigma_i>0$, $i=1,\cdots,m_o+1$, 
	$$\frac{1}{\sqrt{n}}\sum_{t=1}^{\lfloor nr\rfloor}\Big(\xi_{1}(Y_t^{(1)}),\cdots,\xi_{m_o+1}(Y_t^{(m_o+1)})\Big)\Rightarrow (\sigma_1 B^{(1)}(r), \cdots,\sigma_{m_o+1}B^{(m_o+1)}(r)),$$	
	where $B^{(i)}(\cdot)$, $i=1,\cdots,m_o+1$ are standard Brownian motions.
	
	\noindent $\mathrm{(ii)} ~\sup_{1\leq a<b\leq n}|(b-a+1)\bar{\xi}_{a,b}(\omega_{a,b})|=O_p(n^{1/2}).$
\end{ass}

\begin{ass}\label{ass_no2}
	$\sup_{1\leq a<b\leq n}|(b-a+1)r_{a,b}(\omega_{a,b})|= o_p(n^{1/2}).$
\end{ass}
Assumptions \ref{ass_no1} and \ref{ass_no2} are natural extensions of Assumptions \ref{ass_influence} and \ref{ass_remainder} to the multiple change-point setting and can also be verified for smooth function models and quantile under mild conditions. We refer to Sections~\ref{sec:verify} and \ref{sec:verify_quantile} of the supplement for more details.

\begin{ass}\label{ass_no3}
	For $1\leq a<b\leq n$, $\theta_{a,b}=\theta(\omega_{a,b})$ can be expressed almost linearly such that
	$\sup_{1\leq a<b\leq n}\Big|\theta_{a,b}-(\theta_{1},\cdots,\theta_{m_o+1})\omega_{a,b}\Big|=
	\sup_{1\leq a<b\leq n}\Big|\theta_{a,b}-\sum_{i=1}^{m_o+1}\omega_{a,b}^{(i)}\theta_i\Big|=o(n^{-1/2})$.
\end{ass}
Assumption \ref{ass_no3} imposes a relatively strong technical condition on the functional $\theta(\cdot)$ such that $\theta_{a,b}\approx \sum_{i=1}^{m_o+1}\omega_{a,b}^{(i)}\theta_i$. Assumption \ref{ass_no3} holds trivially for mean change and is typically satisfied when $\theta(\cdot)$ is the only quantity that changes, which is a common assumption in testing-based change-point estimation literature. For example, Assumption \ref{ass_no3} holds for variance, (auto)-covariance change with constant mean~\citep{Aue2009,Cho2012} and (auto)-correlation change with constant mean and variance~\citep{Wied2012}. Numerical experiments conducted in Section \ref{subsec: multipara_main} and Sections \ref{subsec:poweracf}-\ref{subsec:powermeanvariance} of the supplement indicate that SNCP is robust and continues to perform well when Assumption \ref{ass_no3} can not be easily verified.

An alternative Assumption \ref{ass_no3star} is provided in Section \ref{subsec:Alter_assno3} of the supplement, which is a natural extension of Assumption \ref{ass_theta} to the multiple change-point setting and further includes Assumption \ref{ass_no3} as a special case. We defer Assumption \ref{ass_no3star} to the supplement as it is a more involved technical assumption.

{\textbf{Remark 1} (Verification of assumptions): Assumptions \ref{ass_no1}-\ref{ass_no3} are high-level assumptions made on a general functional $\theta(\cdot)$ to facilitate presentation. In Sections \ref{sec:verify} and \ref{sec:verify_quantile} of the supplement, under mild conditions, we provide verification of Assumptions \ref{ass_no1}-\ref{ass_no3} for commonly used functionals including the smooth function model and quantile. In general, the assumptions can be verified for mean change, variance and (auto)-covariance change with constant mean or with concurrent small-scale mean change, (auto)-correlation change with constant mean and variance or with concurrent small-scale mean and variance change, and quantile change with density functions that are smooth and bounded. In particular, the verification of Assumption \ref{ass_no2} for quantile is highly nontrivial and of independent interest. It essentially provides a uniform Bahadur representation for quantiles in subsamples. Our result allows for change-points and temporal dependence, and thus  generalizes the ones in \cite{Wu2005} and \cite{dette2020a}. }

For $u\in(\epsilon, 1-\epsilon)$, define the scaled limit of $H_{1:n}(k)$ by $H_{\epsilon}(u)=\bigl\{ (u_1,u_2)\big\vert u_1=u-j_1\epsilon, j_1=1,\cdots, \lfloor u/\epsilon\rfloor; u_2=u+j_2\epsilon, j_2=1,\cdots, \lfloor (1-u)/\epsilon \rfloor  \bigr\}$ and define $\Delta(u_1, u, u_2)=B(u)-B(u_1)-\frac{u-u_1}{u_2-u_1}\{B(u_2)-B(u_1)\}$, where $B(\cdot)$ is a standard Brownian motion. Theorem \ref{thm1} gives the consistency result of SNCP for multiple change-point estimation.
\begin{thm}\label{thm1}
	(i) Under the no change-point scenario, and Assumptions \ref{ass_no1}(i) and \ref{ass_no2}, we have 
	\begin{align}\label{eq:limiting_null}
		\max_{k=1,\cdots,n} T_{1,n}(k) \overset{\mathcal{D}}{\longrightarrow} G_\epsilon= \sup_{u\in(\epsilon,1-\epsilon)}\max_{(u_1,u_2)\in H_{\epsilon}(u)}{D(u_1,u,u_2)^2}/{V(u_1,u,u_2)},
	\end{align}
	where $D(u_1,u,u_2)=\frac{1}{\sqrt{u_2-u_1}}\Delta(u_1, u, u_2)$ and 
	$V(u_1,u,u_2)=\frac{1}{(u_2-u_1)^2}\left( \int_{u_1}^{u} \Delta(u_1, s, u)^2 ds + \int_{u}^{u_2} \Delta(u, s, u_2)^2  ds  \right)$.
	
	\noindent (ii) Under the multiple change-point scenario,
	suppose Assumption \ref{ass_K}, Assumptions \ref{ass_no1}, \ref{ass_no2} and \ref{ass_no3}~(or \ref{ass_no3star}) hold	and suppose $\epsilon < \epsilon_o$, we have
	$$\lim\limits_{n\to\infty}P(\widehat{m}=m_o\quad \text{and}\quad \max_{1\leq i\leq m_o}|\widehat{k}_i-k_i|\leq \iota_n)= 1,$$
	for any sequence $\iota_n$ such that $\iota_n/n\to 0$ and $\iota_n^{-2}\delta^{-2}n\to0$ as $n\rightarrow\infty$. 
\end{thm}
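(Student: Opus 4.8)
The plan is to treat the two parts separately: part (i) is a weak-convergence statement obtained through the continuous mapping theorem, while part (ii) requires a uniform, quantitative analysis of the contrast $D_n$ and the inconsistent self-normalizer $V_n$ along the recursion.

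For part (i), under no change every subsample estimator is approximately linear by \eqref{theta_null}, so I would substitute this expansion into $D_n(t_1,k,t_2)$ and $V_n(t_1,k,t_2)$ and retain only the leading partial-sum terms, the remainders being uniformly $o_p(n^{1/2})$ by \Cref{ass_no2}. Writing $k=\lfloor nu\rfloor$, $t_1=\lfloor nu_1\rfloor$, $t_2=\lfloor nu_2\rfloor$ and invoking the functional central limit theorem of \Cref{ass_no1}(i), the normalized partial sums converge jointly to $\sigma B(\cdot)$; a short computation then shows $n^{-1/2}D_n\Rightarrow \sigma D(u_1,u,u_2)$ and $V_n\Rightarrow \sigma^2 V(u_1,u,u_2)$, so that $\sigma$ cancels in the ratio $T_n$ and the limit is pivotal. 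Since the scaled nested window set $H_{1:n}(k)$ converges to the locally finite set $H_{\epsilon}(u)$ and $u\mapsto B(u)$ has continuous sample paths, the map sending $B$ to $\sup_{u}\max_{(u_1,u_2)}D^2/V$ is a continuous functional on $D[0,1]$ (the maximum being taken over finitely many windows and over the compact set $u\in(\epsilon,1-\epsilon)$), and the continuous mapping theorem yields the stated limit $G_\epsilon$.

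For part (ii), the backbone is a uniform-over-subintervals decomposition of $D_n$ and $V_n$ into a deterministic \emph{signal} built from the mixture weights $\omega_{a,b}$ and the $\theta_i$ (linearized through \Cref{ass_no3}) plus a stochastic \emph{noise} that \Cref{ass_no1} and \Cref{ass_no2} control uniformly at orders $O_p(n^{1/2})$ and $o_p(n^{1/2})$. From this I would extract three facts. First (detection): whenever a sub-interval $(s,e)$ still contains a true change-point $k_j$, the separation $\epsilon<\epsilon_o$ guarantees that the smallest nested window $(k_j-h+1,k_j+h)\in H_{s:e}(k_j)$ isolates $k_j$ with both halves stationary, so the single-change analysis underlying \Cref{thm_onechange}(ii) applies on this window and gives $T_n(k_j-h+1,k_j,k_j+h)\asymp n\delta^2$; since $K_n=(n\delta^2)^{\kappa}$ with $\kappa<1$ by \Cref{ass_K}, we get $T_{s,e}(k_j)\gtrsim n\delta^2\gg K_n$ and SNCP keeps splitting. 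Second (localization): for a split point $k$ at distance exceeding $\iota_n$ from every true change-point, each admissible window either contains no change, whence $T_n=O_p(1)$, or contains a change away from $k$, in which case the inflation of $V_n$ described in \Cref{sec:multiple} makes both $D_n^2$ and $V_n$ of order $n\delta^2$ and keeps $T_n=O_p(1)$; the sharp rate comes from a quantitative version showing that moving $k$ off $k_j$ by $\ell$ lowers the signal part of $D_n^2$ by order $\delta^2\ell$, while the stochastic fluctuation of $D_n^2$ is only $O_p(n^{1/2}\delta)$, so the maximizer must lie within $\iota_n$ of a true change-point precisely when $\iota_n^{-2}\delta^{-2}n\to0$. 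Third (no over-segmentation): on any sub-interval free of true change-points the uniform bound gives $\max_k T_{s,e}(k)=O_p(1)\le K_n$, so the recursion terminates. An induction on the recursion depth, which is at most $m_o\le 1/\epsilon$, then combines these to give $\widehat m=m_o$ and $\max_i|\widehat k_i-k_i|\le\iota_n$ with probability tending to one; the induction is legitimate because each declared cut lies within $\iota_n=o(h)$ of a true change-point, leaving the remaining isolating windows intact at the next level.

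The main obstacle is the coupled, uniform treatment of the numerator and the inconsistent self-normalizer. Because $V_n(t_1,k,t_2)$ is itself a sum of squared contrasts that both varies with the split point $k$ and inflates exactly when $k$ is misaligned with a change, it cannot be replaced by a fixed constant as in CUSUM-type arguments. The delicate step is the quantitative localization bound, where I must show that the order-$\delta^2\ell$ decay of the signal in $D_n^2$ simultaneously dominates the $O_p(n^{1/2}\delta)$ noise in $D_n^2$ and any favorable downward fluctuation of $V_n$, uniformly over all admissible nested windows and over all data-dependent sub-intervals $(s,e)$ produced by the recursion. Establishing these suprema with the sharp $n^{1/2}$ rates furnished by \Cref{ass_no1}(ii) and \Cref{ass_no2}, and verifying that the signal-to-normalizer ratio is maximized only near true change-points, is the crux of the argument.
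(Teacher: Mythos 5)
Your overall architecture (part (i) by approximate linearity plus the FCLT and continuous mapping; part (ii) by detection on the smallest isolating window, a uniform bound over misaligned split points, no over-segmentation, and induction over the recursion depth) matches the paper's, and your treatment of part (i), the detection step, and the recursion is essentially what the paper does. The gap is in the localization step, which is the heart of the proof. You propose to obtain the rate by comparing numerators: the signal in $D_n^2$ drops by order $\delta^2\ell$ when $k$ is moved a distance $\ell$ off $k_j$, against an $O_p(n^{1/2}\delta)$ fluctuation. That comparison cannot be closed as stated — as you yourself note — because $T_n=D_n^2/V_n$ and nothing rules out a favorable downward fluctuation of $V_n(k)$ relative to $V_n(k_j)$. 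The paper never compares $T_n(k)$ with $T_n(k_j)$ pointwise. It instead proves the two one-sided bounds $\max_{k\in M_n}T_{1,n}(k)=o_p(n\delta^2)$ and $T_{1,n}(k_j)=O_p^s(n\delta^2)$, and the first bound is driven entirely by a deterministic \emph{lower} bound on the self-normalizer: when the window $(t_1,t_2)$ contains a change-point at distance at least $\iota_n$ from $k$ and at least $c_n$ from the window boundary, the squared contrasts inside $V_n$ acquire a signal component bounded below by a constant times $\iota_n^2c_n^3\delta^2/n^4$, which dominates the stochastic part of $V_n$ (controlled via Hua's identity and Cauchy--Schwarz, and this domination is exactly where $n\iota_n^{-2}\delta^{-2}\to0$ is used), giving $V_n^{-1}=O_p(n^4\delta^{-2}c_n^{-3}\iota_n^{-2})$ and hence $T_n\leq O_p(n\delta^2)\cdot O_p(n^4\delta^{-2}c_n^{-3}\iota_n^{-2})=o_p(n\delta^2)$. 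Only the crude bound $D_n^2\leq Cn\delta^2+O_p(1)$ on the numerator is needed; no fine decay of the numerator signal enters.

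A second missing ingredient: your dichotomy ``the window contains no change, or contains a change away from $k$'' omits the regime in which the window contains a change-point close to its own boundary, where the inflation of $V_n$ is too weak to be useful. The paper introduces an intermediate scale $c_n$ satisfying $c_n=o(n)$ and $n^4\delta^{-2}\iota_n^{-2}=o(c_n^3)$ (such a $c_n$ exists precisely under the rate condition) and, on windows whose change-point lies within $c_n$ of the boundary, switches to a numerator argument — the signal in $D_n^2$ is only $O(c_n^2/n^2)\cdot n\delta^2=o(n\delta^2)$ — paired with the uninflated half of the self-normalizer, for which $L_n^{-1}=O_p^s(1)$. This two-regime splitting and the Hua's-identity control of $V_n^{-1}$ (Lemmas \ref{lem_1}--\ref{lem_3}) are the ideas your proposal lacks; without them the ``crux'' you identify remains open. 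A smaller omission: in the no-over-segmentation step the subinterval $[1,\widehat k]$ is not signal-free — it carries a residual segment of length up to $\iota_n$ past $k_1$ — and the paper must (and does) verify that this residual contributes only $O(\iota_n^2\delta^2/n)+o_p(1)=o_p(K_n)$ to the subsample statistic.
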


Theorem \ref{thm1}(i) characterizes the asymptotic behavior of SNCP under no change-point and thus provides a natural choice of threshold $K_n$. In practice, we set $K_n$ as a high quantile, e.g.\ 90\% or 95\% quantile of $G_\epsilon$ to control the Type-I error of SNCP. For a given window size $\epsilon$, $G_\epsilon$ is a pivotal distribution and its critical values can be obtained via simulation. Theorem \ref{thm1}(ii) indicates that SNCP can correctly identify the number of change-points $m_o$ with an increasing threshold $K_n$ of a proper order. Note that the localization error rate of SNCP is the same as the single change-point scenario in Theorem \ref{thm_onechange}.

{Theorem \ref{thm1}(ii) assumes all changes have the same order $\delta$ and requires $\iota_n^{-2}\delta^{-2}n\to0$ to achieve consistency. In fact, this can be relaxed to allow multiscale changes and we then require $\iota_n^{-2}\delta^2_{max}\delta_{min}^{-4}n\to0$, where $\delta_{max}$ and $\delta_{min}$ denotes the maximum and minimum change size. This multiscale condition matches the one required by \cite{lavielle2000least} for multiple change-point estimation in mean under temporal dependence (cf.\ Theorem 3 therein).}

{\textbf{Remark 2} (Localization error rate and local refinement): Set the change size $\delta=D_0 n^{-c}$ with $c\in [0,1/2)$ and $D_0\not=0$, Theorem \ref{thm1}(ii) implies that $n^{1/2+c}=o(\iota_n)$. Under the fixed change size~($c=0$), it implies that the convergence rate $\iota_n/n$ of SNCP is at best $1/\sqrt{n}$, which is slower than the optimal rate $1/n$ for change-point estimation in mean under temporal dependence, see \cite{bai1994} and \cite{lavielle2000least}.\footnote{{For multiple change-point estimation of univariate mean in a sequence of independent sub-Gaussian observations, this is further shown as the minimax optimal localization rate, see \cite{wangEJS2020}, \cite{Verzelen2020} and references therein.}} We note that the derived rate is technically difficult to be further improved due to the complex nature of the self-normalizer $V_n(k)$. On the other hand, the derived rate applies to a general functional, which seems not well studied in the literature. Nevertheless, in Section \ref{sec:local_refine} of the supplement, we further propose a simple and intuitive local refinement procedure, which provably improves the localization error rate of SNCP to $1/n$ for the mean functional. The key observation is that by Theorem \ref{thm1}, SNCP can asymptotically isolate each single change-point and thus a simple CUSUM statistic can be used within a well-designed local interval around each estimated change-point $\widehat{k}_i$ by SNCP to achieve further refinement. We refer to Sections \ref{sec:local_proc}-\ref{sec:local_simu} for more detailed theoretical and numerical results of the procedure.
	
}

\subsection{Extension to vector-valued functionals}\label{subsec:vector}
In this section, we discuss the extension of SNCP to a vector-valued functional, where $\bftheta(\cdot)\in \mathbb{R}^d$ with $d>1.$ A natural example is change-point detection in mean or covariance matrix of multivariate time series, see for example \cite{Aue2009}. Additionally, for a univariate time series, we may be interested in detecting any structural break among multiple parameters of interest, such as examining mean and variance together or examining multiple quantile levels simultaneously.

Note that the dimension of the underlying time series $\{Y_t\}_{t=1}^n$ may or may not equal to that of $\bftheta$~(i.e.\ $d$). For change-point estimation in mean of multivariate time series, we have $\bftheta=E(Y_t)$ and the dimension of $Y_t$ is $d$. However, for change-point estimation in covariance matrix~($\bftheta=\text{Cov}(Y_t)$) or multiple parameters~(e.g.\ $Y_t\in \mathbb{R}$ and $\bftheta=(E(Y_t), \text{Var}(Y_t))^\top$), the dimension of $Y_t$ can be smaller than $d$. We examine the performance of SNCP for all three cases via numerical experiments in Section \ref{sec:simulation}.

To accommodate the vector-valued functional, we modify the SN test statistic in \eqref{eq:SN_subsample} such that
\begin{align}\label{eq:SN_subsample_vector}
	T_n^*(t_1,k,t_2)={D_{n}^*(t_1,k,t_2)^\top}{V_{n}^*(t_1,k,t_2)}^{-1}D_{n}^*(t_1,k,t_2),
\end{align}
where $\widehat{\bftheta}_{a,b}=\bftheta(\widehat{F}_{a,b})$ with $\widehat{F}_{a,b}$ being the empirical distribution of $\{Y_t\}_{t=a}^b$ and
\begin{flalign*}
	D_n^*(t_1,k,t_2)=&\frac{(k-t_1+1)(t_2-k)}{(t_2-t_1+1)^{3/2}}(\widehat{\bftheta}_{t_1,k}-\widehat{\bftheta}_{k+1,t_2}), \quad
	V_n^*(t_1,k,t_2)=L_n^*(t_1,k,t_2)+R_n^*(t_1,k,t_2),\\
	L_n^*(t_1,k,t_2)=&\sum_{i=t_1}^{k}\frac{(i-t_1+1)^2(k-i)^2}{(t_2-t_1+1)^{2}(k-t_1+1)^2}(\widehat{\bftheta}_{t_1,i}-\widehat{\bftheta}_{i+1,k})(\widehat{\bftheta}_{t_1,i}-\widehat{\bftheta}_{i+1,k})^\top,\\
	R_n^*(t_1,k,t_2)=&\sum_{i=k+1}^{t_2}\frac{(t_2-i+1)^2(i-1-k)^2}{(t_2-t_1+1)^{2}(t_2-k)^2}(\widehat{\bftheta}_{i,t_2}-\widehat{\bftheta}_{k+1,i-1})(\widehat{\bftheta}_{i,t_2}-\widehat{\bftheta}_{k+1,i-1})^\top.
\end{flalign*}
With a pre-specified threshold $K_n$, SNCP proceeds as in Algorithm \ref{alg_SNMCP} where the only difference is that we replace $T_{s,e}(k)$ with $T^*_{s,e}(k)=\max\limits_{(t_1,t_2)\in H_{s:e}(k)}T^*_n(t_1,k,t_2)$ as defined in \eqref{eq:SN_subsample_vector}.

\textbf{Limiting distribution under no change-point scenario}: We first derive the limiting null distribution of $\max_{k=1,\cdots,n} T_{1,n}^*(k)$, which is pivotal and thus provides natural choices of the threshold $K_n$. We assume the subsample estimator $\widehat{\bftheta}_{a,b}$ for the parameter of interest ${\bftheta}\in\mathbb{R}^d$ admits the following expansion
$$
\widehat{\bftheta}_{a,b}={\bftheta}_0+\frac{1}{b-a+1}\sum_{t=a}^{b}\xi(Y_t)+r_{a,b},
$$
where ${\bftheta}_0$ denotes the true value of $\bftheta$, $\xi(Y_t)\in\mathbb{R}^d$ denotes the influence function of $\bftheta$ and $r_{a,b}\in\mathbb{R}^d$ is the remainder term. We further impose the following mild assumptions.
\begin{ass}\label{ass_multi1}
	For some positive definite matrix $\Sigma\in\mathbb{R}^{d\times d}$, we have 
	$$
	\frac{1}{\sqrt{n}}\sum_{t=1}^{\lfloor nr\rfloor} \xi(Y_t)\Rightarrow \Sigma^{1/2} \mathcal{B}_d(r),
	$$
	where $\mathcal{B}_d(\cdot)$ is a $d$-dimensional Brownian motion with independent entries.
\end{ass}
Assumption \ref{ass_multi1} is a standard functional central limit theorem~(FCLT) result commonly assumed in the SN literature under the no change-point scenario, and can be verified under mild moment and weak dependence conditions, see for example, \cite{shao2010self} (Assumption 2.1), \cite{shao2010testing} (Assumption 3.1) and \cite{dette2020a} (Assumption 3.1).

\begin{ass}\label{ass_multi2}
	The remainder term $r_{a,b}$ is asymptotically negligible such that
	$$
	\sup_{1\leq a<b\leq n}(b-a+1)\|r_{a,b}\|_2=o_p(n^{1/2}).
	$$
\end{ass}
\begin{pro}\label{pro_multi_parameter}
	Under the no change-point scenario, given Assumptions \ref{ass_multi1} and \ref{ass_multi2}, we have 
	$$
	\max_{k=1,\cdots,n} T_{1,n}^*(k)\overset{\mathcal{D}}{\longrightarrow} G_{\epsilon,d}^*= \sup_{u\in(\epsilon,1-\epsilon)}\max_{(u_1,u_2)\in H_{\epsilon}(u)}{D^*(u_1,u,u_2)}^{\top}{V^*(u_1,u,u_2)}^{-1}{D^*(u_1,u,u_2)},
	$$
	where 
	$D^*(u_1,u,u_2)=\frac{1}{\sqrt{u_2-u_1}}\mathbf{\Delta}(u_1,u,u_2)$ and $V^*(u_1,u,u_2)=\frac{1}{(u_2-u_1)^2}\Big(\int_{u_1}^{u}\mathbf{\Delta}(u_1,s,u)\mathbf{\Delta}(u_1,s,u)^{\top}ds+\int_{u}^{u_2}\mathbf{\Delta}(u,s,u_2)\mathbf{\Delta}(u_1,s,u)^{\top}ds\Big)$
	with $\mathbf{\Delta}(u_1,u,u_2)=\mathcal{B}_d(u)-\mathcal{B}_d(u_1)-\frac{u-u_1}{u_2-u_1}[\mathcal{B}_d(u_2)-\mathcal{B}_d(u_1)]$.
\end{pro}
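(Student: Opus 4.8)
The key structural observation is that the vector-valued self-normalized statistic $T_n^*(t_1,k,t_2)$ is built entirely from partial sums of the influence function $\xi(Y_t)$, up to negligible remainder terms. Under the no change-point scenario, the expansion $\widehat{\bftheta}_{a,b}=\bftheta_0+(b-a+1)^{-1}\sum_{t=a}^{b}\xi(Y_t)+r_{a,b}$ shows that every contrast $\widehat{\bftheta}_{c,d}-\widehat{\bftheta}_{e,f}$ appearing in $D_n^*$, $L_n^*$ and $R_n^*$ telescopes into differences of partial sums $S_n(j):=\sum_{t=1}^{j}\xi(Y_t)$, with the constant $\bftheta_0$ cancelling. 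So my first step would be to rewrite $D_n^*(t_1,k,t_2)$, $L_n^*(t_1,k,t_2)$ and $R_n^*(t_1,k,t_2)$ purely in terms of the process $S_n(\cdot)$ and the weights, and to confirm that after substituting $k=\lfloor nu\rfloor$, $t_1=\lfloor nu_1\rfloor$, $t_2=\lfloor nu_2\rfloor$ and rescaling by the appropriate powers of $n$, these quantities match the continuous functionals $D^*$, $V^*$ in the statement with $\mathcal{B}_d$ replaced by the partial-sum process.

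\emph{Second}, I would invoke Assumption \ref{ass_multi1} to get $n^{-1/2}S_n(\lfloor nr\rfloor)\Rightarrow \Sigma^{1/2}\mathcal{B}_d(r)$ in $D^d[0,1]$, and Assumption \ref{ass_multi2} to control the remainder: the bound $\sup_{a<b}(b-a+1)\|r_{a,b}\|_2=o_p(n^{1/2})$ is exactly what is needed to show that every remainder contribution to $D_n^*$ and to the Riemann-sum integrands in $L_n^*,R_n^*$ is uniformly $o_p(1)$ after rescaling, so that the remainders do not affect the weak limit. The bridge term $\mathbf{\Delta}(u_1,u,u_2)=\mathcal{B}_d(u)-\mathcal{B}_d(u_1)-\tfrac{u-u_1}{u_2-u_1}[\mathcal{B}_d(u_2)-\mathcal{B}_d(u_1)]$ arises precisely as the scaled limit of the difference $\widehat{\bftheta}_{t_1,k}-\widehat{\bftheta}_{k+1,t_2}$, and the $\Sigma^{1/2}$ factors cancel between $D^*$ and $(V^*)^{-1}$ in the quadratic form, yielding a pivotal limit; this cancellation should be noted explicitly since it is what makes $G_{\epsilon,d}^*$ free of $\Sigma$.

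\emph{Third}, since $\max_{k}T_{1,n}^*(k)=\max_{k}\max_{(t_1,t_2)\in H_{1:n}(k)}T_n^*(t_1,k,t_2)$ is a maximum over the \emph{discrete} nested window set, whereas the limit $G_{\epsilon,d}^*$ is a supremum over $u\in(\epsilon,1-\epsilon)$ with an \emph{inner maximum} over the finite set $H_\epsilon(u)$, I would set up a continuous mapping theorem on $D^d[0,1]$. The inner maximum is over a set whose cardinality is bounded by a constant depending only on $\epsilon$ (at most $\lfloor 1/\epsilon\rfloor^2$ windows), so it is genuinely a finite maximum of continuous functionals of the sample paths, and poses no difficulty; the outer $\max_k$ converges to the outer $\sup_u$ because the discretization grid $\{h,\ldots,n-h\}$ becomes dense in $(\epsilon,1-\epsilon)$ after rescaling and the map $u\mapsto \max_{(u_1,u_2)\in H_\epsilon(u)}D^*(u_1,u,u_2)^\top V^*(u_1,u,u_2)^{-1}D^*(u_1,u,u_2)$ is almost surely continuous in $u$ under the Brownian limit.

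\textbf{The main obstacle} I anticipate is verifying that the functional $(S_n(\cdot)) \mapsto \max_k\max_{(t_1,t_2)}T_n^*$ is continuous (or continuous off a null set) with respect to the Skorokhod topology, given two subtleties: the matrix inverse $(V_n^*)^{-1}$ must be shown to converge to $(V^*)^{-1}$ uniformly, which requires arguing that $V^*(u_1,u,u_2)$ is almost surely positive definite and bounded away from singularity uniformly over the relevant compact region, and the floor-function discretization of $t_1,t_2,k$ must be shown not to distort the limit. I would handle the invertibility by noting that $V^*$ is an integral of outer products $\mathbf{\Delta}\mathbf{\Delta}^\top$ of a nondegenerate Gaussian bridge over an interval of positive length, so it is a.s.\ positive definite on the finite index set $H_\epsilon(u)$, and then use the uniform (over $r$) convergence granted by the FCLT together with the smoothness of matrix inversion on the set of positive-definite matrices to push the limit through. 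The discretization error is controlled because, within each window of width $\gtrsim n\epsilon$, shifting endpoints by $O(1)$ changes the rescaled weights and partial sums only by $o_p(1)$, uniformly. Once continuity is established, the conclusion follows immediately from the continuous mapping theorem, exactly paralleling the scalar argument behind Theorem \ref{thm1}(i).
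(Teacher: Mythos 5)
Your proposal is correct and takes essentially the same route the paper indicates: the paper omits the proof, stating only that it follows from the functional CLT argument of Theorem 2.1 in \cite{shao2010self} together with the continuous mapping theorem, which is precisely the partial-sum rewriting, remainder negligibility via Assumption \ref{ass_multi2}, and CMT argument you lay out. Your additional remarks on the cancellation of $\Sigma^{1/2}$ in the quadratic form and the almost-sure positive definiteness of $V^*$ are the right points to check and are handled correctly.
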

The proof of Proposition \ref{pro_multi_parameter} is straightforward and follows the same argument as the proof of Theorem 2.1 in \cite{shao2010self} and the continuous mapping theorem, hence omitted. For a given dimension $d$ and window size $\epsilon$, the limiting distribution $G_{\epsilon,d}^*$ is pivotal and its critical values can be obtained via simulation. Table \ref{tab: criticalvalue} tabulates the critical values of $G_{\epsilon,d}^*$ for $\epsilon=0.05$ and $d=1,\cdots, 10$. Note that for $d=1$, $G_{\epsilon,d}^*$ coincides with the univariate limiting distribution $G_\epsilon$ derived in Theorem \ref{thm1}(i). 

\textbf{Consistency of SNCP}: To ease presentation and facilitate understanding, we first establish the consistency of SNCP for change-point estimation in mean of multivariate time series. We then provide further discussions on how to extend the consistency result to a general vector-valued functionals.

Specifically, we operate under the following data generating process for $\{Y_t \in \mathbb{R}^d\}_{t=1}^n$ such that
$$
Y_t=X_t+\theta_{i},\quad k_{i-1}+1\leq t\leq k_{i}, \quad \text{for}~ i=1,\cdots,m_o+1,
$$
where $\{X_t\}_{t=1}^n$ is a $d$-dimensional stationary time series with $E(X_t)=0$, $k_0:=0<k_1<\cdots<k_{m_o}<k_{m_o+1}:=n$ denote the (potential) change-points, and $\theta_i\in\mathbb{R}^d$ denotes the mean of the $i$th segment. We assume that, for $i=1,\cdots, m_o$,  $\theta_{i+1}-\theta_i=\eta_i\delta$ where $\eta_i\in\mathbb{R}^d/\{\bf{0}\}$ is a nonzero vector. Thus, the overall change size is controlled by $\delta.$

Same as in Section \ref{subsec:theory}, we use the infill framework where we assume $k_i/n\to \tau_i \in (0,1)$ for $i=1,\ldots,m_o$ as $n\to \infty.$ Define $\tau_0=0$ and $\tau_{m_o+1}=1$, we again require that $\min_{1\leq i\leq m_o+1}(\tau_i-\tau_{i-1})=\epsilon_o>\epsilon$, where $\epsilon$ is the window size parameter used in SNCP.

\begin{thm}\label{thm_onechange_multi}
	Suppose $\{X_t\}_{t=1}^n$ satisfies the invariance principle such that
	$n^{-1/2}\sum_{t=1}^{\lfloor nr\rfloor}X_t\Rightarrow \Sigma_X^{1/2} \mathcal{B}_d(r)$, where $\Sigma_X$ is  a positive definite matrix.
	
	\noindent $\mathrm{(i)}$ Under the no change-point scenario, we have 
	$\max_{k=1,\cdots,n}T^*_{1,n}(k)\overset{\mathcal{D}}{\longrightarrow} G_{\epsilon,d}^* $.

	\noindent $\mathrm{(ii)}$ Under the multiple change-point scenario, suppose Assumption \ref{ass_K} hold and suppose $\epsilon<\epsilon_o$, we have
	$$\lim\limits_{n\to\infty}P(\widehat{m}=m_o\quad \text{and}\quad\max_{1\leq i\leq m_o}|\widehat{k}_i-k_i|\leq \iota_n)= 1,$$
	for any sequence $\iota_n$ such that $\iota_n/n \to 0$ and $\iota_n^{-2}\delta^{-2}n\to0$ as $n\to\infty$.
\end{thm}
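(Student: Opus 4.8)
The plan is to read part (i) directly off Proposition~\ref{pro_multi_parameter}, and to obtain part (ii) by specializing the recursive analysis behind Theorem~\ref{thm1}(ii) to the vector mean, for which every high-level expansion holds exactly while the scalar self-normalizer is upgraded to the matrix $V_n^*$. For part (i), under no change $\theta_i\equiv\theta$, so $\widehat{\bftheta}_{a,b}=\theta+(b-a+1)^{-1}\sum_{t=a}^{b}X_t$ is exactly the approximately linear expansion with influence function $\xi(Y_t)=X_t$ and vanishing remainder. The invariance principle assumed in the theorem is Assumption~\ref{ass_multi1} with $\Sigma=\Sigma_X$, and Assumption~\ref{ass_multi2} holds trivially; hence Proposition~\ref{pro_multi_parameter} applies verbatim and yields $\max_k T^*_{1,n}(k)\overset{\mathcal{D}}{\longrightarrow}G^*_{\epsilon,d}$.

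For part (ii), I would first note that the mean functional satisfies the vector analogues of Assumptions~\ref{ass_no1}--\ref{ass_no3} automatically: linearity of the mean gives $\bftheta(\omega_{a,b})=\sum_i \omega_{a,b}^{(i)}\theta_i$ exactly (so the analogue of Assumption~\ref{ass_no3} holds with zero error), the remainders vanish (analogue of Assumption~\ref{ass_no2}), and the partial-influence sums reduce to weighted partial sums of the centered $\{X_t\}$, which are $O_p(n^{1/2})$ uniformly by the FCLT and a maximal inequality (analogue of Assumption~\ref{ass_no1}). With these verified, I would run a two-sided control over the recursion tree of Algorithm~\ref{alg_SNMCP}. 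For detection (no under-segmentation): whenever a sub-interval $[s,e]$ still contains an unremoved change-point $k_j$, the condition $\epsilon<\epsilon_o$ guarantees a nested window $(\tilde t_1,\tilde t_2)\in H_{s:e}(k_j)$ isolating $k_j$; on it $\|D_n^*(\tilde t_1,k_j,\tilde t_2)\|_2^2$ is of order $n\delta^2$ while $V_n^*$ converges to a positive-definite limit, so $(V_n^*)^{-1}$ has eigenvalues bounded away from $0$ and $T^*_{s,e}(k_j)\gtrsim n\delta^2\gg K_n=(n\delta^2)^{\kappa}$ with $\kappa<1$ (Assumption~\ref{ass_K}), forcing a split. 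For no over-segmentation: on any sub-interval free of true change-points the statistic is $O_p(1)$ (its limit is a truncation of $G^*_{\epsilon,d}$), hence below the diverging threshold with probability tending to one.

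The delicate step is localization. In the isolating window I would expand $T_n^*(\tilde t_1,k,\tilde t_2)$ around $k=k_j$: a contrast expansion produces a deterministic decrease of order $\ell^2\delta^2/n$ as $\ell=|k-k_j|$ grows, while the stochastic perturbation is controlled by the FCLT, so that $T_n^*(\tilde t_1,k_j,\tilde t_2)$ strictly exceeds $T_n^*(\tilde t_1,k,\tilde t_2)$ uniformly over $\ell>\iota_n$ precisely when $\iota_n^2\delta^2/n\to\infty$, i.e.\ $\iota_n^{-2}\delta^{-2}n\to0$. The main obstacle throughout is the simultaneous control of the matrix self-normalizer $V_n^*(\tilde t_1,k,\tilde t_2)$ as $k$ varies: one must bound its largest and smallest eigenvalues away from $\infty$ and $0$ on the relevant windows so that $(V_n^*)^{-1}$ neither inflates nor collapses the quadratic form, and must separate the deterministic bias component of $V_n^*$ (which aids detection when a window straddles a change-point) from its stochastic part. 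Propagating all these bounds uniformly over the $O(n)$ candidate split points and over the recursion tree — rather than relying on a single consistent long-run-variance estimate — is where the argument departs most from classical CUSUM consistency proofs.
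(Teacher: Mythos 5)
Your part (i) is fine and matches the paper: it is exactly Proposition~\ref{pro_multi_parameter} applied with $\xi(Y_t)=X_t$, $\Sigma=\Sigma_X$ and zero remainder. Your skeleton for part (ii) — recursing over Algorithm~\ref{alg_SNMCP}, detecting each $k_i$ through an isolating window of width $h$ guaranteed by $\epsilon<\epsilon_o$, and suppressing the statistic elsewhere — is also the paper's outline. However, two of your key steps do not survive contact with the actual difficulty.

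First, the localization mechanism. You argue that $T_n^*$ at $k_j$ beats $T_n^*$ at $k$ with $\ell=|k-k_j|>\iota_n$ because the contrast $D_n^*$ decreases deterministically in $\ell$. That is the classical CUSUM argmax argument; for the self-normalized ratio it is not what drives the rate, and if it worked it would deliver the optimal $\delta^{-2}$ localization, which the paper explicitly states it cannot obtain (Remark 2). What the paper actually proves is that for \emph{every} $k$ with $\min_i|k-k_i|>\iota_n$ and every window in $H_{1:n}(k)$, the full statistic is $o_p(n\delta^2)$, and the engine is the deterministic \emph{inflation of the self-normalizer}: once one half of the window contains a change-point at distance at least $\iota_n$ from $k$, $V_n^*$ acquires a deterministic component whose size forces $T_n^*=O_p(n^2/\iota_n^2)$ (after balancing with an auxiliary sequence $c_n$), which is $o_p(n\delta^2)$ exactly under $\iota_n^{-2}\delta^{-2}n\to 0$. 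This is where the rate condition enters; your expansion of the numerator around $k_j$ does not produce it.

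Second, the matrix issue. Your plan to bound the largest and smallest eigenvalues of $V_n^*$ away from $\infty$ and $0$ cannot carry the suppression step: the deterministic inflation of $V_n^*$ caused by a straddled change-point is essentially a rank-one perturbation proportional to $\eta_j\eta_j^{\top}$ (the direction of the mean shift), so $\lambda_{\min}(V_n^*)$ is not inflated at all and the crude bound $D^{\top}V^{-1}D\leq \|D\|_2^2/\lambda_{\min}(V)$ gives nothing useful. The argument has to be directional: the large deterministic part of $D_n^*$ points along $\eta_j$, and one must show $\eta_j^{\top}(V_n^*)^{-1}\eta_j$ is small precisely because $V_n^*$ is inflated along $\eta_j\eta_j^{\top}$. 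The paper accomplishes this with the matrix Cauchy--Schwarz inequality of Tripathi (Lemma~\ref{lem_tripathi}, yielding bounds of the form $Z_y^{\top}Z_{yy}^{-1}Z_y\leq 1$ in Lemmas~\ref{lem_mean1}--\ref{lem_mean2}) and, for windows containing two change-points, the Sherman--Morrison formula to peel the rank-one drift terms off $(V_n^*)^{-1}$. Without some version of this directional analysis of the random matrix and its inverse, your proof of part (ii) has a genuine hole.
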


Compared to the univariate result in Theorem~\ref{thm1}(ii), it can be seen that the same localization rate is obtained in Theorem~\ref{thm_onechange_multi}(ii) for the multivariate mean case. However, compared to the univariate proof, the technical argument needed for Theorem~\ref{thm_onechange_multi} is substantially different, which is indeed much more challenging as it requires the analysis of a random matrix and its inverse, since the self-normalizer $V_n^*(t_1,k,t_2)$ is a random matrix in $\mathbb{R}^{d\times d}$ due to the vector nature of the functional $\bftheta(\cdot)$.

It is easy to see that the result of Theorem \ref{thm_onechange_multi} can be directly used to establish consistency of SNCP for change-point estimation in covariance matrix of $\{Y_t\in \mathbb{R}^d\}_{t=1}^n$~(assuming constant mean $E(Y_t)$), as the problem can be transformed into multivariate mean change-point estimation for the $(d+d^2)/2$-dimensional time series $\{(Y_{ti}\cdot Y_{tj})_{i\leq j}\}_{t=1}^n$, see for example \cite{Aue2009}.

\textbf{Remark 3}~(Extension to general vector-valued functionals): To further extend the consistency result in Theorem \ref{thm_onechange_multi} to a general vector-valued functional $\bftheta(\cdot)$, we need an additional assumption on the (approximate) linearity of $\bftheta$, similar to Assumption \ref{ass_no3} of the univariate case. Combined with Assumption \ref{ass_multi1}~(FCLT) and \ref{ass_multi2}~(asymptotic negligibility of reminder terms), the same argument used for the multivariate mean in Theorem \ref{thm_onechange_multi} can then be applied to establish consistency of SNCP for the general functional $\bftheta$. We omit the details to conserve space.

\section{Simulation Studies}\label{sec:simulation}
In this section, we conduct extensive numerical experiments to demonstrate the promising performance of SNCP for a wide range of change-point detection problems under temporal dependence. Under the unified framework of SNCP, we consider change-point estimation for four different settings: mean, covariance matrix, multi-parameter and correlation. In the supplement, we further consider change-point estimation for variance, autocorrelation and quantile.

For comparison, we further implement several state-of-the-art nonparametric change-point detection methods in the literature that are explicitly designed to accommodate temporal dependence. Specifically, {\bf (A)} For mean change, we compare with the classical CUSUM with binary segmentation~\citep{Csoergoe1997}~(hereafter CUSUM) and \cite{Bai2003}~(hereafter BP), which are designed for detecting mean change in time series and uses a model selection approach to simultaneously detect all change-points. {\bf (B)} For covariance matrix change, we compare with the CUSUM method in \cite{Aue2009}~(hereafter AHHR). {\bf (C)} For correlation change, we compare with \cite{Galeano2017}~(hereafter GW), which is essentially a combination of binary segmentation and the correlation change test proposed in \cite{Wied2012}. {\bf (D)} For variance change and autocorrelation change, we compare with \cite{Cho2012}~(hereafter MSML) and \cite{Korkas2017}~(hereafter KF). Both methods are designed for detecting second-order structural change in time series based on wavelet representation. {\bf (E)} For multi-parameter change and quantile change, to our best knowledge, there is no existing nonparametric method that works under temporal dependence. For illustration, we compare with the energy statistics based segmentation in \cite{Matteson2014}~(hereafter ECP) for multi-parameter change and with the multiscale quantile segmentation in \cite{Vanegas2020}~(hereafter MQS) for quantile change. Both ECP and MQS require temporal independence. {All methods are implemented using the recommended setting in the corresponding \texttt{R} packages or papers. We refer to Section \ref{subsec:dgp} of the supplement for implementation details of these methods.} 

\textbf{Implementation details of SNCP}: Throughout Sections \ref{sec:simulation}, we set the window size $\epsilon$ of SNCP to be $\epsilon=0.05$. We denote SNCP for mean as SNM, for covariance matrix as SNCM,  for multi-parameter as SNMP,  for correlation as SNC,  for variance as SNV,  for autocorrelation as SNA, and for quantile as SNQ. In addition, SNM90 denotes SNM using 90\% quantile~(i.e.\ critical value at $\alpha=0.1$) of the limiting null distribution $G^*_{\epsilon,d}$ as the threshold $K_n$, and similarly for other types of change and levels of critical value. For the power analysis in Sections \ref{sec:simulation} and real data applications in Section \ref{sec:applications}, the threshold $K_n$ for SNCP is set at 90\% quantile of $G^*_{\epsilon,d}$~(i.e. $\alpha=0.1$), which can be found in Table \ref{tab: criticalvalue} for $d=1,2,\cdots,10.$

We remark that the performance of SNCP is robust w.r.t. the window size $\epsilon$ and the quantile level $\alpha$ as the limiting distribution $G^*_{\epsilon,d}$, and thus the threshold $K_n$, adapt to the effect of $\epsilon$ and $\alpha$. We refer to Section \ref{subsec: sensitivity} of the supplement for a detailed sensitivity analysis.

\begin{table}[H]
	\centering
	\caption{Critical values of the limiting null distribution $G_{\epsilon,d}^*$ with $\epsilon=0.05$.}
	\label{tab: criticalvalue}
	{\small 
		\begin{tabular}{c|cccccccccc}
			\hline\hline
			\diagbox[height=1.3em, width=10em]{$1-\alpha$}{ $d$} & 1 & 2 & 3 & 4 & 5 & 6 & 7 & 8 & 9 & 10 \\ \hline
			90\% & 141.9 & 208.2 & 275.0 & 344.4 & 415.9 & 492.5 & 568.4 & 651.4 & 740.3 & 823.5 \\ 
			95\% & 165.5 & 237.5 & 309.1 & 387.5 & 464.5 & 541.7 & 624.1 & 713.3 & 808.6 & 898.9 \\
			\hline\hline
		\end{tabular}
	}
\end{table}

\textbf{Error measures of change-point estimation}: To assess the accuracy of change-point estimation, we use the Hausdorff distance and adjusted Rand index~(ARI). The Hausdorff distance is defined as follows. Denote the set of true (relative) change-points as $\bftau_o$ and the set of estimated (relative) change-points as $\hat{\bftau}$, we define $d_1(\bftau_o, \hat{\bftau})=\max_{\tau_1 \in \hat{\bftau}}\min_{\tau_2 \in \bftau_o}|\tau_1-\tau_2| \text{~~  and  ~~} d_2(\bftau_o, \hat{\bftau})=\max_{\tau_1 \in \bftau_o}\min_{\tau_2 \in \hat{\bftau}}|\tau_1-\tau_2|,$ where $d_1$ measures the over-segmentation error of $\hat{\bftau}$ and $d_2$ measures the under-segmentation error of $\hat{\bftau}$. The Hausdorff distance is $d_H(\bftau_o, \hat{\bftau})=\max(d_1(\bftau_o, \hat{\bftau}), d_2(\bftau_o, \hat{\bftau}))$. The ARI is originally proposed in \cite{Morey1984} as a measure of similarity between two different partitions of the same observations for evaluating the accuracy of clustering. Under the change-point setting, we calculate the ARI between partitions of the time series given by $\hat{\bftau}$ and $\bftau_o$. Ranging from 0 to 1, a higher ARI indicates more coherence between the two partitions by $\hat{\bftau}$ and $\bftau_o$ and thus more accurate change-point estimation.

\subsection{No change}\label{subsec:size}
We first investigate the performance of SNCP under the null, where the time series is stationary with no change-point. We report the performance of SNM and SNV observed in extensive numerical experiments. The performance of SNCP for other functionals is similar and thus omitted.

We simulate a stationary univariate time series $\{Y_t\}_{t=1}^n$ from an AR(1) process $Y_t=\rho Y_{t-1}+\epsilon_t,$ where $\{\epsilon_t\}$ is \textit{i.i.d.}\ standard normal $N(0,1)$. We set $n=1024,  4096$\footnote{$n$ is deliberately set as power of 2 as MSML in \cite{Cho2012} can only handle such sample size.} and vary $\rho\in \{-0.8,-0.5,0,0.5,0.8\}$ to examine robustness of SNCP against false positives~(i.e.\ Type-I error) under different direction and strength of temporal dependence. {Section \ref{subsec:size_supplement} of the supplement further provides the simulation results for $n=512$.} For each combination of ($n, \rho$), we repeat the simulation 1000 times.

The numerical result is summarized in Table \ref{tab: null_gaussian_error}, where we report the proportion of $\hat{m}=0$, $\hat{m}=1$ and $\hat{m}\geq 2$ among 1000 experiments. In general, the observation is as follows. SNCP gives satisfactory performance under moderate temporal dependence with $|\rho|\leq 0.5$ for all sample sizes and its performance further improves as the sample size $n$ increases.

BP performs well under $\rho=-0.8,-0.5,0$ but exhibits severe over-rejection  under positive temporal dependence for $\rho=0.5, 0.8$ and the performance does not improve as $n$ increases. KF and MSML perform well under $\rho=0,0.5,0.8$ but produce high proportion of false positives under negative temporal dependence for $\rho=-0.5, -0.8$ and the performance does not improve as $n$ increases. Overall, SNCP provides reasonably accurate size under different direction and strength of temporal dependence and achieves the target size as the sample size $n$ increases.

\begin{table}[]
	\caption{Performance under no change-point scenario with $m_o=0$.}
	\label{tab: null_gaussian_error}
	{\small
		\centering
		\begin{tabular}{c|rrr|rrr|rrr|rrr|rrr}
			\hline
			\hline  $n= 1024 $ & \multicolumn{3}{c}{$\rho= -0.8 $}& \multicolumn{3}{c}{$\rho= -0.5 $}& \multicolumn{3}{c}{$\rho= 0 $}& \multicolumn{3}{c}{$\rho= 0.5 $}& \multicolumn{3}{c}{$\rho= 0.8 $} \\
			\hline
			$\hat{m}$ & $0$ & $1$ & $\geq 2$ & $0$ & $1$ & $\geq 2$ & $0$ & $1$ & $\geq 2$ & $0$ & $1$ & $\geq 2$ & $0$ & $1$ & $\geq 2$  \\
			\hline
			SNM90 & 0.99 & 0.01 & 0.00 & 0.96 & 0.04 & 0.00 & 0.93 & 0.06 & 0.00 & 0.87 & 0.12 & 0.01 & 0.60 & 0.30 & 0.10 \\ 
			BP & 1.00 & 0.00 & 0.00 & 1.00 & 0.00 & 0.00 & 0.99 & 0.01 & 0.00 & 0.35 & 0.12 & 0.53 & 0.00 & 0.00 & 1.00 \\ 
			\hline
			SNV90 & 0.80 & 0.18 & 0.02 & 0.90 & 0.09 & 0.01 & 0.90 & 0.09 & 0.01 & 0.86 & 0.12 & 0.01 & 0.73 & 0.22 & 0.05 \\ 
			KF & 0.18 & 0.20 & 0.63 & 0.76 & 0.14 & 0.10 & 0.96 & 0.03 & 0.01 & 0.95 & 0.04 & 0.01 & 0.94 & 0.04 & 0.02 \\ 
			MSML & 0.48 & 0.33 & 0.19 & 0.84 & 0.15 & 0.01 & 0.92 & 0.08 & 0.00 & 0.92 & 0.08 & 0.00 & 0.90 & 0.09 & 0.00 \\ 
			\hline \hline
			$n= 4096 $ & \multicolumn{3}{c}{$\rho= -0.8 $}& \multicolumn{3}{c}{$\rho= -0.5 $}& \multicolumn{3}{c}{$\rho= 0 $}& \multicolumn{3}{c}{$\rho= 0.5 $}& \multicolumn{3}{c}{$\rho= 0.8 $} \\
			\hline
			$\hat{m}$ & $0$ & $1$ & $\geq 2$ & $0$ & $1$ & $\geq 2$ & $0$ & $1$ & $\geq 2$ & $0$ & $1$ & $\geq 2$ & $0$ & $1$ & $\geq 2$  \\
			\hline
			SNM90 & 0.94 & 0.06 & 0.00 & 0.89 & 0.10 & 0.00 & 0.89 & 0.10 & 0.01 & 0.88 & 0.11 & 0.01 & 0.84 & 0.14 & 0.02 \\ 
			BP & 1.00 & 0.00 & 0.00 & 1.00 & 0.00 & 0.00 & 1.00 & 0.00 & 0.00 & 0.49 & 0.13 & 0.38 & 0.00 & 0.00 & 1.00 \\ 
			\hline
			SNV90 & 0.88 & 0.12 & 0.00 & 0.90 & 0.10 & 0.01 & 0.91 & 0.08 & 0.00 & 0.90 & 0.09 & 0.01 & 0.85 & 0.13 & 0.02 \\ 
			KF & 0.02 & 0.01 & 0.97 & 0.54 & 0.17 & 0.29 & 0.90 & 0.06 & 0.04 & 0.92 & 0.05 & 0.04 & 0.88 & 0.06 & 0.06 \\ 
			MSML & 0.38 & 0.27 & 0.36 & 0.80 & 0.18 & 0.02 & 0.92 & 0.08 & 0.00 & 0.92 & 0.08 & 0.00 & 0.90 & 0.10 & 0.00 \\ 
			\hline \hline
		\end{tabular}
	}
\end{table}

\subsection{Change in mean}\label{subsec:powermean}
For mean change, we first simulate a stationary $d$-dimensional time series $\{X_t=(X_{t1},\cdots,X_{td})\}_{t=1}^n$ from a VAR(1) process with $X_t=\rho \mathbf I_d X_{t-1}+\epsilon_t,$ where $\{\epsilon_t\}$ is \textit{i.i.d.}\ standard $d$-variate normal $N(0,\mathbf I_d)$, and $\mathbf I_d$ denotes the $d$-dimensional identity matrix. We then generate time series $\{Y_t\}_{t=1}^n$ with piecewise constant mean based on $\{X_t\}_{t=1}^n$.
\begin{align*}
	&\text{(M1)}: n=600, &\rho=&0.2, &Y_t&=\begin{cases} 
		0+ X_{t}, & t\in [1,100], [201,300], [401,500],\\
		{2/\sqrt{d}+ X_{t}}, & t\in [101,200], [301,400], [501,600].
	\end{cases}\\
	&\text{(M2)}: n=1000, &\rho=&0.5, &Y_t&=\begin{cases} 
		{-3/\sqrt{d}+ X_{t}}, & t\in [1,75], [526,575],\\
		0+ X_{t}, & t\in [76,375], [426,525], [576,1000],\\
		{3/\sqrt{d}+X_{t}}, & t\in[376, 425].
	\end{cases}\\
	&\text{(M3)}: n=2000, &\rho=&-0.7, &Y_t&=\begin{cases} 
		{0.4/\sqrt{d}+ X_{t}}, & t\in [1,1000], [1501,2000],\\
		0+ X_{t}, & t\in [1001,1500].\\
	\end{cases}
\end{align*}
(M1) has evenly spaced change-points with moderate temporal dependence, (M2) features abrupt changes where shortest segments have only 50 or 75 time points with change-points mainly located at the first half of the time series, and (M3) has longer segments with small-scale changes. Typical realizations of (M1)-(M3) for $d=1$ can be found in Figure \ref{fig: dgp} of the supplementary material.

Note that the change size in (M1)-(M3) is normalized by $\sqrt{d}$ to keep the signal-to-noise ratio~(SNR) the same across time series of different dimensions. This enables us to isolate and examine the effect of dimension $d$ on estimation. Intuitively, a larger $d$ makes the estimation more difficult as the quality of finite sample approximation by FCLT worsens for higher dimension.

We set the dimension $d=1,5,10$. Note that BP only works for $d=1$~(i.e.\ univariate time series) and thus is not included in the comparison for $d=5,10$. The estimation results for $d=1$ and $d=5$ are summarized in Table \ref{tab: mean_gaussian_error}, where we report the distribution of $\hat{m}-m_o$, average ARI, over- and under-segmentation errors $d_1$, $d_2$ and Hausdorff distance $d_H$ among 1000 experiments. The estimation result for $d=10$ can be found in Table \ref{tab: mean_gaussian_error_add2} of the supplement.

\textbf{Univariate time series $d=1$}: For (M1), all methods perform well overall, though  CUSUM tends to greatly over-estimate the number of change-points $m_o$, as reflected by the distribution of $\hat{m}-m_o$. For (M2), SNM tends to slightly under-estimate $m_o$~(missing a short segment) while BP and CUSUM severely over-estimate $m_o$ and provide much less accurate estimation with noticeably larger Hausdorff distance $d_H$ and smaller ARI. For (M3), which corresponds to strong negative dependence, BP experiences severe power loss and have large under-segmentation error $d_2$. In summary, BP and CUSUM are prone to produce false positives under positive dependence, and BP may lose power under strong negative dependence. SNM is robust but may experience power loss when detecting short segment changes.

\textbf{Multivariate time series $d=5, 10$}: For (M1) and (M3), the estimation accuracy of SNM is remarkably robust to the increasing dimension, where the ARI and $d_H$ achieved by SNM only worsen slightly from $d=1$ to $d=5$. This also holds true for $d=10$~(see Table \ref{tab: mean_gaussian_error_add2} of the supplement). For (M2), with abrupt changes and strong positive temporal dependence, SNM is less robust to the increasing dimension and gives more false positives for $d=5,10$, however, its performance is still decent as measured by ARI and $d_H$. On the contrary, for all three models (M1)-(M3), the performance of CUSUM worsens significantly from $d=1$ to $d=5$~(and even more so for $d=10$).

\begin{table}[!htp]
	\centering
	\caption{Performance of SNM, BP, CUSUM under change in mean for $d=1$ and $5$.}
	\label{tab: mean_gaussian_error}
	{\small 
		\begin{tabu}{c|c|rrrrrrr|r|r|r|rr}
			\hline
			\hline & & \multicolumn{7}{c}{$\hat{m}-m_o$} & &&& \\
			\hline
			Method & Model & $\leq -3$ & $-2$ & $-1$ & $0$ & $1$ & $2$ & $\geq 3$ & ARI & $d_1 \times 10^2$ & $d_2 \times 10^2$ & $d_H \times 10^2$ & time\\
			\hline
			SNM &  & 0 & 0 & 9 & 974 & 17 & 0 & 0 & 0.960 & 0.87 & 0.90 & 1.01 & 1.75 \\ 
			BP & $(M1)$ & 0 & 0 & 0 & 847 & 142 & 11 & 0 & 0.974 & 1.48 & 0.50 & 1.48 & 9.10 \\ 
			CUSUM &  & 0 & 0 & 0 & 438 & 414 & 119 & 29 & 0.944 & 4.43 & 0.53 & 4.43 & 0.05 \\ 
			\hline
			SNM &  & 0 & 11 & 196 & 749 & 43 & 1 & 0 & 0.970 & 1.33 & 1.77 & 2.67 & 3.55 \\ 
			BP & $(M2)$ & 0 & 0 & 0 & 425 & 226 & 203 & 146 & 0.863 & 11.68 & 0.19 & 11.68 & 34.04 \\ 
			CUSUM &  & 2 & 0 & 15 & 365 & 341 & 190 & 87 & 0.821 & 10.63 & 2.86 & 10.76 & 0.06 \\ 
			\hline
			SNM &  & 0 & 0 & 1 & 986 & 13 & 0 & 0 & 0.969 & 1.11 & 0.80 & 1.14 & 10.59 \\ 
			BP & $(M3)$ & 0 & 371 & 6 & 623 & 0 & 0 & 0 & 0.616 & 0.33 & 19.03 & 19.03 & 179.75 \\ 
			CUSUM &  & 0 & 0 & 0 & 947 & 53 & 0 & 0 & 0.965 & 1.32 & 0.88 & 1.32 & 0.09 \\ 
			\hline
			\hline & & \multicolumn{7}{c}{$\hat{m}-m_o$} & &&& \\
			\hline
			Method & Model & $\leq -3$ & $-2$ & $-1$ & $0$ & $1$ & $2$ & $\geq 3$ & ARI & $d_1 \times 10^2$ & $d_2 \times 10^2$ & $d_H \times 10^2$ & time\\\hline
			SNM & $(M1)$ & 0 & 0 & 13 & 946 & 41 & 0 & 0 & 0.953 & 1.16 & 1.12 & 1.37 & 12.48 \\ 
			CUSUM & $d=5$ & 167 & 0 & 0 & 230 & 336 & 189 & 78 & 0.783 & 5.18 & 11.41 & 15.71 & 0.04 \\  
			\hline
			SNM & $(M2)$ & 0 & 11 & 175 & 628 & 166 & 18 & 2 & 0.937 & 4.59 & 1.93 & 5.68 & 22.88 \\ 
			CUSUM & $d=5$ & 63 & 5 & 5 & 98 & 161 & 213 & 455 & 0.626 & 18.02 & 5.77 & 20.88 & 0.07 \\ 
			\hline
			SNM & $(M3)$ & 0 & 0 & 4 & 993 & 3 & 0 & 0 & 0.968 & 0.93 & 0.96 & 1.03 & 60.00 \\ 
			CUSUM & $d=5$ & 0 & 70 & 0 & 928 & 2 & 0 & 0 & 0.896 & 1.02 & 4.50 & 4.52 & 0.07 \\
			\hline\hline
		\end{tabu}
	}
\end{table}

\subsection{Change in covariance matrix}\label{subsec:powercovariance}
For covariance matrix change, we adopt the simulation settings in \cite{Aue2009} and detect change in covariance matrices of a four-dimensional time series $\{Y_t=(Y_{t1},\cdots,Y_{t4})\}_{t=1}^n$ with $n=1000$. Thus, the number of parameters in the covariance matrix is $d=(4\times 5)/2=10$. Denote $\Sigma_\rho$ as an exchangeable covariance matrix with unit variance and equal covariance $\rho$, we consider
\begin{align*}
	&\text{(C0)}:Y_t={0.3}\mathbf{I}_4Y_{t-1}+\mathbf{e}_t, ~ \mathbf{e}_t\overset{i.i.d.}{\sim} N(0, \Sigma_{0.5}), ~t\in[1,1000].\\
	&\text{(C1)}: Y_t=\begin{cases} 
		L_0F_{t}+\mathbf{e}_t, ~~ \mathbf{e}_t\overset{i.i.d.}{\sim} N(0, \mathbf{I}_{4}), & t\in [1,333],\\
		\sqrt{3}L_0F_{t}+\mathbf{e}_t, ~~ \mathbf{e}_t\overset{i.i.d.}{\sim} N(0, \mathbf{I}_{4}), & t\in [334,667],\\
		L_0F_{t}+\mathbf{e}_t, ~~ \mathbf{e}_t\overset{i.i.d.}{\sim} N(0, \mathbf{I}_{4}), & t\in [668,1000].
	\end{cases}\\
	&\text{(C2)}: Y_t=\begin{cases} 
		L_0F_{t}+\mathbf{e}_t, ~~ \mathbf{e}_t\overset{i.i.d.}{\sim} N(0, \mathbf{I}_{4}), & t\in [1,333],\\
		\sqrt{3}L_0F_{t}+\mathbf{e}_t, ~~ \mathbf{e}_t\overset{i.i.d.}{\sim} N(0, \mathbf{I}_{4}), & t\in [334,667],\\
		3L_0F_{t}+\mathbf{e}_t, ~~ \mathbf{e}_t\overset{i.i.d.}{\sim} N(0, \mathbf{I}_{4}), & t\in [668,1000].
	\end{cases}
\end{align*}
Here, $\{F_t\}_{t=1}^n$ is a two-dimensional stationary VAR(1) process with the transition matrix $0.3\mathbf{I}_{2}$ and $L_0=[1,1,0,0;0,0,1,1]$ denotes the factor loading matrix. (C1) and (C2) generate covariance changes in the dynamic factor model, which is widely used in the time series literature. We refer to Section \ref{subsec: covmatrix} of the supplement for additional simulation settings with covariance changes in VAR models. The estimation result is reported in Table \ref{tab: sn_4dcov}. For monotonic changes (C2), both methods perform well though AHHR tends to over-estimate the number of change-points, while for non-monotonic changes (C1), AHHR seems to over-estimate and experience power loss at the same time and is outperformed by SNCM. For (C0), both methods give decent performance under moderate temporal dependence with SNCM achieving the target size more accurately.


\begin{table}[h]
	\centering
	\caption{Performance of SNCM and AHHR under change in covariance matrix.} 
	\label{tab: sn_4dcov}
	{\small
		\begin{tabular}{c|c|rrrrrrr|r|r|r|r|r}
			\hline
			\hline & & \multicolumn{7}{c}{$\hat{m}-m_o$} & &&&& \\
			\hline
			Method & Model & $\leq -3$ & $-2$ & $-1$ & $0$ & $1$ & $2$ & $\geq 3$ & ARI & $d_1 \times 10^2$ & $d_2 \times 10^2$ & $d_H \times 10^2$ & time\\
			\hline
			SNCM & $(C1)$ & 0 & 1 & 19 & 951 & 29 & 0 & 0 & 0.923 & 2.13 & 2.46 & 2.78 & 56.44 \\ 
			AHHR & & 0 & 221 & 0 & 687 & 82 & 10 & 0 & 0.721 & 2.45 & 12.37 & 13.50 & 0.44 \\ 
			\hline
			SNCM & $(C2)$ & 0 & 0 & 59 & 902 & 39 & 0 & 0 & 0.898 & 2.53 & 3.95 & 4.37 & 55.17 \\ 
			AHHR &  & 0 & 0 & 1 & 792 & 168 & 32 & 7 & 0.896 & 4.97 & 2.34 & 5.00 & 0.56 \\ 
			\hline
			Method & Model & $\hat m=0$ & $\hat m=1$ & $\hat{m} \geq 2$ & \\\cline{1-5}
			SNCM & $(C0)$ & 916 & 80 & 4 \\
			AHHR & & 932 & 59 & 9 \\\hline\hline
		\end{tabular}
	}
\end{table}

\subsection{Change in multi-parameter}\label{subsec: multipara_main}
As discussed before, one notable advantage of SNCP is its universal applicability, where it treats change-point detection for a broad class of parameters in a unified fashion. To conserve space, we refer to Sections \ref{subsec:powervariance}, \ref{subsec:poweracf}, \ref{subsec:powercorrelation} and \ref{subsec:powerquantile} of the supplement for extensive numerical evidence of the favorable performance of SNCP for change-point detection in variance, auto-correlation, correlation and quantile.

In this section, we further consider change-point estimation for multi-parameter of a univariate time series, where we aim to detect any structural break among multiple parameters of interest. This can be useful for practical scenarios where one does not know the exact nature of the change but wishes to detect any change among a group of parameters of interest. For example, if one is interested in central tendency of the time series, SNMP can be used to simultaneously detect change in mean and median, while if the user suspects there is change in the dispersion/volatility of the data, SNMP can be used to detect change jointly in variance and high quantiles.

In some sense, this is related to change-point detection in distribution~\citep[e.g.\ ECP,][]{Matteson2014}, where the focus is to detect any change in the marginal distribution of a univariate time series. In theory, algorithms that target distributional change can capture all potential changes in the data. However, it only informs users the existence of a change but is unable to narrow down the specific type of change (e.g.\ is the detected change in central tendency or in volatility?). This can be less informative in real data analysis when the practitioner is particularly concerned about one certain behavior change of the data and may also lead to potential power loss compared to methods that target a specific type of change. In addition, existing methods on distributional change typically require the temporal independence assumption, such as ECP in \cite{Matteson2014}.

We consider two simulation settings with $n=1000$, and compare the performance of SNMP and ECP.
\begin{align*}
	\text{(MP1)}:  Y_t&=\begin{cases} 
		X_t, & t\in [1,333],\\
		F^{-1}(\Phi(X_t)), & t\in [334,667],\\
		X_t, & t\in[668, 1000].
	\end{cases}\quad 
	\text{(MP2)}:  Y_t=\begin{cases} 
		\epsilon_{t}, & t\in [1,333],\\
		1.6 \epsilon_{t}, & t\in [334,667],\\
		\epsilon_{t}, & t \in [668,1000].
	\end{cases}
\end{align*}
For (MP1), $\{X_t\}_{t=1}^n$ follows an AR(1) process with $X_t=\rho X_{t-1}+\sqrt{1-\rho^2}\epsilon_t$ where $\rho=0.2$ and $\{\epsilon_t\}$ is \textit{i.i.d.}\ $N(0,1)$, $\Phi(\cdot)$ denotes the CDF of $N(0,1)$, and $F(\cdot)$ denotes a mixture of a truncated normal and a generalized Pareto distribution such that $F^{-1}(q)=\Phi^{-1}(q)$ for $q\leq 0.5$ and $F^{-1}(q)\neq \Phi^{-1}(q)$ for $q>0.5.$ Thus, for (MP1), the change originates from upper quantiles. We refer to Section \ref{subsec:powerquantile} of the supplement for the detailed definition of $F(\cdot)$ and its motivation from financial applications. For (MP2), $\{\epsilon_{t}\}_{t=1}^n$ is \textit{i.i.d.}\ $N(0,1)$, thus we have temporal independence and the change is solely driven by variance.

The estimation result is summarized in Table \ref{tab: sn_multipara}. We compare the performance of SNCP based on individual parameters and their multi-parameter combination. For clarity, we specify the multi-parameter set that SNMP targets. For example, SNQ$_{90}$V denotes the SNMP that targets 90\% quantile and variance simultaneously. For (MP1), SNQ$_{90}$ and SNQ$_{95}$ perform well as the change originates from upper quantiles, and further improvement can be achieved by combining them into multi-parameter SNQ$_{90,95}$. Similarly, including variance in the multi-parameter set further improves the estimation accuracy. ECP provides decent performance but tends to over-estimate due to the temporal dependence of the time series. For (MP2), since the change is solely driven by variance, SNV gives the best performance, while quantile based detection, such as SNQ$_{90}$ experiences power loss. However, the multi-parameter detection based on SNQ$_{10,90}$ and SNQ$_{10,20,80,90}$ provide much improved performance over SNQ$_{90}$, though similar to ECP, they do experience certain power loss compared to SNV. Moreover, SNMP performs competently compared to SNV once variance is included in the multi-parameter set.

This numerical study clearly demonstrates the versatility of SNCP, where it can be effortlessly tailored to target various types of parameter change and their multi-parameter combination. Moreover, compared to detection based on an individual parameter, multi-parameter detection tends to enhance power and improve estimation accuracy when the underlying change affects several parameters in the considered multi-parameter set. We further illustrate this point in more details via real data analysis in Section \ref{subsec:finance}.

\begin{table}[H]
	\centering
	\caption{Performance of SNMP and ECP under change in multi-parameter.}
	\label{tab: sn_multipara}
	{\small
		\begin{tabu}{c|c|rrrrrrr|r|r|r|r|r}
			\hline
			\hline & & \multicolumn{7}{c}{$\hat{m}-m_o$} & &&& & \\
			\hline
			Method & Model & $\leq -3$ & $-2$ & $-1$ & $0$ & $1$ & $2$ & $\geq 3$ & ARI & $d_1 \times 10^2$ & $d_2 \times 10^2$ & $d_H \times 10^2$ & time\\
			\hline
			SNQ$_{90}$ &  & 0 & 10 & 132 & 805 & 50 & 3 & 0 & 0.839 & 3.25 & 7.26 & 7.85 & 17.74 \\ 
			SNQ$_{95}$ &  & 0 & 5 & 100 & 820 & 73 & 2 & 0 & 0.868 & 3.16 & 5.70 & 6.62 & 17.20 \\ 
			SNV &  & 0 & 2 & 110 & 832 & 54 & 2 & 0 & 0.869 & 2.45 & 5.47 & 6.06 &  12.20\\ 
			SNQ$_{90,95}$ &  & 0 & 3 & 82 & 850 & 62 & 3 & 0 & 0.878 & 3.01 & 4.88 & 5.67 & 39.56 \\ 
			SNQ$_{90}$V & $(MP1)$  & 0 & 0 & 56 & 869 & 70 & 5 & 0 & 0.891 & 3.04 & 3.95 & 4.77 & 30.96 \\ 
			SNQ$_{95}$V &  & 0 & 2 & 64 & 861 & 68 & 5 & 0 & 0.889 & 2.92 & 4.30 & 5.14 & 30.81 \\ 
			SNQ$_{90,95}$V &  & 0 & 2 & 48 & 882 & 66 & 2 & 0 & 0.894 & 2.95 & 3.79 & 4.58 & 49.72 \\ 
			ECP &  & 0 & 0 & 0 & 730 & 144 & 92 & 34 & 0.850 & 6.33 & 3.68 & 6.41 & 10.58 \\ 
			\hline
			SNV &  & 0 & 0 & 14 & 956 & 28 & 2 & 0 & 0.928 & 2.15 & 2.13 & 2.60 & 12.28 \\ 
			SNQ$_{90}$ & & 0 & 71 & 282 & 596 & 48 & 3 & 0 & 0.705 & 4.10 & 15.72 & 16.33 & 17.50 \\ 
			SNQ$_{10,90}$ &  & 0 & 13 & 165 & 788 & 32 & 2 & 0 & 0.826 & 3.00 & 8.36 & 8.84 & 39.62 \\ 
			SNQ$_{90}$V & $(MP2)$ & 0 & 0 & 32 & 929 & 39 & 0 & 0 & 0.913 & 2.42 & 2.95 & 3.45 & 30.92 \\ 
			SNQ$_{10,90}$V &  & 0 & 1 & 50 & 917 & 32 & 0 & 0 & 0.903 & 2.37 & 3.67 & 4.06 & 49.74 \\  
			SNQ$_{10,20,80,90}$ &  & 0 & 5 & 118 & 816 & 60 & 1 & 0 & 0.849 & 3.41 & 6.51 & 7.27 & 68.96\\
			ECP &  & 0 & 49 & 46 & 807 & 79 & 15 & 4 & 0.833 & 3.43 & 6.78 & 7.58 & 9.96 \\ 
			\hline\hline
		\end{tabu}
	}
\end{table}

{For each estimated change-point by SNMP, one may want to identify which features actually changed. One informal strategy is to further conduct a subsequent SN-test. Specifically, for each estimated change-point, based on a well-designed local interval, we can further conduct a single change-point SN test via \eqref{statgeneral} for each feature and determine if it is changed at this very change-point. Though this procedure is obviously subject to multiple testing issues, it can shed some light on which feature actually changed. We refer to Section \ref{sec:local_proc} for more details of this informal procedure.}

\section{Conclusion}\label{sec:conclusion}
In this paper, we present a novel and unified framework for time series segmentation in multivariate time series with rigorous theoretical guarantees. Our proposed method is motivated by the recent success of the SN method~\citep{shao2015self} and advances the methodological and theoretical frontier of statistics literature on change-point estimation by adapting the general framework of approximately linear functional in \cite{Kunsch1989}. Our method is broadly applicable to the estimation of piecewise stationary models defined in a general functional. In terms of statistical theory, the consistency and convergence rate of change-point estimation are established under the multiple change-points setting for the first time in the literature of SN-based change-point analysis.

For future research, it may be desirable to relax the piecewise constant assumption and allow the parameter to vary smoothly within each segment; see \cite{wuzhou2019} for such a formulation in nonparametric trend models and \cite{Casini2021change} in locally stationary time series.
\setlength{\bibsep}{0.2pt plus 1ex}
	
	\newpage
	
	\begin{center}
		{\Large Supplementary Material}
	\end{center}

\setcounter{section}{0}
\renewcommand{\thetable}{S.\arabic{table}}
\renewcommand{\thefigure}{S.\arabic{figure}}
\renewcommand{\thesection}{S.\arabic{section}}
\renewcommand{\theass}{S.\arabic{ass}}
\renewcommand{\thethm}{S.\arabic{thm}}
\renewcommand{\thelemma}{S.\arabic{lemma}}
\renewcommand{\theequation}{S.\arabic{equation}}

The supplementary material is organized as follows. Section~\ref{sec: bs_powerloss} illustrates the failure of combining the proposed SN-based test with the classical binary segmentation or a pure local-window based segmentation algorithm. Section \ref{sec:addsim} contains additional simulation results. Section \ref{sec:applications} illustrates the effectiveness and practical significance of SNCP via meaningful real data applications in climate science and finance. Section~\ref{sec:verify} provides detailed verification for technical assumptions of SNCP for the smooth function model, which includes a wide class of parameters such as mean, variance, (auto)-covariance and (auto)-correlation. { Section~\ref{sec:verify_quantile} further provides detailed verification for technical assumptions of SNCP for quantiles.} Section~\ref{sec:consistencygeneral} contains the consistency proof of SNCP for a general univariate functional. In Section~\ref{sec:consistencymean}, we further provide the proof for the consistency of SNCP for detecting changes in multivariate mean. {\Cref{sec:local_refine} proposes a simple local refinement procedure for SNCP, which improves the localization error rate of SNCP to the optimal $O_p(n^{-1})$ rate for the mean functional.}

There are 12 subsections in Section~\ref{sec:addsim}. In particular, Section~\ref{subsec: sensitivity} conducts sensitivity analysis w.r.t.\ to the choice of the window size $\epsilon$ and the critical value level $\alpha$ for SNCP; Section~\ref{subsec: wbs_not} provides extensive numerical comparison between the proposed nested local-window segmentation algorithm and other popular state-of-the-art segmentation algorithms~(WBS, SBS, NOT and fused-LASSO) for detecting changes in univariate and multivariate mean; Section~\ref{subsec:size_supplement} contains additional results for no change; Section~\ref{subsec:add_powermean} presents additional numerical comparison between SNCP and the conventional CUSUM for the multivariate mean case; Section~\ref{subsec:powervariance}, \ref{subsec:poweracf}, \ref{subsec:powercorrelation}, \ref{subsec:powerquantile} and \ref{subsec:powermeanvariance} conduct numerical comparison between SNCP and other popular change-point detection methods for variance, auto-correlation, correlation, and quantile changes, respectively; Section~\ref{subsec: covmatrix} contains additional results for changes in covariance matrix; Section~\ref{subsec: multipara} provides additional simulation results for changes in multi-dimensional parameters; Section~\ref{subsec:dgp} contains the implementation details of comparison methods and typical realizations of DGP used in simulation.

In terms of notation, throughout the supplement, we let $X_n\in\mathbb{R}^d$ with dimension $d>0$ be a set of random vector  defined in a probability space $(\Omega,\mathbb{P},\mathcal{F})$. For a corresponding set of constants $a_n$, we say $X_n=O_p^s(a_n)$ if for any $\varepsilon>0$, there exists a finite $M>0$ and a finite $N>0$ such that for all $n>N$,
$$
\mathbb{P}(\|X_n/a_n\|>M)+\mathbb{P}(\|X_n/a_n\|<1/M)<\varepsilon,
$$ 
where $\|\cdot \|$ denotes the $L_2$ norm, i.e. we say $X_n=O_p^s(1)$  if both $\|X_n\|$ and $\|X_n\|^{-1}$ are bounded (from above) in probability. In addition, we let $C$ be a generic constant that may vary from line to line. 

\section{Failure of SN with binary segmentation and the pure local-window based segmentation}\label{sec: bs_powerloss}

\subsection{Theoretical evidence}\label{subsec: bs_powerloss_theory}
In this section, we provide theoretical evidence to demonstrate that a simple combination of the proposed SN test statistic and the classical binary segmentation can suffer severe power loss and inconsistency under the multiple change-point scenario.

For simplicity, we focus on the univariate mean case with two change-points. Suppose $\{Y_t\}_{t=1}^{n}$ is generated by:
\[Y_t=\left\{\begin{array}{cc}
	\delta+X_t,&1\le t\le k_1\\
	X_t,& k_1+1\le t\le k_2\\
	\delta+X_t,& k_2+1\le t\leq n,
\end{array}
\right.\]
where $\delta>0$ is a constant, $\{X_t\}_{t\in\mathbb{Z}}$ is a stationary time series, and $k_i=\lfloor n\tau_i\rfloor$, $i=1,2$ with $0<\tau_1<\tau_2<1$ denotes the two change-points.

In the following, we explicitly derive the asymptotic limit of the SN test statistic $SN_n^*=\max_{k=1,\cdots,n-1} T_n(k)$ $=\max_{k=1,\cdots,n-1} D_n(k)^2/V_n(k)$ based on the entire sample $\{Y_t\}_{t=1}^n$ and show that the asymptotic order of $SN_n^*$ is $O_p(1)$~(see Section 2 of the main text for detailed definition of $SN_n^*$). Note that the binary segmentation algorithm uses $SN_n^*$ to detect the existence of potential change-points and thus $SN_n^*=O_p(1)$ indicates the power loss and asymptotic inconsistency of the binary segmentation algorithm.

Denote $\bar{X}_{a,b}=\frac{1}{b-a+1}\sum_{t=a}^{b}X_t$. By simple calculation, the contrast statistic $D_n(k)$ takes the form
\begin{align*}
	D_n(k)=&\left\{\begin{aligned}
		&\frac{k(n-k)}{n^{3/2}}(\bar{X}_{1,k}-\bar{X}_{k+1,n}+\frac{k_2-k_1}{n-k}\delta),&1\le k\le k_1,\\
		&	\frac{k(n-k)}{n^{3/2}}(\bar{X}_{1,k}-\bar{X}_{k+1,n}+\frac{k_1}{k}\delta-\frac{n-k_2}{n-k}\delta),&k_1+1\leq k\leq k_2,\\
		&	\frac{k(n-k)}{n^{3/2}}(\bar{X}_{1,k}-\bar{X}_{k+1,n}+\frac{k_1-k_2}{k}\delta),& k_2+1\le k\leq n-1.
	\end{aligned}\right.&&
\end{align*}
Similarly, we can derive the explicit form of the self-normalizer $V_n(k)$. For $1\le k\le k_1$,
\begin{align*}
	V_n(k)&=\sum_{i=1}^{k}\frac{i^2(k-i)^2}{n^2k^2}(\bar{X}_{1,i}-\bar{X}_{i+1,k})^2
	\\&+\sum_{i=k+1}^{k_1+1}\frac{(n-i+1)^2(i-k-1)^2}{n^2(n-k)^2}(\bar{X}_{k+1,i-1}-\bar{X}_{i,n}+\frac{k_2-k_1}{n-i+1}\delta)^2\\&
	+\sum_{i=k_1+2}^{k_2+1}\frac{(n-i+1)^2(i-k-1)^2}{n^2(n-k)^2}(\bar{X}_{k+1,i-1}-\bar{X}_{i,n}+\frac{k_1-k}{i-k-1}\delta-\frac{n-k_2}{n-i+1}\delta)^2
	\\&+\sum_{i=k_2+2}^{n}\frac{(n-i+1)^2(i-k-1)^2}{n^2(n-k)^2}(\bar{X}_{k+1,i-1}-\bar{X}_{i,n}-\frac{k_2-k_1}{i-k-1}\delta)^2.
\end{align*}	
For $k_1+1\leq k\leq k_2$,
\begin{align*}
	V_n(k)&=		
	\sum_{i=1}^{k_1}\frac{i^2(k-i)^2}{n^2k^2}(\bar{X}_{1,i}-\bar{X}_{i+1,k}+\frac{k-k_1}{k-i}\delta)^2\\&+\sum_{i=k_1+1}^{k}\frac{i^2(k-i)^2}{n^2k^2}(\bar{X}_{1,i}-\bar{X}_{i+1,k}+\frac{k_1}{i}\delta)^2
	\\&+\sum_{i=k+1}^{k_2+1}\frac{(n-i+1)^2(i-k-1)^2}{n^2(n-k)^2}(\bar{X}_{k+1,i-1}-\bar{X}_{i,n}-\frac{n-k_2}{n-i+1}\delta)^2\\&
	+\sum_{i=k_2+2}^{n}\frac{(n-i+1)^2(i-k-1)^2}{n^2(n-k)^2}(\bar{X}_{k+1,i-1}-\bar{X}_{i,n}-\frac{k_2-k}{i-k-1}\delta)^2.
\end{align*}
For $ k_2+1\le k\le n-1$,
\begin{align*}
	V_n(k)&=	
	\sum_{i=1}^{k_1}\frac{i^2(k-i)^2}{n^2k^2}(\bar{X}_{1,i}-\bar{X}_{i+1,k}+\frac{k_2-k_1}{k-i}\delta)^2\\&+\sum_{i=k_1+1}^{k_2}\frac{i^2(k-i)^2}{n^2k^2}(\bar{X}_{1,i}-\bar{X}_{i+1,k}+\frac{k_1}{i}\delta-\frac{k-k_2}{k-i}\delta)^2
	\\&+\sum_{i=k_2+1}^{k}\frac{i^2(k-i)^2}{n^2k^2}(\bar{X}_{1,i}-\bar{X}_{i+1,k}-\frac{k_2-k_1}{i}\delta)^2\\&
	+\sum_{i=k+1}^{n}\frac{(n-i+1)^2(i-k-1)^2}{n^2(n-k)^2}(\bar{X}_{k+1,i-1}-\bar{X}_{i,n})^2.
\end{align*}

Thus, by the FCLT that $n^{1/2}(\bar{X}_{\lfloor nr_1\rfloor+1,\lfloor nr_2\rfloor}-EX_t)\Rightarrow \sigma_X(B(r_2)-B(r_1))$, we have that
$$
\left\{ n^{-1/2}D_n(\lfloor n\tau\rfloor), n^{-1} V_n(\lfloor n\tau\rfloor)\right\}
\Rightarrow  \Big\{ \sigma_X\delta D^f(\tau), \sigma_X^2\delta^2V^f(\tau)\Big\},
$$
and 
$$
\Big\{T_n(\lfloor n\tau\rfloor)=D_n(\lfloor n\tau\rfloor)^2/ V_n(\lfloor n\tau\rfloor)\Big\}
\Rightarrow  \Big\{ T^f(\tau)\Big\},
$$
where $ T^f(\tau)= D^f(\tau)^2/ V^f(\tau)$, and
\begin{flalign*}
	D^{f}(\tau)=&\left\{\begin{aligned}
		&\tau(\tau_2-\tau_1), &0\leq \tau< \tau_1,\\
		&(1-\tau)\tau_1-\tau(1-\tau_2), &\tau_1\leq \tau< \tau_2,\\
		&(1-\tau)(\tau_1-\tau_2),& \tau_2\leq \tau\leq 1,
	\end{aligned}\right.\\
	V^f(\tau)=&\left\{\begin{aligned}&\int_{\tau}^{\tau_1}\frac{(\tau_2-\tau_1)^2(s-\tau)^2}{(1-\tau)^2}ds +\int_{\tau_1}^{\tau_2}\frac{(1-s)^2(s-\tau)^2}{(1-\tau)^2}(\frac{\tau_1-\tau}{s-\tau}-\frac{1-\tau_2}{1-s})^2ds+\int_{\tau_2}^{1}\frac{(1-s)^2(\tau_2-\tau_1)^2}{(1-\tau)^2}ds,\\ &\hspace{14.3cm} 0\leq \tau< \tau_1,\\
		&\int_{0}^{\tau_1}\frac{s^2(\tau-\tau_1)^2}{\tau^2}ds +\int_{\tau_1}^{\tau}\frac{(s-\tau)^2\tau_1^2}{\tau^2}ds +\int_{\tau}^{\tau_2}\frac{(1-\tau_2)^2(s-\tau)^2}{(1-\tau)^2}ds+\int_{\tau_2}^{1}\frac{(1-s)^2(\tau_2-\tau)^2}{(1-\tau)^2}ds,  \\ &\hspace{14.3cm}\tau_1\leq \tau< \tau_2,\\
		&\int_{0}^{\tau_1}\frac{s^2(\tau_2-\tau_1)^2}{\tau^2}ds +\int_{\tau_1}^{\tau_2}\frac{(\tau-s)^2s^2}{\tau^2}(\frac{\tau_1}{s}-\frac{\tau-\tau_2}{\tau-s})^2ds +\int_{\tau_2}^{\tau}\frac{(\tau-s)^2(\tau_2-\tau_1)^2}{\tau^2}ds, \\ &\hspace{14.3cm} \tau_2\leq \tau\leq 1.
	\end{aligned}\right.
\end{flalign*}

In other words, the asymptotic limit of the SN test statistic $SN_n^*$ is a deterministic constant $\max_{\tau \in (0,1)} T^f(\tau)$. This interesting phenomenon is caused by the existence of the \textit{two} change-points, which inflates the self-normalizer $V_n(k)$ and thus deflates the SN test statistic $T_n(k)$.

Together, this implies that $SN_n^*=O_p(1)$. Hence the probability of detecting change-points is less than 1 even when $n\rightarrow\infty$, indicating the power loss and asymptotic inconsistency for a simple combination of the SN test and binary segmentation.

As explained in the main text, unlike the classical binary segmentation, which evaluates the SN test based on the whole sample, the proposed nested local-window segmentation bypasses this power loss issue due to inflated self-normalizer by evaluating the SN test statistic on a set of carefully designed nested local-windows.

Another popular segmentation algorithm in the change-point literature is the pure local-window based approach, see for example, \cite{Niu2012}, \cite{Yau2016} and \cite{Niu2016}. Compared to the proposed nested local-window segmentation algorithm in our paper, the pure local-window approach \textit{only} considers one single local-window around each time point in the data.

Specifically, denote the window size as $h$, following the notation in Section \ref{sec:multiple} of the main text, for each $k=h,\cdots,n-h$, the pure local-window approach computes the SN-based test for time point $k$ via
$$T'_{1,n}(k)=T_n(k-h+1,k,k+h).$$
In other words, the pure local-window approach only computes the SN-based statistic on the smallest local-window $(k-h+1,k+h)$. The change-point estimator is then obtained by comparing the so-called local-window maximizer~(see \cite{Niu2012} for detailed definition) of $\{T_{1,n}(k)\}_{k=h}^{n-h}$ with a properly chosen threshold. Following the same argument as the one for Theorem \ref{thm1} in the main text, we can easily show that 
\begin{align*}
	\max_{k=h,\cdots,n-h} T'_{1,n}(k) \overset{\mathcal{D}}{\longrightarrow} G_\epsilon'= \sup_{u\in(\epsilon,1-\epsilon)}{D(u-\epsilon,u,u+\epsilon)^2}/{V(u-\epsilon,u,u+\epsilon)}.
\end{align*}
Thus, we can use the 90\% or 95\% quantile of the limiting distribution $G_\epsilon'$ as the threshold for the pure local-window approach, which controls the asymptotic false positive detection rate.

In comparison, the proposed nested local-window segmentation algorithm computes the SN-based test for each time point $k=h,\cdots,n-h$ via
$$T_{1,n}(k)=\max\limits_{(t_1,t_2)\in H_{1:n}(k)}T_n(t_1,k,t_2),$$
based on a series of expanding nested local-windows surrounding $k$ indexed by $H_{1:n}(k)=\bigl\{ (t_1,t_2)\big\vert t_1=k-j_1 h+1, j_1=1,\ldots, \lfloor k/h\rfloor;t_2=k+j_2 h, j_2=1,\ldots, \lfloor (n-k)/h \rfloor  \bigl\}.$ Note that $(k-h+1,k+h)$ is the smallest local-window in $H_{1:n}(k)$. As discussed in the main text, such a strategy is expected to achieve higher power than the pure local-window approach, especially for the case where the change-point $k$ is far away from other change-points by utilizing larger nested windows that cover $k$ other than $(k-h+1,k+h)$. We further verify this claim in Section \ref{subsec:bs_sara} via numerical study.


%

\subsection{Numerical evidence}\label{subsec:bs_sara}
In this section, we demonstrate the power loss of the simple combination between the SN test and the classical binary segmentation or the pure local-window approach via a small simulation example. To illustrate, we simulate $\{Y_t\}_{t=1}^n$ from 
\begin{align*}
&\text{(M4)}: n=2000, &\rho=&0.7, &Y_t&=\begin{cases} 
	0.8+ X_{t}, & t\in [1,1000], [1501,2000],\\
	0+ X_{t}, & t\in [1001,1500].\\
\end{cases}\\
&\text{(M5)}: n=2000, &\rho=&0.7, &Y_t&=\begin{cases} 
	0+ X_{t}, & t\in [1,1000],\\
	0.8+ X_{t}, & t\in [1001,1500],\\
	1.6+ X_{t}, & t\in [1501,2000],
\end{cases}
\end{align*}
where $\{X_t\}_{t=1}^n$ is a stationary AR(1) process such that $X_t=\rho X_{t-1}+\epsilon_t,$ where $\{\epsilon_t\}$ is \textit{i.i.d.} $N(0,1)$. Note that the main difference between (M4) and (M5) is that the mean change in (M4) is non-monotonic while the mean change in (M5) is monotonic.

We apply the proposed nested local-window based SNM for change-point detection in mean. We further apply the simple combination between the SN test and the classical binary segmentation~(SNBS) or the pure local-window approach~(SNLocal). The estimation result is summarized in Table \ref{tab: power_loss_BS}. As can be seen clearly, SNLocal has severe power loss compared to SNM in both (M4) and (M5), indicating the advantage of the proposed nested local-window segmentation over the pure local-window approach. In addition, under the non-monotonic change in (M4), SNBS almost completely loses power while its performance is comparable to SNM under monotonic change in (M5). In summary, this result suggests the necessity of the proposed nested local-window segmentation algorithm in SNCP for change-point detection.

\begin{table}[H]
\centering
\caption{Estimation result under change in mean for (M4)-(M5).}
\label{tab: power_loss_BS}
{\small
	\begin{tabu}{c|r|rrrrrrr|r|r|r|r|r}
		\hline
		\hline & & \multicolumn{7}{c}{$\hat{m}-m_o$} & &&& \\
		\hline
		Method & Model & $\leq -3$ & $-2$ & $-1$ & $0$ & $1$ & $2$ & $\geq 3$ & ARI & $d_1 \times 10^2$ & $d_2 \times 10^2$ & $d_H \times 10^2$ & time\\
		\hline
		SNM &  & 0 & 19 & 237 & 688 & 51 & 5 & 0 & 0.825 & 3.57 & 9.03 & 10.20 & 11.54 \\ 
		SNBS & $(M4)$ & 0 & 981 & 19 & 0 & 0 & 0 & 0 & 0.009 & 0.14 & 49.66 & 49.66 & 0.61 \\
		SNLocal & & 0 & 563 & 329 & 94 & 12 & 2 & 0 & 0.250 & 3.25 & 39.10 & 39.52 & 0.10\\
		\hline
		SNM &  & 0 & 11 & 268 & 669 & 49 & 3 & 0 & 0.824 & 3.60 & 9.09 & 10.23 & 11.35 \\ 
		SNBS & $(M5)$ & 0 & 2 & 56 & 722 & 215 & 5 & 0 & 0.800 & 7.24 & 6.33 & 8.38 & 1.02 \\
		SNLocal & & 0 & 528 & 369 & 85 & 17 & 1 & 0 & 0.273 & 3.14 & 38.17 & 38.56 & 0.10 \\
		\hline\hline
	\end{tabu}
}
\end{table}

\section{Additional Simulation Results}\label{sec:addsim}
\subsection{Sensitivity analysis}\label{subsec: sensitivity}
In this section, we conduct sensitivity analysis of SNCP w.r.t. the window size $\epsilon$ and the critical value level $\alpha$. Specifically, we vary $\epsilon=0.05,0.08,0.10,0.12,0.15$ and vary the critical value level $\alpha=0.1,0.05$, and study how $(\epsilon, \alpha)$ influences the performance of SNCP. For clarity of presentation, in the following, we use the quantile level $q=(1-\alpha)\times 100\%=90,95$ to refer to the critical value level $\alpha$.

Recall that the window size $\epsilon$ reflects one's belief of minimum (relative) spacing between two consecutive change-point and the quantile value level $q$ balances one's tolerance of type-I and type-II errors. For consistency of SNCP, we require $\epsilon<\epsilon_o$, which is the minimum spacing between change-points.

We consider two simulation settings (SA1) and (SA2) for change in mean. Specifically, we first simulate a stationary unit-variance univariate time series $\{X_t\}_{t=1}^n$ from a unit-variance AR(1) process with $X_t=\rho X_{t-1}+\sqrt{1-\rho^2}\epsilon_t$ where $\{\epsilon_t\}$ is \textit{i.i.d.} standard normal $N(0,1)$. We then generate univariate time series $\{Y_t\}_{t=1}^n$ with piecewise constant mean based on $\{X_t\}_{t=1}^n.$
\begin{align*}
&\text{(SA1)}: n=1200, ~~~\rho=0, &Y_t&=\begin{cases} 
	0+ X_{t}, & t\in [1,150], [301,450], [601,750], [901,1050]\\
	\delta+ X_{t}, & t\in [151,300], [451,600], [751,900], [1051,1200].
\end{cases}\\
&\text{(SA2)}: n=1200, ~~~\rho=0.5, &Y_t&=\begin{cases} 
	0+ X_{t}, & t\in [1,150], [301,450], [601,750], [901,1050]\\
	\delta+ X_{t}, & t\in [151,300], [451,600], [751,900], [1051,1200].
\end{cases}
\end{align*}
Note that for both (SA1) and (SA2), \textit{all} change-points are evenly located with the minimum spacing $\epsilon_o=150/1200=0.125.$

For (SA1), the noise $\{X_t\}_{t=1}^n$ is \textit{i.i.d.} Gaussian random variables as $\rho=0$. We further vary $\delta=1, \sqrt{2}$ to generate two scenarios with low and high signal-to-noise ratios~(SNR). For (SA2), the noise $\{X_t\}_{t=1}^n$ is a stationary AR(1) process with moderate temporal dependence $\rho=0.5.$ We set $\delta=\sqrt{3}, \sqrt{6}$ to generate scenarios with low and high SNR. Note that compared to (SA1), $\delta$ in (SA2) is multiplied by a factor of $\sqrt{3}$ to compensate the long-run variance~(LRV) of $\{X_t\}_{t=1}^n$, which is $\sqrt{3}$. Thus, (SA1) and (SA2) have the same level of SNR.

The estimation results under (SA1) and (SA2) are summarized in Table \ref{tab: sn_sensitivity_indep} and Table \ref{tab: sn_sensitivity_dep}. The general findings are as follows. We focus on the result of (SA1) as the result of (SA2) is similar.

\textbf{Robustness w.r.t.\ the window size $\epsilon$}: The performance of SNCP is reasonably robust across all window sizes $\epsilon =0.05,0.08,0.1,0.12 <\epsilon_o=0.125$, as evidenced by the stable values of ARI and Hausdorff distance $d_H$ achieved across different $\epsilon$. This is especially true for the high SNR scenario.

On the other hand, SNCP fails to detect changes with the window size $\epsilon=0.15$, which exceeds the minimum spacing $\epsilon_o=0.125$. This is consistent with the discussion in Section \ref{subsec: nested_alg} of the main text. As for $\epsilon=0.15$, even the smallest local-window centered around any true change-point contains at least two change-points, this significantly lowers the power of SN-tests due to inflated self-normalizer. The drastic contrast between the performance of SNCP with $\epsilon<\epsilon_o$ and $\epsilon=0.15$ is partially due to the fact that in (SA1), all change-points are evenly spaced with the same spacing $\epsilon_o=0.125$, thus the assumption $\epsilon<\epsilon_o$ is violated all at once for all change-points.

Note that though SNCP with $\epsilon=0.05$ may not always deliver the best performance among all window sizes, it does offer one of the best performance under both low and high SNR scenarios. Thus, we recommend setting $\epsilon=0.05$ as it guards against the violation of $\epsilon<\epsilon_o$ to the best extent.

\textbf{Robustness w.r.t.\ the quantile level $q$}: The choice of the quantile level $q$ is less essential for SNCP and it is more about the trade-off between type-I and type-II error in finite sample. As can be seen in Table \ref{tab: sn_sensitivity_indep}, for low SNR, given the same window size $\epsilon$, the quantile level $q=90$ provides better performance due to higher power, while for high SNR, the difference between $q=90$ and $95$ is minimal. Of course, setting $q=90$ will incur higher type-I error when there is no change-point.

Finally, comparing the estimation results in Table \ref{tab: sn_sensitivity_indep}~(SA1) and Table \ref{tab: sn_sensitivity_dep}~(SA2), it can be seen that given the same SNR, the robustness of SNCP w.r.t. the window size $\epsilon$ and the quantile level $q$ remain the same with or without temporal dependence.

\begin{table}[H]
\centering
\caption{Sensitivity analysis under (SA1) with $\delta={1}$ and $\sqrt{2}$.}
\label{tab: sn_sensitivity_indep}
{\small 
	\begin{tabular}{c|c|rrrrrrr|r|r|r|r|r}
		\hline
		\hline & & \multicolumn{7}{c}{$\hat{m}-m_o$} & &&& \\
		\hline
		$(q,\epsilon)$ & Model & $\leq -3$ & $-2$ & $-1$ & $0$ & $1$ & $2$ & $\geq 3$ & ARI & $d_1 \times 10^2$ & $d_2 \times 10^2$ & $d_H \times 10^2$ & time\\\hline
		90, 0.05 &  & 0 & 2 & 89 & 899 & 10 & 0 & 0 & 0.930 & 1.07 & 2.09 & 2.14 & 3.34 \\ 
		95, 0.05 &  & 0 & 19 & 180 & 796 & 5 & 0 & 0 & 0.916 & 1.02 & 3.31 & 3.33 & 3.34 \\ 
		90, 0.08 &  & 0 & 26 & 169 & 805 & 0 & 0 & 0 & 0.913 & 1.00 & 3.30 & 3.30 & 1.20 \\ 
		95, 0.08 &  & 8 & 75 & 282 & 635 & 0 & 0 & 0 & 0.886 & 0.95 & 5.38 & 5.38 & 1.20 \\ 
		90, 0.10 & $(SA1)$ & 0 & 1 & 46 & 953 & 0 & 0 & 0 & 0.931 & 1.04 & 1.59 & 1.59 & 0.71 \\ 
		95, 0.10 & $\delta=1$ & 0 & 16 & 102 & 882 & 0 & 0 & 0 & 0.921 & 1.02 & 2.44 & 2.44 & 0.71 \\ 
		90, 0.12 &  & 0 & 6 & 148 & 846 & 0 & 0 & 0 & 0.930 & 0.77 & 2.37 & 2.37 & 0.47 \\ 
		95, 0.12 &  & 1 & 15 & 173 & 811 & 0 & 0 & 0 & 0.924 & 0.76 & 2.80 & 2.80 & 0.47 \\ 
		90, 0.15 &  & 1000 & 0 & 0 & 0 & 0 & 0 & 0 & 0.002 & 0.00 & 49.95 & 49.95 & 0.27 \\ 
		95, 0.15 &  & 1000 & 0 & 0 & 0 & 0 & 0 & 0 & 0.001 & 0.00 & 50.01 & 50.01 & 0.27 \\ 
		\hline
		Method & Model & $\leq -3$ & $-2$ & $-1$ & $0$ & $1$ & $2$ & $\geq 3$ & ARI & $d_1 \times 10^2$ & $d_2 \times 10^2$ & $d_H \times 10^2$ & time\\
		\hline
		90, 0.05 &  & 0 & 0 & 3 & 980 & 17 & 0 & 0 & 0.963 & 0.68 & 0.63 & 0.72 & 3.76 \\ 
		95, 0.05 &  & 0 & 0 & 7 & 982 & 11 & 0 & 0 & 0.963 & 0.65 & 0.68 & 0.73 & 3.76 \\ 
		90, 0.08 &  & 0 & 0 & 13 & 987 & 0 & 0 & 0 & 0.961 & 0.60 & 0.75 & 0.75 & 1.34 \\ 
		95, 0.08 &  & 0 & 0 & 39 & 961 & 0 & 0 & 0 & 0.958 & 0.59 & 1.06 & 1.06 & 1.34 \\ 
		90, 0.10 & $(SA1)$ & 0 & 0 & 1 & 999 & 0 & 0 & 0 & 0.959 & 0.65 & 0.66 & 0.66 & 0.80 \\ 
		95, 0.10 & $\delta=\sqrt{2}$ & 0 & 0 & 5 & 995 & 0 & 0 & 0 & 0.958 & 0.64 & 0.70 & 0.70 & 0.80 \\ 
		90, 0.12 &  & 0 & 0 & 38 & 962 & 0 & 0 & 0 & 0.956 & 0.59 & 0.99 & 0.99 & 0.53 \\ 
		95, 0.12 &  & 0 & 0 & 39 & 961 & 0 & 0 & 0 & 0.956 & 0.59 & 1.00 & 1.00 & 0.53 \\ 
		90, 0.15 &  & 1000 & 0 & 0 & 0 & 0 & 0 & 0 & 0.000 & 0.00 & 50.01 & 50.01 & 0.31 \\ 
		95, 0.15 &  & 1000 & 0 & 0 & 0 & 0 & 0 & 0 & 0.000 & 0.00 & 50.00 & 50.00 & 0.31 \\  
		\hline\hline
	\end{tabular}
}
\end{table}

\begin{table}[H]
\centering
\caption{Sensitivity analysis under (SA2) with $\delta=\sqrt{3}$ and $\sqrt{6}$.}
\label{tab: sn_sensitivity_dep}
{\small 
	\begin{tabular}{c|c|rrrrrrr|r|r|r|r|r}
		\hline
		\hline & & \multicolumn{7}{c}{$\hat{m}-m_o$} & &&& \\
		\hline
		$(q,\epsilon)$ & Model & $\leq -3$ & $-2$ & $-1$ & $0$ & $1$ & $2$ & $\geq 3$ & ARI & $d_1 \times 10^2$ & $d_2 \times 10^2$ & $d_H \times 10^2$ & time\\\hline
		90, 0.05 &  & 0 & 1 & 89 & 882 & 27 & 1 & 0 & 0.933 & 1.12 & 2.06 & 2.18 & 3.12 \\ 
		95, 0.05 &  & 0 & 26 & 175 & 788 & 11 & 0 & 0 & 0.918 & 1.01 & 3.32 & 3.38 & 3.12 \\ 
		90, 0.08 &  & 0 & 33 & 195 & 772 & 0 & 0 & 0 & 0.911 & 0.97 & 3.65 & 3.65 & 1.10 \\ 
		95, 0.08 &  & 16 & 85 & 302 & 597 & 0 & 0 & 0 & 0.880 & 0.92 & 5.92 & 5.92 & 1.10 \\ 
		90, 0.10 & $(SA2)$ & 0 & 4 & 58 & 938 & 0 & 0 & 0 & 0.931 & 1.02 & 1.75 & 1.75 & 0.66 \\ 
		95, 0.10 & $\delta=\sqrt{3}$ & 0 & 24 & 120 & 856 & 0 & 0 & 0 & 0.919 & 0.99 & 2.74 & 2.74 & 0.66 \\ 
		90, 0.12 &  & 0 & 4 & 130 & 866 & 0 & 0 & 0 & 0.933 & 0.77 & 2.17 & 2.17 & 0.44 \\ 
		95, 0.12 &  & 1 & 14 & 159 & 826 & 0 & 0 & 0 & 0.926 & 0.76 & 2.69 & 2.69 & 0.44 \\ 
		90, 0.15 &  & 1000 & 0 & 0 & 0 & 0 & 0 & 0 & 0.002 & 0.00 & 49.98 & 49.98 & 0.25 \\ 
		95, 0.15 &  & 1000 & 0 & 0 & 0 & 0 & 0 & 0 & 0.001 & 0.00 & 50.01 & 50.01 & 0.25 \\  
		\hline
		$(q,\epsilon)$ & Model & $\leq -3$ & $-2$ & $-1$ & $0$ & $1$ & $2$ & $\geq 3$ & ARI & $d_1 \times 10^2$ & $d_2 \times 10^2$ & $d_H \times 10^2$ & time\\
		\hline
		90, 0.05 &  & 0 & 0 & 4 & 964 & 31 & 1 & 0 & 0.965 & 0.72 & 0.60 & 0.77 & 2.88 \\ 
		95, 0.05 &  & 0 & 0 & 11 & 968 & 21 & 0 & 0 & 0.965 & 0.66 & 0.68 & 0.79 & 2.88 \\ 
		90, 0.08 &  & 0 & 0 & 16 & 984 & 0 & 0 & 0 & 0.963 & 0.58 & 0.77 & 0.77 & 1.03 \\ 
		95, 0.08 &  & 0 & 2 & 48 & 950 & 0 & 0 & 0 & 0.958 & 0.57 & 1.17 & 1.17 & 1.03 \\ 
		90, 0.10 & $(SA2)$ & 0 & 0 & 2 & 998 & 0 & 0 & 0 & 0.961 & 0.62 & 0.65 & 0.65 & 0.61 \\ 
		95, 0.10 & $\delta=\sqrt{6}$ & 0 & 0 & 8 & 992 & 0 & 0 & 0 & 0.961 & 0.62 & 0.72 & 0.72 & 0.61 \\ 
		90, 0.12 &  & 0 & 0 & 34 & 966 & 0 & 0 & 0 & 0.958 & 0.59 & 0.95 & 0.95 & 0.41 \\ 
		95, 0.12 &  & 0 & 0 & 34 & 966 & 0 & 0 & 0 & 0.958 & 0.59 & 0.95 & 0.95 & 0.41 \\ 
		90, 0.15 &  & 1000 & 0 & 0 & 0 & 0 & 0 & 0 & 0.000 & 0.00 & 50.01 & 50.01 & 0.24 \\ 
		95, 0.15 &  & 1000 & 0 & 0 & 0 & 0 & 0 & 0 & 0.000 & 0.00 & 50.00 & 50.00 & 0.24 \\ 
		\hline\hline
	\end{tabular}
}
\end{table}

\subsection{Comparison with state-of-the-art segmentation algorithms}\label{subsec: wbs_not}
In this subsection, we further demonstrate the promising performance of the proposed nested local-window segmentation algorithm~(i.e.\ SCNP) by comparing it with  state-of-the-art segmentation algorithms in the change-point literature. In particular, we consider the wild binary segmentation~(WBS) in \cite{Fryzlewicz2014}, the narrowest over threshold~(NOT) in \cite{baranowski2019narrowest} and their variants including seeded binary segmentation (SBS) and seeded NOT (SNOT) in  \cite{Kovacs2020}. We also compare with the change-point estimators by least squares with total variation penalty (i.e.\ the fused LASSO penalty) in \cite{harchaoui2010multiple}, which is denoted by LASSO.  

WBS, NOT, SBS and SNOT are generic segmentation algorithms that can be combined with a specific change-point test statistic to achieve multiple change-point detection and are robust to non-monotonic changes.  Thus, we combine the SN-based test statistic $T_n(t_1,k,t_2)$ proposed in the main text~(equation \eqref{eq:SN_subsample}) with WBS, NOT, SBS and SNOT to construct multiple change-point detection procedures SN-WBS, SN-NOT, SN-SBS and SN-SNOT, respectively.

For SN-WBS and SN-NOT, we set the number of random intervals used in the algorithm at $M=1000$. Furthermore, we consider two versions of SN-WBS and SN-NOT, where we set the minimal length of the $M$ random intervals to be $4$~(default value in WBS and NOT) or $\lfloor n\epsilon\rfloor$~(same as the minimum nested local window in SNCP with $\epsilon=0.05$), and denote the corresponding procedures by the superscript $^1$ or $^2$, respectively.

For SN-SBS and SN-SNOT, following \cite{Kovacs2020}, we set the decay rate for the seeded intervals to be $(1/2)^{1/4}$ or $(1/2)^{1/16}$, and denote the corresponding procedures by the superscript $^1$ or $^2$, respectively. Note that for SN-SBS and SN-SNOT, the minimal length of the seeded intervals is only set at $\lfloor n\epsilon\rfloor$ with $\epsilon=0.05$, instead of $4$, to avoid overwhelming number of short seeded intervals that are not suitable for the use of self-normalization.  

\subsubsection{Univariate mean change}
For simulation comparison, we first consider the change in univariate mean setting, specifically models (M1), (M2) and (M3) with $d=1$, as in Section \ref{subsec:powermean} of the main text.

The estimation result is summarized in Table \ref{tab: compare_wbs_not}. As can be seen, for all three models (M1)-(M3), the proposed SNM gives comparable (or more favorable in (M2)) performance as SN-WBS, SN-NOT, SN-SBS and SN-SNOT and outperforms LASSO. Furthermore, SNM is computationally more efficient than SN-WBS and SN-NOT, and comparable with SN-SBS, SN-SNOT and LASSO. This further confirms the value of the proposed nested local-window segmentation algorithm. Moreover, in unreported simulation experiments, similar findings are confirmed universally under other simulation settings such as change in variance, covariance, auto-correlation and quantile.

Note that the performance of SN-WBS$^2$ and SN-NOT$^2$ are in general better than that of SN-WBS$^1$ and SN-NOT$^1$, indicating the benefit of incorporating a minimum spacing $\lfloor n\epsilon\rfloor$. SN-SBS$^1$ and SN-SBS$^2$ have similar performance, so do SN-SNOT$^1$ and SN-SNOT$^2$, indicating SBS and SNOT are robust to its tuning parameter decay rate, which is also observed in \cite{Kovacs2020}.

\begin{table}[H]
\centering
\caption{Performance of SNM, SN-WBS, SN-NOT, SN-SBS, SN-SNOT and LASSO under change in univariate mean with $d=1$.}
\label{tab: compare_wbs_not}
{\small 
	\begin{tabular}{c|c|rrrrrrr|r|r|r|r|r}
		\hline
		\hline & &  \multicolumn{7}{c}{$\hat{m}-m_o$} & &&& \\
		\hline
		Method & Model & $\leq -3$ & $-2$ & $-1$ & $0$ & $1$ & $2$ & $\geq 3$ & ARI & $d_1 \times 10^2$ & $d_2 \times 10^2$ & $d_H \times 10^2$ & time\\
		\hline
		SNM    &     & 0 & 0 & 9 & 974 & 17 & 0 & 0 & 0.960 & 0.87 & 0.90 & 1.01 & 1.75 \\ 
		SN-WBS$^1$ &   & 0 & 1 & 22 & 805 & 153 & 16 & 3 & 0.953 & 1.65 & 1.18 & 2.07 & 9.26 \\ 
		SN-NOT$^1$ &   & 0 & 0 & 26 & 856 & 111 & 7 & 0 & 0.958 & 1.58 & 1.23 & 2.00 & 9.26 \\ 
		SN-WBS$^2$ &  & 0 & 0 & 7 & 936 & 56 & 1 & 0 & 0.962 & 1.06 & 0.82 & 1.17 & 10.04 \\ 
		SN-NOT$^2$ & $(M1)$  & 0 & 0 & 7 & 942 & 51 & 0 & 0 & 0.968 & 0.98 & 0.77 & 1.09 & 10.04 \\ 
		SN-SBS$^1$ & & 0 & 0 & 21 & 854 & 117 & 7 & 1 & 0.956 & 1.53 & 1.07 & 1.86 & 0.98 \\ 
		SN-SNOT$^1$ &   & 0 & 0 & 22 & 870 & 103 & 4 & 1 & 0.962 & 1.45 & 1.10 & 1.80 & 0.98 \\
		SN-SBS$^2$ &   & 0 & 0 & 20 & 837 & 128 & 14 & 1 & 0.956 & 1.63 & 1.05 & 1.95 & 4.05 \\ 
		SN-SNOT$^2$ &   & 0 & 0 & 20 & 853 & 114 & 12 & 1 & 0.962 & 1.55 & 1.06 & 1.88 & 4.05 \\ 
		LASSO &   & 0 & 0 & 0 & 5 & 19 & 67 & 909 & 0.954 & 2.08 & 0.21 & 2.08 & 2.12 \\ 
		\hline
		SNM    &    & 0 & 11 & 196 & 749 & 43 & 1 & 0 & 0.970 & 1.33 & 1.77 & 2.67 & 3.55 \\ 
		SN-WBS$^1$ &   & 0 & 30 & 149 & 537 & 223 & 51 & 10 & 0.928 & 5.45 & 1.92 & 6.45 & 18.63 \\ 
		SN-NOT$^1$ &   & 0 & 33 & 177 & 555 & 194 & 34 & 7 & 0.928 & 5.42 & 2.09 & 6.48 & 18.63 \\ 
		SN-WBS$^2$ &  & 0 & 11 & 137 & 724 & 111 & 16 & 1 & 0.955 & 2.98 & 1.30 & 3.71 & 20.20 \\ 
		SN-NOT$^2$  &$(M2)$   & 0 & 14 & 137 & 725 & 108 & 14 & 2 & 0.956 & 2.95 & 1.30 & 3.69 & 20.20 \\ 
		SN-SBS$^1$ &   & 0 & 7 & 88 & 694 & 178 & 30 & 3 & 0.943 & 4.28 & 1.05 & 4.73 & 1.86 \\ 
		SN-SNOT$^1$ &   & 0 & 6 & 99 & 693 & 169 & 30 & 3 & 0.943 & 4.25 & 1.12 & 4.74 & 1.86 \\ 
		SN-SBS$^2$ &   & 0 & 5 & 80 & 651 & 207 & 44 & 13 & 0.933 & 5.25 & 0.99 & 5.65 & 7.71 \\ 
		SN-SNOT$^2$ &  & 0 & 5 & 80 & 651 & 207 & 46 & 11 & 0.934 & 5.23 & 0.96 & 5.63 & 7.71 \\ 
		LASSO &    & 0 & 0 & 0 & 0 & 1 & 0 & 999 & 0.728 & 12.83 & 0.23 & 12.83 & 3.71 \\ 	\hline
		SNM    &     & 0 & 0 & 1 & 986 & 13 & 0 & 0 & 0.969 & 1.11 & 0.80 & 1.14 & 10.59 \\ 
		SN-WBS$^1$ &   & 0 & 0 & 1 & 985 & 14 & 0 & 0 & 0.967 & 1.15 & 0.88 & 1.17 & 50.26 \\ 
		SN-NOT$^1$ &   & 0 & 0 & 1 & 985 & 14 & 0 & 0 & 0.975 & 0.96 & 0.69 & 0.98 & 50.26 \\ 
		SN-WBS$^2$ &   & 0 & 0 & 1 & 983 & 16 & 0 & 0 & 0.966 & 1.24 & 0.88 & 1.27 & 55.39 \\ 
		SN-NOT$^2$ &$(M3)$   & 0 & 0 & 1 & 983 & 16 & 0 & 0 & 0.973 & 1.08 & 0.69 & 1.10 & 55.39 \\ 
		SN-SBS$^1$ &   & 0 & 0 & 0 & 982 & 16 & 2 & 0 & 0.966 & 1.29 & 0.86 & 1.29 & 4.81 \\ 
		SN-SNOT$^1$ &   & 0 & 0 & 0 & 982 & 16 & 2 & 0 & 0.974 & 1.11 & 0.66 & 1.11 & 4.81 \\ 
		SN-SBS$^2$ &   & 0 & 0 & 1 & 986 & 13 & 0 & 0 & 0.968 & 1.11 & 0.87 & 1.13 & 19.92 \\ 
		SN-SNOT$^2$ &   & 0 & 0 & 3 & 984 & 13 & 0 & 0 & 0.975 & 0.91 & 0.72 & 0.99 & 19.92 \\ 
		LASSO &   & 0 & 0 & 64 & 331 & 296 & 202 & 107 & 0.912 & 2.11 & 4.42 & 5.72 & 8.64 \\ 
		\hline\hline
	\end{tabular}
}
\end{table} 


As shown in Table \ref{tab: compare_wbs_not}, LASSO, as an $L_1$ penalty based method, tends to significantly overestimate the number of change-points in (M1)-(M3). Therefore, we further consider a simulation setting with a large number~(11) of change-points by adopting the block model in \cite{harchaoui2010multiple}. Figure \ref{fig: block_model} gives typical realizations of the block model. Here, the block signals are corrupted with Gaussian errors $\mathcal{N}(0,\sigma^2)$ at two noise levels: medium-noise with $\sigma=0.1$ and high-noise with $\sigma=0.5$, and at two temporal dependence levels: independence with $\rho=0$ and AR(1)-dependence with $\rho=0.5$. The estimation result is summarized in Table \ref{tab: block_model}. As can be seen, SNM still outperforms LASSO in all settings with little efficiency loss.

\begin{figure}[H]
\begin{subfigure}{0.3\textwidth}
	\includegraphics[width=1.1\textwidth]{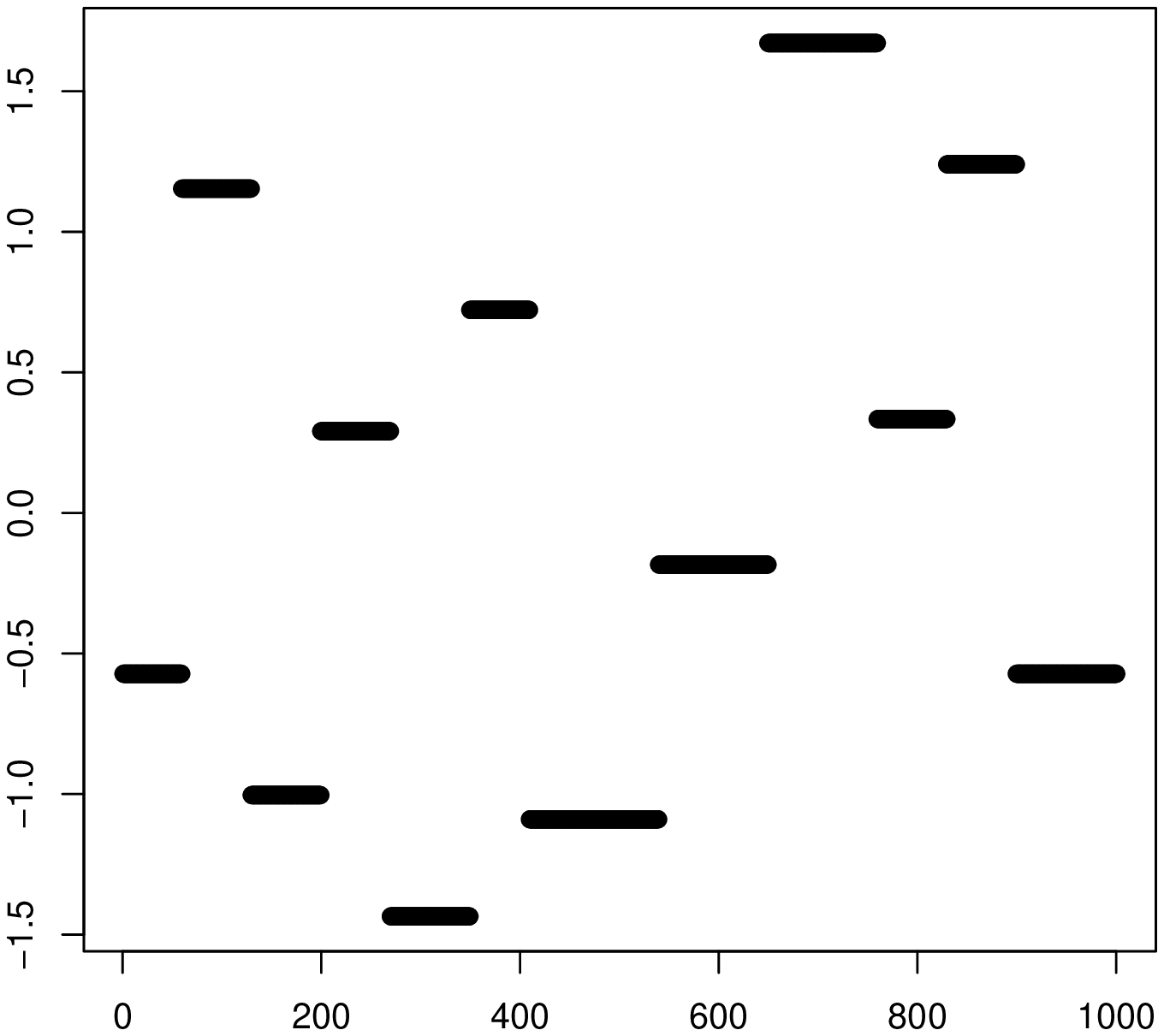}
	\vspace{-0.2cm}
\end{subfigure}
~
\begin{subfigure}{0.3\textwidth}
	\includegraphics[width=1.1\textwidth]{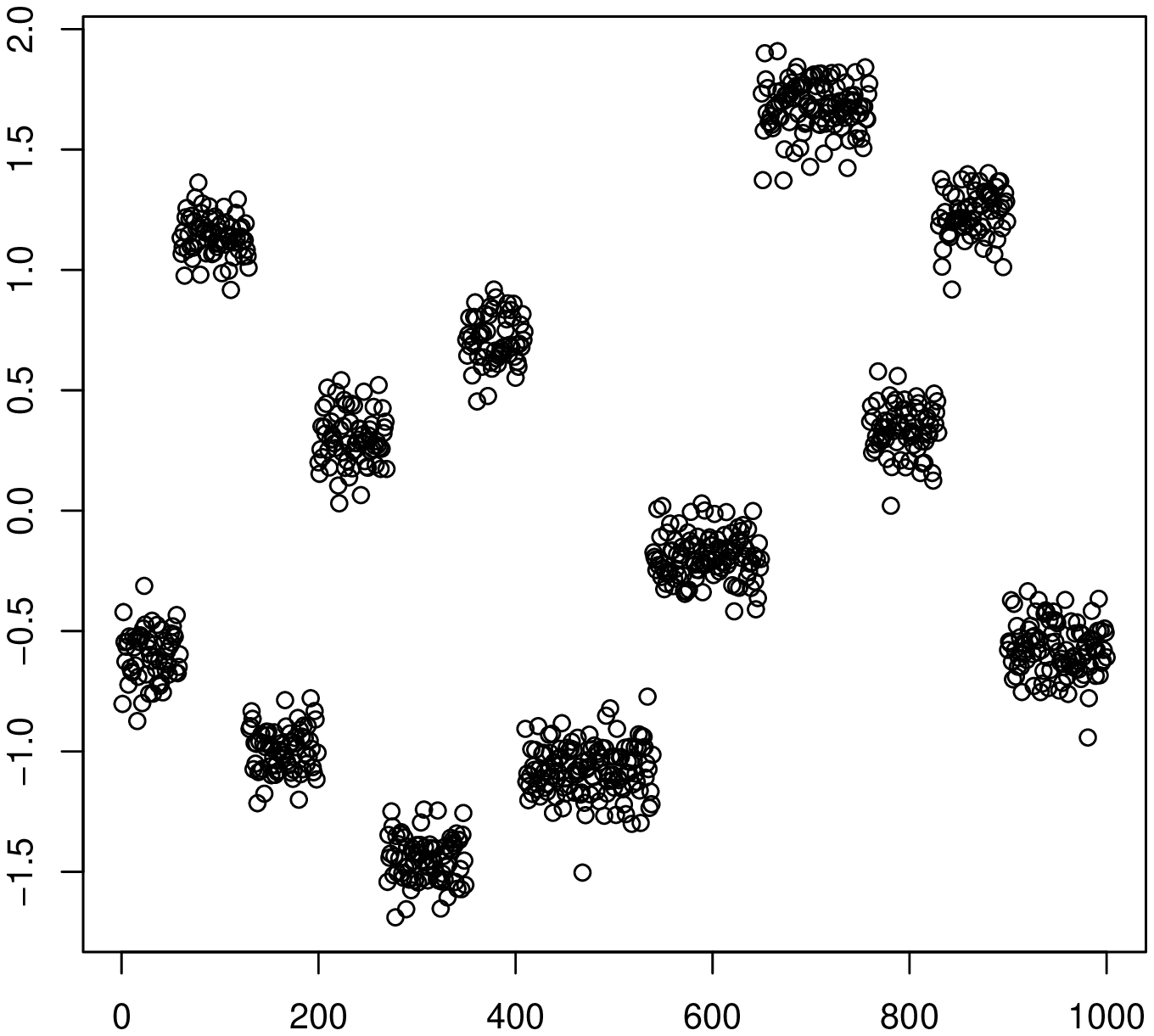}
	\vspace{-0.2cm}
\end{subfigure}
~
\begin{subfigure}{0.3\textwidth}
	\includegraphics[width=1.1\textwidth]{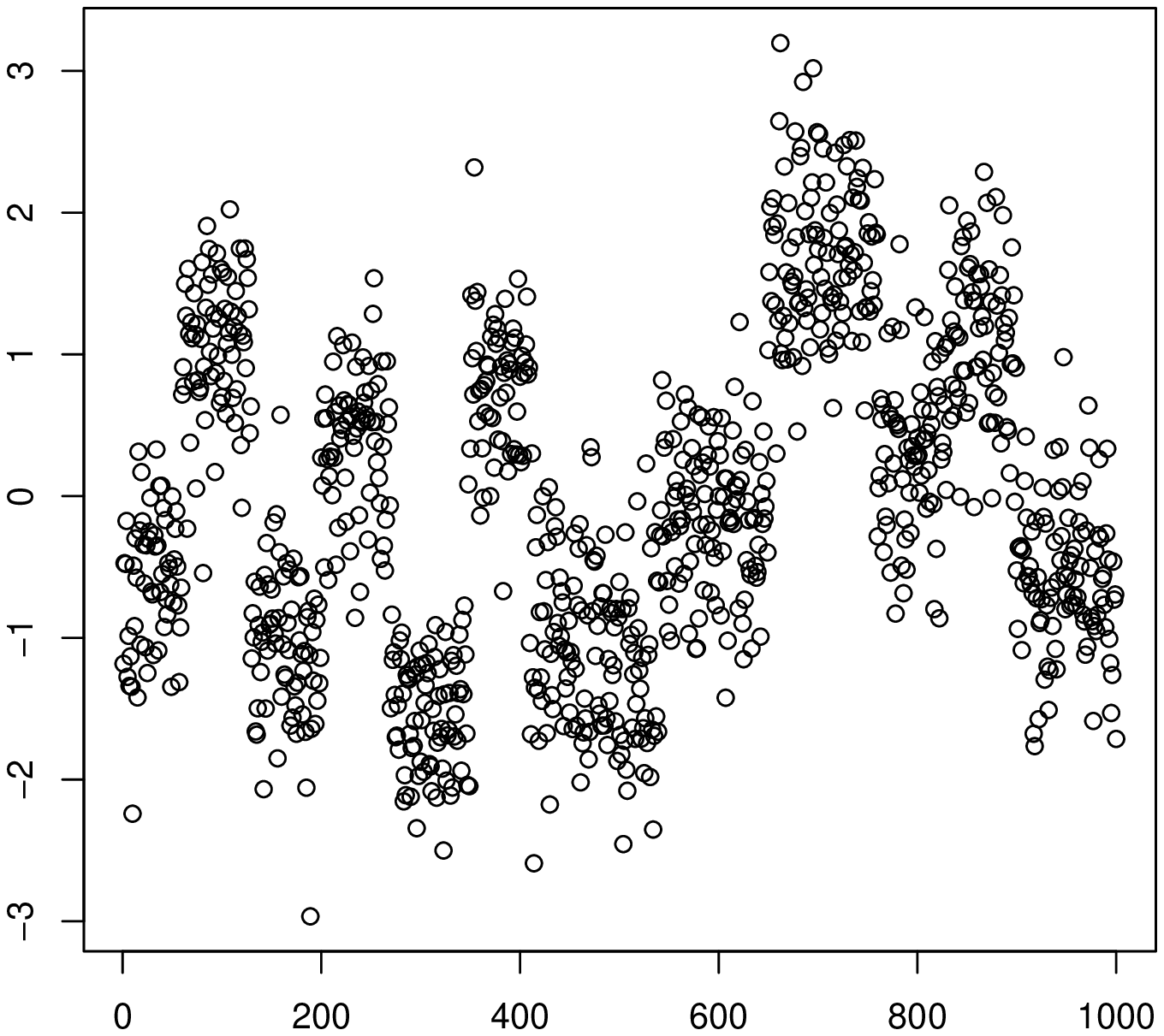}
	\vspace{-0.2cm}
\end{subfigure}		
\caption{Typical realization of the block model. Left: no noise ($\sigma=0$); Middle: medium noise ($\sigma=0.1$); Right: high noise ($\sigma=0.5$).}
\label{fig: block_model}	
\end{figure}

\begin{table}[H]
\centering
\caption{Performance of SNM and LASSO under block models with different noise levels $\sigma$ and temporal dependence levels $\rho$.}
\label{tab: block_model}
{\small 
	\begin{tabular}{rrrrrrrrrrrrrrr}
		\hline
		\hline & & & \multicolumn{7}{c}{$\hat{m}-m_o$} & &&& \\
		\hline
		Method & $\sigma$ &$\rho$ & $\leq -3$ & $-2$ & $-1$ & $0$ & $1$ & $2$ & $\geq 3$ & ARI & $d_1 \times 10^2$ & $d_2 \times 10^2$ & $d_H \times 10^2$ & time\\
		\hline
		SNM & 0.1 & 0.0 & 0 & 0 & 0 & 999 & 1 & 0 & 0 & 0.989 & 0.11 & 0.10 & 0.11 & 3.82 \\ 
		LASSO & 0.1 & 0.0 & 3 & 22 & 60 & 134 & 199 & 194 & 388 & 0.627 & 3.29 & 12.99 & 12.99 & 3.36 \\
		\hline
		SNM & 0.1 & 0.5 & 0 & 0 & 0 & 996 & 4 & 0 & 0 & 0.988 & 0.14 & 0.11 & 0.14 & 3.90 \\ 
		LASSO  & 0.1 & 0.5 & 4 & 30 & 67 & 133 & 173 & 157 & 436 & 0.627 & 3.30 & 12.99 & 12.99 & 3.40 \\
		\hline 
		SNM & 0.5 & 0.0 & 0 & 0 & 45 & 954 & 1 & 0 & 0 & 0.973 & 0.32 & 0.62 & 0.63 & 3.86 \\ 
		LASSO  & 0.5 & 0.0 & 26 & 51 & 122 & 162 & 179 & 166 & 294 & 0.647 & 3.29 & 11.93 & 11.93 & 3.34 \\ 
		\hline
		SNM & 0.5 & 0.5 & 10 & 110 & 450 & 429 & 1 & 0 & 0 & 0.916 & 0.57 & 4.54 & 4.55 & 3.88 \\ 
		LASSO & 0.5 & 0.5 & 21 & 33 & 81 & 127 & 141 & 153 & 444 & 0.652 & 3.31 & 11.23 & 11.23 & 3.35 \\ 
		\hline\hline
	\end{tabular}
}
\end{table}

\subsubsection{Multivariate mean change}

We further compare the proposed nested local-window segmentation algorithm~(i.e. SNM) with SN-WBS, SN-NOT, SN-SBS and SN-SNOT for multivariate mean change. Specifically, we consider models (M1), (M2) and (M3) with $d=5$ and $d=10$, as in Section \ref{subsec:powermean} of the main text. To conserve space, we only report the performance of SN-WBS$^2$, SN-NOT$^2$, SN-SBS$^2$ and SN-NOT$^2$. The performance of SN-WBS$^1$, SN-NOT$^1$, SN-SBS$^1$ and SN-NOT$^1$ are similar but slightly worse.

Table \ref{tab: mean_gaussian_error_add_not_wbs} gives the estimation result for $d=5$ and Table \ref{tab: mean_gaussian_error_add_not_wbs2} summarizes the result for $d=10$. In general, the observation is the same as the one for univariate mean. Specifically, SNM provides comparable (or more favorable) performance as SN-WBS, SN-NOT, SN-SBS and SN-SNOT. For (M1) and (M2), SNM indeed provides notably better performance.

\begin{table}[H]
\centering
\caption{Performance of SNM, SN-WBS, SN-NOT, SN-SBS, SN-SNOT under change in multivariate mean with $d=5$.}
\label{tab: mean_gaussian_error_add_not_wbs}
{\small 
	\begin{tabular}{c|c|rrrrrrr|r|r|r|rr}
		\hline
		\hline & & \multicolumn{7}{c}{$\hat{m}-m_o$} & &&& \\
		\hline
		Method & Model & $\leq -3$ & $-2$ & $-1$ & $0$ & $1$ & $2$ & $\geq 3$ & ARI & $d_1 \times 10^2$ & $d_2 \times 10^2$ & $d_H \times 10^2$ & time\\
		\hline
		SNM & & 0 & 0 & 13 & 946 & 41 & 0 & 0 & 0.953 & 1.16 & 1.12 & 1.37 & 12.48 \\ 
		SN-WBS$^2$  &    & 0 & 0 & 12 & 802 & 178 & 8 & 0 & 0.950 & 1.94 & 1.10 & 2.15 & 146.87 \\ 
		SN-NOT$^2$  &$(M1)$     & 0 & 0 & 11 & 826 & 156 & 7 & 0 & 0.944 & 2.16 & 1.41 & 2.36 & 146.87 \\
		SN-SBS$^2$   &  & 0 & 1 & 34 & 648 & 255 & 54 & 8 & 0.934 & 2.96 & 1.78 & 3.68 & 50.01 \\ 
		SN-SNOT$^2$  &  & 0 & 1 & 37 & 694 & 224 & 39 & 5 & 0.926 & 3.11 & 2.28 & 3.84 & 50.01 \\ 
		\hline
		SNM && 0 & 11 & 175 & 628 & 166 & 18 & 2 & 0.937 & 4.59 & 1.93 & 5.68 & 22.88 \\ 
		SN-WBS$^2$  &    & 0 & 6 & 71 & 380 & 338 & 152 & 53 & 0.854 & 11.85 & 1.65 & 12.17 & 291.21 \\ 
		SN-NOT$^2$  & $(M2)$     & 0 & 7 & 72 & 395 & 341 & 140 & 45 & 0.852 & 11.89 & 1.85 & 12.21 & 291.21 \\ 
		SN-SBS$^2$ &  & 0 & 1 & 13 & 106 & 259 & 244 & 377 & 0.737 & 20.89 & 1.23 & 20.97 & 97.32 \\ 
		SN-SNOT$^2$ &  & 0 & 1 & 12 & 108 & 260 & 246 & 373 & 0.734 & 20.90 & 1.47 & 20.98 & 97.32 \\ 
		\hline
		SNM &  & 0 & 0 & 4 & 993 & 3 & 0 & 0 & 0.968 & 0.93 & 0.96 & 1.03 & 60.00 \\ 
		SN-WBS$^2$ &   & 0 & 0 & 4 & 992 & 4 & 0 & 0 & 0.966 & 1.01 & 1.02 & 1.10 & 829.59 \\ 
		SN-NOT$^2$ &$(M3)$   & 0 & 0 & 4 & 992 & 4 & 0 & 0 & 0.967 & 1.02 & 1.04 & 1.12 & 829.59 \\ 
		SN-SBS$^2$ &   & 0 & 0 & 14 & 985 & 1 & 0 & 0 & 0.962 & 1.02 & 1.33 & 1.35 & 258.03 \\ 
		SN-SNOT$^2$ &  & 0 & 0 & 13 & 986 & 1 & 0 & 0 & 0.967 & 0.91 & 1.21 & 1.23 & 258.03\\ 
		\hline\hline
	\end{tabular}
}
\end{table}

\begin{table}[H]
\centering
\caption{Performance of SNM, SN-WBS, SN-NOT, SN-SBS, SN-SNOT under change in multivariate mean with $d=10$.}
\label{tab: mean_gaussian_error_add_not_wbs2}
{\small 
	\begin{tabular}{c|c|rrrrrrr|r|r|r|rr}
		\hline
		\hline & & \multicolumn{7}{c}{$\hat{m}-m_o$} & &&& \\
		\hline
		Method & Model & $\leq -3$ & $-2$ & $-1$ & $0$ & $1$ & $2$ & $\geq 3$ & ARI & $d_1 \times 10^2$ & $d_2 \times 10^2$ & $d_H \times 10^2$ & time\\
		\hline
		SNM &   & 0 & 0 & 9 & 835 & 142 & 13 & 1 & 0.945 & 1.92 & 1.19 & 2.08 & 24.79 \\ 
		SN-WBS$^2$  &   & 0 & 0 & 10 & 520 & 324 & 118 & 28 & 0.925 & 3.93 & 1.37 & 4.20 & 315.41 \\ 
		SN-NOT$^2$  &$(M1)$   & 0 & 1 & 15 & 574 & 296 & 99 & 15 & 0.905 & 4.15 & 2.20 & 4.45 & 315.41 \\ 
		SN-SBS$^2$  &  & 0 & 5 & 51 & 377 & 349 & 163 & 55 & 0.892 & 5.18 & 3.14 & 6.59 & 98.97 \\ 
		SN-SNOT$^2$ &  & 0 & 5 & 66 & 441 & 348 & 110 & 30 & 0.871 & 5.29 & 4.18 & 6.79 & 98.97 \\ 
		\hline
		SNM &  & 0 & 4 & 22 & 283 & 330 & 250 & 111 & 0.802 & 15.32 & 1.38 & 15.50 & 48.95 \\ 
		SN-WBS$^2$  &   & 0 & 0 & 0 & 7 & 52 & 104 & 837 & 0.564 & 27.58 & 1.42 & 27.58 & 651.82 \\ 
		SN-NOT$^2$  &$(M2)$   & 0 & 0 & 1 & 7 & 50 & 135 & 807 & 0.559 & 27.56 & 2.15 & 27.56 & 651.82 \\ 
		SN-SBS$^2$  &  & 0 & 0 & 0 & 0 & 0 & 0 & 1000 & 0.373 & 34.93 & 0.98 & 34.93 & 205.36 \\ 
		SN-SNOT$^2$ &  & 0 & 0 & 0 & 0 & 0 & 0 & 1000 & 0.369 & 34.92 & 1.39 & 34.92 & 205.36 \\
		\hline
		SNM & & 0 & 0 & 16 & 983 & 1 & 0 & 0 & 0.966 & 0.90 & 1.26 & 1.29 & 136.20 \\ 
		SN-WBS$^2$  &   & 0 & 0 & 25 & 975 & 0 & 0 & 0 & 0.961 & 0.96 & 1.58 & 1.58 & 1870.29 \\ 
		SN-NOT$^2$  & $(M3)$  & 0 & 0 & 26 & 974 & 0 & 0 & 0 & 0.962 & 0.96 & 1.59 & 1.59 & 1870.29 \\  
		SN-SBS$^2$ &  & 0 & 0 & 46 & 954 & 0 & 0 & 0 & 0.955 & 0.95 & 2.09 & 2.09 & 565.24 \\ 
		SN-SNOT$^2$ &  & 0 & 0 & 45 & 955 & 0 & 0 & 0 & 0.957 & 0.95 & 2.05 & 2.05 & 565.24 \\ 
		\hline\hline
	\end{tabular}
}
\end{table}

\subsection{No change}\label{subsec:size_supplement}
Table \ref{tab: null_gaussian_error_small} reports the performance of SNCP and comparison methods under the null for $n=512$. Since the sample size $n=512$ is small, we set the window size $\epsilon=0.1$ for SNCP to ensure we have sufficient observations in each local-window. All implementations are the same as Table \ref{tab: null_gaussian_error} in the main text. As can be seen, SNCP in general gives decent performance (i.e. achieving the target size at 10\%) except for $\rho=0.8$, which is understandable as the effective sample size is low for strong positive temporal dependence.

\begin{table}[ht]
\caption{Performance under no change-point scenario with $m_o=0$.}
\label{tab: null_gaussian_error_small}
\centering
\small
\begin{tabular}{rrrrrrrrrrrrrrrr}
	\hline\hline 
	$n= 512 $ & \multicolumn{3}{c}{$\rho= -0.8 $}& \multicolumn{3}{c}{$\rho= -0.5 $}& \multicolumn{3}{c}{$\rho= 0 $}& \multicolumn{3}{c}{$\rho= 0.5 $}& \multicolumn{3}{c}{$\rho= 0.8 $} \\
	\hline
	$\hat{m}$ & $0$ & $1$ & $\geq 2$ & $0$ & $1$ & $\geq 2$ & $0$ & $1$ & $\geq 2$ & $0$ & $1$ & $\geq 2$ & $0$ & $1$ & $\geq 2$  \\
	\hline
	SN90M & 0.99 & 0.01 & 0.00 & 0.96 & 0.04 & 0.00 & 0.92 & 0.07 & 0.00 & 0.89 & 0.10 & 0.01 & 0.72 & 0.23 & 0.06 \\ 
	BP & 1.00 & 0.00 & 0.00 & 1.00 & 0.00 & 0.00 & 0.98 & 0.01 & 0.00 & 0.26 & 0.12 & 0.62 & 0.00 & 0.00 & 1.00 \\ 
	\hline
	SN90V & 0.84 & 0.14 & 0.01 & 0.91 & 0.09 & 0.00 & 0.93 & 0.07 & 0.00 & 0.87 & 0.12 & 0.01 & 0.75 & 0.21 & 0.04 \\ 
	KF & 0.32 & 0.25 & 0.44 & 0.69 & 0.21 & 0.11 & 0.81 & 0.12 & 0.06 & 0.83 & 0.12 & 0.05 & 0.83 & 0.12 & 0.05 \\ 
	MSML & 0.48 & 0.34 & 0.19 & 0.76 & 0.21 & 0.04 & 0.83 & 0.15 & 0.01 & 0.84 & 0.14 & 0.01 & 0.84 & 0.14 & 0.01 \\ 
	\hline\hline
\end{tabular}
\end{table}

\subsection{Change in multivariate mean}\label{subsec:add_powermean}

Table \ref{tab: mean_gaussian_error_add2} reports the performance of SNM and CUSUM under change in multivariate mean with $d=10$. Compared to Table \ref{tab: mean_gaussian_error} in the main text (with $d=1$ and $5$), it can be seen that the performance of SNM and CUSUM both deteriorate due to the increasing dimension $d$. However, the deterioration of CUSUM is much more notable while SNM still gives decent performance as measured by ARI and $d_H$.

\begin{table}[H]
\centering
\caption{Performance of SNM and CUSUM under change in multivariate mean with $d=10$.}
\label{tab: mean_gaussian_error_add2}
{\small 
	\begin{tabular}{c|c|rrrrrrr|r|r|r|rr}
		\hline
		\hline & & \multicolumn{7}{c}{$\hat{m}-m_o$} & &&& \\
		\hline
		Method & Model & $\leq -3$ & $-2$ & $-1$ & $0$ & $1$ & $2$ & $\geq 3$ & ARI & $d_1 \times 10^2$ & $d_2 \times 10^2$ & $d_H \times 10^2$ & time\\
		\hline
		SNM & $(M1)$  & 0 & 0 & 9 & 835 & 142 & 13 & 1 & 0.945 & 1.92 & 1.19 & 2.08 & 24.79 \\ 
		CUSUM &  & 242 & 9 & 2 & 186 & 287 & 192 & 82 & 0.695 & 4.79 & 15.47 & 19.19 & 0.04 \\ 
		\hline
		SNM & $(M2)$ & 0 & 4 & 22 & 283 & 330 & 250 & 111 & 0.802 & 15.32 & 1.38 & 15.50 & 48.95 \\ 
		CUSUM &  & 49 & 13 & 15 & 19 & 57 & 157 & 690 & 0.505 & 23.29 & 5.64 & 25.54 & 0.10 \\
		\hline
		SNM & $(M3)$ & 0 & 0 & 16 & 983 & 1 & 0 & 0 & 0.966 & 0.90 & 1.26 & 1.29 & 136.20 \\ 
		CUSUM & & 0 & 617 & 1 & 381 & 1 & 0 & 0 & 0.367 & 0.47 & 31.34 & 31.35 & 0.07 \\ 
		\hline\hline
	\end{tabular}
}
\end{table}

\subsection{Change in variance}\label{subsec:powervariance}
For variance change, we consider four univariate time series $\{Y_t\}_{t=1}^n$ with piecewise constant variance. (V1) is an AR(1) process with moderate temporal dependence adapted from \cite{Cho2012}. (V2) is an ARMA(1,1) process taken from \cite{Korkas2017}. (V3) is an AR(2) process with strong positive dependence taken from \cite{Cho2012}. (V4) is an AR(1) process with longer segments and only one small-scale change. Typical realizations of (V1)-(V4) can be found in Figure \ref{fig: dgp} of the supplementary material.
\begin{align*}
&\text{(V1)}: n=1024, ~~Y_t=\begin{cases} 
	0.5Y_{t-1}+\epsilon_t, & t\in [1,400],\\
	0.5Y_{t-1}+2\epsilon_t, &t\in[401,750],\\
	0.5Y_{t-1}+\epsilon_t, &t\in[751,1024].
\end{cases}\\
&\text{(V2)}: n=1024, ~~Y_t=\begin{cases} 
	0.7Y_{t-1}+\epsilon_t+0.6\epsilon_{t-1}, & t\in [1,125],\\
	0.3Y_{t-1}+\epsilon_t+0.3\epsilon_{t-1}, & t\in [126,532],\\
	0.9Y_{t-1}+\epsilon_t, &t\in[533,704],\\
	0.1Y_{t-1}+\epsilon_t-0.5\epsilon_{t-1}, &t\in[705,1024].
\end{cases}\\
&\text{(V3)}: n=1024, ~~Y_t=\begin{cases} 
	0.9Y_{t-1}+\epsilon_t, & t\in [1,512],\\
	1.69Y_{t-1}-0.81Y_{t-2}+\epsilon_t, &t\in[513,768],\\
	1.32Y_{t-1}-0.81Y_{t-2}+\epsilon_t, &t\in[769,1024].
\end{cases}\\
&\text{(V4)}: n=2048, ~~Y_t=\begin{cases} 
	-0.7Y_{t-1}+\epsilon_t, & t\in [1,1024],\\
	-0.7Y_{t-1}+\sqrt{2}\epsilon_t, &t\in[1025,2048].
\end{cases}
\end{align*}
The error process $\{\epsilon_t\}$ is \textit{i.i.d.} standard normal $N(0,1)$.

The estimation result is summarized in Table \ref{tab: variance_gaussian_error}. For (V1), under moderate temporal dependence, all methods give decent performance with some degree of over-estimation. For (V2), due to the complex dependence, all methods experience power loss, especially for KF and SNV, with MSML giving the best performance. For (V3), due to the strong positive dependence, SNV and KF again experience power loss, with SNV giving noticeably larger estimation error. For (V4), under strong negative dependence, KF and MSML severely over-estimate the number of change-points while SNV gives robust and best performance. In summary, SNV performs quite well  compared to MSML and KF, though it may exhibit some lack of power under strong positive dependence $(\rho\geq 0.9)$.


\begin{table}[H]
\centering
\caption{Performance of SNV, KF, MSML under change in variance.}
\label{tab: variance_gaussian_error}
{\small
	\begin{tabular}{c|r|rrrrrrr|r|r|r|rZ}
		\hline
		\hline & & \multicolumn{7}{c}{$\hat{m}-m_o$} & &&& \\
		\hline
		Method & Model & $\leq -3$ & $-2$ & $-1$ & $0$ & $1$ & $2$ & $\geq 3$ & ARI & $d_1 \times 10^2$ & $d_2 \times 10^2$ & $d_H \times 10^2$ & time\\
		\hline
		SNV &  & 0 & 0 & 5 & 938 & 53 & 4 & 0 & 0.942 & 2.09 & 1.49 & 2.26 & 17.30 \\ 
		KF & $(V1)$ & 0 & 0 & 0 & 963 & 35 & 2 & 0 & 0.958 & 1.70 & 1.09 & 1.70 & 0.27 \\ 
		MSML &  & 0 & 0 & 0 & 892 & 102 & 6 & 0 & 0.957 & 2.84 & 0.96 & 2.84 & 0.02 \\ 
		\hline
		SNV &  & 12 & 95 & 335 & 538 & 18 & 2 & 0 & 0.762 & 2.94 & 17.35 & 17.50 & 17.73 \\ 
		KF & $(V2)$ & 0 & 121 & 397 & 463 & 17 & 2 & 0 & 0.730 & 3.30 & 17.53 & 17.73 & 0.34 \\ 
		MSML &  & 0 & 19 & 161 & 674 & 137 & 7 & 2 & 0.792 & 5.49 & 8.49 & 9.46 & 0.02 \\ 
		\hline
		SNV &  & 0 & 37 & 315 & 542 & 94 & 12 & 0 & 0.733 & 6.76 & 13.65 & 15.99 & 17.62 \\ 
		KF & $(V3)$ & 0 & 0 & 129 & 543 & 258 & 60 & 10 & 0.853 & 6.52 & 7.44 & 9.94 & 0.30 \\ 
		MSML &  & 0 & 0 & 1 & 638 & 295 & 58 & 8 & 0.883 & 8.23 & 3.53 & 8.26 & 0.02 \\ 
		\hline
		SNV &  & 0 & 0 & 38 & 898 & 58 & 6 & 0 & 0.870 & 3.66 & 4.03 & 5.56 & 41.28 \\ 
		KF & $(V4)$ & 0 & 0 & 0 & 315 & 219 & 246 & 220 & 0.769 & 22.35 & 1.33 & 22.35 & 1.02 \\ 
		MSML &  & 0 & 0 & 0 & 439 & 345 & 154 & 62 & 0.834 & 17.68 & 1.58 & 17.68 & 0.03 \\ 
		\hline\hline
	\end{tabular}
}
\end{table}

\subsection{Change in autocorrelation}\label{subsec:poweracf}
For autocorrelation change, we generate three univariate time series $\{Y_t\}_{t=1}^n$ with piecewise constant autocorrelation. (A1) and (A2) are AR(1) processes taken from \cite{Cho2012}. (A3) is an ARMA(1,1) process adapted from \cite{Korkas2017}. Typical realizations of (A1)-(A3) can be found in Figure \ref{fig: dgp} of the supplementary material.
\begin{align*}
&\text{(A1)}: n=1024, ~~Y_t=\begin{cases} 
0.5Y_{t-1}+\epsilon_t, & t\in [1,400],\\
0.9Y_{t-1}+\epsilon_t, &t\in[401,750],\\
0.3Y_{t-1}+\epsilon_t, &t\in[751,1024].
\end{cases}\\
&\text{(A2)}: n=1024, ~~Y_t=\begin{cases} 
0.75Y_{t-1}+\epsilon_t, & t\in [1,50],\\
-0.5Y_{t-1}+\epsilon_t, &t\in[51,1024].
\end{cases}\\
&\text{(A3)}: n=1024, ~~Y_t=\begin{cases} 
-0.9Y_{t-1}+\epsilon_t+0.7\epsilon_{t-1}, & t\in [1,512],\\
0.9Y_{t-1}+\epsilon_t, & t\in [513,768],\\
\epsilon_t-0.7\epsilon_{t-1}, &t\in[769,1024].
\end{cases}
\end{align*}
The error process $\{\epsilon_t\}$ is \textit{i.i.d.} standard normal $N(0,1)$.

The estimation result is summarized in Table \ref{tab: acf_gaussian_error}. For (A1), SNA gives the best performance while both KF and MSML seem to suffer power loss. For (A2), the change-point location is close to the boundary with $\tau_1=50/1024<0.05=\epsilon$, violating the assumption of SNA. However, SNA still delivers arguably the best performance as measured by ARI. For (A3), all methods tend to over-estimate with SNA providing the most robust performance. In summary, SNA performs favorably compared to MSML and KF for detecting autocorrelation changes in the time series.
On the other hand, SNA is computationally more expensive than KF and MSML, since the latter two methods are built on fast wavelet transformation. 
\begin{table}[H]
\centering
\caption{Performance of SNA, KF, MSML under change in autocorrelation.}
\label{tab: acf_gaussian_error}
{\small
\begin{tabular}{c|c|rrrrrrr|r|r|r|r|r}
	\hline
	\hline & & \multicolumn{7}{c}{$\hat{m}-m_o$} & &&& \\
	\hline
	Method & Model & $\leq -3$ & $-2$ & $-1$ & $0$ & $1$ & $2$ & $\geq 3$ & ARI & $d_1 \times 10^2$ & $d_2 \times 10^2$ & $d_H \times 10^2$ & time\\
	\hline
	SNA &  & 0 & 2 & 69 & 907 & 20 & 2 & 0 & 0.895 & 2.17 & 4.22 & 4.56 & 18.84 \\ 
	KF & $(A1)$ & 0 & 135 & 107 & 728 & 27 & 3 & 0 & 0.650 & 5.34 & 15.99 & 16.37 & 0.29 \\ 
	MSML &  & 0 & 131 & 71 & 654 & 139 & 4 & 1 & 0.699 & 5.94 & 12.99 & 15.06 & 0.01 \\ 
	\hline
	SNA &  & 0 & 0 & 59 & 741 & 170 & 29 & 1 & 0.724 & 11.23 & 4.39 & 14.18 & 19.03 \\ 
	KF & $(A2)$ & 0 & 0 & 143 & 668 & 118 & 58 & 13 & 0.548 & 13.10 & 11.87 & 20.25 & 0.40 \\ 
	MSML &  & 0 & 0 & 7 & 769 & 201 & 20 & 3 & 0.673 & 12.47 & 2.02 & 12.82 & 0.01 \\
	\hline
	SNA &  & 0 & 0 & 0 & 831 & 150 & 16 & 3 & 0.940 & 4.55 & 0.57 & 4.55 & 18.82 \\ 
	KF & $(A3)$ & 0 & 0 & 0 & 638 & 222 & 118 & 22 & 0.885 & 9.57 & 1.08 & 9.57 & 0.30 \\ 
	MSML &  & 0 & 0 & 0 & 665 & 258 & 69 & 8 & 0.870 & 9.35 & 2.26 & 9.35 & 0.02 \\ 
	\hline\hline
\end{tabular}
}
\end{table}

\subsection{Change in correlation}\label{subsec:powercorrelation}
For correlation change, we generate two bivariate time series $\{Y_t=(Y_{t1},Y_{t2})\}_{t=1}^n$ with $n=1000$ using piecewise constant correlation. Denote 
$\Sigma_r=[1,r;r,1]$, we define
\begin{align*}
&\text{(R0)}: Y_t=0.5\mathbf{I}_2 Y_{t-1}+\mathbf{e}_t, ~ \mathbf{e}_t\overset{i.i.d.}{\sim} N(0, \Sigma_{0.5}), t\in[1,1000].\\
&\text{(R1)}: Y_t=\begin{cases} 
0.5\mathbf{I}_2Y_{t-1}+2\mathbf{e}_t, ~~ \mathbf{e}_t\overset{i.i.d.}{\sim} N(0, \Sigma_{0.8}), & t\in [1,333],\\
0.5\mathbf{I}_2Y_{t-1}+\mathbf{e}_t, ~~ \mathbf{e}_t\overset{i.i.d.}{\sim} N(0, \Sigma_{0.2}), & t\in [334,667],\\
0.5\mathbf{I}_2Y_{t-1}+\mathbf{e}_t, ~~ \mathbf{e}_t\overset{i.i.d.}{\sim} N(0, \Sigma_{0.8}), & t\in [668,1000].
\end{cases}\\
&\text{(R2)}: Y_t=\begin{cases} 
0.5\mathbf{I}_2Y_{t-1}+\sqrt{2}\mathbf{e}_t, ~~ \mathbf{e}_t\overset{i.i.d.}{\sim} N(0, \Sigma_{0.8}), & t\in [1,333],\\
0.5\mathbf{I}_2Y_{t-1}+\mathbf{e}_t, ~~ \mathbf{e}_t\overset{i.i.d.}{\sim} N(0, \Sigma_{0.2}), & t\in [334,667],\\
0.5\mathbf{I}_2Y_{t-1}+2\mathbf{e}_t, ~~ \mathbf{e}_t\overset{i.i.d.}{\sim} N(0, \Sigma_{0.2}), & t\in [668,1000].
\end{cases}
\end{align*}
For (R1), the correlation changes from 0.8 to 0.2 and back to 0.8. At the first change-point $t=333$, the variance of the time series also changes. For (R2), the correlation only changes once from 0.8 to 0.2 at $t=333$. At $t=667$, the covariance matrix changes but the correlation remains the same.


The estimation result is reported in Table \ref{tab: correlation}. For (R1), SNC gives noticeably better performance with much higher ARI and lower Hausdorff distance, while GW seems to over-estimate and experience power loss at the same time. The power loss of GW is due to its inability to detect the first change-point, where correlation change comes with large variance change. For (R2), SNC again outperforms GW. Note that GW systematically over-estimates the number of change-points as it mistakenly detects the sole variance change at $t=667$ as correlation change. For (R0), both methods give decent performance with SNC achieving the target size perfectly. In summary, SNC performs favorably and retains size and power for detecting change in correlation when other quantities such as variance also experience structural breaks.

\begin{table}[H]
\centering
\caption{Performance of SNC, GW under change in correlation of bivariate time series.}
\label{tab: correlation}
{\small
\begin{tabular}{c|c|rrrrrrr|r|r|r|rr}
	\hline
	\hline & & \multicolumn{7}{c}{$\hat{m}-m_o$} & &&& \\
	\hline
	Method & Model & $\leq -3$ & $-2$ & $-1$ & $0$ & $1$ & $2$ & $\geq 3$ & ARI & $d_1 \times 10^2$ & $d_2 \times 10^2$ & $d_H \times 10^2$ & time\\
	\hline
	SNC & $(R1)$ & 0 & 0 & 9 & 941 & 50 & 0 & 0 & 0.937 & 2.06 & 1.76 & 2.35 & 22.43 \\ 
	GW &  & 0 & 376 & 0 & 440 & 159 & 17 & 8 & 0.558 & 3.88 & 20.03 & 22.68 & 0.28 \\  
	\hline
	Method & Model & $\leq -3$ & $-2$ & $-1$ & $0$ & $1$ & $2$ & $\geq 3$ & ARI & $d_1 \times 10^2$ & $d_2 \times 10^2$ & $d_H \times 10^2$ & time\\
	\hline
	SNC & $(R2)$ & 0 & 0 & 0 & 916 & 80 & 4 & 0 & 0.932 & 3.23 & 1.05 & 3.23 & 22.78 \\ 
	GW & & 0 & 0 & 0 & 11 & 471 & 328 & 190 & 0.522 & 31.15 & 1.24 & 31.15 & 0.49 \\\hline
	Method & Model & $\hat m=0$ & $\hat m=1$ & $\hat{m} \geq 2$ \\\cline{1-5} 
	SNC & $(R0)$ & 900 & 90 & 10 & \\ 
	GW &  & 892 & 80 & 28 & \vspace{-0.7mm} \\ 
	\cmidrule{1-5}\morecmidrules\cmidrule{1-5}
\end{tabular}
}
\end{table}

\subsection{Change in quantile}\label{subsec:powerquantile}
To our best knowledge, SNQ is the only nonparametric method available for detecting structural breaks in quantiles under temporal dependence. Note that there is a stream of literature on testing and estimating structural breaks in quantile regressions, see \cite{Qu2008}, \cite{Oka2011} and \cite{Aue2014}. However, in all these papers, a parametric linear quantile form is typically specified and  the error in the location-scale quantile regression model is assumed to be \textit{i.i.d.}, which does not cover the nonparametric time series setting we are focusing on here.

For illustration, we compare SNQ with the recently proposed multiscale quantile segmentation~(MQS) in \cite{Vanegas2020}, which is a nonparametric method designed for detecting piecewise constant quantile changes under a temporal independence assumption. In addition, we compare with ECP in \cite{Matteson2014}, which is a nonparametric method designed for detecting distributional changes based on the energy statistics. We emphasize that the comparison is not completely fair as the validity of MQS and ECP require temporal independence. 

A stylized fact of financial markets is that upper quantiles of negative log-returns~(i.e.\ losses) of stocks are subject to more changes than lower quantiles. Based on a mixture of a truncated normal distribution and a generalized Pareto distribution~(GPD), we design a univariate time series to resemble such phenomenon. GPD is a commonly used distribution for characterizing high quantiles of financial returns, see \cite{Embrechts1997}. Denote $\Phi(\cdot)$ as the CDF of $N(0,1)$, we use a standard normal distribution truncated at 0 with CDF $F_1(x)=2\Phi(x), x\leq 0$. The CDF of a GPD distribution takes the form $F_2(x)=1-(1+\xi(x-\mu)/\sigma)_+^{-1/\xi}$, where we set the location parameter $\mu=0$, scale parameter $\sigma=2$ and tail index $\xi=0.125$. Setting the mixture as $F(x)=0.5F_1(x)+0.5F_2(x)$, it is easy to see that $F(x)$ is a continuous distribution with $F^{-1}(q)=\Phi^{-1}(q)$ for $q\leq 0.5$.

To introduce temporal dependence, we first simulate a stationary univariate time series $\{X_t\}_{t=1}^n$ from an AR(1) process with $X_t=\rho X_{t-1}+\sqrt{1-\rho^2}\epsilon_t,$ where $\rho=0.2$ and $\{\epsilon_t\}$ is \textit{i.i.d.} $N(0,1)$. Thus $\{u_t=\Phi(X_t)\}_{t=1}^n$ is a stationary time series with uniform margins. Based on $F(x)$ and $\{u_t\}_{t=1}^n$, we define $\{Y_t\}_{t=1}^{n}$ such that
\begin{align*}
&\text{(Q1)}: n=1000, ~~Y_t=\begin{cases} 
\Phi^{-1}(u_t), & t\in [1,500],\\
F^{-1}(u_t), &t\in[501,1000].
\end{cases}
\end{align*}
A typical realization of (Q1) can be found in Figure \ref{fig: dgp} of the supplementary material.

We use SNQ and MQS to detect change-points in the 10\% quantile (no change-point) and 90\% quantile (one change-point) of $\{Y_t\}_{t=1}^n$ respectively. Note that ECP cannot be tailored to detect changes in a specific quantile level, thus we can only apply it to detect if there is any distributional change in the data (one change-point). The estimation result is reported in Table \ref{tab: quantile}. For (Q1) 90\% quantile, SNQ gives notably better performance with much higher ARI and lower Hausdorff distance, while MQS seems to severely over-estimate. For (Q1) 10\% quantile, the size of SNQ is close to the target size of 10\%, while MQS yields over-detection of change-points. As for ECP, it always estimates at least one change-point as it targets distributional change and tends to over-estimate due to intolerance to temporal dependence.

In summary, SNQ performs fairly well for quantile change detection while ignoring temporal dependence can lead to less favorable results for MQS. In addition, compared to algorithms that target distributional changes such as ECP, SNCP can be tailored to detect changes in a specific parameter and thus provides more information about the nature of change, e.g.\ whether the change stems from the lower 10\% quantile or the upper 90\% quantile.

\begin{table}[H]
\centering
\caption{Performance of SNQ, MQS under change in quantile.}
\label{tab: quantile}
{\small
\begin{tabu}{c|c|rrrrrrr|r|r|r|r|r}
	\hline
	\hline & & \multicolumn{7}{c}{$\hat{m}-m_o$} & &&& \\
	\hline
	Method & Model & $\leq -3$ & $-2$ & $-1$ & $0$ & $1$ & $2$ & $\geq 3$ & ARI & $d_1 \times 10^2$ & $d_2 \times 10^2$ & $d_H \times 10^2$ & time\\
	\hline
	SNQ &  & 0 & 0 & 4 & 903 & 84 & 9 & 0 & 0.911 & 3.99 & 1.94 & 4.19 & 25.64 \\ 
	MQS & $(Q1)$-90\%  & 0 & 0 & 0 & 594 & 221 & 115 & 70 & 0.746 & 15.19 & 4.01 & 15.19 & 509.27 \\ 
	ECP &  & 0 & 0 & 0 & 658 & 168 & 125 & 49 & 0.842 & 11.46 & 2.11 & 11.46 & 17.64\\
	\hline
	Method & Model & $\hat m=0$ & $\hat m=1$ & $\hat{m} \geq 2$ &  \\\cline{1-5}
	SNQ &  & 860 & 122 & 18 & \\
	MQS & $(Q1)$-10\% & 462 & 313 & 225 & \\
	ECP &  & 0 & 658 & 342  & \vspace{-0.7mm}\\
	\cmidrule{1-5}\morecmidrules\cmidrule{1-5}
\end{tabu}
}
\end{table}

\subsection{Simultaneous change in mean and variance}\label{subsec:powermeanvariance}
In this section, we further investigate the robustness of SNM and SNV under the scenario where both mean and variance change at the same time. We first simulate a stationary univariate time series $\{X_t\}_{t=1}^n$ from an AR(1) process such that $X_t=\rho X_{t-1}+\epsilon_t,$ where $\{\epsilon_t\}$ is \textit{i.i.d.} $N(0,1)$. We then generate two different time series $\{Y_t\}_{t=1}^n$ with piecewise constant mean and variance based on $\{X_t\}_{t=1}^n$.
\begin{align*}
&\text{(MV1)}: n=1024, ~~\rho=0.5, ~~Y_t=\begin{cases} 
0+X_t, & t\in [1,512],\\
1+1.5X_t, &t\in[513,1024].
\end{cases}\\
&\text{(MV2)}: n=1024, ~~\rho=-0.5, ~~Y_t=\begin{cases} 
0+X_t, & t\in [1,512],\\
1+1.5X_t, &t\in[513,1024].
\end{cases}
\end{align*}

We run (SNM, BP) and (SNV, KF, MSML) on $\{Y_t\}_{t=1}^n$. The estimation result is summarized in Table \ref{tab: mean_variance_gaussian_error}. In general, the observation is as follows. SNM and BP are robust against change in variance.  Additionally, BP suffers severe over-estimation under positive dependence. SNV, KF, MSML are robust against change in mean while MSML tends to over-estimate.

\begin{table}[H]
\centering
\caption{Estimation result under change in mean and variance.}
\label{tab: mean_variance_gaussian_error}
{\small
\begin{tabular}{c|c|rrrrrrr|r|r|r|r|r}
	\hline
	\hline & & \multicolumn{7}{c}{$\hat{m}-m_o$} & &&&& \\
	\hline
	Method & Model & $\leq -3$ & $-2$ & $-1$ & $0$ & $1$ & $2$ & $\geq 3$ & ARI & $d_1 \times 10^2$ & $d_2 \times 10^2$ & $d_H \times 10^2$ & time\\
	\hline
	SNM &  & 0 & 0 & 49 & 862 & 81 & 8 & 0 & 0.852 & 4.24 & 4.64 & 6.69 & 4.02 \\ 
	BP & $(MV1)$ & 0 & 0 & 0 & 267 & 174 & 229 & 330 & 0.715 & 24.54 & 2.20 & 24.54 & 37.12 \\ 
	\hline
	SNV &  & 0 & 0 & 27 & 887 & 80 & 6 & 0 & 0.887 & 3.87 & 3.18 & 5.22 & 17.80 \\ 
	KF & $(MV1)$ & 0 & 0 & 1 & 956 & 37 & 6 & 0 & 0.935 & 2.59 & 1.57 & 2.64 & 0.24 \\ 
	MSML &  & 0 & 0 & 0 & 743 & 250 & 7 & 0 & 0.908 & 6.62 & 1.35 & 6.62 & 0.02 \\  
	\hline
	SNM &  & 0 & 0 & 0 & 980 & 20 & 0 & 0 & 0.976 & 0.92 & 0.49 & 0.92 & 3.99 \\ 
	BP & $(MV2)$ & 0 & 0 & 0 & 1000 & 0 & 0 & 0 & 0.990 & 0.25 & 0.25 & 0.25 & 37.40 \\ 
	\hline
	SNV &  & 0 & 0 & 32 & 913 & 54 & 1 & 0 & 0.890 & 3.12 & 3.39 & 4.72 & 17.76 \\ 
	KF & $(MV2)$ & 0 & 0 & 0 & 809 & 135 & 51 & 5 & 0.906 & 6.47 & 1.62 & 6.47 & 0.29 \\ 
	MSML &  & 0 & 0 & 0 & 723 & 236 & 40 & 1 & 0.883 & 8.28 & 1.92 & 8.28 & 0.02 \\ 
	\hline\hline
\end{tabular}
}
\end{table}

\subsection{Change in covariance matrix}\label{subsec: covmatrix}
In this section, we consider two additional simulation settings for changes in covariance matrix of a VAR process. Specifically, we consider change in covariance matrices of a four-dimensional time series $\{Y_t=(Y_{t1},\cdots,Y_{t4})\}_{t=1}^n$ with $n=1000$. Denote $\Sigma_\rho$ as an exchangeable covariance matrix with unit variance and equal covariance $\rho$, we consider
\begin{align*}
&\text{(C3)}: Y_t=\begin{cases} 
0.3\mathbf{I}_{4}Y_{t-1}+\sqrt{2}\mathbf{e}_t, ~~ \mathbf{e}_t\overset{i.i.d.}{\sim} N(0, \Sigma_{0.5}), & t\in [1,333],\\
0.3\mathbf{I}_{4}Y_{t-1}+\mathbf{e}_t, ~~ \mathbf{e}_t\overset{i.i.d.}{\sim} N(0, \Sigma_{0.2}), & t\in [334,667],\\
0.3\mathbf{I}_{4}Y_{t-1}+\sqrt{2}\mathbf{e}_t, ~~ \mathbf{e}_t\overset{i.i.d.}{\sim} N(0, \Sigma_{0.5}), & t\in [668,1000].
\end{cases}\\
&\text{(C4)}: Y_t=\begin{cases} 
0.3\mathbf{I}_{4}Y_{t-1}+\mathbf{e}_t, ~~ \mathbf{e}_t\overset{i.i.d.}{\sim} N(0, \Sigma_{0.2}), & t\in [1,333],\\
0.3\mathbf{I}_{4}Y_{t-1}+\sqrt{2}\mathbf{e}_t, ~~ \mathbf{e}_t\overset{i.i.d.}{\sim} N(0, \Sigma_{0.5}), & t\in [334,667],\\
0.3\mathbf{I}_{4}Y_{t-1}+2\mathbf{e}_t, ~~ \mathbf{e}_t\overset{i.i.d.}{\sim} N(0, \Sigma_{0.5}), & t\in [668,1000].
\end{cases}
\end{align*}
(C3) and (C4) generate covariance changes in the VAR model, which is widely used in the time series literature. The estimation result is reported in Table \ref{tab: add_sn_4dcov}. For monotonic changes (C4), both methods perform well though AHHR tends to over-estimate the number of change-points, while for non-monotonic changes (C3), AHHR seems to over-estimate and experience power loss at the same time and is outperformed by SNCM.

\begin{table}[h]
\centering
\caption{Performance of SNCM and AHHR under change in covariance matrix.} 
\label{tab: add_sn_4dcov}
{\small
\begin{tabular}{c|c|rrrrrrr|r|r|r|r|r}
	\hline
	\hline & & \multicolumn{7}{c}{$\hat{m}-m_o$} & &&&& \\
	\hline
	Method & Model & $\leq -3$ & $-2$ & $-1$ & $0$ & $1$ & $2$ & $\geq 3$ & ARI & $d_1 \times 10^2$ & $d_2 \times 10^2$ & $d_H \times 10^2$ & time\\
	\hline
	SNCM & $(C3)$ & 0 & 0 & 51 & 912 & 37 & 0 & 0 & 0.904 & 2.36 & 3.66 & 4.06 & 56.50 \\ 
	AHHR &  & 0 & 525 & 8 & 342 & 104 & 18 & 3 & 0.408 & 2.84 & 27.95 & 29.32 & 0.37 \\ 
	\hline
	SNCM & $(C4)$ & 0 & 0 & 19 & 947 & 33 & 1 & 0 & 0.924 & 2.22 & 2.42 & 2.85 & 56.96 \\ 
	AHHR &  & 0 & 0 & 8 & 780 & 175 & 33 & 4 & 0.883 & 5.27 & 3.00 & 5.53 & 0.65 \\ \hline\hline
\end{tabular}
}
\end{table}

\subsection{Change in multi-parameter}\label{subsec: multipara}
In this section, we consider an additional simulation setting
\begin{align*}
\text{(MP3)}: n=1000,~~~ Y_t=\begin{cases} 
0.1Y_{t-1}+\epsilon_{t}, & t\in [1,333],\\
0.6Y_{t-1}+\epsilon_{t}, & t\in [334,667],\\
0.1Y_{t-1}+\epsilon_{t}, & t\in [668,1000].
\end{cases}
\end{align*}
Here $\{\epsilon_t\}_{t=1}^n$ is \textit{i.i.d.}\ $N(0,1)$. Thus the change in $\{Y_t\}_{t=1}^n$ is driven by autocorrelation and further affects the variance of the marginal distribution of $Y_t$.

The estimation result is summarized in Table \ref{tab: add_sn_multipara}. As can be seen, SNA gives decent performance as the change originates from autocorrelation. Including variance and quantile in the multi-parameter set improves the estimation accuracy of SNMP, but only by a small amount. On the other hand, ECP does not perform well, possibly due to the strong temporal dependence in the second segment of $\{Y_t\}_{t=1}^n$.

\begin{table}[H]
\centering
\caption{Performance of SNMP and ECP under change in multi-parameter.}
\label{tab: add_sn_multipara}
{\small
\begin{tabular}{c|c|rrrrrrr|r|r|r|r|r}
	\hline
	\hline & & \multicolumn{7}{c}{$\hat{m}-m_o$} & &&& \\
	\hline
	Method & Model & $\leq -3$ & $-2$ & $-1$ & $0$ & $1$ & $2$ & $\geq 3$ & ARI & $d_1 \times 10^2$ & $d_2 \times 10^2$ & $d_H \times 10^2$ & time\\
	\hline
	SNA &  & 0 & 7 & 86 & 870 & 37 & 0 & 0 & 0.873 & 2.72 & 5.29 & 5.80 & 11.32 \\ 
	SNVA & $(MP3)$ & 0 & 2 & 49 & 884 & 64 & 1 & 0 & 0.898 & 2.80 & 3.62 & 4.44 & 26.88 \\ 
	SNVAQ$_{90}$ &  & 0 & 1 & 70 & 841 & 83 & 5 & 0 & 0.881 & 3.27 & 4.49 & 5.50 & 49.87\\
	ECP &  & 0 & 126 & 24 & 245 & 212 & 211 & 182 & 0.512 & 12.22 & 17.61 & 21.59 & 12.28 \\ 
	\hline\hline
\end{tabular}
}
\end{table}

\newpage
\subsection{Implementation details of comparison methods and typical realizations of DGP used in simulation}\label{subsec:dgp}

All implementations of the comparison methods are set to the recommended settings by the corresponding papers or R packages. We believe the default settings of the papers or R packages accommodate the presence of serial correlations as the methods we compare with are \textit{explicitly} designed to handle temporal dependence. The only exceptions are ECP in \cite{Matteson2014} for distributional change and MQS in \cite{Vanegas2020} for quantile change, which can only handle temporal independence, as we cannot find methods for distributional or quantile change that can accommodate temporal dependence.


Note that for all simulation experiments in Section \ref{sec:simulation} of the main text and in Section \ref{sec:addsim} of the supplement, except CUSUM, which is simple enough to be coded by ourselves, all the other competing methods are implemented using source codes obtained from the authors' website or from the corresponding R packages.

For CUSUM, we estimate the long-run variance following the recommendation in \cite{Aue2009}, which uses a Bartlett kernel with bandwidth $\log_{10} n$. 
For BP in \cite{Bai2003} it is implemented via function \texttt{breakpoints()} in the R package \texttt{strucchange}. For MSML in \cite{Cho2012}, it is implemented via source code from Dr.\ Haeran Cho's website. For KF in \cite{Korkas2017}, it is implemented via function \texttt{wbs.lsw()} in the R package \texttt{wbsts}. For GW in \cite{Wied2012} and \cite{Galeano2017}, it is implemented via source code from Dr.\ Dominik Wied's website. For ECP in \cite{Matteson2014}, it is implemented via function \texttt{e.divisive()} in the R package \texttt{ecp}. For MQS in \cite{Vanegas2020}, it is implemented via function \texttt{mqse()} in the R package \texttt{mqs}.

\begin{figure}[H]
\begin{subfigure}{0.3\textwidth}
\includegraphics[angle=270, width=1.2\textwidth]{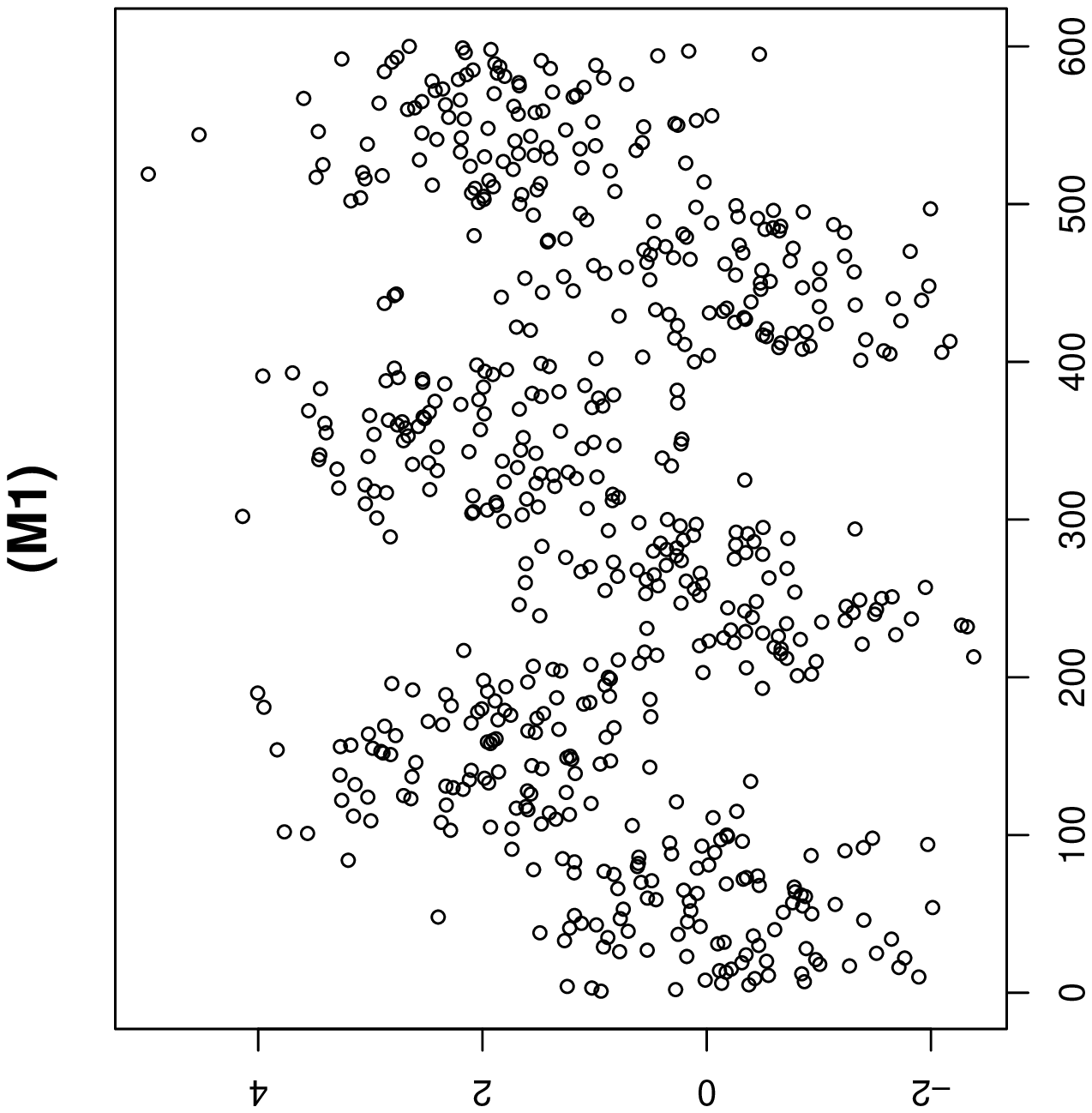}
\vspace{-0.5cm}
\end{subfigure}
~
\begin{subfigure}{0.3\textwidth}
\includegraphics[angle=270, width=1.2\textwidth]{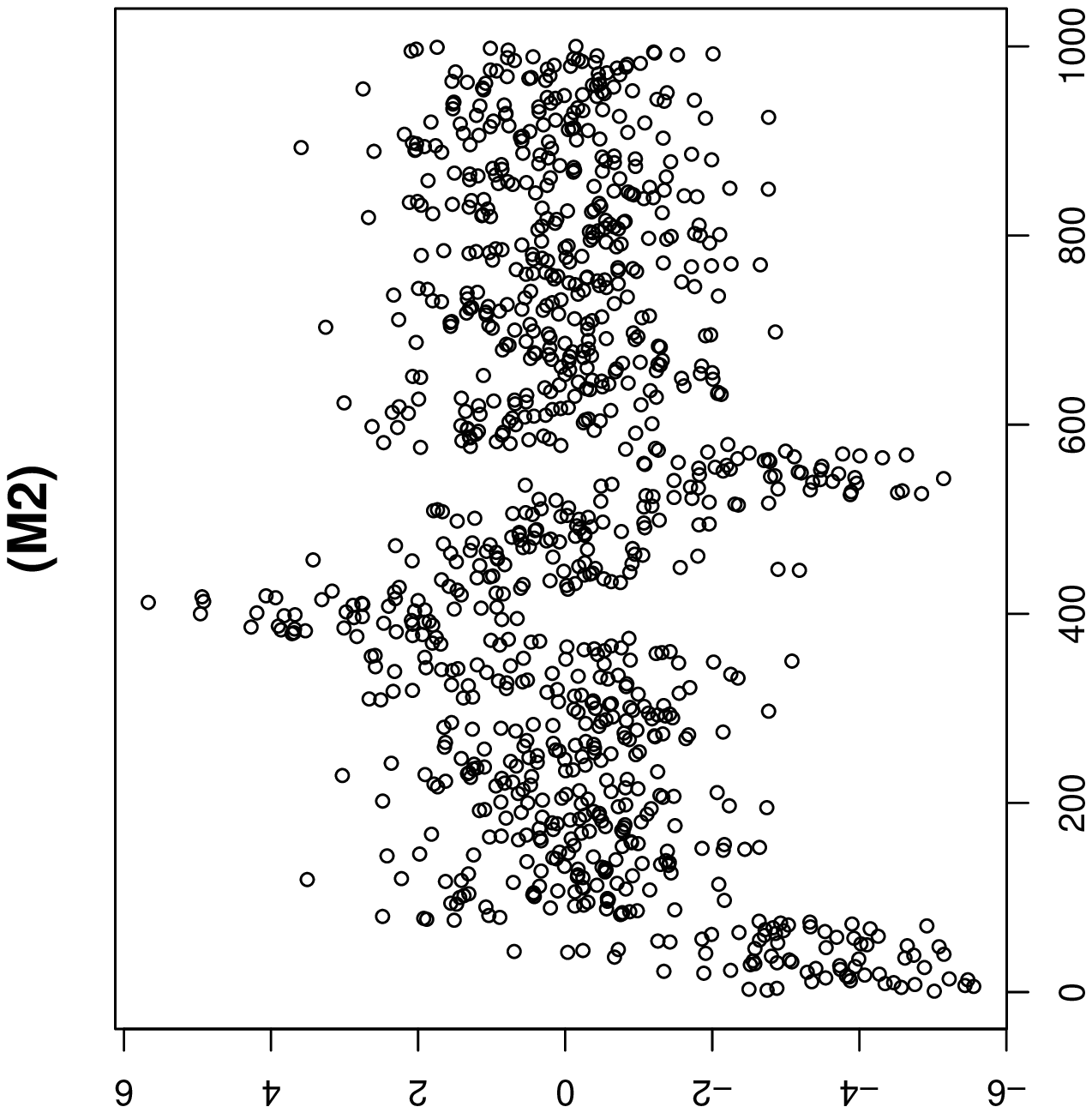}
\vspace{-0.5cm}
\end{subfigure}
~
\begin{subfigure}{0.3\textwidth}
\includegraphics[angle=270, width=1.2\textwidth]{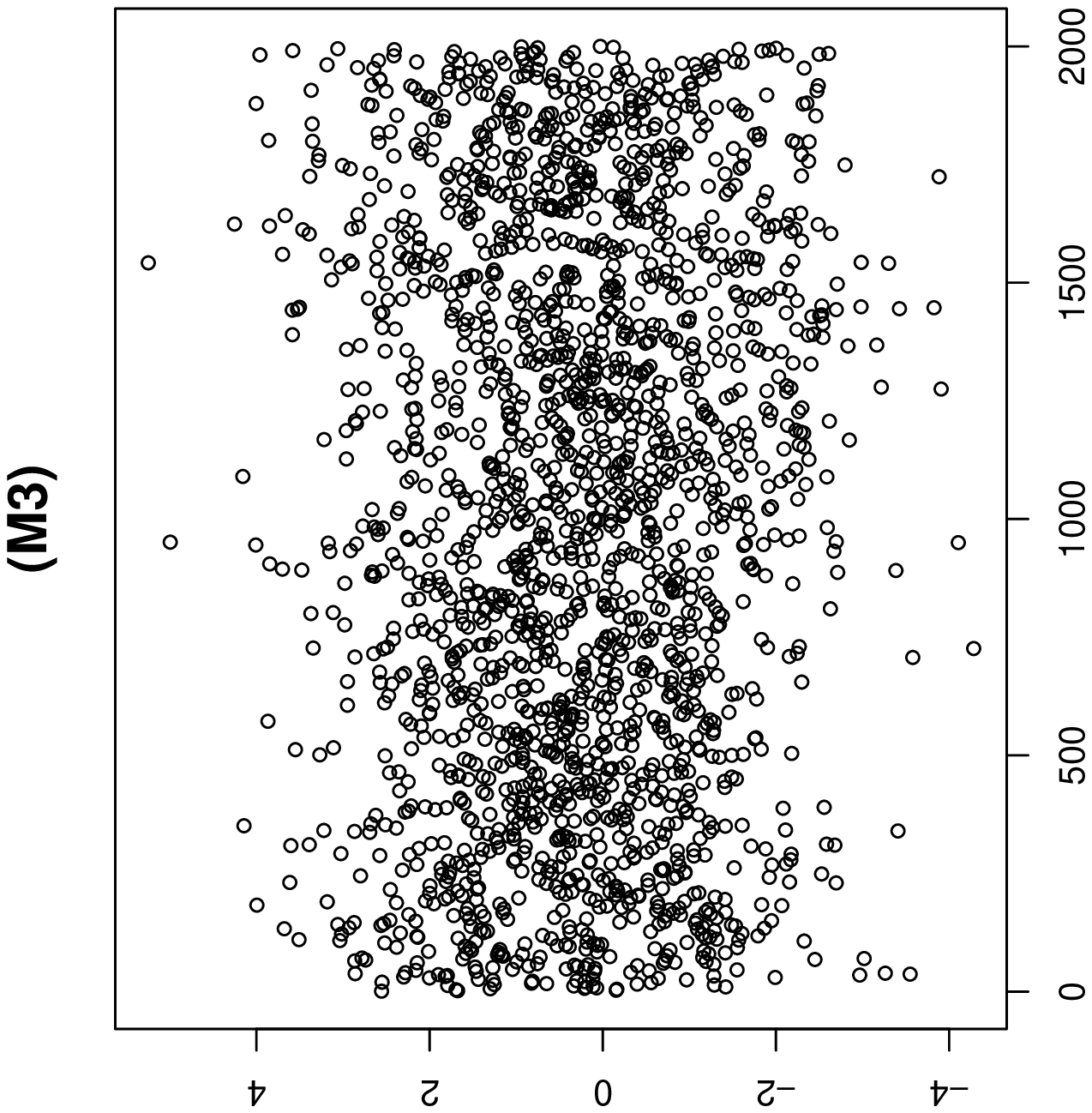}
\vspace{-0.5cm}
\end{subfigure}
~
\begin{subfigure}{0.3\textwidth}
\includegraphics[angle=270, width=1.2\textwidth]{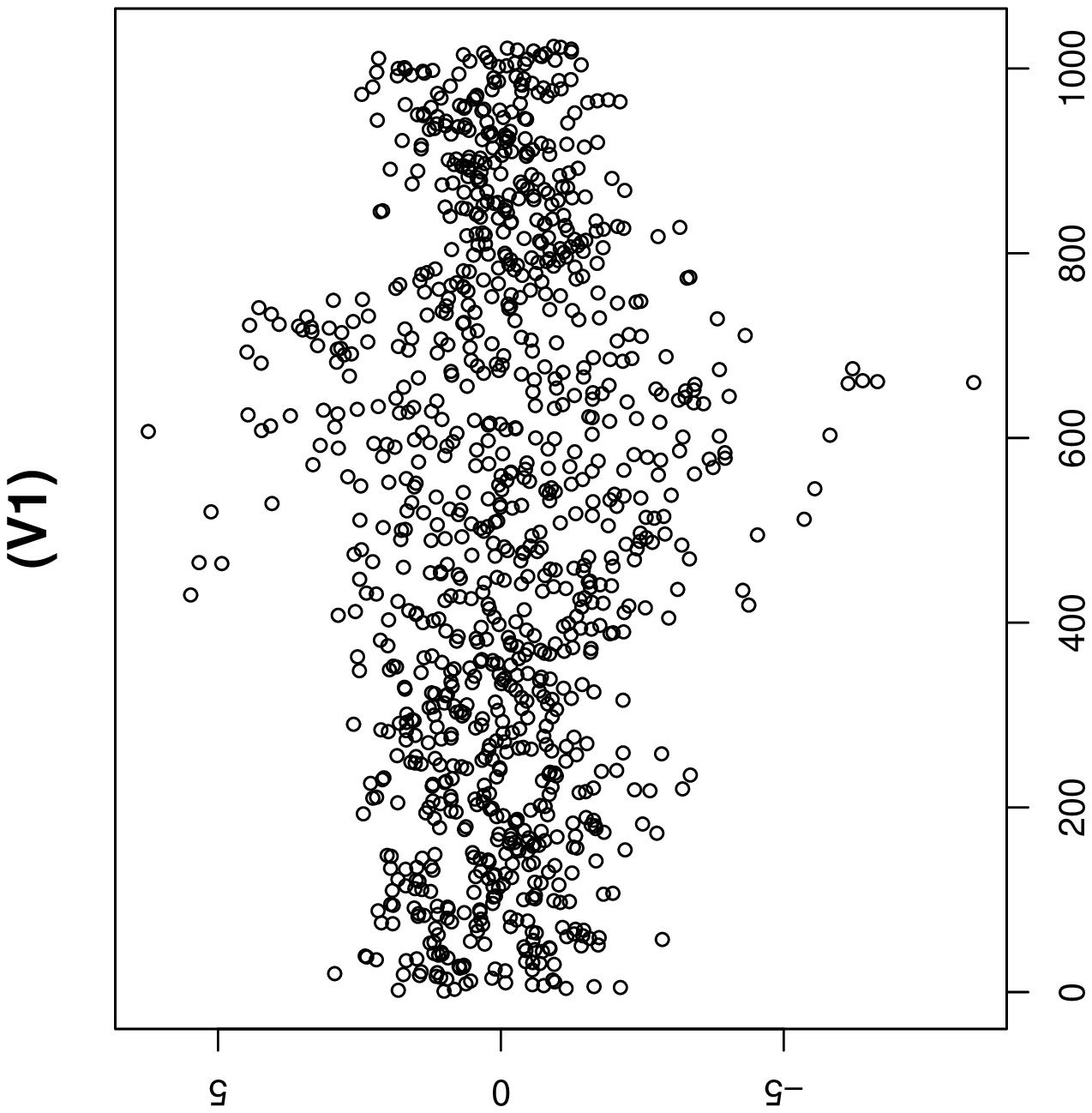}
\vspace{-0.5cm}
\end{subfigure}
~
\begin{subfigure}{0.3\textwidth}
\includegraphics[angle=270, width=1.2\textwidth]{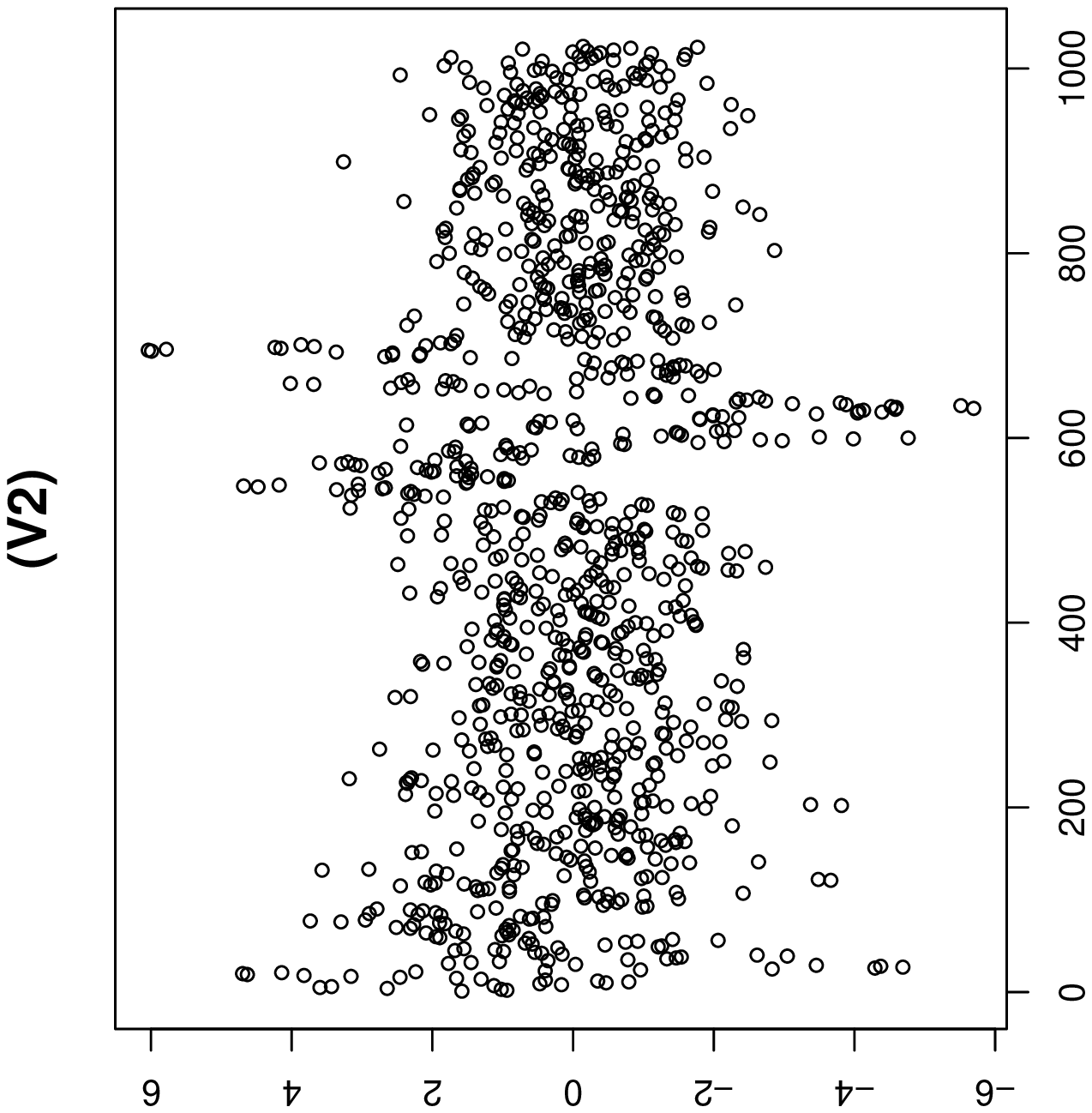}
\vspace{-0.5cm}
\end{subfigure}
~
\begin{subfigure}{0.3\textwidth}
\includegraphics[angle=270, width=1.2\textwidth]{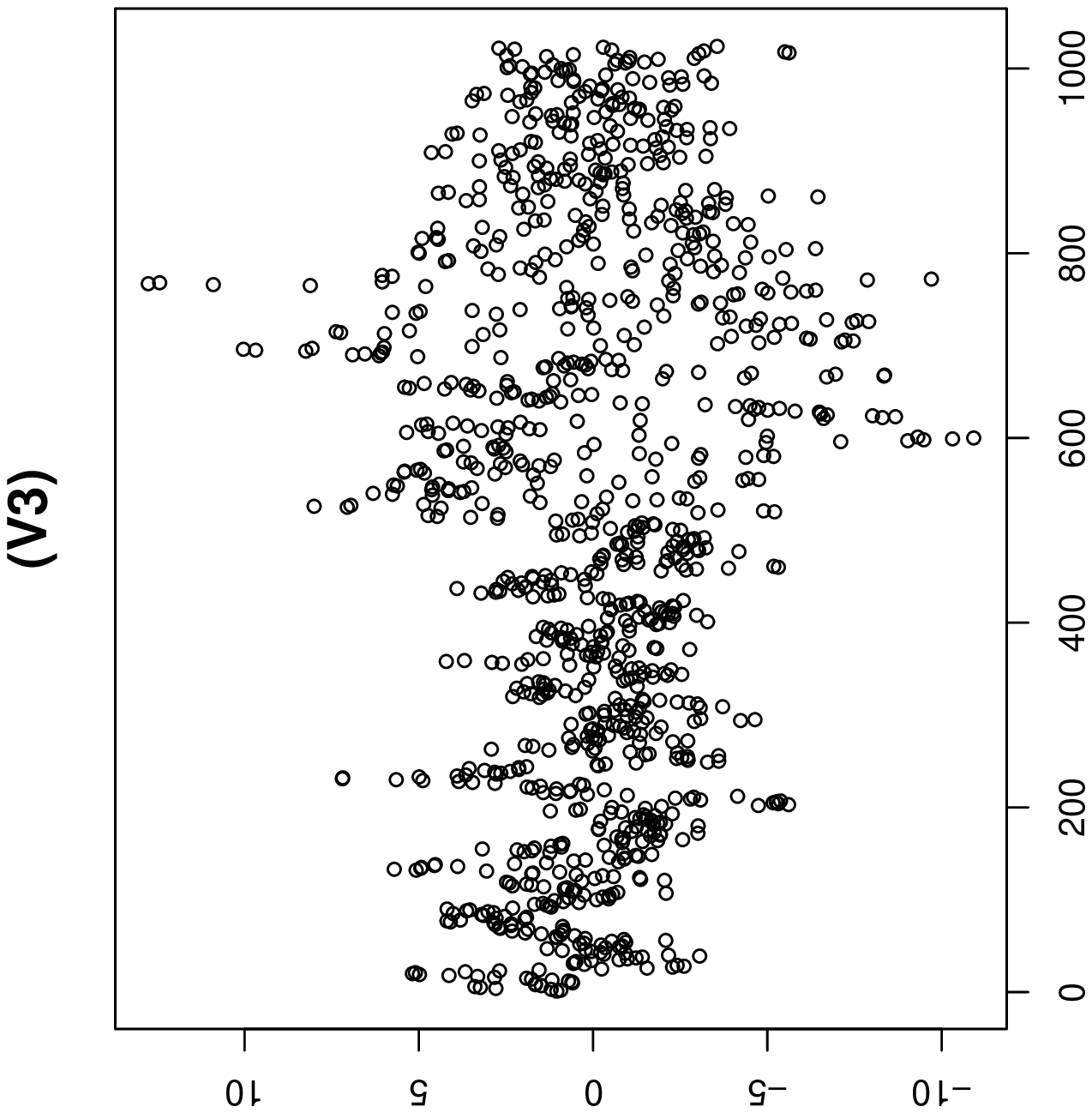}
\vspace{-0.5cm}
\end{subfigure}
~
\begin{subfigure}{0.3\textwidth}
\includegraphics[angle=270, width=1.2\textwidth]{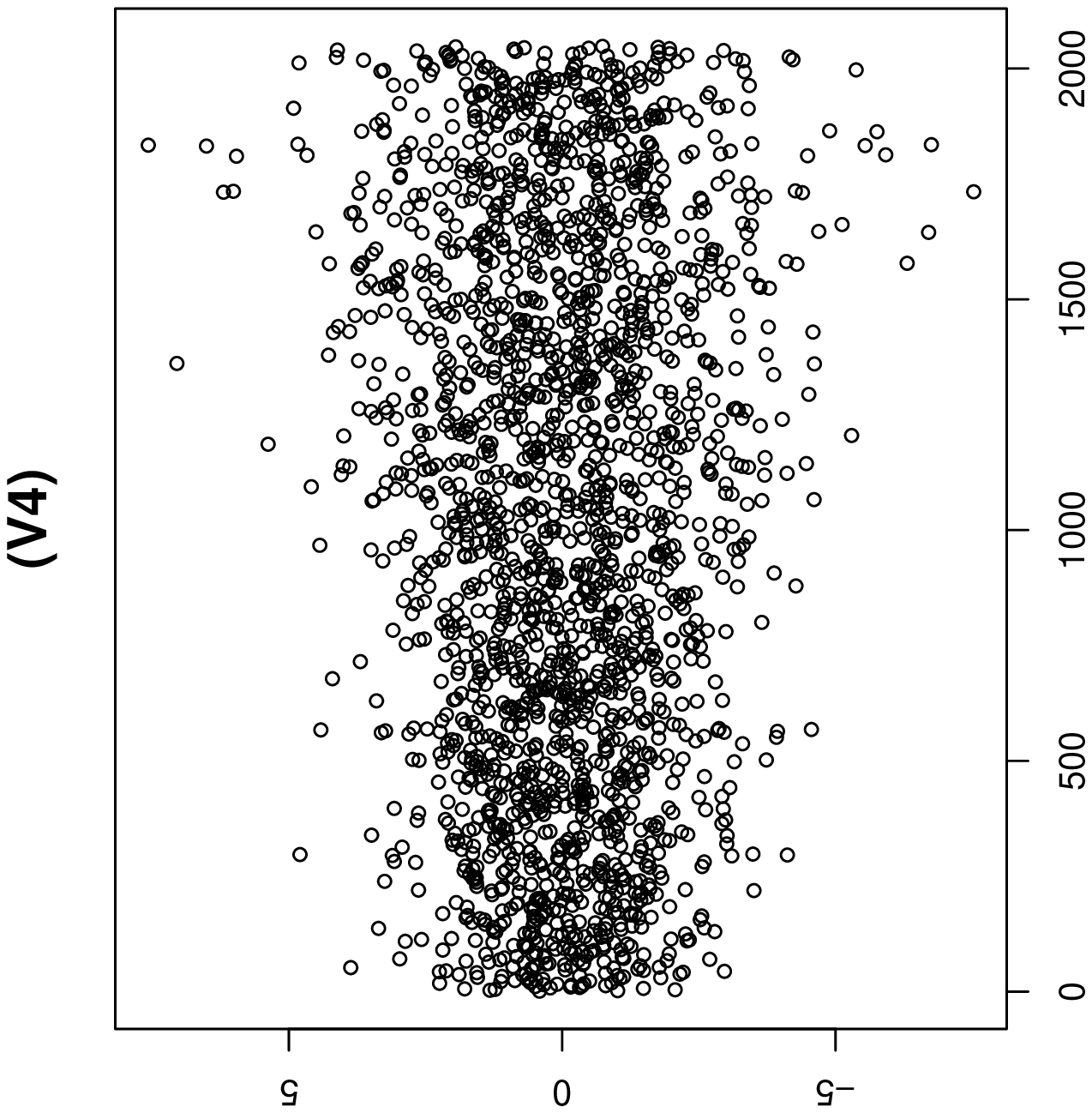}
\vspace{-0.5cm}
\end{subfigure}
~
\begin{subfigure}{0.3\textwidth}
\includegraphics[angle=270, width=1.2\textwidth]{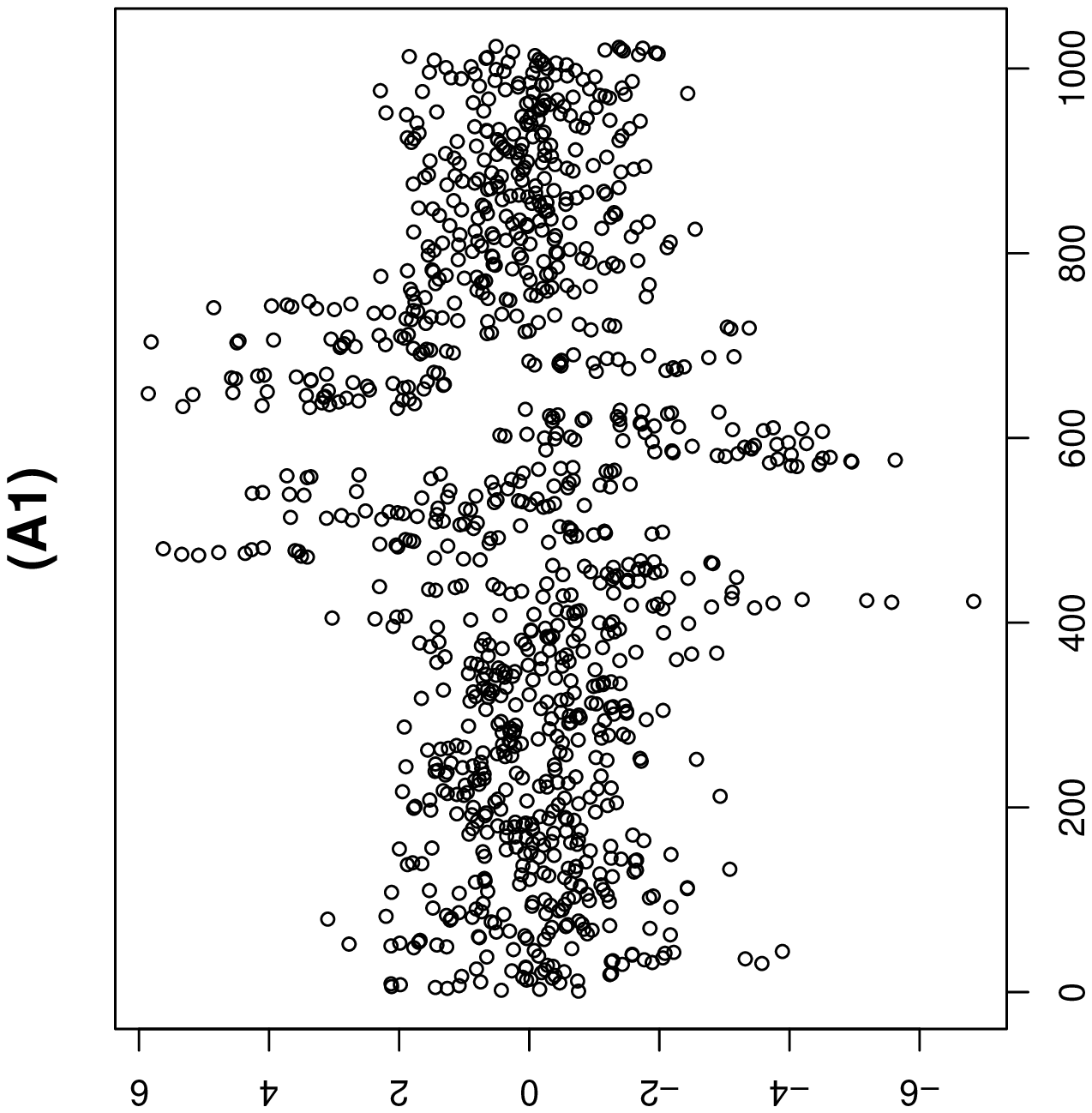}
\vspace{-0.5cm}
\end{subfigure}
~
\begin{subfigure}{0.3\textwidth}
\includegraphics[angle=270, width=1.2\textwidth]{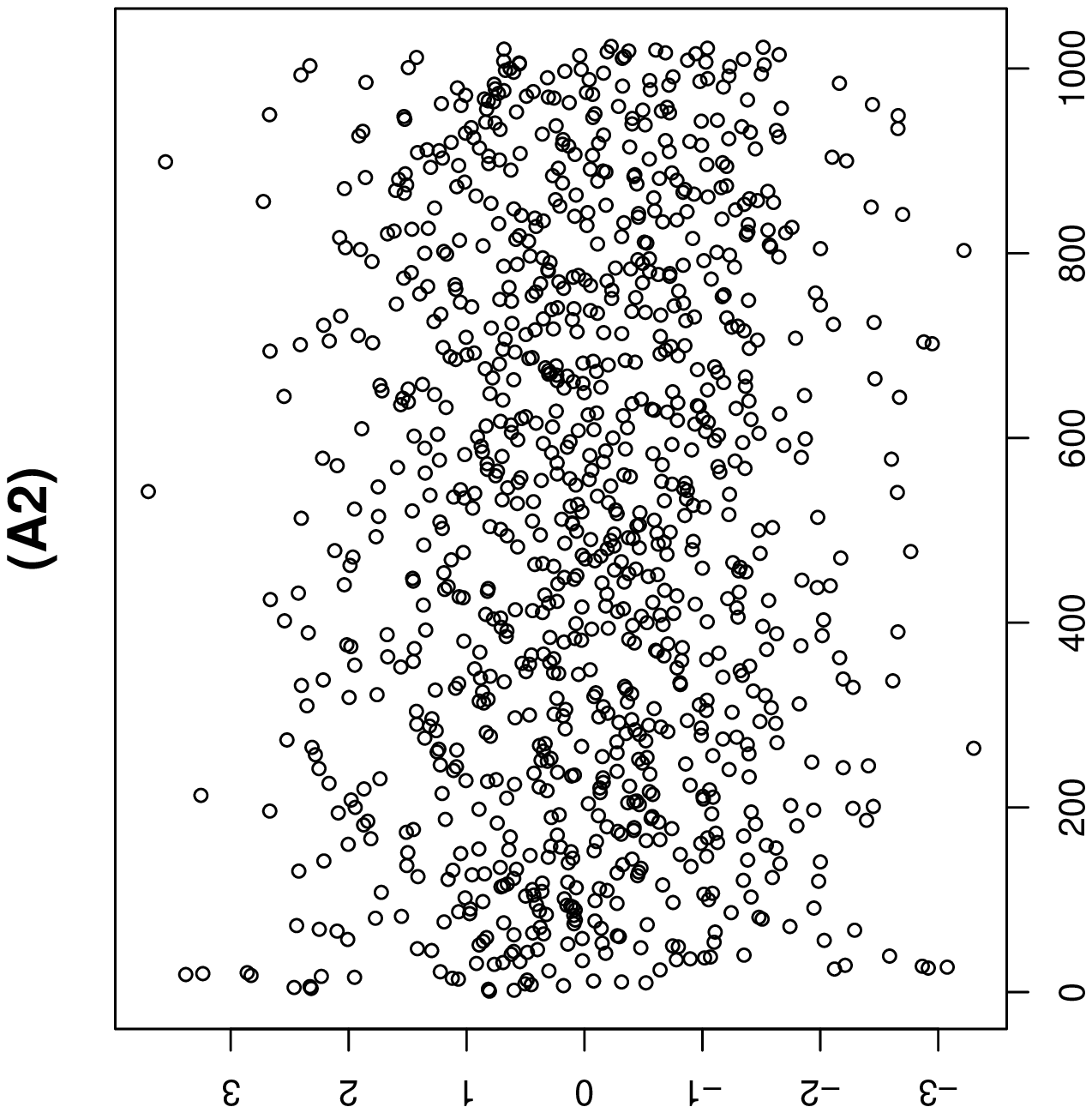}
\vspace{-0.5cm}
\end{subfigure}
~
\begin{subfigure}{0.3\textwidth}
\includegraphics[angle=270, width=1.2\textwidth]{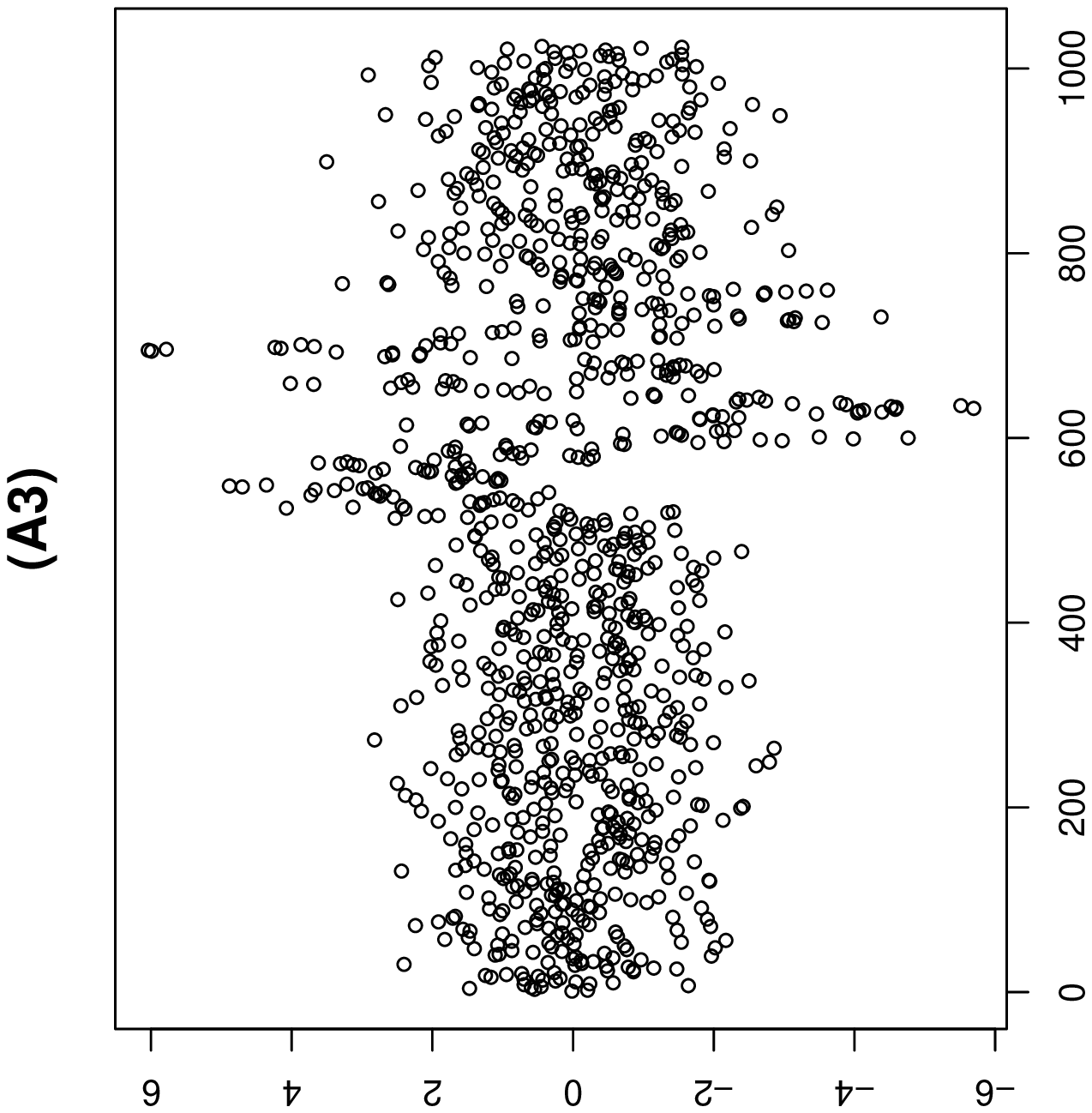}
\vspace{-0.5cm}	
\end{subfigure}
\hspace{0.6cm} 
\begin{subfigure}{0.3\textwidth}
\includegraphics[angle=270, width=1.2\textwidth]{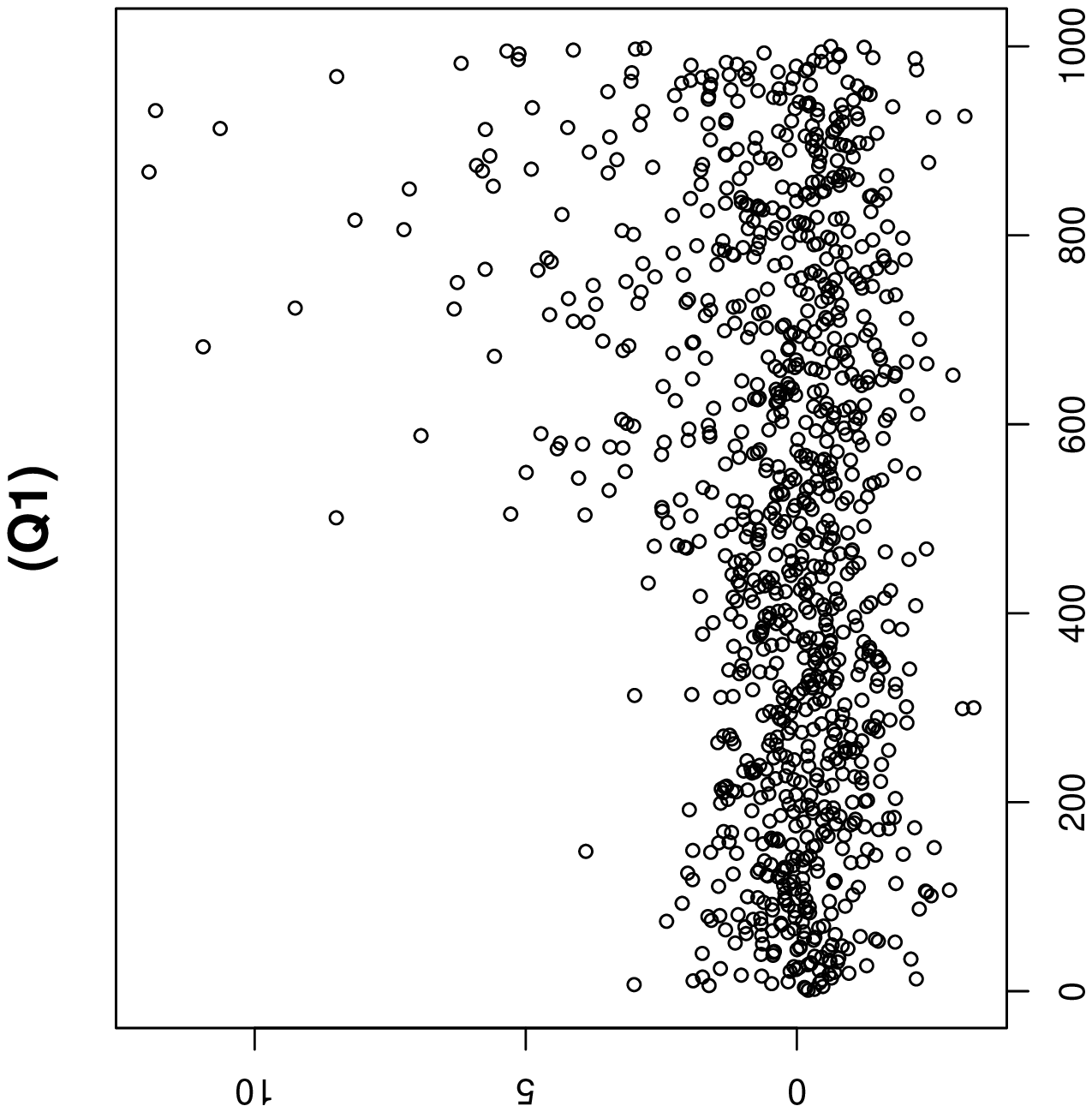}
\vspace{-0.5cm}	
\end{subfigure}
\caption{\it Typical realizations of DGP used in simulation.}
\label{fig: dgp}
\end{figure}
\clearpage

\section{Real Data Applications}\label{sec:applications}
\subsection{Change-point detection in climate data}
In this section, we analyze two climate datasets using SNCP to illustrate the evidence of climate change. The first dataset contains the annual mean temperature of central England from 1772 to 2019, covering $n=247$ years in total. We apply SNM and BP to detect possible mean change in the time series and the estimation result is given in Figure \ref{fig: climate}(a). SNM detects two change-points at 1919 and 1988 and BP gives two changes at 1910 and 1988. In addition, we apply ECP, which detects two change-points at 1926 and 1988. Based on the estimated change-points by SNM, the expected annual mean temperature is 9.15$^{\circ}$C from 1772-1919, 9.52$^{\circ}$C from 1920-1988, and 10.25$^{\circ}$C from 1989-now, which clearly indicates the trend of global warming. 

The second dataset contains the satellite-derived lifetime-maximum wind speeds of $n=2098$ tropical cyclones over the globe during 1981–2006, and we refer to \cite{Elsner2008} for a more detailed description. It is known in climate science that intensity of tropical cyclones increases with rising ocean temperature, one of the major consequences of global warming. As a result, researchers have been seeking empirical evidence on changes in tropical cyclone wind speeds, see \cite{Elsner2008}, \cite{Zhang2011}, \cite{Zhang2018} and references therein. Specifically, the change-point test in \cite{Zhang2018} indicates strong evidence of mean change in the time series of maximum wind speed, however, their test is incapable of locating the exact change-point. We apply SNM and BP to locate the possible change-points in mean. Surprisingly, BP fails to detect any change, while SNM detects one change-point at 1988 with an increase of maximum wind speed~(see Figure \ref{fig: climate}(b)). Interestingly, this aligns with the second estimated change-point in central England temperature, which provides further evidence of climate change. We also apply ECP, which detects one change-point at 1986, providing additional support for the finding based on SNM. The advantage of SNM here is that it not only detects the change but further indicates that the change is due to an increase in the mean level of maximum wind speed, which is more informative than ECP for the climate study.

\begin{figure}[]
\centerline{\includegraphics[angle=270, scale=0.5]{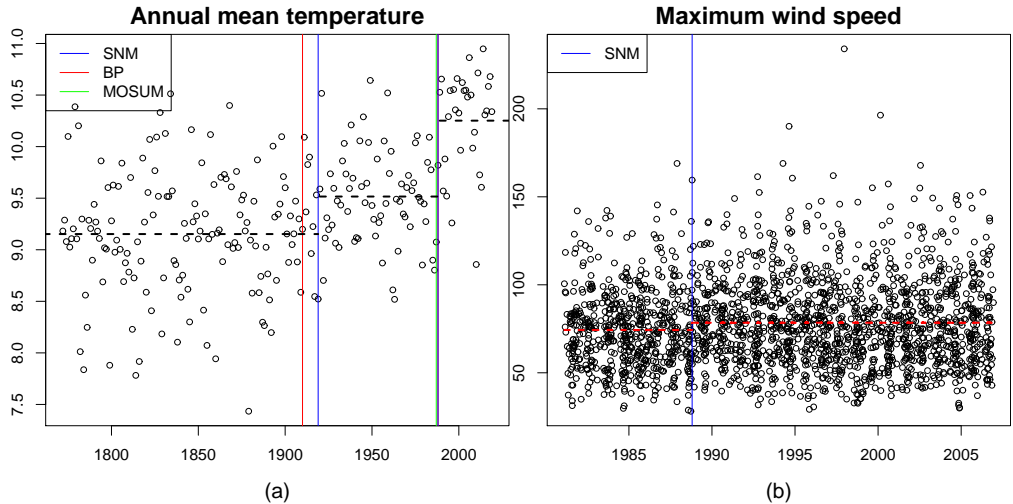}}
\vspace{-0.4cm}
\caption{\it (a) Estimated change-points for central England temperature. (b) Estimated change-points for cyclones maximum wind speed. Horizon dashed line indicates the sample mean of each segment by SNM.}
\label{fig: climate}
\end{figure}

\subsection{Change-point detection in financial data}\label{subsec:finance}
In this section, we study the behavior of financial markets using SNCP. A stylized fact in finance is that volatility of stock markets is much higher during crisis than normal periods. To validate this, we examine the volatility behavior of the S\&P 500 index~(hereafter SP500) for a period around the 2008 financial crisis. The data consist of daily (negative) log-returns of SP500 from June 2006 to December 2010 with $n=1024$ observations~(to facilitate the implementation of MSML). Two widely used measures for (unconditional) volatility in financial markets are variance and Value-at-Risk (which is high quantile such as the 90\% or 95\% quantile)\footnote{ In general, high quantile alone cannot be used to measure the dispersion/volatility of a distribution. However, it is known that the central tendency (i.e.\ mean and median) of the daily (negative) log-returns of stock indexes is stable over time (around 0), and thus high quantile is commonly used in practice to measure the volatility of financial markets.}, and they serve as basis for important applications such as portfolio optimization and systemic risk monitoring. Thus, it is crucial to study the structural breaks in these parameters.

We employ SNCP to detect changes in variance, 90\% and 95\% quantiles of daily (negative) log-returns of SP500, and their multi-parameter combination. Specifically, we implement seven SNCP based estimators: SNV, SNQ$_{90}$, SNQ$_{95}$, SNQ$_{90,95}$, SNQ$_{90}$V, SNQ$_{95}$V and SNQ$_{90,95}$V. Each estimator provides an assessment of volatility change in SP500. For comparison, we further apply MSML and KF, which target variance change and also apply ECP, which targets distributional change. The estimation result is summarized in Table \ref{tab: stocks} and is further visualized in Figure \ref{fig: stocks}. 

For variance change, SNV detects 4 change-points, MSML gives 6 changes and KF detects 5 changes. The estimated change-points by SNV, MSML and KF are close to each other. The three methods detect the inception of the financial crisis around July 2007, the acceleration of the crisis around September 2008, and the end of the crisis around May 2009. MSML and KF further detect changes in a relatively calm period from August 2006 to January 2007, which may be false positives or small-scale changes.

For high quantile change, both SNQ$_{90}$ and SNQ$_{95}$ detect 3 change-points at similar but \textit{different} dates, around the inception, acceleration and end of the financial crisis, providing further evidence for volatility changes in the stock market during crisis. Moreover, the multi-parameter estimation given by SNQ$_{90,95}$ points to 3 changes at similar locations and further nicely reconciles the difference between SNQ$_{90}$ and SNQ$_{95}$, thus suggesting the robustness of our findings. Interestingly, ECP also gives 3 changes around similar time as the three SNQ estimators. However, the result given by SNQ is more informative as it further narrows down the changes to high quantiles.

The multi-parameter SN based on both high-quantile and variance~(i.e.\ SNQ$_{90}$V, SNQ$_{95}$V, SNQ$_{90,95}$V) all detect 4 change-points at similar locations as SNV, which can be seen as evidence that the volatility changes due to variance detected by SNV is substantial and credible.

\begin{table}[h]
\caption{\it Estimated change-points by various SNCP estimators, MSML, KF and ECP for the S\&P 500 index from June 2006 to December 2010.}
\label{tab: stocks}
{\small
\begin{tabu}{cccccccc}
\hline\hline
&   Method  & CP1        & CP2        & CP3        & CP4        & CP5        & CP6        \\\hline
& SNV  &            &            & 07/17/2007 & 09/16/2008 & 12/05/2008 & 05/27/2009 \\
& MSML & 07/28/2006 & 02/23/2007 & 07/18/2007 & 09/02/2008 & 12/01/2008 & 04/20/2009 \\
& KF  & 08/01/2006 & 01/23/2007 & 07/23/2007 & 08/20/2008 &            & 04/20/2009 \\ \cmidrule{2-8}
& ECP  &            &            & 07/20/2007 & 09/17/2008 &  & 04/21/2009 \\
SP500& SNQ$_{90}$ & & & 06/12/2007 & 08/04/2008 & & 05/18/2009 \\
& SNQ$_{95}$ & & & 07/09/2007 & 09/17/2008 & & 04/30/2009 \\
& SNQ$_{90,95}$ & & & 07/09/2007 & 09/17/2008 & & 05/18/2009 \\
\cmidrule{2-8}
& SNQ$_{90}$V  &  &  & 07/17/2007 & 09/16/2008 & 12/05/2008 & 05/27/2009 \\
& SNQ$_{95}$V  &  &  & 07/09/2007 & 09/16/2008 & 12/08/2008 & 04/21/2009 \\
& SNQ$_{90,95}$V  &  &  & 07/09/2007 & 09/16/2008 & 12/05/2008 & 04/20/2009 \\	
\hline\hline
\end{tabu}
}
\end{table}

\begin{figure}[h]
\centerline{\includegraphics[angle=270, scale=0.5]{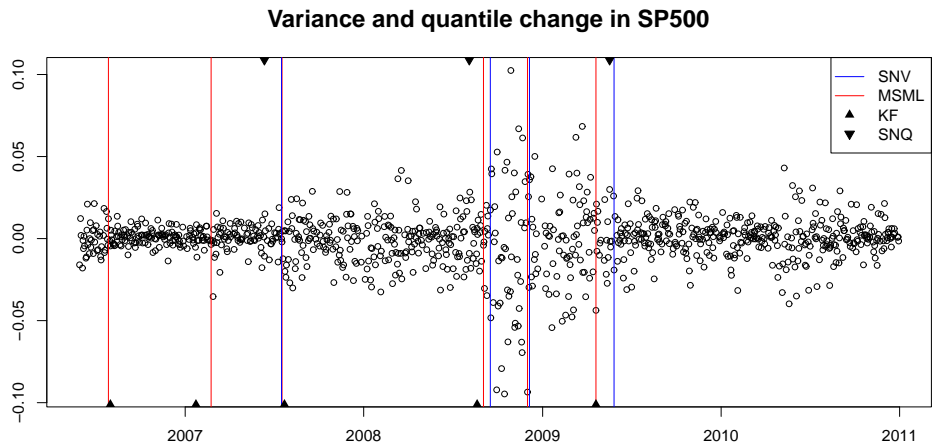}}
\caption{\it Estimated change-points in variance by SNV, MSML, KF and estimated change-points in 90\% quantile by SNQ for the S\&P 500 index from June 2006 to December 2010.}
\label{fig: stocks}
\end{figure}

The above analysis demonstrates the versatility and robustness of SNCP. It can be seamlessly applied to change-point detection for various parameters and provides reliable estimation results. In practice, the ground truth is unknown, thus it is important to examine the behavior change of the data via different angles. In this respect, the versatility of SNCP gives it a unique edge, as SNCP can provide practitioners the freedom to virtually examine any parameter of interest. With additional domain knowledge~(e.g.\ examining volatility of financial markets via both variance and high quantiles), the practitioners can further design multi-parameter based SNCP to improve estimation accuracy, reconcile possibly different estimations given by SNCP for individual parameters, and conduct robustness check of the estimation results. 

Another well-known hypothesis in the financial literature is that the international equity market correlation increases in volatile times and the correlation further increases due to the growing integration of the global economy~\citep{Longin2002,Poon2004}. 

To validate this hypothesis, we examine the correlation between daily (negative) log-returns of the S\&P 500 index~(U.S. market) and the DAX index~(German market) for a 12-year period from January 2000 to December 2012 with $n=2684$ observations. We apply SNC and GW to detect potential changes in correlation. The estimation result is visualized in Figure \ref{fig: stocks_correlation}. SNC detects three change-points at 11/06/2003, 01/10/2006 and 10/20/2008. In contrast, GW detects one change-point at 10/21/2008.

The three change-points detected by SNC partition the 12-year period into 4 segments with Jan. 2000-Nov. 2003~($\hat{\rho}=0.61$), Dec. 2003-Jan. 2006~($\hat{\rho}=0.40$), Feb. 2006-Oct. 2008~($\hat{\rho}=0.51$), and Nov. 2008-Dec. 2012~($\hat{\rho}=0.70$). SP500 and DAX exhibit strong correlation during the first segment, which contains the period of the early 2000 recession and Dot-com bubble from 2000-2002. The correlation decreases to 0.40 after the crisis but starts to build up following the inception of the 2008 financial crisis and remains at a high level 0.70 during the post-crisis period. This result provides empirical evidence for the hypothesis in \cite{Longin2002} and indicates that the systemic risk in the global financial market is increasing during the past decades.

We remark that since the ground truth is unknown, the estimation results given by SNC need to be interpreted and treated with caution. Here, we further provide some discussion about the discrepancy between the estimated change-points by SNC and GW. The magnitude of change in sample correlation at the first change-point~($\hat{\rho}=0.61$ to $0.40$) and the third change-point~($\hat{\rho}=0.51$ to $0.70$) estimated by SNC are rather significant, providing some  evidence for the validity of the detected change-points; whereas the magnitude of change is smaller at the second change-point~($\hat{\rho}=0.40$ to $0.51$), suggesting it could be due to false positives as the threshold $K_n$ of SNC is set at the 90\% critical level.

Note that around the first change-point~(11/06/2003) estimated by SNC, the bivariate time series also experienced a significant change in its variance. This matches the simulation setting of (R1) in Section \ref{subsec:powercorrelation} of the main text, where the simulation result suggests that GW tends to experience power loss for detecting correlation changes when there is large concurrent variance changes in the data. Indeed, the asymptotic validity of GW requires an approximately constant variance, see Assumptions (A4) and (A5) in \cite{Wied2012}.

\begin{figure}[H]
\centerline{\includegraphics[angle=270, scale=0.6]{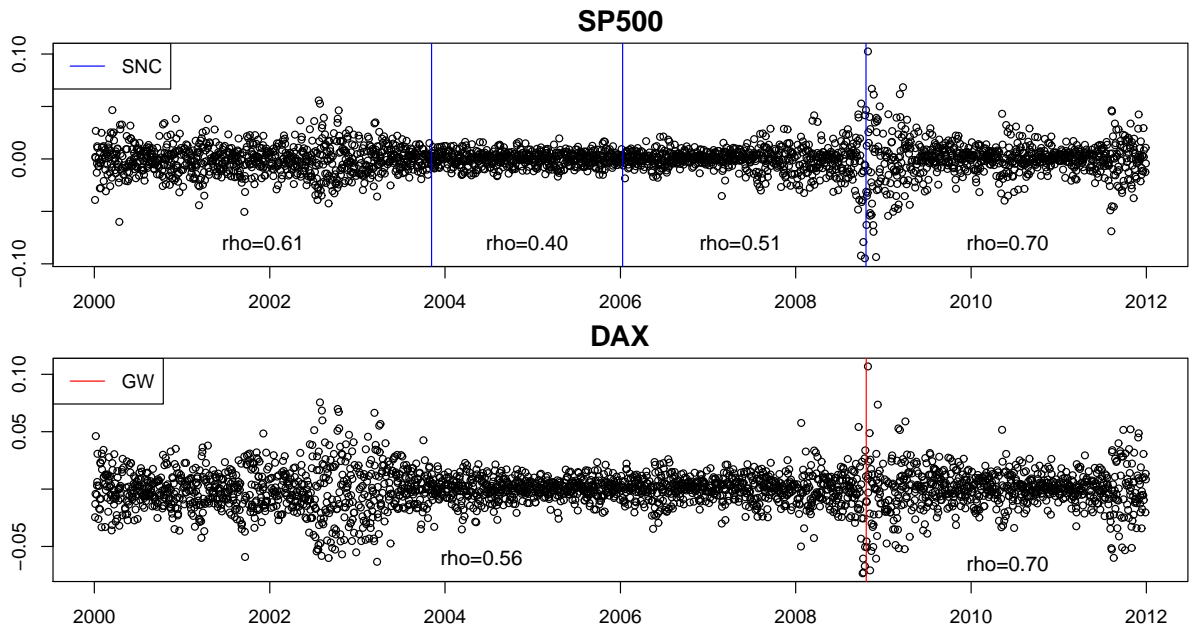}}
\caption{\it Estimated change-points in correlation between the S\&P 500 index and the DAX index by SNC (upper panel) and GW (lower panel) from January 2000 to December 2012.}
\label{fig: stocks_correlation}
\end{figure}

\clearpage
\section{{Verification of Assumptions in Smooth Function Model}}\label{sec:verify}
In this section, we derive explicit formulas for the partial influence function based expansion \eqref{inf3} and \eqref{theta} in the main text, and provide detailed  verification of Assumptions \ref{ass_influence}-\ref{ass_theta} and \ref{ass_no1}-\ref{ass_no3} for  $\theta(\cdot)$ in the smooth function model, which includes a wide class of parameters such as mean, variance, (auto)-covariance and (auto)-correlation. 

\subsection{Verification of assumptions in \textbf{Mean}}\label{subsec:verify_assumption_mean}
By simple calculation, for any $\omega_{a,b}$, we have $\xi_{i}(Y_t^{(i)},\omega_{a,b})=Y_t^{(i)}-E(Y_t^{(i)})$, $i=1,2$ and $r_{a,b}(\omega_{a,b})\equiv 0$. Thus, Assumption \ref{ass_influence} holds if the invariance principle holds jointly for the two stationary segments. Assumptions \ref{ass_remainder} and \ref{ass_theta} hold trivially.  
Verification of Assumptions \ref{ass_no1}-\ref{ass_no3} is quite similar.  

\subsection{Verification of assumptions in \textbf{Smooth Function Model}}
The smooth function model is a broad framework that covers important functionals such as mean, variance, (auto)-covariance and (auto)-correlation~\citep{bhattacharya1978validity,hall2013bootstrap}. Roughly speaking, $\theta$ can be viewed as a smooth function model of a stationary process $\{Y_t\in\mathbb{R}^p\}$ if there exists a smooth function $H:\mathbb{R}^d\to\mathbb{R}$ and $d$ measurable functions $z_i(\cdot):\mathbb{R}^p\to\mathbb{R}, i=1,\cdots,d$ such that $\theta=H(\mu_z),$ where $\mu_z=(\mu_{z,1},\cdots,\mu_{z,d})^{\top}$ with $\mu_{z,i}=E\{z_i(Y_t)\}, i=1,\cdots,d.$ Denote $Z_t=Z(Y_t)=(z_1(Y_t),\cdots,z_d(Y_t))^\top$, we have $E(Z_t)=\mu_z.$

\subsubsection{Verification of Assumptions \ref{ass_influence}, \ref{ass_remainder}, \ref{ass_no1} and \ref{ass_no2}}

We detail the verification of Assumptions \ref{ass_no1} and \ref{ass_no2} (multiple change-point setting), which include Assumptions \ref{ass_influence} and \ref{ass_remainder}~(single change-point setting) as special cases.

Given the subsample $\{Y_t\}_{t=a}^{b}$, the estimator $\widehat{\theta}_{a,b}$ is given by $\widehat{\theta}_{a,b}=H(\widehat{\mu}_{a,b})$, where $\widehat{\mu}_{a,b}=\frac{1}{b-a+1}\sum_{t=a}^{b}Z_t$ is the unbiased estimator of $\mu_{a,b}=E(\widehat{\mu}_{a,b})$. By the smoothness of $H(\cdot)$ and a Taylor expansion, we have
\begin{flalign}\label{smooth}
\begin{split}
\widehat{\theta}_{a,b}=&H({\mu}_{a,b})+\frac{\partial H(\mu_{a,b})^\top}{\partial \mu}\frac{1}{b-a+1}\sum_{t=a}^{b}(Z_t-\mu_{a,b})+\frac{1}{2}(\widehat{\mu}_{a,b}-\mu_{a,b})^\top\frac{\partial^2 H(\tilde{\mu})}{\partial\mu\partial\mu^{\top}}(\widehat{\mu}_{a,b}-\mu_{a,b}),
\end{split}
\end{flalign}
where $\tilde{\mu}=u\widehat{\mu}_{a,b}+(1-u)\mu_{a,b}$ for some $u\in[0,1]$. Denote $\mu_z^{(i)}=E(Z_t^{(i)})=E\{Z(Y_t^{(i)})\}$, $i=1,\cdots, m_o+1$. Both expansions \eqref{theta_null} and \eqref{theta} in the main text can be naturally derived from \eqref{smooth}.

For \eqref{theta_null}, $(a,b)$ contains no change-points, then   we have $\xi_i(Y_t^{(i)})=\frac{\partial H(\mu_z^{(i)})^\top}{\partial \mu}(Z_t^{(i)}-\mu_z^{(i)}), i=1,\cdots,m_o+1$. For \eqref{theta}, let $k_{i-1}+1\leq a\leq k_i <\cdots<k_{j}+1\leq b\leq k_{j+1}$, where $k_i,\cdots,k_j$ are change-points between $a$ and $b$, 
we have $\xi_r(Y_t^{(r)},\omega_{a,b})= \frac{\partial H(\mu_{a,b})}{\partial \mu}^{\top}(Z_t^{(r)}-\mu_z^{(r)}), r=i,\cdots,j+1$ and $r_{a,b}(\omega_{a,b})=\frac{1}{2}(\widehat{\mu}_{a,b}-\mu_{a,b})^\top\frac{\partial^2 H(\tilde{\mu})}{\partial\mu\partial\mu^{\top}}(\widehat{\mu}_{a,b}-\mu_{a,b})$.

Thus, a sufficient condition for Assumption \ref{ass_no1}(i) to hold is that 
$$\frac{1}{\sqrt{n}}\sum_{t=1}^{\lfloor nr\rfloor}\left(Z_t^{(1)}-\mu_z^{(1)},\cdots,Z_t^{(m_o+1)}-\mu_z^{(m_o+1)}\right)\Rightarrow  (\Sigma_1^{1/2} \mathcal{B}_d^{(1)}(r),\cdots,\Sigma_{m_o+1}^{1/2}\mathcal{B}_d^{(m_o+1)}(r)), $$
where $\mathcal{B}_d^{(i)}(\cdot), i=1,\cdots,m_o+1$ are $m_o+1$ $d$-dimensional standard Brownian motions and $\Sigma_i, i=1,\cdots,m_o+1$ are $m_o+1$ positive definite matrices. This is a mild assumption and holds under suitable moment conditions of $Z_t^{(i)}=Z(Y_t^{(i)}),i=1,\cdots,m_o+1$ and mixing conditions of $\{Y_t^{(1)},\cdots,Y_t^{(m_o+1)}\}$~(see \cite{phillips1987time}).

Assumption \ref{ass_no1}(ii) can be easily verified based on the functional CLT assumed above for Assumption \ref{ass_no1}(i) if we further have the mild condition that $\sup_{1\leq a<b\leq n}\left\|\frac{\partial H(\mu_{a,b})}{\partial \mu}\right\|_2 <C$ for some $C>0$. In particular, for variance and (auto)-covariance functional, this condition holds when both variance and absolute mean are upper bounded; for correlation and auto-correlation functional, this condition holds if we further have that the variance is lower bounded.

Verification of Assumption \ref{ass_no2} for the remainder term is more involved. A sufficient condition is $\sup_{1\leq a<b\leq n}\Big\|\frac{\partial H^2(\mu_{a,b})}{\partial \mu \partial \mu^\top}\Big\|<C$ for some $C>0$ and $\sup_{1\leq a<b\leq n}\sqrt{b-a+1}\|\widehat{\mu}_{a,b}-\mu_{a,b}\|_2=o_p(n^{1/4})$, which in general can be verified via results in \cite{Shao1995} and \cite{Wu2011}. See \cite{dette2020a} for verification of such condition when $\theta(\cdot)$ is variance. The same technical arguments developed there can be applied to verify other smooth function models such as (auto)-covariance and (auto)-correlation.

By setting $m_o=1$ in the above arguments, we can automatically verify Assumptions \ref{ass_influence} and \ref{ass_remainder}.

\subsubsection{Verification of Assumption \ref{ass_theta}}\label{rem_smooth}
Verification of Assumption \ref{ass_theta} requires a case-by-case analysis. We provide sufficient conditions for Assumption \ref{ass_theta} for common functionals such as variance, (auto)-covariance and (auto)-correlation under mild conditions. In the following, denote $\delta=\theta_1-\theta_2.$ Given a mixture weight $\omega=(\omega^{(1)},\omega^{(2)})^{\top}$ with $\omega^{(i)}\in[0,1]$, $i=1,2$ and $\omega^{(1)}+\omega^{(2)}=1$, we define the mixture distribution of $F^{(1)}$ and $F^{(2)}$ as $F^{\omega}=\omega^{(1)}F^{(1)}+\omega^{(2)}F^{(2)}$.

\noindent\textbf{Example 1~(Variance change)}: In this case, $\theta(\cdot)$ is the variance functional.	Let $Y^{(1)}\sim F^{(1)}$ such that $EY^{(1)}=\mu_1$ and $\mathrm{Var}(Y^{(1)})=\sigma_1^2=\theta_1$, and $Y^{(2)}\sim F^{(2)}$ such that $EY^{(2)}=\mu_{2}$ and $\mathrm{Var}(Y^{(2)})=\sigma_2^2=\theta_2$.  Let $Y\sim F^{\omega}$, we have 	$\theta(\omega)=\omega^{(1)}\sigma_1^2+\omega^{(2)}\sigma_2^2+\omega^{(1)}\omega^{(2)}(\mu_1-\mu_2)^2$. Hence 	$\theta(\omega)-\theta_1=\omega^{(2)}(\theta_2-\theta_1)+\omega^{(1)}\omega^{(2)}(\mu_1-\mu_2)^2$, and $\theta(\omega)-\theta_2=\omega^{(1)}(\theta_1-\theta_2)+\omega^{(1)}\omega^{(2)}(\mu_1-\mu_2)^2$. Simple calculation shows that a sufficient condition for Assumption \ref{ass_theta} is $(\mu_1-\mu_2)^2<|\theta_1-\theta_2|=|\delta|$, in which case we can set $C_1=1-|\delta|^{-1}(\mu_1-\mu_2)^2$ and $C_2=1+|\delta|^{-1}(\mu_1-\mu_2)^2$.

\bigskip

In Examples 2 and 3, we further consider covariance and correlation functional for bivariate time series. In the following, let $\mathbf{Y}^{(1)}=(Y_1^{(1)},Y_2^{(1)})^{\top}\sim F^{(1)}$ such that $E\mathbf{Y}^{(1)}=\pmb{\mu}^{(1)}=(\mu_{1}^{(1)},\mu_{2}^{(1)})$ and $\mathrm{Cov}(Y_1^{(1)},Y_2^{(1)})=\gamma_1$, and $\mathbf{Y}^{(2)}=(Y_1^{(2)},Y_2^{(2)})^{\top}\sim F^{(2)}$ such that $E\mathbf{Y}^{(2)}=\pmb{\mu}^{(2)}=(\mu_{1}^{(2)},\mu_{2}^{(2)})$ and $\mathrm{Cov}(Y_1^{(2)},Y_2^{(2)})=\gamma_2$.  Furthermore, denote $\chi=\mu_{1}^{(1)}\mu_{2}^{(1)}+\mu_{1}^{(2)}\mu_{2}^{(2)}-\mu_{1}^{(2)}\mu_{2}^{(1)}-\mu_{1}^{(1)}\mu_{2}^{(2)}=(\mu_{1}^{(1)}-\mu_{1}^{(2)})(\mu_{2}^{(1)}-\mu_{2}^{(2)})$, which measures the effect of mean change.

\noindent\textbf{Example 2~(Covariance change)}: In this case, $\theta(\cdot)$ is the covariance functional. We have that 
\begin{flalign*}
\theta(\omega)=&\omega^{(1)}EY_{t1}^{(1)}Y_{t2}^{(1)}+\omega^{(2)}EY_{t1}^{(2)}Y_{t2}^{(2)}-[\omega^{(1)}EY_{t1}^{(1)}+\omega^{(2)}EY_{t1}^{(2)}][\omega^{(1)}EY_{t2}^{(1)}+\omega^{(2)}EY_{t2}^{(2)}]
\\=&\omega^{(1)}\theta_1+\omega^{(2)}\theta_2+\omega^{(1)}\omega^{(2)}[\mu_{1}^{(1)}\mu_{2}^{(1)}+\mu_{1}^{(2)}\mu_{2}^{(2)}-\mu_{1}^{(2)}\mu_{2}^{(1)}-\mu_{1}^{(1)}\mu_{2}^{(2)}]
\\=&\omega^{(1)}\theta_1+\omega^{(2)}\theta_2+\omega^{(1)}\omega^{(2)}\chi.
\end{flalign*}
Therefore, we have
\begin{flalign*}
\theta(\omega)-\theta_1=&\omega^{(2)}(\theta_2-\theta_1)+\omega^{(1)}\omega^{(2)}\chi,\\
\theta(\omega)-\theta_2=&\omega^{(1)}(\theta_1-\theta_2)+\omega^{(1)}\omega^{(2)}\chi.
\end{flalign*}
Simple calculation shows that a sufficient condition for Assumption \ref{ass_theta} is $|\chi|<|\theta_1-\theta_2|=|\delta|$, in which case we can set $C_1=1-|\delta|^{-1}|\chi|$ and $C_2=1+|\delta|^{-1}|\chi|$.

\noindent\textbf{Example 3~(Correlation change)}: In this case, $\theta(\cdot)$ is the correlation functional. We consider the following two scenarios:
\vspace{-0.4cm}
\begin{enumerate}[label={[\Alph*]}]
\item (Changing mean with constant variance) For notational simplicity, we assume the bivariate time series share the same variance such that $\mathrm{Var}(Y_1^{(i)})=\mathrm{Var}(Y_2^{(i)})=\sigma^2, i=1,2.$ The conditions under unequal variance can be derived using the same but more algebraically involved arguments. It can be shown that
\begin{flalign*}
&\theta(\omega)=\frac{\omega^{(1)}\gamma_1+\omega^{(2)}\gamma_2+\omega^{(1)}\omega^{(2)}\chi}{{\sigma^2+\omega^{(1)}\omega^{(2)}\chi}}\\=&
\left(1+\frac{\omega^{(1)}\omega^{(2)}\chi}{\sigma^2}\right)^{-1}\left[\omega^{(1)}\theta_1+\omega^{(2)}\theta_2+\frac{\omega^{(1)}\omega^{(2)}\chi}{\sigma^2}\right].
\end{flalign*}
Define $M=\chi/\sigma^2$, we have 
\begin{flalign*}
\theta(\omega)-\theta_1=&\frac{\omega^{(2)}(\theta_2-\theta_1)+\omega^{(1)}\omega^{(2)}M(1-\theta_1)}{1+\omega^{(1)}\omega^{(2)}M},\\\theta(\omega)-\theta_2=&\frac{\omega^{(1)}(\theta_1-\theta_2)+\omega^{(1)}\omega^{(2)}M(1-\theta_2)}{1+\omega^{(1)}\omega^{(2)}M},
\end{flalign*}
and Assumption \ref{ass_theta} is equivalent to
$$
C_1\leq \Big|\frac{1-\omega^{(1)}\delta^{-1}M(1-\theta_1)}{1+\omega^{(1)}\omega^{(2)}M}\Big|\leq C_2,\quad\text{and}\quad C_1\leq \Big|\frac{1+\omega^{(1)}\delta^{-1}M(1-\theta_2)}{1+\omega^{(1)}\omega^{(2)}M}\Big|\leq C_2.
$$
For $M>0$, a sufficient condition is $0<2M<|\delta|$, in which case we can set $C_1=\frac{1-2M/|\delta|}{1+M/4}$ and $C_2=1+2M/|\delta|$. 

For $M\leq 0$, a sufficient condition is $\max\{-8,-|\delta|\}<2M\leq 0$, in which case we can set $C_1=1+2M/|\delta|$ and $C_2=\frac{1-2M/|\delta|}{1+M/4}$.

\item (Changing variance with constant mean) Note that in this case, we have $\chi=0$. Same as in [A], for notational simplicity, we assume the bivariate time series share the same variance such that $\mathrm{Var}(Y_1^{(i)})=\mathrm{Var}(Y_2^{(i)})=\sigma_i^2, i=1,2.$ The conditions under unequal variance can be derived using the same but more algebraically involved arguments. It can be shown that
\begin{flalign*}
\theta(\omega)=\frac{\omega^{(1)}\gamma_1+\omega^{(2)}\gamma_2}{\omega^{(1)}\sigma^2_1+\omega^{(2)}\sigma^2_2},
\end{flalign*}
which implies that
\begin{flalign*}
\theta(\omega)-\theta_1=\frac{\omega^{(2)}\sigma_2^2(\theta_2-\theta_1)}{\omega^{(1)}\sigma_1^2+\omega^{(2)}\sigma_2^2},\quad\text{and}\quad
\theta(\omega)-\theta_2=\frac{\omega^{(1)}\sigma_1^2(\theta_1-\theta_2)}{\omega^{(1)}\sigma_1^2+\omega^{(2)}\sigma_2^2}.
\end{flalign*}
Therefore, for Assumption \ref{ass_theta} to hold, it suffices to let $\sigma_1^2$ and $\sigma_2^2$ to have the same order, i.e. $\frac{\sigma_1^2}{\sigma_2^2}+\frac{\sigma_2^2}{\sigma_1^2}<\infty$, in which case we can set $C_1=\min\{\frac{\sigma_1^2}{\sigma_2^2},\frac{\sigma_2^2}{\sigma_1^2}\}$ and $C_2=\max\{\frac{\sigma_1^2}{\sigma_2^2},\frac{\sigma_2^2}{\sigma_1^2}\}$.
\end{enumerate}
\bigskip

\noindent\textbf{Examples 4 and 5~(Autocovariance and Autocorrelation change)}: Note that for a univariate time series $\{X_t\}$, its autocovariance and autocorrelation functionals of lag-$d$ can be viewed as the covariance and variance functionals of the bivariate time series $\{\mathbf{Y}_t=(X_t,X_{t-d})^{\top}\}$. Thus, the conditions in Examples 2 and 3 can be applied.

\subsubsection{ Verification of Assumption \ref{ass_no3}}\label{subsec:Alter_assno3}
Assumption \ref{ass_no3star} is a natural extension of Assumption \ref{ass_theta} in the main text from the single change-point setting to multiple change-point setting. Assumption \ref{ass_no3star} generalizes Assumption \ref{ass_no3} in the main text in the sense that Assumption \ref{ass_no3} implies Assumption \ref{ass_no3star}. Thus, in this section, we provide verification for Assumption \ref{ass_no3star}.

\begin{thmbis}{ass_no3}\label{ass_no3star}
Assumption \ref{ass_theta} holds. Furthermore, there exist positive constants $0<C_1<C_2<\infty$ such that for any two consecutive change-points $k_i, k_{i+1}$ for $i=1,\cdots,m_o-1$, and any two time points satisfying $a<k_i\leq k_{i+1}<b$, we have 
\begin{flalign*}
&C_1\Big|\frac{k_i-a}{k_{i+1}-a}[\theta(\omega_{a+1,k_i})-\theta_{i+1}]+\frac{b-k_{i+1}}{b-k_i}[\theta_{i+1}-\theta(\omega_{k_{i+1}+1,b})]\Big|\leq 	\Big|\theta(\omega_{a+1,k_{i+1}})-\theta(\omega_{k_i+1,b})\Big|,\\
&\Big|\theta(\omega_{a+1,k_{i+1}})-\theta(\omega_{k_i+1,b})\Big|\leq C_2\Big\{\frac{k_i-a}{k_{i+1}-a}\Big|\theta(\omega_{a+1,k_i})-\theta_{i+1}\Big|+\frac{b-k_{i+1}}{b-k_i}\Big|\theta_{i+1}-\theta(\omega_{k_{i+1}+1,b})\Big|\Big\}.
\end{flalign*}
\end{thmbis}

Assumption \ref{ass_no3star} essentially regulates the behavior of the functional $\theta(\cdot)$ on mixture of three subsamples $\{Y_t\}_{t=a}^{k_i}$, $\{Y_t\}_{t=k_i+1}^{k_{i+1}}$ and $\{Y_t\}_{t=k_{i+1}+1}^b$. Intuitively, Assumption \ref{ass_no3star} requires that $\theta(\cdot)$ can distinguish the mixture of $\{Y_t\}_{t=a}^{k_i}$ and $\{Y_t\}_{t=k_i+1}^{k_{i+1}}$ from the mixture of $\{Y_t\}_{t=k_i+1}^{k_{i+1}}$ and $\{Y_t\}_{t={k_{i_1}}+1}^b$.

Similar to the verification of Assumption \ref{ass_theta}, the verification of Assumption \ref{ass_no3star} requires a case-by-case analysis and is much more tedious. Below we give an illustrative example of the verification of Assumption \ref{ass_no3star} for variance under the multiple change-point setting. The verification of Assumption \ref{ass_no3star} for (auto)-covariance and (auto)-correlation can be done in the same way with more tedious algebra.
\vskip 2mm

\noindent\textbf{Example~(Variance change)}:  The data $\{Y_t\}_{t=1}^n$ consists of $m_o+1$ stationary segments with $Y^{(i)}\sim F^{(i)}$ such that $EY^{(i)}=\mu_i$ and $\mathrm{Var}(Y^{(i)})=\sigma_i^2$, for $i=1,\cdots,m_o+1.$

In the case of $m_o(\geq 2)$ change-points, we provide a verification of Assumption \ref{ass_no3star} under two sufficient conditions: (1). monotonic change and (2). $\max_{i\neq j}|\mu_i-\mu_j|^2<\min_i |\sigma_{i}^2-\sigma_{i+1}^2|.$ Without loss of generality, we assume  variance is monotonically decreasing.


For a mixture of distributions, let $\omega_{r,s}$ be defined as in equation (\ref{theta}) of the main text, we have  
$$
\theta_{r,s}=\omega_{r,s}^{\top}\underline{\sigma}^2+\omega_{r,s}^{\top}(\underline{\mu})^2-(\omega_{r,s}^{\top}\underline{\mu})^2,
$$
where $\underline{\sigma}^2=(\sigma_1^2,\cdots,\sigma_{m_o+1}^2)^{\top}$ and $\underline{\mu}=(\mu_1,\cdots,\mu_{m_o+1})^{\top}$.
Denote $\sigma_{r,s}^2=\omega_{r,s}^{\top}\underline{\sigma}^2$, $\delta_{r,s}^2=\omega_{r,s}^{\top}\underline{\mu}^2$ and $\mu_{r,s}=\omega_{r,s}^{\top}\underline{\mu}$ for any $1\leq r<s\leq n$.   It is easy to see that $\delta_{r,s}^2\geq \mu_{r,s}^2.$  

Then, for any $(k_i,k_{i+1})$ such that $a<k_i<k_{i+1}<b$, we have 
\begin{flalign*}
[\theta_{a+1,k_{i+1}}-\theta_{k_i+1,b}]
=&\sigma_{a+1,k_{i+1}}^2+\delta_{a+1,k_{i+1}}^2-\mu^2_{a+1,k_{i+1}}-(\sigma_{k_i+1,b}^2+\delta_{k_i+1,b}^2-\mu^2_{k_i+1,b})\\
=&\frac{k_i-a}{k_{i+1}-a}\sigma^2_{a+1,k_i}+\frac{k_{i+1}-k_i}{k_{i+1}-a}\sigma^2_{k_i+1,k_{i+1}}-(\frac{k_{i+1}-k_i}{b-k_i}\sigma^2_{k_i+1,k_{i+1}}+\frac{b-k_{i+1}}{b-k_i}\sigma_{k_{i+1}+1,b}^2)
\\&
+(\delta_{a+1,k_{i+1}}^2-\mu^2_{a+1,k_{i+1}})-(\delta^2_{k_i+1,b}-\mu^2_{k_i+1,b})
\\   =&\frac{k_i-a}{k_{i+1}-a}(\sigma_{a+1,k_i}^2-\sigma_{k_i+1,k_{i+1}}^2)
+(\delta_{a+1,k_{i+1}}^2-\mu^2_{a+1,k_{i+1}})  \\&
+\frac{b-k_{i+1}}{b-k_i}(\sigma_{k_i+1,k_{i+1}}^2-\sigma_{k_{i+1}+1,b}^2)-(\delta^2_{k_i+1,b}-\mu^2_{k_i+1,b})\\:=&Z_1+Z_2.
\end{flalign*}
Here, we note that when $\max_{i\neq j}|\mu_i-\mu_j|^2<\min_i |\sigma_{i+1}^2-\sigma_{i}^2|,$ 
\begin{flalign*}
Z_2= &\frac{b-k_{i+1}}{b-k_i}(\sigma_{k_i+1,k_{i+1}}^2-\sigma_{k_{i+1}+1,b}^2)-(\delta^2_{k_i+1,b}-\mu^2_{k_i+1,b})\\
=&\frac{b-k_{i+1}}{b-k_i}(\sigma_{k_i+1,k_{i+1}}^2-\sigma_{k_{i+1}+1,b}^2)-\frac{(b-k_{i+1})(k_{i+1}-k_i)}{(b-k_i)^2}(\mu_{k_{i}+1,k_{i+1}}-\mu_{k_{i+1}+1,b})^2\\
\geq& \frac{b-k_{i+1}}{b-k_i}[(\sigma_{k_i+1,k_{i+1}}^2-\sigma_{k_{i+1}+1,b}^2)-(\mu_{k_{i}+1,k_{i+1}}-\mu_{k_{i+1}+1,b})^2]>0.
\end{flalign*}
Hence, both $Z_1$ and $Z_2$ are positive.

Similarly, one can show \begin{flalign*}
&\frac{k_i-a}{k_{i+1}-a}[\theta_{a+1,k_i}-\theta_{i+1}]+\frac{b-k_{i+1}}{b-k_i}[\theta_{i+1}-\theta_{k_{i+1}+1,b}]\\
=&\frac{k_i-a}{k_{i+1}-a}(\sigma_{a+1,k_i}^2-\sigma_{k_i+1,k_{i+1}}^2+\delta^2_{a+1,k_i}-\delta^2_{k_i+1,k_{i+1}})\\&
+\frac{b-k_{i+1}}{b-k_i}(\sigma_{k_i+1,k_{i+1}}^2-\sigma_{k_{i+1}+1,b}^2+\delta^2_{k_i+1,k_{i+1}}-\delta^2_{k_{i+1}+1,b})\\&+\frac{k_i-a}{k_{i+1}-a}(-\mu^2_{a+1,k_i}+\mu^2_{k_i+1,k_{i+1}})+
\frac{b-k_{i+1}}{b-k_i}(-\mu^2_{k_i+1,k_{i+1}}+\mu^2_{k_{i+1}+1,b})\\
= & [\theta_{a+1,k_{i+1}}-\theta_{k_i+1,b}]+\frac{k_i-a}{k_{i+1}-a}(-\mu^2_{a+1,k_i}+\mu^2_{k_i+1,k_{i+1}})+
\frac{b-k_{i+1}}{b-k_i}(-\mu^2_{k_i+1,k_{i+1}}+\mu^2_{k_{i+1}+1,b})\\&+\mu^2_{a+1,k_{i+1}}-\mu^2_{k_i+1,b}
\\\leq& [\theta_{a+1,k_{i+1}}-\theta_{k_i+1,b}]+(\mu^2_{a+1,k_{i+1}}- \frac{k_i-a}{k_{i+1}-a}\mu^2_{a+1,k_i}-\frac{k_{i+1}-k_i}{k_{i+1}-a}\mu^2_{k_i+1,k_{i+1}})
\\:=&Z_1+Z_2+R_1.
\end{flalign*}
where the last inequality holds by Cauchy-Schwartz inequality that $$\mu^2_{k_i+1,b}\geq  \frac{b-k_{i+1}}{b-k_i}\mu^2_{k_{i+1}+1,b}+\frac{k_{i+1}-k_i}{b-k_i}\mu^2_{k_i+1,k_{i+1}}.$$
Hence, to show for some $C_1>0$ such that
$$
C_1\left|\frac{k_i-a}{k_{i+1}-a}[\theta_{a+1,k_i}-\theta_{i+1}]+\frac{b-k_{i+1}}{b-k_i}[\theta_{i+1}-\theta_{k_{i+1}+1,b}]\right|\leq \left|\theta_{a+1,k_{i+1}}-\theta_{k_i+1,b}\right|,
$$
or equivalently,  $$
C_1(Z_1+Z_2+R_1)\leq  Z_1+Z_2,
$$
it suffices to show that for some $C_1\in(0,1)$,
$$
R_1< (1-C_1)Z_1.
$$
Note that  $\delta^2_{a+1,k_{i+1}}\geq \mu^2_{a+1,k_{i+1}}$, hence  a sufficient condition is 
$$
\left(\delta_{a+1,k_{i+1}}^2- \frac{k_i-a}{k_{i+1}-a}\mu^2_{a+1,k_i}-\frac{k_{i+1}-k_i}{k_{i+1}-a}\mu^2_{k_i+1,k_{i+1}}\right)< (1-C_1)\left(\frac{k_i-a}{k_{i+1}-a}\right)(\sigma_{a+1,k_i}^2-\sigma_{k_i+1,k_{i+1}}^2).
$$
Note that $\delta_{a+1,k_{i+1}}^2=\frac{k_i-a}{k_{i+1}-a}\delta^2_{a+1,k_{i}}+\frac{k_{i+1}-k_i}{k_{i+1}-a}\delta^2_{k_i+1,k_{i+1}}$.   Using $\delta_{k_i+1,k_{i+1}}^2=\mu_{k_i+1,k_{i+1}}^2$,  it suffices to note that  
$$
\delta^2_{a+1,k_{i}}-\mu^2_{a+1,k_{i}}\leq \max_{i\neq j }(\mu_i-\mu_j)^2<\min_{i}(\sigma_i^2-\sigma^2_{i+1})\leq  (\sigma_{a+1,k_i}^2-\sigma_{k_i+1,k_{i+1}}^2).
$$
That is, we can choose $$C_1=1-\frac{\max_{i\neq j}(\mu_i-\mu_j)^2}{\min_{i\neq j}|\sigma_i^2-\sigma^2_{j}|}.$$
The upper bound $C_2$ can be similarly chosen as $$C_2=1+\frac{\max_{i\neq j}(\mu_i-\mu_j)^2}{\min_{i\neq j}|\sigma_i^2-\sigma^2_{j}|}.$$

\newpage
\section{{Verification of Assumptions in Quantile}}\label{sec:verify_quantile}

This section verifies technical assumptions for quantiles. Section \ref{subsec:partial_quanttile} derives the partial influence functions for quantile functionals, Sections \ref{subsec:theta_quantile}-\ref{subsec:rem_quantile_multiple} provide detailed verification for Assumptions \ref{ass_influence}-\ref{ass_theta} and Assumptions \ref{ass_no1}-\ref{ass_no3}.

In particular, the verification of Assumptions \ref{ass_remainder} and \ref{ass_no2} is highly nontrivial, and their proofs are provided in Section \ref{subsec:proof_quant}. As a byproduct, in Section \ref{subsec:fluc}, we derive a local fluctuation rate for strongly mixing empirical processes of the type 
$$
P\left[\sup_{|x-y|\leq b_n}\Big| \hat{F}_{1,n}(x)-F_{1,n}(x)-\hat{F}_{1,n}(y)+F_{1,n}(y)\Big|>\epsilon\right],
$$
where $\hat{F}_{1,n}(x)=\frac{1}{n}\sum_{t=1}^n \mathbf{1}(Y_t\leq x)$ is the empirical CDF function. This rate may be of independent interest for other research.

\subsection{ Derivation of (partial) influence functions}\label{subsec:partial_quanttile}
Let $\theta=F^{-1}(q)$ be the $q$th quantile functional for a distribution function $F$. We consider $Y_t^{(i)}$ with continuous CDF $F^{(i)}$ and density $f^{(i)}$ for $i=1,\cdots,m_o+1$. Denote the mixture weight by $\omega_{a,b}$, and the mixture CDF by $F^{\omega_{a,b}}$, see the detailed definition of $F^{\omega_{a,b}}$ below equation \eqref{theta} in the main text.

By Definition \ref{def:PIF}, we have $\xi_i(y,\omega_{a,b})=\lim\limits_{\zeta\to0}\zeta^{-1}\Big[\theta\left((\delta_y-F^{(i)})\zeta+F^{\omega_{a,b}}\right)-\theta\left(F^{\omega_{a,b}}\right)\Big]$ is the G\^ateaux derivative of $\theta\big(F^{\omega_{a,b}}\big)$ in the direction $\delta_y-F^{(i)}$ for $i=1,\cdots,m_o+1$. A direct calculation gives that
$$\xi_i(y,\omega_{a,b})=\frac{F^{(i)}(\theta(F^{\omega_{a,b}}))-\mathbf{1}(y\leq \theta(F^{\omega_{a,b}}))}{f^{\omega_{a,b}}(\theta(F^{\omega_{a,b}}))}, ~ i=1,\cdots,m_o+1.$$
Thus, for expansion \eqref{theta_null} in the main text, we have $\xi_i(Y_t^{(i)})=\frac{q-\mathbf{1}(Y_t^{(i)}\leq \theta(F^{(i)}))}{f^{(i)}(\theta(F^{(i)}))}, i=1,\cdots,m_o+1,$ and for expansion \eqref{theta} in the main text, we have
$$\xi_i(Y_t^{(i)},\omega_{a,b})=\frac{F^{(i)}(\theta_{a,b})-\mathbf{1}(Y_t^{(i)}\leq \theta_{a,b})}{f^{\omega_{a,b}}(\theta_{a,b})},~ i=1,\cdots,m_o+1,$$
where $\theta_{a,b}=\theta(F^{\omega_{a,b}})$ and $r_{a,b}(\omega_{a,b})$ is defined implicitly. 

Similarly, we could derive the results for expansions \eqref{inf1} and \eqref{inf3} in the main text.

\subsection{Verification of Assumptions \ref{ass_influence} and \ref{ass_no1}}\label{subsec:theta_quantile}
Assumption \ref{ass_influence}(i) holds under mild mixing conditions of $\{Y_t^{(1)},Y_{t}^{(2)}\}$~(see \cite{phillips1987time}). Assumption \ref{ass_influence}(ii) holds if $\inf_{1\leq a<b\leq n}f^{\omega_{a,b}}(\theta_{a,b})>c$ for some $c>0$, which is true under the mild sufficient condition that $\inf_{\theta \in [\theta_1,\theta_2]}\min(f^{(1)}(\theta),f^{(2)}(\theta))>c$ with $\theta_i=\theta(F^{(i)}), i=1,2$. Similarly, Assumption \ref{ass_no1} hold under mild mixing conditions of $\{Y_t^{(1)},\cdots,Y_{t}^{(m_o+1)}\}$ and $\inf_{\theta \in [\min\theta_i,\max\theta_i]}\min_{1\leq i\leq m_o+1} f^{(i)}(\theta)>c$.

\subsection{ Verification of Assumptions \ref{ass_remainder} and \ref{ass_no2}}\label{subsec:inf_quantile}
Verification of Assumption \ref{ass_remainder} and \ref{ass_no2} is highly nontrivial, and related results can be found in \cite{Wu2005} and \cite{dette2020a}. However, the arguments in their papers are not directly applicable in our setting. Specifically, \cite{Wu2005} provides  the Bahadur representation of sample quantiles for linear and some nonlinear processes, but the result only holds for the full sample and no uniform result (in terms of subsample) is given. In addition, he requires the underlying process to be stationary, hence change-points are not allowed. \cite{dette2020a} extend the result in \cite{Wu2005} to hold uniformly for subsample, see Theorem 4.1  therein, but they only obtain the result under the i.i.d. setting, which excludes both temporal dependence and change-points.  

Theorems \ref{thm_single_quant} and \ref{thm_multiple_quant} below provide verification of Assumptions \ref{ass_remainder} and \ref{ass_no2} in strongly mixing processes, which give a uniform control for the reminder terms of the partial influence functions derived in \Cref{subsec:partial_quanttile} for all subsample quantiles. Therefore, we give a Bahadur representation that hold uniformly for all subsample quantiles. Note that our result allows for both temporal dependence and structural breaks, and thus improves the results in \cite{dette2020a}. To proceed, we first make the following assumptions.

\begin{ass}\label{ass_mixing}
The data $\{Y_t\}_{t=1}^n$ is $\alpha$-mixing with mixing coefficient $\alpha(k)=\exp(-c_0k)$ for some constant $c_0>0$.
\end{ass}

\begin{ass}\label{ass_bound}
For some positive constants $0<c_1,c_2,c_3<\infty$, the density function $f^{(i)}$ satisfies: (1).\ $\max_i \sup_x f^{(i)}(x)\leq c_1$; (2).\ for $x\in[\min_i\theta_i,\max_i\theta_i]$, $\min_i f^{(i)}(x)\geq c_2$; (3).\ $\max_i\sup_x|f^{(i)'}(x)|\leq c_3$, where $f^{(i)'}(x)$ denotes the first-order derivative of $f^{(i)}(x).$
\end{ass}
\begin{ass}\label{ass_tail}
$\max_i P(|Y_t^{(i)}|>x)\leq Cx^{-\lambda}$ with $\lambda>4/5$ for some constant $C>0$.
\end{ass}

Assumption \ref{ass_mixing} is needed for invoking the Bernstein type inequality for $\alpha$-mixing sequences in \cite{merlevede2009bernstein}, which plays an important role in our proof.  It  is satisfied by commonly used time series models, such as  ARMA models and GARCH models. Assumption \ref{ass_bound} and Assumption \ref{ass_tail} are adapted from \cite{dette2020a} to our change-point setting. 

\begin{thm} \label{thm_single_quant}
Under Assumptions \ref{ass_mixing}-\ref{ass_tail}, Assumption \ref{ass_remainder} holds.
\end{thm}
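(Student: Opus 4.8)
The plan is to establish a uniform (over all prefix subsamples $\{Y_t\}_{t=1}^k$) Bahadur representation for the empirical $q$-quantile and to read off the rate of the remainder $r_{1,k}(\omega_{1,k})$; the term $r_{k,n}(\omega_{k,n})$ is handled by an identical argument on the reversed index, so I focus on $r_{1,k}$. Write $\hat F_{1,k}(x)=k^{-1}\sum_{t=1}^k\mathbf 1(Y_t\le x)$, let $F_{1,k}:=F^{\omega_{1,k}}$ be the population mixture CDF with density $f_{1,k}=f^{\omega_{1,k}}$, and let $\theta_{1,k}=\theta(\omega_{1,k})$ solve $F_{1,k}(\theta_{1,k})=q$. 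Since $\hat\theta_{1,k}$ is the empirical quantile, $\hat F_{1,k}(\hat\theta_{1,k})=q+O(1/k)$, so subtracting $\hat F_{1,k}(\theta_{1,k})$ and inserting $\pm F_{1,k}$ gives
$$
q-\hat F_{1,k}(\theta_{1,k}) = \underbrace{[F_{1,k}(\hat\theta_{1,k})-F_{1,k}(\theta_{1,k})]}_{\text{Taylor}} + \underbrace{\{\hat F_{1,k}-F_{1,k}\}(\hat\theta_{1,k}) - \{\hat F_{1,k}-F_{1,k}\}(\theta_{1,k})}_{=:\;\mathcal O_k} + O(1/k).
$$
Taylor expanding the first bracket at $\theta_{1,k}$ and using $f_{1,k}(\theta_{1,k})\ge c_2>0$ (Assumption \ref{ass_bound}) to solve for $\hat\theta_{1,k}-\theta_{1,k}$ identifies the influence term $k^{-1}\sum_t\xi_i(Y_t,\omega_{1,k})=[q-\hat F_{1,k}(\theta_{1,k})]/f_{1,k}(\theta_{1,k})$ (using $k^{-1}\sum_t F^{(i_t)}(\theta_{1,k})=F_{1,k}(\theta_{1,k})=q$), so that
$$
|r_{1,k}(\omega_{1,k})| \le \tfrac{1}{c_2}\big(|\mathcal O_k| + \tfrac{c_3}{2}\,|\hat\theta_{1,k}-\theta_{1,k}|^2 + O(k^{-1})\big).
$$
The task thus reduces to bounding the empirical-process oscillation $\mathcal O_k$ and the quadratic term uniformly in $k$.

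Next I would localize $\hat\theta_{1,k}$. Applying the Bernstein inequality for $\alpha$-mixing sequences of \cite{merlevede2009bernstein} to the bounded variables $\mathbf 1(Y_t\le\theta_{1,k}\pm\delta)-F_{1,k}(\theta_{1,k}\pm\delta)$, together with $c_2\le f_{1,k}\le c_1$ on $[\min_i\theta_i,\max_i\theta_i]$ (which contains $\theta_{1,k}$ by Assumption \ref{ass_theta}), lets me convert fluctuations of $\hat F_{1,k}$ into fluctuations of $\hat\theta_{1,k}$ and obtain $P(|\hat\theta_{1,k}-\theta_{1,k}|>b_k)\le \exp(-c\,k b_k^2)$ with a radius $b_k\asymp(\log k/k)^{1/2}$. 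The tail bound in Assumption \ref{ass_tail} enters here to cap $\max_{t\le n}|Y_t|=O_p(n^{1/\lambda})$ and so to prevent escape of $\hat\theta_{1,k}$ for moderate $k$; the exponent $\lambda>4/5$ is precisely what keeps $n^{1/\lambda}$ small enough to be compatible with an $o(n^{1/2})$ conclusion after the union bound below. This confines $\hat\theta_{1,k}$ to the window $[\theta_{1,k}-b_k,\theta_{1,k}+b_k]$ with overwhelming probability and simultaneously controls $k|\hat\theta_{1,k}-\theta_{1,k}|^2=O_p(\log k)$.

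The heart of the argument, and the main obstacle, is a local fluctuation bound for the strongly mixing empirical process of the type stated in Section \ref{subsec:fluc}, namely controlling
$$
P\Big[\sup_{|x-\theta_{1,k}|\le b_k}\big|\hat F_{1,k}(x)-F_{1,k}(x)-\hat F_{1,k}(\theta_{1,k})+F_{1,k}(\theta_{1,k})\big|>\epsilon_k\Big].
$$
I would establish this by a chaining argument over a grid of $x$-values across the window, estimating each increment with the Bernstein inequality of \cite{merlevede2009bernstein}; the variance proxy of each bracket is of order $F_{1,k}(x)-F_{1,k}(\theta_{1,k})=O(b_k)$ up to covariance sums, and the exponential mixing rate of Assumption \ref{ass_mixing} renders those sums summable, so the effective variance stays $O(b_k)$. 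This yields $\mathcal O_k=O_p(\{b_k\log k/k\}^{1/2})$, whence $r_{1,k}=O_p(k^{-3/4}\,\mathrm{polylog}\,k)$, mirroring the rate in \cite{Wu2005}; the novelty over \cite{dette2020a} is that every constant must be made uniform in $k$ and over the moving centering $\theta_{1,k}$ and density $f_{1,k}$ induced by the change-point.

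Finally I would assemble the maximal bound. Splitting $\{1,\dots,n\}$ into $k\le k_n$, where the crude bound $|r_{1,k}|\le 1/c_2$ gives $k|r_{1,k}|=O(k_n)=o(n^{1/2})$, and $k>k_n$, where the Bahadur rate applies, a union bound over the at most $n$ indices $k$ and over the grid (whose cardinality is polynomial in $n$, controlled via the tail bound) is affordable because each Bernstein tail is $\exp(-c\,k\epsilon_k^2)$, summable in $n$ for the chosen $\epsilon_k$. Multiplying by $k$ and maximizing gives $\sup_{1\le k\le n}k|r_{1,k}(\omega_{1,k})|=O_p(n^{1/4}\,\mathrm{polylog}\,n)=o_p(n^{1/2})$, and the symmetric argument for $r_{k,n}(\omega_{k,n})$ completes the verification of Assumption \ref{ass_remainder}.
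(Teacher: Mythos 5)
Your overall route---a Bahadur-representation decomposition of $r_{1,k}(\omega_{1,k})$ into an empirical-process oscillation term plus a quadratic Taylor term, Bernstein's inequality for $\alpha$-mixing sequences to obtain a local fluctuation bound on $\hat F_{1,k}-F_{1,k}$ over a shrinking window around $\theta_{1,k}$, and a split of the index range into small and large $k$ followed by a union bound---is essentially the proof given in the paper (Lemma \ref{lem_DKW} together with Lemmas \ref{lem_fluc}--\ref{lem_replace}). The one step that would fail as written is your treatment of the small-$k$ regime. You invoke the ``crude bound $|r_{1,k}(\omega_{1,k})|\le 1/c_2$'' to conclude $k|r_{1,k}|=O(k_n)=o(n^{1/2})$ for $k\le k_n$. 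But $r_{1,k}(\omega_{1,k})=\hat\theta_{1,k}-\theta_{1,k}-k^{-1}\sum_{t\le k}\xi(Y_t,\omega_{1,k})$: the influence-function average is indeed bounded by $1/c_2$, but the raw empirical quantile $\hat\theta_{1,k}$ is \emph{not} bounded---for moderate $k$ it is an order statistic of heavy-tailed observations, and Assumption \ref{ass_tail} only gives $\max_{t\le k}|Y_t|=O_p(k^{1/\lambda})$. The correct crude bound is $\max_{k\le n^r}k|r_{1,k}|\lesssim n^r\bigl(\max_{t\le n^r}|Y_t|+C\bigr)$, and making this $o_p(n^{1/2})$ via a union bound over the polynomial tails requires $\lambda(r-1/2)+r<0$, i.e.\ $r<\lambda/(2(\lambda+1))$; this is exactly Lemma \ref{lem_small}. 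Since the large-$k$ analysis forces $r>2/9$, the two constraints are compatible only when $\lambda>4/5$---which is where the threshold in Assumption \ref{ass_tail} comes from. Your write-up both uses the wrong crude bound and leaves the specific role of $4/5$ unexplained, even though you correctly sense that the tail condition is what ``prevents escape'' of $\hat\theta_{1,k}$.

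A second, more minor point: in the local oscillation step you assert that the variance proxy of the indicator increments ``stays $O(b_k)$'' because exponential mixing makes the covariance sums summable. With the Davydov covariance inequality (which is what the paper's Lemma \ref{lem_berstein} uses), one actually obtains $O(b_k^{2/(2+\iota)})$ for arbitrarily small $\iota>0$, which is slightly worse than $O(b_k)$; your claimed remainder rate $k^{-3/4}\mathrm{polylog}(k)$ therefore does not follow from the argument you describe without a sharper covariance bound (e.g.\ interpolating $|\mathrm{Cov}(\xi_i,\xi_j)|\le\min\{4\alpha(|i-j|),\,Cb_k\}$ for the bounded indicator differences). This does not threaten the conclusion---any uniform rate $o_p(k^{-1/2})$ for $k\ge n^r$ suffices for $\sup_{k}k|r_{1,k}|=o_p(n^{1/2})$---but the stated rate should not be claimed as a consequence of the machinery as given.
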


The proof of \Cref{thm_single_quant} can be found in \Cref{subsec:proof_quant}. When there are multiple change-points, we need a stronger assumption on the tail behavior of $Y_t^{(i)}$. This is because we need to control for both the starting and ending index of the subsample in Assumption \ref{ass_no2} as both $a$ and $b$ are free to move, while we only need to control for one of the two indices in Assumption \ref{ass_remainder}.
\begin{ass}\label{ass_tail2}
$\max_i P(|Y_t^{(i)}|>x)\leq Cx^{-\lambda}$ with $\lambda>18/5$ for some constant $C>0$.
\end{ass}
\begin{thm} \label{thm_multiple_quant}
Under Assumptions \ref{ass_mixing}, \ref{ass_bound} and \ref{ass_tail2}, Assumption \ref{ass_no2} holds.
\end{thm}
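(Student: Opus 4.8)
The target is to control the Bahadur remainder $r_{a,b}(\omega_{a,b})$ uniformly over all subsamples $1\le a<b\le n$, where for the quantile functional $\theta=F^{-1}(q)$ this remainder is defined implicitly by the partial-influence-function expansion of Section \ref{subsec:partial_quanttile}. Writing $m=b-a+1$, $\widehat F_{a,b}(x)=\frac{1}{m}\sum_{t=a}^{b}\mathbf 1(Y_t\le x)$ for the subsample empirical CDF, and $F_{a,b}=F^{\omega_{a,b}}$ (with density $f_{a,b}$) for the mixture population CDF, the plan is to recast $(b-a+1)\,r_{a,b}(\omega_{a,b})$ as the deficit in a subsample Bahadur representation and to bound it by an empirical-process oscillation term. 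First I would start from the defining inequalities of the subsample empirical quantile, $\widehat F_{a,b}(\widehat\theta_{a,b}^-)\le q\le \widehat F_{a,b}(\widehat\theta_{a,b})$, and Taylor-expand $F_{a,b}$ about $\theta_{a,b}=\theta(F^{\omega_{a,b}})$ using the density bounds in Assumption \ref{ass_bound}. This yields
\begin{equation*}
f_{a,b}(\theta_{a,b})\big(\widehat\theta_{a,b}-\theta_{a,b}\big)=-\big(\widehat F_{a,b}-F_{a,b}\big)(\theta_{a,b})-\Big[\big(\widehat F_{a,b}-F_{a,b}\big)(\widehat\theta_{a,b})-\big(\widehat F_{a,b}-F_{a,b}\big)(\theta_{a,b})\Big]+O\big((\widehat\theta_{a,b}-\theta_{a,b})^2\big),
\end{equation*}
in which the first term on the right is exactly the partial-influence-function average. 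Thus the remainder is governed by the squared estimation error and by the local oscillation of the centered empirical process around $\theta_{a,b}$.

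The next step is to obtain a uniform-in-$(a,b)$ rate for $\widehat\theta_{a,b}-\theta_{a,b}$, so as to fix the localization window over which the oscillation must be controlled. I would establish a uniform Dvoretzky--Kiefer--Wolfowitz-type bound $\sup_{1\le a<b\le n}\sup_x m^{1/2}|\widehat F_{a,b}(x)-F_{a,b}(x)|=O_p(\sqrt{\log n})$ via the Bernstein-type inequality for $\alpha$-mixing sequences of \cite{merlevede2009bernstein} under Assumption \ref{ass_mixing}, followed by a union bound over the $O(n^2)$ choices of $(a,b)$ and over a grid in $x$. Inverting this and using the lower density bound in Assumption \ref{ass_bound} gives $|\widehat\theta_{a,b}-\theta_{a,b}|\lesssim \sqrt{\log n/m}$ uniformly on a high-probability event, so that the oscillation in the display above only needs to be examined over increments of length $b_n$ matching the scale $\sqrt{\log n/m}$.

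The heart of the argument is the uniform local fluctuation rate for the mixing empirical process advertised at the start of this section, namely a bound on
\begin{equation*}
P\Big[\sup_{1\le a<b\le n}\ \sup_{|x-y|\le b_n}\big|\big(\widehat F_{a,b}-F_{a,b}\big)(x)-\big(\widehat F_{a,b}-F_{a,b}\big)(y)\big|>\epsilon_n\Big].
\end{equation*}
I would prove this by truncating the unbounded observations at a level $T_n$, discretizing $x,y$ on a grid fine enough to make the discretization error negligible, applying the $\alpha$-mixing Bernstein inequality of \cite{merlevede2009bernstein} to each increment $\frac{1}{m}\sum_{t=a}^{b}\{\mathbf 1(Y_t\le x)-\mathbf 1(Y_t\le y)-E[\,\cdots]\}$, and taking a union bound over the grid and over all subsamples. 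Because the union runs over $O(n^2)$ subsamples, each tail event must decay faster than a high polynomial power of $n$; balancing the truncation level $T_n$ against the polynomial tail $\max_i P(|Y_t^{(i)}|>T_n)\lesssim T_n^{-\lambda}$ is precisely what forces the stronger tail condition $\lambda>18/5$ in Assumption \ref{ass_tail2}, as opposed to $\lambda>4/5$ needed in Theorem \ref{thm_single_quant} where only a single endpoint moves.

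Combining the three ingredients on the intersection of the high-probability events gives $(b-a+1)\,|r_{a,b}(\omega_{a,b})|\lesssim m(\epsilon_n+b_n^2)\lesssim n^{1/4}(\log n)^{3/4}=o(n^{1/2})$ uniformly over $(a,b)$, which is exactly Assumption \ref{ass_no2}. I expect the main obstacle to be the fluctuation bound of the third step: the Bernstein inequality for strongly mixing sequences carries additional logarithmic and sub-exponential correction terms relative to the independent case, so the chaining and truncation must be executed carefully, and the simultaneous control of both moving endpoints $a$ and $b$---forcing the failure probability to be summable against $O(n^2)$ subsamples---is what both complicates the analysis and necessitates the heavier tail assumption. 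The arguments of \cite{Wu2005} and \cite{dette2020a} do not transfer directly, since they either fix the full sample or assume independence with no change-points, so the uniform subsample treatment under temporal dependence and structural breaks is the genuinely new component.
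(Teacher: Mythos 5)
Your overall architecture -- Bahadur decomposition of the remainder into a squared-error term plus a local oscillation of the centered empirical process, a uniform DKW-type bound over subsamples via the Bernstein inequality for $\alpha$-mixing sequences, and a uniform local fluctuation lemma proved by truncation, discretization and union bounds -- is the same as the paper's (its Lemmas \ref{lem_fluc2}, \ref{lem_bound2}, \ref{lem_replace2} and the argument of Theorem \ref{thm_single_quant}). However, there is a genuine gap: your step 2 asserts $|\widehat\theta_{a,b}-\theta_{a,b}|\lesssim\sqrt{\log n/m}$ \emph{uniformly over all} $1\le a<b\le n$, and your final combination $m(\epsilon_n+b_n^2)=o(n^{1/2})$ relies on it. This concentration simply fails when $m=b-a+1$ is small (e.g.\ $m=2$, where $\widehat\theta_{a,b}$ is one of two raw observations); the uniform fluctuation bounds only hold on the restricted range $m\ge n^r$. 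The paper therefore splits the subsamples into $m<n^r$ and $m\ge n^r$ and handles the short ones separately (Lemma \ref{lem_small2}) by the crude bound $(b-a)|\widehat\theta_{a,b}-\theta_{a,b}|\le n^r\bigl(\max_{1\le i\le n}|X_i|+\max_i|\theta_i|\bigr)$ together with a union bound over all $n$ observations and the polynomial tail. Your proposal contains no treatment of this regime, and without it the claimed uniform bound on $(b-a+1)|r_{a,b}(\omega_{a,b})|$ does not follow.

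Relatedly, your diagnosis of where $\lambda>18/5$ enters is incorrect. It is not forced by balancing the truncation level against the $O(n^2)$ union bound in the oscillation lemma: in the paper's Lemma \ref{lem_DKW} the tail index only controls the domain truncation in $x$, and larger union bounds are absorbed by enlarging the constants, for any $\lambda>0$. The stronger tail condition comes precisely from the short-subsample regime you omitted. There one needs $n\cdot(n^{1/2-r})^{-\lambda}\to 0$, i.e.\ $r<1/2-1/\lambda$ (a union bound over all $n$ data points, since with both endpoints moving a short subsample can sit anywhere, versus only $n^r$ points in the single-change-point case of Theorem \ref{thm_single_quant}, which gives $r<\lambda/(2(\lambda+1))$ and hence only $\lambda>4/5$). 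Intersecting with the constraint $r>2/9$ needed for the replacement step (Lemma \ref{lem_replace2}) yields $2/9<1/2-1/\lambda$, i.e.\ $\lambda>18/5$. To complete your argument you would need to add the small-subsample lemma and rederive the tail condition from that step.
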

The proof of Theorem \ref{thm_multiple_quant} can be found in Section \ref{subsec:proof_quant}.

\subsection{Verification of Assumption \ref{ass_theta}}\label{subsec:rem_quantile}
Let $\theta=F^{-1}(q)$ be the $q$th quantile functional for a distribution function $F$. We consider $Y_t^{(i)}$ with continuous CDF $F^{(i)}$ and density $f^{(i)}$ for $i=1,2$. Denote the mixture weight by $\omega=(\omega^{(1)},\omega^{(2)})^{\top}$, and the mixture CDF by $F^{\omega}=\omega^{(1)} F^{(1)}+\omega^{(2)}F^{(2)}$ and denote $f^{\omega}$ as the density. By the mean value theorem, we have 
\begin{flalign*}
F^{\omega}(\theta(F^{\omega}))-F^{\omega}(\theta_1)=&f^{\omega}(\theta^*)[\theta(F^{\omega})-\theta_1],
\end{flalign*}
for some $\theta^*$ that lies between $\theta_1$ and $\theta(F^{\omega})$.
In addition, we have 
\begin{flalign*}
F^{\omega}(\theta(F^{\omega}))-F^{\omega}(\theta_1)=&q-[\omega^{(1)}F^{(1)}+\omega^{(2)}F^{(2)}](\theta_1)
\\=&\omega^{(2)}[F^{(1)}(\theta_1)-F^{(2)}(\theta_1)].
\end{flalign*}
Hence, provided $f^{\omega}(\theta^*)$ is positive, we have 
$$
\theta(F^{\omega})-\theta_1=f^{\omega}(\theta^*)^{-1}\omega^{(2)}[F^{(1)}(\theta_1)-F^{(2)}(\theta_1)].
$$
Similarly, we have
$$
\theta(F^{\omega})-\theta_2=f^{\omega}(\theta^{\dag})^{-1}\omega^{(1)}[F^{(2)}(\theta_2)-F^{(1)}(\theta_2)].
$$
for some $\theta^{\dag}$ that lies between $\theta_2$ and $\theta(F^{\omega})$.

By symmetry, we assume $\theta_1 < \theta_2$, which implies that $\theta_1 < \theta(F^{\omega})<  \theta_2$. 
Therefore, a sufficient condition for Assumption \ref{ass_theta} is that 
$c_1\leq f^{\omega}(\theta)\leq c_2$ on $[\theta_1,\theta_2]$ where $c_1$ and $c_2$ are two positive constants, which holds if $c_1\leq f^{(i)}(\theta)\leq c_2$, $i=1,2$ on $[\theta_1,\theta_2]$. In this case, we can set 
$$
C_1=\frac{\min_{i=1,2}|F^{(1)}(\theta_i)-F^{(2)}(\theta_i)|}{\max_{i=1,2}\sup_{\theta\in[\theta_1,\theta_2]}f^{(i)}(\theta)|\theta_1-\theta_2|},\quad C_2=\frac{\max_{i=1,2}|F^{(1)}(\theta_i)-F^{(2)}(\theta_i)|}{\min_{i=1,2}\inf_{\theta\in[\theta_1,\theta_2]}f^{(i)}(\theta)|\theta_1-\theta_2|}.
$$

\subsection{ Verification of Assumption \ref{ass_no3star}}\label{subsec:rem_quantile_multiple}

We verify Assumption \ref{ass_no3star}~(see definition in Section \ref{subsec:Alter_assno3}) under the sufficient conditions: (1).\ monotonic change and (2).\ $\inf_{\theta\in[\min\theta_i,\max\theta_i]}\min_if^{(i)}(\theta)>c>0$.

Let $a<k_i<k_{i+1}<b$, similar to Section \ref{subsec:rem_quantile}, we can obtain 
\begin{flalign*}
\theta_{a+1,k_{i+1}}-\theta_{i+1}=&\frac{k_i-a}{k_{i+1}-a}f^{\omega_{a+1,k_i}}(\xi_1)^{-1}[F_{k_i+1,k_{i+1}}(\theta_{i+1})-F_{a+1,k_i}(\theta_{i+1})],\\
\theta_{k_i+1,b}-\theta_{i+1}=&\frac{b-k_{i+1}}{b-k_i}f^{\omega_{k_i+1,b}}(\xi_2)^{-1}[F_{k_i+1,k_{i+1}}(\theta_{i+1})-F_{k_{i+1}+1,b}(\theta_{i+1})],
\end{flalign*}
where $\xi_1$ lies between $\theta_{a+1,k_i}$ and $\theta_{i+1}$,  $\xi_2$ lies between $\theta_{k_{i+1}+1,b}$ and $\theta_{i+1}$.

Without loss of generality, we assume the change is monotonically decreasing, hence Assumption \ref{ass_no3star} is equivalent to
\begin{flalign*}
&C_1\left|\frac{k_i-a}{k_{i+1}-a}[\theta_{a+1,k_i}-\theta_{i+1}]+\frac{b-k_{i+1}}{b-k_i}[\theta_{i+1}-\theta_{k_{i+1}+1,b}]\right|\\\leq &
\frac{k_i-a}{k_{i+1}-a}f^{\omega_{a+1,k_i}}(\xi_1)^{-1}[F_{k_i+1,k_{i+1}}(\theta_{i+1})-F_{a+1,k_i}(\theta_{i+1})]\\&+\frac{b-k_{i+1}}{b-k_i}f^{\omega_{k_i+1,b}}(\xi_2)^{-1}[F_{k_{i+1}+1,b}(\theta_{i+1})-F_{k_i+1,k_{i+1}}(\theta_{i+1})]\\
\leq &C_2\left\{\left|\frac{k_i-a}{k_{i+1}-a}[\theta_{a+1,k_i}-\theta_{i+1}]\right|+\left|\frac{b-k_{i+1}}{b-k_i}[\theta_{i+1}-\theta_{k_{i+1}+1,b}]\right|\right\}.
\end{flalign*}
Thus, for Assumption \ref{ass_no3star} to hold, we can choose
\begin{flalign*}
C_1=&\frac{\min_{i\neq j}|F^{(j)}(\theta_i)-F^{(i)}(\theta_i)|}{\max_{i}\sup_{\theta\in[\min_i\theta_i,\max_i\theta_i]}f^{(i)}(\theta)\max_{i\neq j}|\theta_i-\theta_j|},\\ C_2=&\frac{\max_{i\neq j}|F^{(j)}(\theta_i)-F^{(i)}(\theta_i)|}{\min_{i}\inf_{\theta\in[\min_i\theta_i,\max_i\theta_i]}f^{(i)}(\theta)\min_{i\neq j}|\theta_i-\theta_j|}.
\end{flalign*}

\subsection{ Local fluctuation rate of strongly mixing empirical processes}\label{subsec:fluc}
This section provides a local fluctuation rate of empirical processes for $\alpha$-mixing sequences in  Lemma \ref{lem_DKW}, which is of independent interest.   We remark that the key result of Lemma \ref{lem_DKW} and Lemma \ref{lem_berstein} holds even when the distribution functions are not identical, and hence is useful for verification of Assumption \ref{ass_no2} in the case of multiple change-points. 

\begin{lemma}\label{lem_DKW}
For a sequence of $\alpha$-mixing random variables $\{X_i\}_{i=1}^n$, suppose each $X_i$ has the marginal  cumulative distribution function $F^{(i)}(x)$ and density function $f^{(i)}(x)$ for which $\max_{i}\sup_x f^{(i)}(x)\leq c_1<\infty$. In addition, assume the mixing coefficient $\alpha(k)=\exp(-c_0k)$ for some  $c_0>0$, and  $\max_iP(|X_i|\geq x)\leq C|x|^{-\lambda}$ for some $\lambda>0$. Denote the empirical cdf as $\hat{F}_{1,n}(x)=\frac{1}{n}\sum_{i=1}^n\mathbf{1}(X_i\leq x)$ and the mean cdf as $F_{1,n}(x)=n^{-1}\sum_{i=1}^nF^{(i)}(x)$.
Then, for any $\tau>1$, and $\iota>0$,  there exists constants $C_{\tau}$, $C_{\tau,\iota}$ and $K$ such that,
$$
P\left[\sup_x\Big| \hat{F}_{1,n}(x)-F_{1,n}(x)\Big|>\frac{C_{\tau}\log^{1/2}(n)}{n^{1/2}}\right]\leq Kn^{-\tau},
$$
and 
$$
P\left[\sup_{|x-y|\leq b_n}\Big| \hat{F}_{1,n}(x)-F_{1,n}(x)-\hat{F}_{1,n}(y)+F_{1,n}(y)\Big|>\frac{C_{\tau,\iota}(b_n^{2/(2+\iota)}\log(n))^{1/2}}{n^{1/2}}\right]\leq Kn^{-\tau},
$$
where  $b_n$ is any positive, bounded sequence of real numbers such that $log^5(n)=o(nb_n^{2/(2+\iota)})$, and $b_n=o(1)$.
\end{lemma}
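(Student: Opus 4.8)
The plan is to reduce both inequalities to pointwise concentration followed by a discretization (chaining) argument, with the dependence handled throughout by the Bernstein-type inequality for geometrically $\alpha$-mixing sums of \cite{merlevede2009bernstein} (encapsulated in the accompanying Bernstein lemma). That inequality controls the deviation of a sum $\sum_{i=1}^n Z_i$ of centered variables with $|Z_i|\le M$ in terms of a variance proxy $v_n^2$ and $M$, via a bound of the schematic form $\exp\!\big(-c\,t^2/(v_n^2+tM(\log n)^2)\big)$. In both statements the relevant summands are indicators minus their means, hence bounded by $M=1$, so the entire game is to identify the correct $v_n^2$ for each statement and then to pass from a fixed pair of arguments to a supremum.

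First I would treat the global DKW-type bound. For fixed $x$, set $Z_i=\mathbf 1(X_i\le x)-F^{(i)}(x)$, which is centered and bounded by $1$; the geometric mixing gives a summable covariance series, so $v_n^2\lesssim n$. Taking $t\asymp n^{1/2}\log^{1/2}(n)$ puts us squarely in the Gaussian regime and yields $\exp(-cC_\tau^2\log n)=n^{-cC_\tau^2}$, which is below $n^{-\tau}$ once $C_\tau$ is large. To upgrade this to a uniform bound over $x$, I would first truncate the range to $[-T_n,T_n]$ with $T_n$ a polynomial in $n$ chosen via $\max_i P(|X_i|\ge x)\le C|x|^{-\lambda}$ so that the far tails contribute at most $Kn^{-\tau}$; on the truncated range I would place a grid of $\mathrm{poly}(n)$ points with consecutive $F_{1,n}$-values differing by $O(1/n)$ (possible since $F_{1,n}$ is $c_1$-Lipschitz), and exploit monotonicity of both $\hat F_{1,n}$ and $F_{1,n}$ so that the supremum over $x$ equals the maximum over the grid up to a deterministic $O(1/n)$ error. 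A union bound over the polynomially many grid points costs only an extra logarithmic factor absorbed into $C_\tau$.

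The local oscillation bound is the crux and follows the same template but with a sharper variance proxy. For fixed $y<x$ with $x-y\le b_n$, put $Z_i=\mathbf 1(y<X_i\le x)-\big(F^{(i)}(x)-F^{(i)}(y)\big)$; the strip probability is $p_i=F^{(i)}(x)-F^{(i)}(y)\le c_1 b_n$, so $\|Z_i\|_{2+\iota}^{2+\iota}\lesssim b_n$. The dependent variance is bounded by $\mathrm{Var}\!\big(\sum_i Z_i\big)\le \sum_i\mathrm{Var}(Z_i)+2\sum_{i<j}|\mathrm{Cov}(Z_i,Z_j)|$, where the diagonal contributes $\sum_i p_i\lesssim n b_n$, and where I would apply Davydov's covariance inequality with exponents $p=q=2+\iota$, giving $|\mathrm{Cov}(Z_i,Z_j)|\lesssim \alpha(|i-j|)^{\iota/(2+\iota)}\,b_n^{2/(2+\iota)}$; summing the geometric series $\sum_k\exp(-c_0 k\,\iota/(2+\iota))<\infty$ yields $v_n^2\lesssim n\,b_n^{2/(2+\iota)}$ (the diagonal is dominated since $b_n\le b_n^{2/(2+\iota)}$ for $b_n\le 1$). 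Taking $t\asymp\big(n\,b_n^{2/(2+\iota)}\log n\big)^{1/2}$ produces the Gaussian exponent $\exp(-cC_{\tau,\iota}^2\log n)$, and the condition $\log^5(n)=o\!\big(n b_n^{2/(2+\iota)}\big)$ is exactly what guarantees $t/v_n^2\lesssim(\log n)^{-2}$, i.e. that the correction term $tM(\log n)^2$ does not overtake $v_n^2$ and we stay in the subgaussian regime. The identical truncation-plus-grid discretization, now run over pairs $(x,y)$, then converts this pointwise estimate into the claimed uniform one.

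\textbf{The main obstacle} I expect is obtaining the sharp variance proxy $v_n^2\asymp n\,b_n^{2/(2+\iota)}$ under dependence: one must interpolate carefully between the $L^\infty$ boundedness of $Z_i$ (needed so that the mixing covariance sum converges) and the small $L^{2+\iota}$ norm (which encodes the thin strip of width $b_n$), and then verify that the regime-separation condition $\log^5(n)=o(n b_n^{2/(2+\iota)})$ is precisely the one that keeps the Bernstein bound in its Gaussian branch. A secondary technical point is ensuring the discretization uses only polynomially many grid points—so that the union bound costs a mere logarithmic factor—while the fluctuation of $\hat F_{1,n}$ between adjacent grid points is controlled deterministically by monotonicity rather than requiring a further probabilistic estimate.
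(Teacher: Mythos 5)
Your proposal follows essentially the same route as the paper's proof: split off the far tails using the polynomial tail bound, discretize the bounded region on a polynomial grid and control the between-grid fluctuation via monotonicity and the Lipschitz bound on $F_{1,n}$, and for each fixed pair apply the Merlev\`ede--Peligrad--Rio Bernstein inequality with the variance proxy $v_n^2\lesssim n\,b_n^{2/(2+\iota)}$ obtained from Davydov's inequality with exponents $(2+\iota,2+\iota)$, the condition $\log^5(n)=o(nb_n^{2/(2+\iota)})$ serving exactly to keep the bound in its Gaussian branch. The only cosmetic inaccuracy is calling the union-bound cost over the grid a ``logarithmic factor'' — it is a polynomial factor $N^2$, absorbed by taking the constant in the exponent large enough, which is precisely how the paper chooses $C_{\tau,\iota}$.
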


\begin{proof}
We only prove for the second, as the first can be proved similarly.

Denote $Y_i(x)=\mathbf{1}(X_i\leq x)-F^{(i)}(x)$, and let $J=n^{(\tau+3)/\lambda}$, $v_n=\sqrt{nb_n^{2/(2+\iota)}log(n)}$, $t_n=v_n/n$.

$$
\begin{aligned}
I_{n} &:=P\left[\sup _{|x-y| \leq b_{n}, x \leq-J}\Big| \hat{F}_{1,n}(x)-F_{1,n}(x)-\hat{F}_{1,n}(y)+F_{1,n}(y)\Big|>C t_{n}\right] \\
& \leq \sum_{i=1}^nP\left[\sup _{|x-y| \leq b_{n}, x \leq-J}\left|Y_{i}(y)-Y_{i}(x)\right|>C t_{n}\right] \\
& \leq \sum_{i=1}^n(Ct_n)^{-1} E\left[\sup _{|x-y| \leq b_{n}, x \leq-J}\left|Y_{i}(y)-Y_{i}(x)\right|\right] \\
&\leq \sum_{i=1}^n 2(Ct_n)^{-1} E\left[\sup _{x \leq-J+b_{n}}\left|Y_{i}(x)\right|\right] \\
&\leq Kn^2 v_{n}^{-1}\left(J-b_{n}\right)^{-\lambda} \leq Kn^{-1-\tau},
\end{aligned}
$$
where the second inequality holds by Markov's inequality, and the last  by
\begin{flalign*}
E\left[\sup _{x \leq-J+b_{n}}\left|Y_{i}(x)\right|\right]\leq& E\left[\sup _{x \leq-J+b_{n}} \Big(F^{(i)}(x)+\mathbf{1}(X_i\leq x)(1-F^{(i)}(x))\Big)\right]
\leq 2F^{(i)}(-J+b_n),
\end{flalign*}
and that $P(|X_i|\geq x)\leq C |x|^{-\lambda}.$

Similarly, we have 
$$
III_{n}:=P\left[\sup_{|x-y| \leq b_{n}, x \geq J}\Big| \hat{F}_{1,n}(x)-F_{1,n}(x)-\hat{F}_{1,n}(y)+F_{1,n}(y)\Big|>C t_{n}\right]\leq Kn^{-1-\tau}.
$$

Let $x_{i}=i b_{n} / n, i=-N-1, \ldots, N+1$, where $N=\left\lfloor J n / b_{n}\right\rfloor$, and
$$
I I_{n}:=P\left[\sup _{|x-y| \leq b_{n},-J<x<J}\Big| \hat{F}_{1,n}(x)-F_{1,n}(x)-\hat{F}_{1,n}(y)+F_{1,n}(y)\Big|>C t_{n}\right] \text {. }
$$

For any $x, y$ with $|x-y| \leq b_{n},|x| \leq J$ and $|y| \leq J$, choose $i$ and $j$ such that $x_{i} \leq x<x_{i+1}$ and $x_{j} \leq y<x_{j+1}$, we know that either $|x_{i+1}-x_j|\leq b_n$ or $|x_{j+1}-x_i|\leq b_n$. 
In addition, $F_{1,n}(x_{i+1})-F_{1,n}(x_i)\leq c_1(x_{i+1}-x_i)=c_1b_n/n$, and similarly $F_{1,n}(x_{j+1})-F_{1,n}(x_j)\leq c_1b_n/n$, we have
$$
\begin{aligned}
& n\Big[\hat{F}_{1,n}(x_i)-F_{1,n}(x_i)-\hat{F}_{1,n}(x_{j+1})+F_{1,n}(x_{j+1})\Big]-2 c_1 b_{n}\\ \leq&  n\Big[\hat{F}_{1,n}(x)-F_{1,n}(x)-\hat{F}_{1,n}(y)+F_{1,n}(y)\Big] \\
\leq &n\Big[\hat{F}_{1,n}(x_{i+1})-F_{1,n}(x_{i+1})-\hat{F}_{1,n}(x_j)+F_{1,n}(x_j)\Big]+2 c_1 b_{n}.
\end{aligned}
$$
Thus,  note  $2c_1b_n\ll v_n$, we obtain by Lemma \ref{lem_berstein},
$$
II_n\leq P\left[\max _{i, j=-N-1, \ldots, N+1:\left|x_{i}-x_{j}\right| \leq b_{n}} n\Big| \hat{F}_{1,n}(x_i)-F_{1,n}(x_i)-\hat{F}_{1,n}(x_j)+F_{1,n}(x_j)\Big|>(C-1) v_{n}\right]\leq 2N^2 n^{-\frac{C_B(C-1)^2}{C_V+1}},
$$
where $C_B$ and $C_V$ are constants dependent only on $c_0$ and $c_1$.
Using the fact that $b_n\gg n^{-(2+\iota)/2}$, we have $$
2N^2 n^{-\frac{C_B(C-1)^2}{C_V+1}}=o(n^{\frac{2\tau+6}{\lambda}+4+\iota-\frac{C_B(C-1)^2}{C_V+1}}).
$$
Choose $(C_{\tau,\iota}-1)^2=C_B^{-1}\lambda^{-1}(C_V+1)[2\tau+6+(4+\iota+\tau)\lambda]$ would suffice.

\end{proof}

\begin{lemma}\label{lem_berstein}
Under conditions of Lemma \ref{lem_DKW}, if $|x-y|\leq b_n$, then for any constant $C\in(0,\infty)$,
$$P\left[ n\Big| \hat{F}_{1,n}(x)-F_{1,n}(x)-\hat{F}_{1,n}(y)+F_{1,n}(y)\Big|>C v_{n}\right]\leq 2n^{-\frac{C_BC^2}{C_V+1}},$$
where $C_V$ and $C_B$ are constants that only depend on $c_0$ and $c_1$.
\end{lemma}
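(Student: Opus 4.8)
The plan is to rewrite the centered increment as a normalized partial sum of bounded, weakly dependent summands and then invoke the Bernstein-type inequality for geometrically $\alpha$-mixing sequences in \cite{merlevede2009bernstein}. Without loss of generality assume $y\le x$, and set
$$W_i=\mathbf{1}(y<X_i\le x)-\big(F^{(i)}(x)-F^{(i)}(y)\big),\qquad i=1,\dots,n,$$
so that $EW_i=0$, $|W_i|\le 1$, and $n[\hat F_{1,n}(x)-F_{1,n}(x)-\hat F_{1,n}(y)+F_{1,n}(y)]=\sum_{i=1}^n W_i$. The event in the lemma is then exactly $\{|\sum_{i=1}^n W_i|>C v_n\}$, and it remains to produce the stated tail bound.

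First I would record the moment estimates that encode the localization to a window of width $b_n$. Since $\mathbf{1}(y<X_i\le x)$ is Bernoulli with mean $p_i:=F^{(i)}(x)-F^{(i)}(y)$, the density bound $\max_i\sup_x f^{(i)}(x)\le c_1$ and the mean value theorem give $p_i\le c_1|x-y|\le c_1 b_n$; hence $\mathrm{Var}(W_i)\le p_i\le c_1 b_n$ and, for any $p\ge 1$, $\|W_i\|_p\le (2p_i)^{1/p}\le (2c_1 b_n)^{1/p}$. The key point is to preserve this smallness after summing the serial dependence. A crude bounded-variable bound $|\mathrm{Cov}(W_i,W_j)|\le 4\alpha(|i-j|)$ produces a covariance sum of constant order and would destroy the localization, so instead I would apply the Davydov/Rio covariance inequality with exponent $p=2+\iota$:
$$|\mathrm{Cov}(W_i,W_j)|\le C\,\alpha(|i-j|)^{1-2/p}\,\|W_i\|_p\|W_j\|_p\le C\,\alpha(|i-j|)^{\iota/(2+\iota)}(2c_1 b_n)^{2/(2+\iota)}.$$
Summing against the geometric rate $\alpha(k)=e^{-c_0 k}$ gives $\sum_{j}|\mathrm{Cov}(W_i,W_j)|\le C(c_0,\iota)\,b_n^{2/(2+\iota)}$, which dominates $\mathrm{Var}(W_i)=O(b_n)$ for small $b_n$. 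Thus the variance proxy $V:=\max_i\big(\mathrm{Var}(W_i)+2\sum_{j>i}|\mathrm{Cov}(W_i,W_j)|\big)$ satisfies $V\le C\,b_n^{2/(2+\iota)}$, whence $nV\le C\,n b_n^{2/(2+\iota)}=C\,v_n^2/\log n$.

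It then remains to feed these estimates into the mixing Bernstein inequality, which for geometrically mixing, uniformly bounded ($M=1$) summands yields, with constants $C_B,C_V$ depending only on $c_0$,
$$P\Big(\big|\textstyle\sum_{i=1}^n W_i\big|\ge \lambda\Big)\le \exp\Big(-\frac{C_B\,\lambda^2}{nV+M^2+\lambda M(\log n)^2}\Big).$$
Taking $\lambda=C v_n$, each term in the denominator is smaller than $v_n^2$ by at least a factor of order $\log n$: the bound $nV\le C v_n^2/\log n$ handles the first term, $M^2=1\ll v_n^2$ the second, and the hypothesis $\log^5 n=o(n b_n^{2/(2+\iota)})$ forces $v_n\gg \log^3 n$, so that $\lambda M(\log n)^2=Cv_n(\log n)^2\ll v_n^2$. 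Hence the exponent is bounded below by a multiple $C_B C^2\log n/(C_V+1)$, and accounting for the two-sided probability gives the claimed $2\,n^{-C_B C^2/(C_V+1)}$.

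The main obstacle is the covariance step: the entire advantage of the lemma over a crude Hoeffding/Bernstein estimate comes from the fact that $W_i$ is supported on a window of width $b_n$ and is therefore $L^p$-small, and this smallness must survive the summation of the serial dependence. The exponent $2/(2+\iota)$ in $v_n$ is precisely calibrated so that the Davydov interpolation between the $O(b_n)$ variance and the geometric mixing decay returns a variance proxy of order $b_n^{2/(2+\iota)}$, with the tuning parameter $\iota>0$ buying summability of $\alpha(k)^{\iota/(2+\iota)}$. A secondary technical point, already flagged by the hypothesis $\log^5 n=o(n b_n^{2/(2+\iota)})$, is ensuring that the $\lambda(\log n)^2$ penalty for weak dependence in the mixing Bernstein bound does not dominate the denominator, which is what determines the admissible range of $b_n$.
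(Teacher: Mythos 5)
Your proposal is correct and follows essentially the same route as the paper's proof: the same centered indicator summands, the same Davydov covariance bound with exponent $2+\iota$ exploiting $E|W_i|^{2+\iota}\le c_1 b_n$ to get a variance proxy of order $b_n^{2/(2+\iota)}$, and the same application of the Merlevède--Peligrad--Rio Bernstein inequality with the condition $\log^5 n = o(nb_n^{2/(2+\iota)})$ ensuring the $v_n\log^2 n$ term in the denominator is negligible. The only cosmetic difference is that you attribute the factor $2$ to two-sidedness while the paper also symmetrizes over the cases $y\le x$ and $x<y$; the substance is identical.
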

\begin{proof}
For any fixed pair $(x,y)$ such that $y\leq x\leq y+ b_{n}$, denote $\xi_i=Y_i(x)-Y_i(y)$ and $p_i=F^{(i)}(x)-F^{(i)}(y)$.  
Let \begin{flalign*}
V_i^2=\mathrm{Var}[\xi_i]+2\sum_{j>i}|\mathrm{Cov}[\xi_i,\xi_j]|.
\end{flalign*}
Then,
$$
n\Big[ \hat{F}_{1,n}(x)-F_{1,n}(x)-\hat{F}_{1,n}(y)+F_{1,n}(y)\Big]=\sum_{i=1}^n\mathbf{1}(y\leq X_i\leq x)-[F^{(i)}(x)-F^{(i)}(y)]=\sum_{i=1}^n \xi_i.
$$
Note that 
\begin{equation}\label{bound_xi}
E|\xi_i|^{(2+\iota)}\leq p_i\leq \max_i\sup_zf^{(i)}(z)|x-y|\leq c_1b_n. 
\end{equation}
By Davydov's inequality, we know that 
\begin{flalign*}
&|\mathrm{Cov}[\xi_i,\xi_j]|\leq 12 \alpha(|i-j|)^{\iota/(2+\iota)} (E|\xi_i|^{2+\iota})^{1/(2+\iota)}(E|\xi_j|^{2+\iota})^{1/(2+\iota)}\\
\leq & 12 \exp(-c_0|j-i|)|p_ip_j|^{2/(2+\iota)}\\\leq&
12 \exp(-c_0|j-i|)|c_1b_n|^{2/(2+\iota)}
\end{flalign*}
where the last inequality holds by (\ref{bound_xi}). Note that $b_n=o(1)$, we obtain 
$$
\max_iV_i^2\leq \sum_{i=0}^{\infty} 24 \exp(-c_0i)|c_1b_n|^{2/(2+\iota)}\leq C_V b_n^{2/(2+\iota)}
$$
for some constant $C_V$ dependent only on $c_1$ and $c_0$.

Thus, note $\max_i|\xi_i|\leq 1$, by Bernstein's inequality  for $\alpha$-mixing process, see Theorem 2 in \cite{merlevede2009bernstein}, we know that for some constant $C_B$ dependent only on $c_0$:
\begin{flalign*}
&P\left[\Big|\sum_{i=1}^n\xi_i\Big|>C v_{n}\right]\\ \leq&  \exp \left[\frac{-C_B C^{2} v_{n}^{2}}{\big(
\max_i V_i^2 n+ 1+ Cv_n\log^2(n)
\big)}\right] 
\\  \leq& \exp \left[\frac{-C_B C^{2} v_{n}^{2}}{(C_V+1)nb_n^{2/(2+\iota)}}\right]=n^{-\frac{C_BC^2}{C_V+1}}
\end{flalign*}
using the fact $2Cv_n\ll nb_n^{2/(2+\iota)}$ when $\log^5(n)=o(nb_n^{2/(2+\iota)})$.

Similarly, we can prove for the case $y-b_n\leq x<y$.
\end{proof}

\subsection{ Proofs of Theorem \ref{thm_single_quant} and \ref{thm_multiple_quant}}\label{subsec:proof_quant}

\subsubsection{Results regarding \Cref{thm_single_quant}}
Let 
$\hat{F}_{1,k}(x)=\frac{1}{k}\Big[  \sum_{t=1}^{k^*\wedge k} \mathbf{1}_{Y_{t}^{(1)} \leq x}+\mathbf{1}(k>k^*)\sum_{t=k^*+1}^{k} \mathbf{1}_{Y_{t}^{(2)} \leq x}\Big]$,  ${F}_{1,k}(x)=\frac{k^*\wedge k}{k}F^{(1)}(x)+\mathbf{1}(k>k^*)\frac{k-k^*}{k}F^{(2)}(x)$ and $f_{1,k}(x)=F_{1,k}(x)$. 
The following technical treatments are modified from the i.i.d setting in \cite{dette2020a} to accommodate the  change-point setting in $\alpha$-mixing time series.  Lemma \ref{lem_fluc} establishes a uniform fluctuation rate of empirical process based on results in Lemma \ref{lem_DKW}; Lemma \ref{lem_bound} shows that the sample quantile estimates are uniformly close to the true ones; Lemma \ref{lem_small} further improves Lemma \ref{lem_bound} and deals with the case when the subsample size is small; and Lemma \ref{lem_replace} indicates the fluctuation  of empirical process at the true quantile levels can be replaced by the value at the estimates. We defer their proofs after the proof of Theorem \ref{thm_single_quant}.

\begin{lemma}\label{lem_fluc}
For all $0<r<1$, and  $\vartheta>1$, there exists constants $C_{r, \vartheta}>0$ and $K$ such that 
$$P\left(\max _{n^r\leq  k\leq n } \sup _{x }\left|\hat{F}_{1,k}(x)-F_{1,k}(x)\right|>C_{r, \vartheta} \frac{\sqrt{ r\log (n)}}{n^{r / 2}}\right) \leq Kn^{-\vartheta} .$$
Furthermore, for any $\iota>0$, there exists a constant $C_{r, \vartheta,\iota}>0$
$$P\left(\max _{n^r\leq  k\leq n } \sup _{|x-y|\leq  b_{n,r} }\left|\hat{F}_{1,k}(x)-F_{1,k}(x)-\hat{F}_{1,k}(y)+F_{1,k}(y)\right|>C_{r, \vartheta,\iota}\frac{\sqrt{b_{n,r}^{2/(2+\iota)}r \log (n)}}{n^{r / 2}}\right) \leq Kn^{-\vartheta},$$
where $b_{n,r}$ satisfies $b_{n,r}=o(1)$ and  $\log^5(n)=o(n^rb_{n,r}^{2/(2+\iota)})$.
\end{lemma}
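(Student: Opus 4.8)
The plan is to derive \Cref{lem_fluc} from the fixed-sample-size concentration bounds of \Cref{lem_DKW} by a union bound over the subsample length $k$. The key structural observation is that for each fixed $k$ the quantities $\hat F_{1,k}$ and $F_{1,k}$ are exactly the empirical CDF and the averaged-marginal CDF of the initial segment $Y_1,\dots,Y_k$ of the underlying sequence (written as $Y_t^{(1)}$ for $t\le k^*$ and $Y_t^{(2)}$ for $t>k^*$). Since an initial segment of an $\alpha$-mixing sequence is itself $\alpha$-mixing with mixing coefficients no larger than those of the full sequence, and since each marginal is either $F^{(1)}$ or $F^{(2)}$ — both of which obey the density bound $\sup_x f^{(i)}(x)\le c_1$ and the polynomial tail bound required by \Cref{lem_DKW} — the hypotheses of \Cref{lem_DKW} hold verbatim with $n$ replaced by $k$. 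Thus for any $\tau>1$,
\[
P\Big(\sup_x|\hat F_{1,k}(x)-F_{1,k}(x)|>\tfrac{C_\tau\log^{1/2}(k)}{k^{1/2}}\Big)\le K k^{-\tau}.
\]

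First I would treat the uniform Glivenko–Cantelli part. The map $x\mapsto x^{-1/2}\log^{1/2}(x)$ is decreasing for $x\ge e$, so over the range $n^r\le k\le n$ the per-$k$ threshold $C_\tau k^{-1/2}\log^{1/2}(k)$ is maximized at $k=n^r$, where it equals $C_\tau n^{-r/2}\sqrt{r\log n}$; this is precisely where the factor $\sqrt r$ in the statement originates. Replacing the per-$k$ threshold by this common upper bound can only enlarge the events, so a union bound over the at most $n$ indices $k$ gives $\sum_{k\ge n^r}Kk^{-\tau}\le n\cdot K n^{-r\tau}=Kn^{1-r\tau}$. Choosing $\tau\ge(1+\vartheta)/r$ then yields the claimed rate $Kn^{-\vartheta}$ and fixes $C_{r,\vartheta}=C_\tau$.

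The second (oscillation) inequality is handled identically, now invoking the second bound of \Cref{lem_DKW} with bandwidth $b_n=b_{n,r}$. Here the one extra point to verify is that the growth condition $\log^5(k)=o(kb_{n,r}^{2/(2+\iota)})$ demanded by \Cref{lem_DKW} holds for every $k$ in the range, not merely for $k=n$: since $\log^5(k)\le\log^5(n)$ and $kb_{n,r}^{2/(2+\iota)}\ge n^rb_{n,r}^{2/(2+\iota)}$, the assumed $\log^5(n)=o(n^rb_{n,r}^{2/(2+\iota)})$ forces the condition uniformly over $n^r\le k\le n$. The per-$k$ threshold $C_{\tau,\iota}(b_{n,r}^{2/(2+\iota)}\log k)^{1/2}k^{-1/2}$ is again maximized at $k=n^r$, giving $C_{\tau,\iota}(b_{n,r}^{2/(2+\iota)}r\log n)^{1/2}n^{-r/2}$, and the same union-bound-and-choose-$\tau$ argument sets $C_{r,\vartheta,\iota}=C_{\tau,\iota}$ and completes the proof.

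The genuinely hard analysis — the Bernstein concentration for $\alpha$-mixing partial sums, the tail truncation, and the chaining over a grid of mesh $b_n/n$ — is already packaged inside \Cref{lem_DKW} and \Cref{lem_berstein}. Consequently the main obstacle at this stage is purely one of uniformization and bookkeeping: checking that the monotonicity of the per-$k$ threshold really locates the worst case at $k=n^r$ (which is what produces the $\sqrt r$ scaling), transferring the bandwidth growth condition to all admissible $k$, and tuning the free exponent $\tau$ in \Cref{lem_DKW} large enough — growing like $1/r$ — so that the union bound over $O(n)$ subsample lengths still dominates $n^{-\vartheta}$.
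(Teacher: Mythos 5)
Your proposal is correct and follows essentially the same route as the paper: a union bound over $n^r\le k\le n$, using the monotonicity of $\sqrt{\log k}/k^{1/2}$ to locate the worst case at $k=n^r$, applying Lemma \ref{lem_DKW} with $n$ replaced by $k$, verifying $\log^5(k)/k\le r^5\log^5(n)/n^r=o(b_{n,r}^{2/(2+\iota)})$ uniformly, and choosing $\tau$ so that $1-r\tau=-\vartheta$. No substantive differences.
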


\begin{lemma}\label{lem_bound}
Suppose for some constant $c_2>0$,  $ f^{(i)}(x)\geq c_2$ for $x\in[\min_{i=1,2}\theta^{(i)},\max_{i=1,2}\theta^{(i)}]$, and $\sup_x|f^{(i)'}(x)|<\infty$ for $i=1,2$. Let $b_{n,r}=4C_{r,\vartheta}/c_2\sqrt{r\log(n)}/n^{r/2}$ with $\vartheta>1$, for all  $0<r<1$, $$P(  \max _{ n^{r}\leq k\leq n}\left|\hat{\theta}(\omega_{1,k})-{\theta}(\omega_{1,k})\right| \geq b_{n, r}, i.o.)=0.$$

In addition,  for all $0<r<1$ and $0<d_0<1$,
\begin{equation}\label{claim}
n^{d_0r/2} \max _{ n^{r}\leq k\leq n}\left|\hat{\theta}(\omega_{1,k})-{\theta}(\omega_{1,k})\right|=o_p(1).
\end{equation}

\end{lemma}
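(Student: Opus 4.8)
The plan is to invert the uniform fluctuation bound of Lemma~\ref{lem_fluc} into a uniform bound on the subsample quantile estimators via a standard sandwich argument, and then pass to the almost sure and stochastic conclusions through Borel--Cantelli. Throughout write $\theta_k=\theta(\omega_{1,k})$ and $\hat\theta_k=\hat\theta(\omega_{1,k})$, so that $F_{1,k}(\theta_k)=q$ and $\hat\theta_k=\inf\{x:\hat F_{1,k}(x)\geq q\}$. Since $F_{1,k}$ is a convex combination of $F^{(1)}$ and $F^{(2)}$, its $q$th quantile $\theta_k$ always lies in $[\min_i\theta^{(i)},\max_i\theta^{(i)}]$, which is exactly the interval on which the density lower bound $f^{(i)}\geq c_2$ is assumed.

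The core deterministic step is carried out on the ``good event'' supplied by Lemma~\ref{lem_fluc}. With the stated choice of $b_{n,r}$ one has $C_{r,\vartheta}\sqrt{r\log n}/n^{r/2}=\tfrac{c_2}{4}b_{n,r}$, so the event
\[
\mathcal A_n=\Big\{\max_{n^r\leq k\leq n}\sup_x|\hat F_{1,k}(x)-F_{1,k}(x)|\leq \tfrac{c_2}{4}b_{n,r}\Big\}
\]
satisfies $P(\mathcal A_n)\geq 1-Kn^{-\vartheta}$. On $\mathcal A_n$ I would use $f_{1,k}\geq c_2/2$ on $[\theta_k-b_{n,r},\theta_k+b_{n,r}]$ to integrate and obtain $F_{1,k}(\theta_k+b_{n,r})\geq q+\tfrac{c_2}{2}b_{n,r}$ and $F_{1,k}(\theta_k-b_{n,r})\leq q-\tfrac{c_2}{2}b_{n,r}$. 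Combining with the fluctuation bound gives
\[
\hat F_{1,k}(\theta_k+b_{n,r})\geq q+\tfrac{c_2}{4}b_{n,r}>q,\qquad
\hat F_{1,k}(\theta_k-b_{n,r})\leq q-\tfrac{c_2}{4}b_{n,r}<q,
\]
and monotonicity of $\hat F_{1,k}$ together with the definition of $\hat\theta_k$ as a generalized inverse forces $\theta_k-b_{n,r}<\hat\theta_k\leq\theta_k+b_{n,r}$ for every $k$ in the range (the upper equality being a null event since the $Y_t^{(i)}$ are continuous). Hence $\{\max_{n^r\leq k\leq n}|\hat\theta_k-\theta_k|\geq b_{n,r}\}\subseteq\mathcal A_n^c$ up to a null set, so its probability is at most $Kn^{-\vartheta}$.

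Both conclusions then follow immediately. Because $\vartheta>1$ we have $\sum_n Kn^{-\vartheta}<\infty$, and Borel--Cantelli yields $P(\max_{n^r\leq k\leq n}|\hat\theta_k-\theta_k|\geq b_{n,r},\ \mathrm{i.o.})=0$, the first claim. For \eqref{claim}, on $\mathcal A_n$ the bound $|\hat\theta_k-\theta_k|\leq b_{n,r}$ is deterministic, so
\[
n^{d_0r/2}\max_{n^r\leq k\leq n}|\hat\theta_k-\theta_k|\leq n^{d_0r/2}b_{n,r}=\tfrac{4C_{r,\vartheta}}{c_2}\sqrt{r\log n}\;n^{-(1-d_0)r/2},
\]
and since $0<d_0<1$ the right-hand side tends to $0$; as $P(\mathcal A_n)\to1$ this gives $o_p(1)$.

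The main obstacle is not probabilistic — that content is entirely packaged in Lemma~\ref{lem_fluc} — but the careful treatment of the density lower bound near the endpoints of $[\min_i\theta^{(i)},\max_i\theta^{(i)}]$. When the subsample is pure or nearly pure, $\theta_k$ sits at or close to $\theta^{(1)}$ or $\theta^{(2)}$, so $[\theta_k-b_{n,r},\theta_k+b_{n,r}]$ protrudes slightly beyond the region where $f^{(i)}\geq c_2$ is guaranteed; the Lipschitz condition $\sup_x|f^{(i)'}(x)|\leq c_3$ is precisely what secures $f_{1,k}\geq c_2/2$ on this enlarged interval for large $n$, and the factor $4$ in the definition of $b_{n,r}$ is chosen so that this halved lower bound still dominates the fluctuation level $\tfrac{c_2}{4}b_{n,r}$.
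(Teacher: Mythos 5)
Your proof is correct and follows essentially the same sandwich argument as the paper's: bound the deterministic shift $F_{1,k}(\theta_k\pm b_{n,r})-q$ away from zero by $\tfrac{c_2}{2}b_{n,r}$ using the density lower bound together with the Lipschitz control of $f^{(i)}$, dominate it by the uniform empirical-process fluctuation from Lemma~\ref{lem_fluc}, and conclude via Borel--Cantelli (the choice $\vartheta>1$) and the trivial rate comparison $n^{d_0r/2}b_{n,r}\to 0$. The only, harmless, difference is that you control the fluctuation at $\theta_k\pm b_{n,r}$ directly with the global sup bound, whereas the paper decomposes it into the fluctuation at $\theta_k$ plus a local increment and additionally invokes the second (local-modulus) part of Lemma~\ref{lem_fluc}; either way the total is at most $\tfrac{c_2}{2}b_{n,r}$, so the two arguments coincide in substance.
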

\begin{lemma}\label{lem_small} 
For $0<r<\frac{\lambda}{2(\lambda+1)}$, 
$$n^{-1/2}\max_{1\leq k< n^r} k |\hat{\theta}(\omega_{1,k})-{\theta}(\omega_{1,k})|=o_p(1).$$
\end{lemma}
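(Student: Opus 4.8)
The plan is to exploit the fact that for $k<n^{r}$ the prefactor $k$ is at most $n^{r}$, and that the admissible range of $r$ is small enough that even a very crude bound on $|\hat\theta(\omega_{1,k})-\theta(\omega_{1,k})|$ is annihilated by the $n^{-1/2}$ normalization; this is precisely the regime where the uniform fluctuation control of Lemma \ref{lem_fluc} and the concentration of Lemma \ref{lem_bound} are unavailable, since they require $k\ge n^{r}$. First I would observe that $\hat\theta(\omega_{1,k})=\hat F_{1,k}^{-1}(q)$ is a sample $q$-quantile of the subsample $\{Y_t\}_{t=1}^{k}$, hence equals one of its order statistics, so that $|\hat\theta(\omega_{1,k})|\le \max_{1\le t\le k}|Y_t|$. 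Taking the maximum over $k$ gives $\max_{1\le k<n^{r}}|\hat\theta(\omega_{1,k})|\le M_n:=\max_{1\le t<n^{r}}|Y_t|$, replacing the delicate uniform-in-$k$ behaviour by a single extreme of the first $\lfloor n^{r}\rfloor$ observations.

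Next I would control the population side. Assuming without loss of generality $\theta_1\le\theta_2$, monotonicity of the mixture CDF yields $F^{\omega_{1,k}}(\theta_1)\le q\le F^{\omega_{1,k}}(\theta_2)$, so $\theta(\omega_{1,k})\in[\theta_1,\theta_2]$ and therefore $\sup_{1\le k<n^{r}}|\theta(\omega_{1,k})|\le C_\theta:=\max(|\theta_1|,|\theta_2|)$, a constant. Combining the two bounds,
\[
n^{-1/2}\max_{1\le k<n^{r}}k\,\bigl|\hat\theta(\omega_{1,k})-\theta(\omega_{1,k})\bigr|\le n^{r-1/2}\bigl(M_n+C_\theta\bigr).
\]
The deterministic term $C_\theta\,n^{r-1/2}\to0$ because $r<\lambda/\{2(\lambda+1)\}<1/2$, so the whole claim reduces to showing $n^{r-1/2}M_n=o_p(1)$.

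Finally I would dispose of $M_n$ using the polynomial tail bound $\max_i P(|Y_t^{(i)}|>x)\le Cx^{-\lambda}$ of Assumption \ref{ass_tail} together with a union bound over the at most $n^{r}$ summands: for any $\epsilon>0$,
\[
P\bigl(M_n>\epsilon\,n^{1/2-r}\bigr)\le n^{r}\,C\,(\epsilon\,n^{1/2-r})^{-\lambda}=C\epsilon^{-\lambda}\,n^{(1+\lambda)r-\lambda/2}.
\]
The exponent $(1+\lambda)r-\lambda/2$ is strictly negative exactly when $r<\lambda/\{2(\lambda+1)\}$, which is the hypothesis of the lemma; hence this probability vanishes and $n^{r-1/2}M_n=o_p(1)$, giving the result. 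The argument is therefore driven entirely by balancing the $\sim n^{r}$ terms of the union bound against the deviation level $\sim n^{1/2-r}$, and I do not expect a substantial obstacle beyond checking that $\hat\theta(\omega_{1,k})$ is genuinely an order statistic for all small subsample sizes (including boundary cases such as $k=1$), so that the crude bound $|\hat\theta(\omega_{1,k})|\le M_n$ holds without exception.
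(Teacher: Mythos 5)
Your proposal is correct and follows essentially the same route as the paper's own proof: bound the sample quantile by the maximum observation among the first $\lfloor n^r\rfloor$ data points, bound the population quantile by $\max(|\theta_1|,|\theta_2|)$, and finish with a union bound combined with the polynomial tail condition of Assumption \ref{ass_tail}, which yields the exponent $(1+\lambda)r-\lambda/2<0$ exactly under the hypothesis $r<\lambda/\{2(\lambda+1)\}$. No gaps.
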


\begin{lemma}\label{lem_replace}
For $2/9<r<1$, $$n^{-1/2}\max_{ n^r\leq k\leq  n} k\Big| \hat{F}_{1,k}(\hat{\theta}(\omega_{1,k}))-F_{1,k}(\hat{\theta}(\omega_{1,k}))-\hat{F}_{1,k}(\theta(\omega_{1,k}))+F_{1,k}(\theta(\omega_{1,k}))\Big|=o_p(1).$$
\end{lemma}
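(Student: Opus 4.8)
The plan is to recognize the bracketed quantity as the \emph{oscillation} (local modulus of continuity) of the centered empirical process $\hat F_{1,k}-F_{1,k}$ evaluated at the two nearby points $\hat\theta(\omega_{1,k})$ and $\theta(\omega_{1,k})$, and then to combine the two ingredients already prepared: the uniform closeness of $\hat\theta(\omega_{1,k})$ to $\theta(\omega_{1,k})$ from Lemma \ref{lem_bound}, and the uniform fluctuation bound from Lemma \ref{lem_fluc}. First I would fix $\vartheta>1$ and set $b_{n,r}=\tfrac{4C_{r,\vartheta}}{c_2}\sqrt{r\log n}/n^{r/2}$, so that by Lemma \ref{lem_bound} the event $\mathcal{E}_n=\{\max_{n^r\le k\le n}|\hat\theta(\omega_{1,k})-\theta(\omega_{1,k})|\le b_{n,r}\}$ has probability tending to one. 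On $\mathcal{E}_n$ the two arguments of the centered process differ by at most $b_{n,r}$ for every $k$, so the bracketed quantity is dominated by $\sup_{|x-y|\le b_{n,r}}|\hat F_{1,k}(x)-F_{1,k}(x)-\hat F_{1,k}(y)+F_{1,k}(y)|$.

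Next I would invoke Lemma \ref{lem_fluc} to bound this supremum, uniformly over $n^r\le k\le n$, by $C_{r,\vartheta,\iota}\,(b_{n,r}^{2/(2+\iota)}r\log n)^{1/2}/n^{r/2}$ with probability at least $1-Kn^{-\vartheta}$, after checking its side conditions $b_{n,r}=o(1)$ and $\log^5 n=o(n^r b_{n,r}^{2/(2+\iota)})$, which hold for $r>2/9$ and $\iota$ small. Multiplying by $n^{-1/2}k$ and inserting the rate of $b_{n,r}$ then controls the target. To make the factor $k$ (largest at $k=n$) pay against the fluctuation scale, rather than using the single worst-case denominator $n^{r/2}$ I would peel the range $[n^r,n]$ into dyadic blocks $2^j\le k<2^{j+1}$ and apply the localized versions of Lemmas \ref{lem_bound} and \ref{lem_fluc} on each block, with block-specific quantile scale $\asymp 2^{-j/2}\sqrt{\log n}$ and oscillation denominator $2^{j/2}$; within a block $k\asymp 2^j$ is then balanced by the $2^{-j/2}$ decay of the centered process, and summing the $O(\log n)$ blocks yields a bound of the form $n^{-1/(2(2+\iota))}(\log n)^{O(1)}=o_p(1)$.

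The restriction to $k\ge n^r$ is exactly what makes the mixing-based concentration behind Lemma \ref{lem_fluc} (ultimately Lemma \ref{lem_DKW}) applicable, since the Bernstein/Davydov estimates require each subsample to contain enough observations; the complementary short subsamples $k<n^r$ are handled separately in Lemma \ref{lem_small}. The main obstacle is precisely this uniform-in-$k$ bookkeeping: the temporal dependence degrades the empirical-process modulus from the independent scale $b_{n,r}$ to $b_{n,r}^{2/(2+\iota)}$, and one must keep the lower endpoint $n^r$ large enough for the exponential mixing inequality while small enough to match the small-$k$ regime. The lower cutoff $r>2/9$ is dictated by this compatibility: together with Lemma \ref{lem_small}, which requires $r<\lambda/(2(\lambda+1))$, a common admissible $r$ exists only because the tail condition $\lambda>4/5$ guarantees $\lambda/(2(\lambda+1))>2/9$.
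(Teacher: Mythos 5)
Your proposal is correct in substance and rests on the same two pillars as the paper's own proof --- the uniform closeness $\max_k|\hat\theta(\omega_{1,k})-\theta(\omega_{1,k})|\le b_{n,r}$ from Lemma \ref{lem_bound} and the local oscillation bound of Lemma \ref{lem_fluc} --- and you correctly diagnose that a single worst-case application over all of $[n^r,n]$ fails (it would need roughly $r>2/3$), so the range of $k$ must be split. Where you genuinely differ is in how the splitting is done. The paper uses exactly two blocks, $[n^r,n^\delta)$ and $[n^\delta,n]$, with an intermediate exponent tuned to satisfy $2/3<\delta<3r/4+1/2$ and $r<\delta<1$: on each block the factor $k$ is bounded by the block's right endpoint while the oscillation is controlled at the scale dictated by its left endpoint, and the compatibility of these two exponent constraints is exactly what forces $r>2/9$. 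Your dyadic peeling with block-specific scale $b_j\asymp 2^{-j/2}\sqrt{\log n}$ gives, on the block $2^j\le k<2^{j+1}$, a contribution of order $n^{-1/2}2^{j(1+\iota)/(2(2+\iota))}(\log n)^{O(1)}$, which is increasing in $j$ and maximized at $2^j\asymp n$, where it equals $n^{-1/(2(2+\iota))}(\log n)^{O(1)}=o(1)$; this matches the bound you state and in fact holds for every fixed $r\in(0,1)$, so your argument proves slightly more than the lemma claims (consistent with your remark that the $2/9$ is really imposed by compatibility with Lemma \ref{lem_small} inside the proof of Theorem \ref{thm_single_quant}, since $\lambda>4/5$ is what makes $\lambda/(2(\lambda+1))>2/9$). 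The one thing you still owe the reader is the ``localized'' versions of Lemmas \ref{lem_bound} and \ref{lem_fluc} on each dyadic block with $2^j\ge n^r$: as stated they take the maximum over all of $[n^r,n]$ with denominator $n^{r/2}$, but the identical proofs go through with $n^r$ replaced by $2^j$, with failure probability $\sum_{k\ge 2^j}Kk^{-\tau}$ that remains summable over the $O(\log n)$ blocks for $\tau$ large. With that routine verification written out, your proof is complete; the paper's two-block version is more economical but buys its brevity at the cost of the exponent bookkeeping that produces the $2/9$ hypothesis.
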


\bigskip
\noindent\textsc{\textbf{Proof Theorem \ref{thm_single_quant}}:}
\begin{proof}
Note that $\theta(\omega_{1,k})\in[\min_{i=1,2}\theta^{(i)},\max_{i=1,2}\theta^{(i)}] $, we have $f_{1,k}(\theta(\omega_{1,k}))\geq c_2$ for all $k$. By symmetry, we only need to show $\sup_k k |r(\omega_{1,k})|=o_p(n^{1/2})$.

It suffices to show 
$$
n^{-1/2}\max_{1\leq k\leq n}k |f_{1,k}(\theta(\omega_{1,k})) \big(\hat{\theta}(\omega_{1,k})-{\theta}(\omega_{1,k})\big)-\hat{F}_{1,k}(\hat{\theta}(\omega_{1,k}))+\hat{F}_{1,k}(\theta(\omega_{1,k}))|=o_p(1).
$$

Let $\lambda>4/5$, and $2/9<r<\frac{\lambda}{2(\lambda+1)}<1/2$.  Hence, using the boundedness of $\hat{F}_{1,k}(\cdot)$, $\sup_xf_{1,k}(x)\leq c_1$ and Lemma \ref{lem_small}, it suffices to show 
$$
n^{-1/2}\max_{n^{r}\leq k\leq n}k |f_{1,k}(\theta(\omega_{1,k})) \big(\hat{\theta}(\omega_{1,k})-{\theta}(\omega_{1,k})\big)-\hat{F}_{1,k}(\hat{\theta}(\omega_{1,k}))+\hat{F}_{1,k}(\theta(\omega_{1,k}))|=o_p(1).
$$
By Lemma \ref{lem_replace}, it suffices to show 
$$
n^{-1/2}\max_{n^{r}\leq k\leq n}k |f_{1,k}(\theta(\omega_{1,k})) \big(\hat{\theta}(\omega_{1,k})-{\theta}(\omega_{1,k})\big)-F_{1,k}(\hat{\theta}(\omega_{1,k}))+F_{1,k}(\theta(\omega_{1,k}))|=o_p(1).
$$
Using the Taylor's expansion, it suffices to show 
$$
n^{-1/2}\max_{n^{r}\leq k\leq n}k \sup_x f'_{1,k}(x) \big(\hat{\theta}(\omega_{1,k})-{\theta}(\omega_{1,k})\big)^2=o_p(1).
$$
Choose  $d_0$ such that $\frac{1}{2 d_{0}} \leq r d_{0}+1 / 2$,  $\delta$ such that $r<\frac{1}{2 d_{0}} \leq \delta \leq r d_{0}+1 / 2$.
Then \begin{flalign*}
\frac{1}{\sqrt{n}} \max _{n^{r}\leq k< n^{\delta}}k \big(\hat{\theta}(\omega_{1,k})-{\theta}(\omega_{1,k})\big)^2\leq &\max _{ n^r \leq k<n^{\delta}} n^{\delta-1 / 2}\big(\hat{\theta}(\omega_{1,k})-{\theta}(\omega_{1,k})\big)^2\\\leq& \max _{ n^r \leq k<n^{\delta}}  n^{rd_0}\big(\hat{\theta}(\omega_{1,k})-{\theta}(\omega_{1,k})\big)^2=o_p(1),
\end{flalign*}
by Lemma \ref{lem_bound}. Similarly, 
$$
\frac{1}{\sqrt{n}} \max _{ n^{\delta}\leq k\leq n}k \big(\hat{\theta}(\omega_{1,k})-{\theta}(\omega_{1,k})\big)^2\leq \max _{n^{\delta}\leq k\leq n} n^{1/ 2}\big(\hat{\theta}(\omega_{1,k})-{\theta}(\omega_{1,k})\big)^2\leq \max _{n^{\delta}\leq k\leq n}  n^{\delta d_0}\big(\hat{\theta}(\omega_{1,k})-{\theta}(\omega_{1,k})\big)^2=o_p(1).
$$
\end{proof}

\noindent\textsc{Proof of Lemma \ref{lem_fluc}}
\begin{proof}
Note that $\sqrt{\log(k)}/k^{1/2}\leq \sqrt{r\log(n)}/n^{r/2}$ for $k\geq n^r$, then
\begin{flalign*}
&P\left(\max _{n^r\leq  k\leq n } \sup _{x }\left|\hat{F}_{1,k}(x)-F_{1,k}(x)\right|>C_{r, \vartheta} \frac{\sqrt{ r\log (n)}}{n^{r / 2}}\right) \\\leq& \sum_{k=n^r}^n P\left( \sup _{x }\left|\hat{F}_{1,k}(x)-F_{1,k}(x)\right|>C_{r, \vartheta} \frac{\sqrt{ \log (k)}}{k^{1/ 2}}\right) 
\leq \sum_{k=n^r}^nKk^{-\tau}\leq Kn^{1-r\tau},
\end{flalign*}
where the second inequality holds by Lemma \ref{lem_DKW}.
Choose $\tau$ large enough such that $1-r\tau=-\vartheta$ then gives the result.

The second inequality holds similarly. To see this, note that 
\begin{flalign*}
&P\left(\max _{n^r\leq  k\leq n } \sup _{|x-y|\leq b_{n,r} }\left|\hat{F}_{1,k}(x)-F_{1,k}(x)-\hat{F}_{1,k}(y)+F_{1,k}(y)\right|>C_{r, \vartheta,\iota} \frac{\sqrt{b_{n,r}^{2/(2+\iota)} r\log (n)}}{n^{r / 2}}\right)\\
\leq & \sum_{k=n^r}^n P\left(\sup _{|x-y|\leq b_{n,r} }\left|\hat{F}_{1,k}(x)-F_{1,k}(x)-\hat{F}_{1,k}(y)+F_{1,k}(y)\right|>C_{r, \vartheta,\iota} \frac{\sqrt{b_{n,r}^{2/(2+\iota)} \log (k)}}{k^{1/ 2}}\right)\\
\leq&\sum_{k=n^r}^nKk^{-\tau}\leq Kn^{1-r\tau}.
\end{flalign*}
where the second inequality holds by  Lemma \ref{lem_DKW}, and   that for all $n^r\leq k\leq n$, $\frac{\log^5(k) }{k}\leq\frac{r^5\log^5(n) }{n^r} =o(b_{n,r}^{2/(2+\iota)})$. 
\end{proof}

\noindent\textsc{Proof of Lemma \ref{lem_bound}}
\begin{proof}
Recall $\hat{F}_{1,k}(\hat{\theta}(\omega_{1,k}))=q,$ it suffices to show $P(  \max _{ n^{r}\leq k\leq n}\hat{F}_{1,k}({\theta}(\omega_{1,k})-b_{n,r})-q>0, i.o.)=0,$ and $P(  \min _{ n^{r}\leq k\leq n}\hat{F}_{1,k}({\theta}(\omega_{1,k})+b_{n,r})-q<0, i.o.)=0,$ 

By symmetry, we only prove for the case 
$$P(  \max _{ n^{r}\leq k\leq n}\hat{F}_{1,k}({\theta}(\omega_{1,k})-b_{n,r})-q>0, i.o.)=0.$$

Recall also $F_{1,k}(\theta(\omega_{1,k}))=q$,
\begin{flalign*}
&\max _{ n^{r}\leq k\leq n}\hat{F}_{1,k}({\theta}(\omega_{1,k})-b_{n,r})-q
\\=&\max _{ n^{r}\leq k\leq n}F_{1,k}({\theta}(\omega_{1,k})-b_{n,r})-q+\hat{F}_{1,k}({\theta}(\omega_{1,k})-b_{n,r})-F_{1,k}({\theta}(\omega_{1,k})-b_{n,r})-\hat{F}_{1,k}(\theta(\omega_{1,k}))+q\\&+\hat{F}_{1,k}({\theta}(\omega_{1,k}))-F_{1,k}({\theta}(\omega_{1,k}))
\\\leq&\max _{ n^{r}\leq k\leq n} F_{1,k}({\theta}(\omega_{1,k})-b_{n,r})-F_{1,k}({\theta}(\omega_{1,k}))+\max _{ n^{r}\leq k\leq n}\sup_{|x-y|\leq b_{n,r}} |\hat{F}_{1,k}(x)-F_{1,k}(x)-\hat{F}_{1,k}(y)+F_{1,k}(y)|\\&+\max _{ n^{r}\leq k\leq n}\sup_x|\hat{F}_{1,k}(x)-F_{1,k}(x)|\\
:=&R_{1}+R_{2}+R_{3}.
\end{flalign*}

By Taylor's expansion,  $R_1\leq \max _{ n^{r}\leq k\leq n}\{-f_{1,k}(\theta(\omega_{1,k}))b_{n,r}+\sup_x|{f}'_{1,k}(x)| b_{n,r}^2$\}. Note that $\theta(\omega_{1,k})\in[\min_{i=1,2}\theta^{(i)},\max_{i=1,2}\theta^{(i)}] $, we have $f_{1,k}(\theta(\omega_{1,k}))\geq c_2$ for all $k$. Hence,   observe $b_{n,r}=o(1)$ and $\sup_x|f'_{1,k}(x)|<\infty$, we have  $R_1\leq -c_2b_{n,r}/2$.
Thus,
\begin{flalign*}&P(  \max _{ n^{r}\leq k\leq n}\hat{F}_{1,k}({\theta}(\omega_{1,k})-b_{n,r})-q>0, i.o.)
\leq P(R_2+R_3\geq c_2b_{n,r}/2, i.o.)
\\\leq &P(\max _{ n^{r}\leq k\leq n}\sup_x|\hat{F}_{1,k}(x)-F_{1,k}(x)|\geq c_2b_{n,r}/4,i.o.)\\&+P(\max _{ n^{r}\leq k\leq n}\sup_{|x-y|\leq b_{n,r}} |\hat{F}_{1,k}(x)-F_{1,k}(x)-\hat{F}_{1,k}(y)-F_{1,k}(y)|\geq c_2b_{n,r}/4,i.o.).
\end{flalign*}

Plug in that $b_{n,r}=4C_{r,\vartheta}/c_2\sqrt{r\log(n)}/n^{r/2}$ with $\vartheta>1,$ the first probability is equivalent to  $$
P(\max _{ n^{r}\leq k\leq n}\sup_x|\hat{F}_{1,k}(x)-F_{1,k}(x)|\geq C_{r,\vartheta}\sqrt{r\log(n)}/n^{r/2},i.o.),
$$
which is zero by Lemma \ref{lem_fluc} and the Borel-Cantelli lemma.

Similarly, note that $b_{n,r}=o(1)$, hence for any $\iota>0$, and $\vartheta=2$, $$C_{r,\vartheta,\iota}\frac{\sqrt{b_{n,r}^{2/(2+\iota)}r\log(n)}}{n^{r/2}}\ll C_{r,\vartheta}\sqrt{r\log(n)}/n^{r/2}=c_2b_{n,r}/4. $$ Therefore, by Lemma \ref{lem_fluc} and the Borel-Cantelli lemma, the second probability  is also zero.

The second argument is clear by noting $n^{d_0r/2}=o(n^{r/2}/\sqrt{\log(n)})$ for $d_0<1$.
\end{proof}

\noindent\textsc{Proof of Lemma \ref{lem_small}}
\begin{proof}
By the fact that $n^{-1/2}\max_{1\leq k< n^r} k{\theta}(\omega_{1,k})\leq n^{r-1/2}\max_{i=1,2}|\theta^{(i)}|=o(1)$, it suffices to show that 
$$
n^{r-1/2}|\hat{\theta}(\omega_{1,k})|\leq n^{r-1/2}\max_{1\leq i\leq n^r}|X_i|=o_p(1).$$

In fact, for any $\epsilon>0$,
\begin{flalign*}
P(n^{r-1/2}\max_{1\leq i\leq n^r}|X_i|>\epsilon)=&P(\max_{1\leq i\leq n^r}|X_i|>n^{1/2-r} \epsilon)]
\\\leq & \sum_{i=1}^{n^r} P(|X_i|>n^{1/2-r} \epsilon)]
\\\leq & {n^r}C[n^{1/2-r} \epsilon]^{-\lambda}
\\=& C n^{\lambda(r-1/2)+r}=o(1),
\end{flalign*}
by the assumption that $r(\lambda+1)-\lambda/2<0$.
\end{proof}

\noindent\textsc{Proof of Lemma \ref{lem_replace}}
\begin{proof}
For any $\epsilon>0$, choose $2/3<\delta<3/4r+1/2$ and that $r<\delta<1$, then \begin{flalign*}
&P(n^{-1/2}\max_{n^r\leq k\leq  n} k\Big| \hat{F}_{1,k}(\hat{\theta}(\omega_{1,k}))-F_{1,k}(\hat{\theta}(\omega_{1,k}))-\hat{F}_{1,k}(\theta(\omega_{1,k}))+F_{1,k}(\theta(\omega_{1,k}))\Big|>\epsilon)\\\leq& P(n^{-1/2}\max_{ n^{\delta}\leq k\leq  n} k\Big| \hat{F}_{1,k}(\hat{\theta}(\omega_{1,k}))-F_{1,k}(\hat{\theta}(\omega_{1,k}))-\hat{F}_{1,k}(\theta(\omega_{1,k}))+F_{1,k}(\theta(\omega_{1,k}))\Big|>\epsilon)\\&+
P(n^{-1/2}\max_{n^{r}\leq k< n^{\delta}} k\Big| \hat{F}_{1,k}(\hat{\theta}(\omega_{1,k}))-F_{1,k}(\hat{\theta}(\omega_{1,k}))-\hat{F}_{1,k}(\theta(\omega_{1,k}))+F_{1,k}(\theta(\omega_{1,k}))\Big|>\epsilon).
\end{flalign*}
We need to bound these two probabilities separately. 

Let $d_1\in(0,1)$ such that $\delta/2+d_1\delta/4>1/2$, and then fix a small $\iota_1>0$ such that $d_1(2+\iota_1)<2$. Let $a_{n,\delta}=n^{-d_1(2+\iota_1)\delta/4}$. The first part is bounded by 
\begin{flalign*}
&P(\max_{ n^{\delta}\leq k\leq  n}\sup_{|x-y|\leq a_{n,\delta}} \Big| \hat{F}_{1,k}(x)-F_{1,k}(x)-\hat{F}_{1,k}(y)+F_{1,k}(y)\Big|>n^{-1/2}\epsilon)+P(\max _{n^{\delta}\leq k\leq  n}\left|\hat{\theta}(\omega_{1,k})-{\theta}(\omega_{1,k})\right|\geq a_{n,\delta})\\
\leq& P(\max_{ n^{\delta}\leq k\leq  n}\sup_{|x-y|\leq a_{n,\delta}} \Big| \hat{F}_{1,k}(x)-F_{1,k}(x)-\hat{F}_{1,k}(y)+F_{1,k}(y)\Big|>C_{\delta,\vartheta,\iota_1}\frac{\sqrt{a_{n,\delta}^{2/(2+\iota_1)}\log (n^{\delta})}}{n^{\delta/2}})\\&+P(\max _{n^{\delta}\leq k\leq  n}\left|\hat{\theta}(\omega_{1,k})-{\theta}(\omega_{1,k})\right|\geq a_{n,\delta})\\
\leq &O(n^{-\vartheta})+o(1)=o(1),
\end{flalign*}
where the first inequality holds by noting $n^{-1/2}\epsilon>C_{\delta,\vartheta,\iota_1}\frac{\sqrt{\log(n^{\delta})}}{n^{\delta/2+d_1\delta/4}}=C_{\delta,\vartheta,\iota_1}\frac{\sqrt{a_{n,\delta}^{2/(2+\iota_1)}\log (n^{\delta})}}{n^{\delta/2}},$ and second by Lemma \ref{lem_fluc} with $\vartheta>1$, and (\ref{claim}) in Lemma \ref{lem_bound}.

The second part is similar. Note that we can choose a constant $0<d_2<1$ (close to 1) such that $\delta-1 / 2<\left(d_2 / 4+1 / 2\right) r$. In addition, choose $\iota_2$ such that $d_2(2+\iota_2)<2$.   Denote $a_{n,r}=n^{-d_2(2+\iota_2)r/4}$, then the second part is bounded by 

\begin{flalign*}
&P(\max_{ n^r\leq k\leq n^{\delta}}\sup_{|x-y|\leq a_{n,r}} \Big| \hat{F}_{1,k}(x)-F_{1,k}(x)-\hat{F}_{1,k}(y)+F_{1,k}(y)\Big|>n^{1/2-\delta}\epsilon)\\&+P(\max _{n^r\leq k\leq n^{\delta}}\left|\hat{\theta}(\omega_{1,k})-{\theta}(\omega_{1,k})\right|\geq a_{n,r})\\
\leq& P(\max_{ n^r\leq k\leq n^{\delta}}\sup_{|x-y|\leq a_{n,r}} \Big| \hat{F}_{1,k}(x)-F_{1,k}(x)-\hat{F}_{1,k}(y)+F_{1,k}(y)\Big|>C_{r,\vartheta,\iota_2}\frac{\sqrt{a_{n,r}^{2/(2+\iota_2)}\log (n^r)}}{n^{r/2}})\\+&P(\max _{n^r\leq k\leq n^{\delta}}\left|\hat{\theta}(\omega_{1,k})-{\theta}(\omega_{1,k})\right|> a_{n,r})\\
\leq &O(n^{-\vartheta})+o(1)=o(1),
\end{flalign*}
where the fact that $n^{1/2-\delta}\epsilon>C_{r,\vartheta,\iota_2}\frac{\sqrt{\log(n^r)}}{n^{r/2+d_2r/4}}=C_{r,\vartheta,\iota_2}\frac{\sqrt{a_{n,r}^{2/(2+\iota_2)}\log(n^r)}}{n^{r/2}}$ is used.
\end{proof}

\subsubsection{Results regarding \Cref{thm_multiple_quant}}

Denote $\hat{F}_{a,b}(x)$ be the empirical CDF based on observations $\{Y_t\}_{t=a}^b$, and $F_{a,b}(x)$ as the true mixture CDF with density function ${f}_a^b(x)$. The proof of Theorem \ref{thm_multiple_quant} also requires four lemmas, which are analogous to Lemma \ref{lem_fluc}-Lemma \ref{lem_replace} used in the proof of \Cref{thm_single_quant}.
\begin{lemma}\label{lem_fluc2}
For all $0<r<1$, and  $\vartheta>1$, there exists constants $C_{r, \vartheta}>0$ and $K$ such that 
$$P\left(\max _{\substack{1 \leq a<b \leq n \\|b-a| \geq n^{r}}} \sup _{x }\left|\hat{F}_{a,b}(x)-F_{a,b}(x)\right|>C_{r, \vartheta} \frac{\sqrt{ r\log (n)}}{n^{r / 2}}\right) \leq Kn^{-\vartheta} .$$
Furthermore, for any $\iota>0$, there exists a constant $C_{r, \vartheta,\iota}>0$
$$P\left(\max _{\substack{1 \leq a<b \leq n \\|b-a| \geq n^{r}}} \sup _{|x-y|\leq  b_{n,r} }\left|\hat{F}_{a,b}(x)-{F}_{a,b}(x)-\hat{F}_{a,b}(y)+{F}_{a,b}(y)\right|>C_{r, \vartheta,\iota}\frac{\sqrt{b_{n,r}^{2/(2+\iota)}r \log (n)}}{n^{r / 2}}\right) \leq Kn^{-\vartheta},$$
where $b_{n,r}$ satisfies $b_{n,r}=o(1)$ and  $\log^5(n)=o(n^rb_{n,r}^{2/(2+\iota)})$.
\end{lemma}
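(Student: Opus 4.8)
The plan is to mirror the proof of Lemma \ref{lem_fluc}, replacing the union bound over the single index $k$ by a union bound over the pair $(a,b)$. First I would observe that for any fixed pair $1\le a<b\le n$ the block $\{Y_t\}_{t=a}^{b}$ is itself an $\alpha$-mixing sequence of length $m:=b-a+1$ whose mixing coefficients are dominated by those of the full sequence, so Assumption \ref{ass_mixing} supplies the same exponential bound $\alpha(k)=\exp(-c_0 k)$; moreover the marginals $F^{(t)}$, $t\in[a,b]$, satisfy $\sup_x f^{(t)}(x)\le c_1$ (Assumption \ref{ass_bound}) and the polynomial tail bound $P(|Y_t|>x)\le C x^{-\lambda}$ (Assumption \ref{ass_tail2}) with constants that do not depend on the block. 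Crucially, the mean CDF $F_{a,b}=\frac{1}{m}\sum_{t=a}^{b}F^{(t)}$ may be a within-block mixture of several segment distributions when change-points fall inside $[a,b]$, but this is exactly the non-identically-distributed setting for which Lemma \ref{lem_DKW} was stated. Hence Lemma \ref{lem_DKW} applies to $\hat{F}_{a,b}-F_{a,b}$ with constants $C_\tau, C_{\tau,\iota}, K$ that may be chosen uniformly over all blocks.

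For the first assertion I would fix $\tau$ (to be chosen) and apply Lemma \ref{lem_DKW} to the block of length $m\ge n^r$, obtaining
$$P\Big[\sup_x|\hat{F}_{a,b}(x)-F_{a,b}(x)|>C_\tau \frac{\log^{1/2}(m)}{m^{1/2}}\Big]\le K m^{-\tau}.$$
Since $x\mapsto \log^{1/2}(x)/x^{1/2}$ is decreasing for large $x$ and $m\ge n^r$, one has $\log^{1/2}(m)/m^{1/2}\le \sqrt{r\log(n)}/n^{r/2}$, so the displayed event contains the event that the fluctuation exceeds $C_{r,\vartheta}\sqrt{r\log(n)}/n^{r/2}$ after renaming the constant. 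Summing over the at most $\binom{n}{2}\le n^2$ admissible pairs and using $m^{-\tau}\le n^{-r\tau}$ gives
$$P\Big[\max_{\substack{1\le a<b\le n\\ |b-a|\ge n^r}}\sup_x|\hat{F}_{a,b}-F_{a,b}|>C_{r,\vartheta}\frac{\sqrt{r\log n}}{n^{r/2}}\Big]\le n^2\,K\,n^{-r\tau}=K n^{2-r\tau}.$$
Choosing $\tau=(2+\vartheta)/r$ makes $2-r\tau=-\vartheta$, which is the claimed rate. The second (increment) assertion follows identically from the second bound in Lemma \ref{lem_DKW}, once I check that its bandwidth condition $\log^5(m)=o(m\,b_{n,r}^{2/(2+\iota)})$ holds for every $m\ge n^r$: indeed $\log^5(m)\le\log^5(n)=o(n^r b_{n,r}^{2/(2+\iota)})\le o(m\,b_{n,r}^{2/(2+\iota)})$ by the hypothesis imposed on $b_{n,r}$.

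The only genuinely delicate point, and the one I would spell out with care, is the uniformity of the constants in Lemma \ref{lem_DKW} across the roughly $n^2$ blocks, since the block length $m$, its marginals, and the number and position of change-points within $[a,b]$ all vary. This is why I emphasize at the outset that $c_0, c_1, \lambda$ and $C$ are global constants, so the conclusion of Lemma \ref{lem_DKW} holds with block-independent $C_\tau, C_{\tau,\iota}, K$; granting this, the rest is a routine union bound together with the same $\sqrt{\log m}/\sqrt{m}$ monotonicity trick already used for Lemma \ref{lem_fluc}, the sole difference being the extra factor of $n$ in the cardinality of the index set, which is absorbed by enlarging $\tau$ from $(1+\vartheta)/r$ to $(2+\vartheta)/r$.
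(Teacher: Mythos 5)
Your proposal is correct and is essentially the proof the paper omits: Lemma \ref{lem_fluc2} is proved in the paper by the remark ``similar to Lemma \ref{lem_fluc}, hence omitted,'' and your argument is precisely that adaptation — apply Lemma \ref{lem_DKW} blockwise (its constants depend only on the global $c_0$, $c_1$, $\lambda$, so they are uniform over blocks), use the monotonicity of $\sqrt{\log m}/\sqrt{m}$ for $m\ge n^r$, and absorb the $n^2$ cardinality of the index set by enlarging $\tau$ so that $2-r\tau=-\vartheta$. Your explicit attention to the uniformity of the constants across blocks and to the bandwidth condition $\log^5(m)=o(m\,b_{n,r}^{2/(2+\iota)})$ fills in exactly the details the paper leaves implicit.
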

\begin{proof}
Similar to the proof of Lemma \ref{lem_fluc}, hence omitted.
\end{proof}

\begin{lemma}\label{lem_bound2}
Suppose for some constant $c_2>0$,  $f^{(i)}(x)\geq c_2$ for $x\in[\min_{i}\theta^{(i)},\max_{i}\theta^{(i)}]$, and $\sup_x|f^{(i)'}(x)|<\infty$ for $i=1,\cdots m_o+1$. Let $b_{n,r}=4C_{r,\vartheta}/c_2\sqrt{r\log(n)}/n^{r/2}$ with $\vartheta>1$, for all  $0<r<1$, $$P(  \max _{\substack{1 \leq a<b \leq n \\|b-a| \geq n^{r}}} \left|\hat{\theta}(\omega_{a,b})-{\theta}(\omega_{a,b})\right| \geq b_{n, r}, i.o.)=0.$$

In addition,  for all $0<r<1$ and $0<d_0<1$,
\begin{equation}\label{claim}
n^{d_0r/2} \max _{\substack{1 \leq a<b \leq n \\|b-a| \geq n^{r}}} \left|\hat{\theta}(\omega_{a,b})-{\theta}(\omega_{a,b})\right|=o_p(1).
\end{equation}
\end{lemma}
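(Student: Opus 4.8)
The plan is to transcribe the argument of Lemma \ref{lem_bound} to the two-index maximum, substituting the uniform-over-intervals fluctuation bound of Lemma \ref{lem_fluc2} for its one-index counterpart Lemma \ref{lem_fluc}. Since $\hat{F}_{a,b}(\hat{\theta}(\omega_{a,b}))=q$ and $F_{a,b}(\theta(\omega_{a,b}))=q$ for every pair $(a,b)$, monotonicity of the empirical CDF reduces the event $\{|\hat{\theta}(\omega_{a,b})-\theta(\omega_{a,b})|\geq b_{n,r}\}$ to controlling $\max_{(a,b)}\{\hat{F}_{a,b}(\theta(\omega_{a,b})-b_{n,r})-q\}$ and $\min_{(a,b)}\{\hat{F}_{a,b}(\theta(\omega_{a,b})+b_{n,r})-q\}$, where the extrema run over $\{1\leq a<b\leq n:|b-a|\geq n^r\}$. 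By symmetry I would treat only the first. Adding and subtracting $F_{a,b}(\theta(\omega_{a,b})-b_{n,r})$, $F_{a,b}(\theta(\omega_{a,b}))$ and $\hat{F}_{a,b}(\theta(\omega_{a,b}))$ splits this quantity into a deterministic drift $R_1=\max_{(a,b)}[F_{a,b}(\theta(\omega_{a,b})-b_{n,r})-F_{a,b}(\theta(\omega_{a,b}))]$, a local fluctuation $R_2=\max_{(a,b)}\sup_{|x-y|\leq b_{n,r}}|\hat{F}_{a,b}(x)-F_{a,b}(x)-\hat{F}_{a,b}(y)+F_{a,b}(y)|$, and a uniform deviation $R_3=\max_{(a,b)}\sup_x|\hat{F}_{a,b}(x)-F_{a,b}(x)|$.

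The only substantive departure from Lemma \ref{lem_bound} is verifying the uniform lower bound on the mixture density at the quantile, now that both endpoints move. Because $F_{a,b}=\sum_i\omega^{(i)}_{a,b}F^{(i)}$ is a convex combination, its $q$th quantile $\theta(\omega_{a,b})$ always lies in $[\min_i\theta^{(i)},\max_i\theta^{(i)}]$: if $x<\min_i\theta^{(i)}$ then $F^{(i)}(x)<q$ for all $i$, so $F_{a,b}(x)<q$ and hence $\theta(\omega_{a,b})\geq\min_i\theta^{(i)}$, with the upper bound symmetric. On this interval each $f^{(i)}\geq c_2$, so $f_{a,b}(\theta(\omega_{a,b}))=\sum_i\omega^{(i)}_{a,b}f^{(i)}(\theta(\omega_{a,b}))\geq c_2$ uniformly in $(a,b)$, while $\sup_x|f'_{a,b}(x)|\leq\max_i\sup_x|f^{(i)'}(x)|<\infty$. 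A second-order Taylor expansion then yields $R_1\leq\max_{(a,b)}[-f_{a,b}(\theta(\omega_{a,b}))b_{n,r}+\sup_x|f'_{a,b}(x)|\,b_{n,r}^2]\leq-c_2b_{n,r}/2$ for $n$ large, since $b_{n,r}=o(1)$.

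Consequently $\{R_1+R_2+R_3>0\text{ i.o.}\}\subseteq\{R_3\geq c_2b_{n,r}/4\text{ i.o.}\}\cup\{R_2\geq c_2b_{n,r}/4\text{ i.o.}\}$. With $b_{n,r}=4C_{r,\vartheta}c_2^{-1}\sqrt{r\log n}/n^{r/2}$, the first event is exactly the deviation event bounded in the first part of Lemma \ref{lem_fluc2}; for the second, since $C_{r,\vartheta,\iota}\sqrt{b_{n,r}^{2/(2+\iota)}r\log n}/n^{r/2}\ll C_{r,\vartheta}\sqrt{r\log n}/n^{r/2}=c_2b_{n,r}/4$ as $b_{n,r}=o(1)$, it is dominated by the second bound of Lemma \ref{lem_fluc2}. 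Both bounds are of order $n^{-\vartheta}$ with $\vartheta>1$, hence summable, so Borel--Cantelli gives that each event occurs only finitely often, which proves the almost-sure claim. The second assertion (the display with factor $n^{d_0r/2}$) then follows identically: since $n^{d_0r/2}=o(n^{r/2}/\sqrt{\log n})$ for $d_0<1$, the rescaled error is $o_p(1)$.

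The main obstacle is bookkeeping rather than analysis: one must control $R_2$ and $R_3$ uniformly over the two-parameter family $\{(a,b):|b-a|\geq n^r\}$ instead of the one-parameter family $\{(1,k)\}$ handled in Lemma \ref{lem_bound}. This is precisely what Lemma \ref{lem_fluc2} supplies, via a union bound over the $O(n^2)$ admissible pairs that is absorbed by the polynomial tail $n^{-\tau}$; once that lemma is granted, the present argument is a near-verbatim transcription of Lemma \ref{lem_bound}.
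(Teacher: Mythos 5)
Your proposal is correct and follows exactly the route the paper intends: the paper's proof of this lemma is simply "Similar to the proof of Lemma \ref{lem_bound}, hence omitted," and your write-up is precisely that adaptation — the same $R_1+R_2+R_3$ decomposition, the same Taylor bound $R_1\leq -c_2 b_{n,r}/2$ using that $\theta(\omega_{a,b})$ stays in $[\min_i\theta^{(i)},\max_i\theta^{(i)}]$ so the mixture density is bounded below by $c_2$ there, with Lemma \ref{lem_fluc2} substituted for Lemma \ref{lem_fluc} to handle the two-index maximum, and Borel--Cantelli closing the argument. The only detail you could add for completeness is a one-line check that $b_{n,r}$ satisfies the side condition $\log^5(n)=o(n^r b_{n,r}^{2/(2+\iota)})$ required to invoke the second bound of Lemma \ref{lem_fluc2}, but this is immediate from $b_{n,r}\asymp\sqrt{\log n}\,n^{-r/2}$ and is also glossed over in the paper's proof of Lemma \ref{lem_bound}.
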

\begin{proof}
Similar to the proof of Lemma \ref{lem_bound}, hence omitted.
\end{proof}

\begin{lemma}\label{lem_small2} 
For $0<r<1/2-1/\lambda$, 
$$n^{-1/2} \max _{\substack{1 \leq a<b \leq n \\|b-a| < n^{r}}} (b-a) |\hat{\theta}(\omega_{a,b})-{\theta}(\omega_{a,b})|=o_p(1).$$
\end{lemma}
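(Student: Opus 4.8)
The plan is to mirror the proof of Lemma \ref{lem_small}, the one structural change being that both endpoints $a$ and $b$ now vary freely, so that the relevant empirical object becomes the maximum of $|Y_t|$ over the \emph{entire} sample rather than over the first $n^r$ observations. First I would split the target quantity via the triangle inequality into a deterministic piece involving $\theta(\omega_{a,b})$ and a stochastic piece involving $\hat\theta(\omega_{a,b})$. For the deterministic piece, since each mixture quantile $\theta(\omega_{a,b})$ lies in the bounded interval $[\min_i\theta_i,\max_i\theta_i]$ and $|b-a|<n^r$, one has
$$n^{-1/2}\max_{\substack{1\le a<b\le n\\|b-a|<n^r}}(b-a)\,|\theta(\omega_{a,b})|\le n^{r-1/2}\max_i|\theta_i|=o(1),$$
because $r<1/2$; this requires no probabilistic input.

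The substantive part is the stochastic piece. The empirical $q$th quantile $\hat\theta(\omega_{a,b})$ of the subsample $\{Y_t\}_{t=a}^{b}$ is an order statistic of that subsample, hence $|\hat\theta(\omega_{a,b})|\le\max_{a\le t\le b}|Y_t|\le\max_{1\le t\le n}|Y_t|$, so that
$$n^{-1/2}\max_{\substack{1\le a<b\le n\\|b-a|<n^r}}(b-a)\,|\hat\theta(\omega_{a,b})|\le n^{r-1/2}\max_{1\le t\le n}|Y_t|.$$
It then suffices to show $n^{r-1/2}\max_{1\le t\le n}|Y_t|=o_p(1)$. For any $\varepsilon>0$, a union bound together with the tail condition $\max_iP(|Y_t^{(i)}|>x)\le Cx^{-\lambda}$ of Assumption \ref{ass_tail2} gives
$$P\Big(n^{r-1/2}\max_{1\le t\le n}|Y_t|>\varepsilon\Big)\le\sum_{t=1}^{n}P\big(|Y_t|>n^{1/2-r}\varepsilon\big)\le C\varepsilon^{-\lambda}\,n^{\,1-\lambda(1/2-r)},$$
which tends to $0$ precisely when $1-\lambda(1/2-r)<0$, i.e.\ when $r<1/2-1/\lambda$. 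This is exactly the range assumed in the statement.

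The single genuinely new bookkeeping point — and the reason the admissible range $r<1/2-1/\lambda$ here differs from the range $r<\lambda/(2(\lambda+1))$ in Lemma \ref{lem_small} — is that the free movement of both endpoints forces the union bound to run over all $n$ observations rather than over the $n^r$ observations available when the left endpoint is pinned at $1$. The resulting factor $n^{1-r}$ (in place of $n^{r}$) in the union bound is precisely why the tail index must satisfy $\lambda>2$ for the range to be nonempty, consistent with the strengthening from Assumption \ref{ass_tail} to the $\lambda>18/5$ requirement of Assumption \ref{ass_tail2}, as anticipated in the discussion preceding Theorem \ref{thm_multiple_quant}. Beyond this accounting the argument introduces no new ideas and the ``hard part'' is merely to track the exponent carefully; no analogue of the delicate fluctuation estimates in Lemmas \ref{lem_fluc2}--\ref{lem_bound2} is needed here, since the small-window regime is handled entirely by the crude tail bound above.
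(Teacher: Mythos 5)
Your proposal is correct and follows essentially the same route as the paper's own proof: reduce to bounding $n^{r-1/2}\max_{1\le t\le n}|Y_t|$ via the order-statistic bound and the boundedness of $\theta(\omega_{a,b})$, then apply a union bound over all $n$ observations together with the polynomial tail condition, yielding the rate $n^{1-\lambda(1/2-r)}\to 0$ exactly when $r<1/2-1/\lambda$. Your remark on why the admissible range differs from that of Lemma \ref{lem_small} (the union bound now runs over all $n$ indices because both endpoints move) matches the paper's own explanation.
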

\begin{proof}
By the fact that $\theta(\omega_{a,b})\in[\min_i\theta_i,\max_i\theta_i]$, it suffices to show that 
$$
n^{r-1/2}|\hat{\theta}(\omega_{a,b})|\leq n^{r-1/2}\max_{1\leq i\leq n}|X_i|=o_p(1).$$
Here, because $a,b$ may take any values in $\{1,\cdots,n\}$, we need a union bound to control for both indices. In Lemma \ref{lem_small}, only the end index varies, and  is strictly smaller than $n^r$.

For any $\epsilon>0$,
\begin{flalign*}
P(n^{r-1/2}\max_{1\leq i\leq n}|X_i|>\epsilon)=&P(\max_{1\leq i\leq n}|X_i|>n^{1/2-r} \epsilon)]
\\\leq & \sum_{i=1}^{n} P(|X_i|>n^{1/2-r} \epsilon)]
\\\leq & {n}C[n^{1/2-r} \epsilon]^{-\lambda}
\\=& C n^{\lambda(r-1/2)+1}=o(1),
\end{flalign*}
by the assumption that $\lambda(r-1/2)<-1$.
\end{proof}

\begin{lemma}\label{lem_replace2}
For $2/9<r<1$, $$n^{-1/2}\max _{\substack{1 \leq a<b \leq n \\|b-a| < n^{r}}} (b-a)\Big| \hat{F}_{a,b}(\hat{\theta}(\omega_{a,b}))-F_{a,b}(\hat{\theta}(\omega_{a,b}))-\hat{F}_{a,b}(\theta(\omega_{a,b}))+F_{a,b}(\theta(\omega_{a,b}))\Big|=o_p(1).$$
\end{lemma}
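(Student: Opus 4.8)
The plan is to reduce the claim to a uniform fluctuation bound for the partial-sum empirical process on short windows, and then to control that bound by a Bernstein-type maximal inequality combined with a union bound over the two free endpoints. Set $G_{a,b}(x)=\hat F_{a,b}(x)-F_{a,b}(x)$ and $S_{a,b}(x)=(b-a+1)\,G_{a,b}(x)=\sum_{t=a}^{b}\mathbf 1(Y_t\le x)-(b-a+1)F_{a,b}(x)$. The bracketed quantity in the lemma equals $G_{a,b}(\hat\theta(\omega_{a,b}))-G_{a,b}(\theta(\omega_{a,b}))$, so that
\begin{equation*}
(b-a)\bigl|\hat F_{a,b}(\hat\theta)-F_{a,b}(\hat\theta)-\hat F_{a,b}(\theta)+F_{a,b}(\theta)\bigr|\le \bigl|S_{a,b}(\hat\theta(\omega_{a,b}))-S_{a,b}(\theta(\omega_{a,b}))\bigr|\le 2\sup_x\bigl|S_{a,b}(x)\bigr|.
\end{equation*}
In contrast to Lemma \ref{lem_replace} for long windows, here neither the consistency rate of $\hat\theta(\omega_{a,b})$ (Lemma \ref{lem_bound2} requires $|b-a|\ge n^r$) nor the modulus-of-continuity bound of Lemma \ref{lem_fluc2} is available; since the window is short I instead discard the location of $\hat\theta$ and $\theta$ entirely and dominate both evaluations by the supremum of $|S_{a,b}|$. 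It therefore suffices to prove $n^{-1/2}\max_{|b-a|<n^{r}}\sup_x|S_{a,b}(x)|=o_p(1)$.

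The trivial estimate $\sup_x|S_{a,b}(x)|\le b-a+1<n^r+1$ already gives $n^{-1/2}\sup_x|S_{a,b}|=O(n^{r-1/2})=o(1)$ when $r<1/2$, but it is wasteful once $r\ge 1/2$. The sharper target is a uniform bound $\max_{a,b}\sup_x|S_{a,b}(x)|=O_p\bigl(\sqrt{n^r\log n}\bigr)$ over the short-window range, for then $n^{-1/2}\sqrt{n^r\log n}=n^{(r-1)/2}\sqrt{\log n}=o(1)$ precisely because $r<1$; this is where the upper restriction on $r$ enters.

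For a single pair $(a,b)$ with $m=b-a+1$, I would bound $\sup_x|S_{a,b}(x)|$ by the same discretization used in the proof of Lemma \ref{lem_DKW}: the contribution of $|x|\ge J$, with $J$ a suitable polynomial power of $n$, is dispatched by Markov's inequality and the polynomial tail bound (Assumption \ref{ass_tail2}), while on $|x|\le J$ one lays down a grid of polynomially many points, bridges the gaps using the step-function monotonicity of $\sum_t\mathbf 1(Y_t\le\cdot)$ together with the bounded-density bound $\sup_x f^{(i)}\le c_1$ (Assumption \ref{ass_bound}), and applies the Bernstein inequality for $\alpha$-mixing sequences (Lemma \ref{lem_berstein}, i.e.\ Theorem 2 of \cite{merlevede2009bernstein}) at each grid point. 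The decisive choice is to take the deviation threshold proportional to $\sqrt{m\log n}$ rather than $\sqrt{m\log m}$: since the per-point variance proxy is of order $m$, the Bernstein exponent is then of order $\log n$, so the per-point, per-pair probability is at most $n^{-c}$ with $c$ made arbitrarily large by enlarging the proportionality constant.

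Finally I would union-bound this $n^{-c}$ estimate over the $O(n^{1+r})$ admissible pairs $(a,b)$ and the $\mathrm{poly}(n)$ grid points, choosing $c$ larger than the total polynomial exponent. The main obstacle — and the reason Theorem \ref{thm_multiple_quant} needs the stronger tail Assumption \ref{ass_tail2} ($\lambda>18/5$) in place of Assumption \ref{ass_tail} — is exactly this simultaneous uniformity in \emph{both} endpoints $a$ and $b$ as well as in $x$: a $\sqrt{m\log m}$-scaled threshold (adequate for the single free index in Lemma \ref{lem_DKW}) leaves an un-absorbed factor equal to the number of admissible starting points, so the budget only closes after switching to a $\log n$-scaled threshold and strengthening the tail control so that the extreme-$x$ region can be handled across all $O(n^2)$ index pairs at once.
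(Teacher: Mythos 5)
Your proposal proves the lemma exactly as printed, and it does so by a genuinely different route from the paper's. The paper's proof is omitted with the remark that it is ``similar to the proof of Lemma \ref{lem_replace}'', i.e.\ it localizes the increment $\hat F_{a,b}(\hat\theta)-\hat F_{a,b}(\theta)$ to a shrinking band $|x-y|\le a_n$ using the consistency of $\hat\theta(\omega_{a,b})$ (Lemma \ref{lem_bound2}) and then invokes the local fluctuation bound (Lemma \ref{lem_fluc2}). You instead discard the locations of $\hat\theta(\omega_{a,b})$ and $\theta(\omega_{a,b})$ altogether, dominate the quantity by $2\sup_x|S_{a,b}(x)|$, and prove a bound $O_p\bigl(\sqrt{n^r\log n}\bigr)$ uniformly in $(a,b,x)$ over windows of length at most $n^r$ by re-running the Lemma \ref{lem_DKW} discretization with a $\sqrt{m\log n}$ (rather than $\sqrt{m\log m}$) threshold, so that the Bernstein tail beats the $O(n^{2})$ union bound over index pairs. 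That argument is sound (modulo the routine caveat that windows of length of order $\log^5 n$ or shorter must be handled by the trivial bound $\sup_x|S_{a,b}(x)|\le b-a+1$, since Lemma \ref{lem_berstein} needs the variance proxy to dominate the $\log^2$ term in the Bernstein exponent), and it is cheaper than the paper's route in that it needs neither Lemma \ref{lem_bound2} nor Lemma \ref{lem_fluc2}.

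Be aware, however, that the condition $|b-a|<n^r$ in the statement is almost certainly a typo for $|b-a|\ge n^r$. Three things point to this: the single-change-point analogue Lemma \ref{lem_replace} is stated for the long windows $n^r\le k\le n$; the proof of Theorem \ref{thm_multiple_quant} uses Lemma \ref{lem_small2} to dispose of all short windows $|b-a|<n^r$ (where, since $r<1/2-1/\lambda<1/2$ in that application, the replacement term is trivially $O(n^{r-1/2})=o(1)$) and then needs the replacement lemma precisely on the complementary long windows; and the declared proof ``similar to Lemma \ref{lem_replace}'' only makes sense there. Your reduction to $2\sup_x|S_{a,b}(x)|$ is exactly what fails in that intended regime: for windows of length comparable to $n$ it yields only $O_p(\sqrt{n\log n})$, i.e.\ $O_p(\sqrt{\log n})$ after the $n^{-1/2}$ scaling, which is not $o_p(1)$. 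So while your argument correctly proves the printed statement (and supplies its only non-trivial content, the regime $r\ge 1/2$, which the paper never uses), it cannot substitute for the lemma the paper actually relies on; for that one must track where $\hat\theta(\omega_{a,b})$ sits relative to $\theta(\omega_{a,b})$ and use the modulus-of-continuity bound, as in the paper's route.
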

\begin{proof}
Similar to the proof of Lemma \ref{lem_replace}, hence omitted.
\end{proof}

\bigskip

\noindent\textsc{\textbf{Proof of Theorem \ref{thm_multiple_quant}}:}
\begin{proof}
It is similar to the proof of Theorem \ref{thm_single_quant}. The only difference is that we need $2/9<r<1/2-1/\lambda$ instead of $2/9<r<\frac{\lambda}{2(\lambda+1)}$ when applying Lemma \ref{lem_small2}. Note $2/9<r<1/2-1/\lambda$ requires $\lambda >18/5$, hence Assumption \ref{ass_tail2} instead of Assumption \ref{ass_tail} is imposed.
\end{proof}


\newpage
\section{Consistency of SNCP for a general univariate parameter}\label{sec:consistencygeneral}

\subsection{Proof of Theorem \ref{thm_onechange}}
The proof of $\mathrm{(i)}$ is similar to the proof of Theorem 3.1 in \cite{shao2010testing}, and we can show that $\max_{k\in\{1,\cdots,n-1\}}T_n(k)=O_p^s(1)$. Therefore, we mainly focus on $\mathrm{(ii)}$. 

\noindent Let $M_{n1}=\{k|k_1-k>\iota_n\}$ and $M_{n2}=\{k|k-k_1>\iota_n\}$. 
Since
$$
\max_{k\in\{k|k_1-\iota_n\leq k\leq k_1+\iota_n\}}T_n(k)\geq T_n(k_1),
$$
by the symmetricity of $M_{n1}$ and $M_{n2}$, it suffices to show that  $n^{-1}\delta^{-2}\max_{k\in M_{n1}}T_n(k)=o_p(1)$ and $T_n(k_1)\geq O_p^s(n\delta^2)$,  respectively.

The proof lies in showing the following intermediate results:
\begin{enumerate}[(1)]
\item $\max_{k\in M_{n1}}D_n(k)^2 \leq Cn\delta^2+O_p(1)$;
\item $\max_{k\in M_{n1}}V_{n}(k)^{-1}\leq O_p(n[\delta\iota_n]^{-2})$;
\item $V_n(k_1)^{-1}=O_p^s(1)$ and $D_n(k_1)^2= Cn\delta^2+O_p(1)$.
\end{enumerate}

(1) For $k\leq k_1$, we have $\omega_{1,k}=(1,0)^{\top}$ and $\omega_{k+1,n}=(\frac{k_1-k}{n-k},\frac{n-k_1}{n-k})^{\top}$. Then
\begin{flalign*}
&D_n(k)\\=&\frac{k(n-k)}{n^{3/2}}(\widehat{\theta}_{1,k}-\widehat{\theta}_{k+1,n})
\\=&\frac{k(n-k)}{n^{3/2}} [\theta_1+\frac{1}{k}\sum_{t=1}^{k}\xi_1(Y_t^{(1)},\omega_{1,k})+r_{1,k}(\omega_{1,k})]
\\&-\frac{k(n-k)}{n^{3/2}}[\theta(\omega_{k+1,n})+\frac{1}{n-k}\sum_{t=k+1}^{k_1}\xi_1(Y_{t}^{(1)},\omega_{k+1,n})+\frac{1}{n-k}\sum_{t=k_1+1}^{n}\xi_2(Y_{t}^{(2)},\omega_{k+1,n})+r_{k+1,n}(\omega_{k+1,n})]
\\=&\frac{k(n-k)}{n^{3/2}}[\theta_1-\theta(\omega_{k+1,n})] 
\\&+\frac{k(n-k)}{n^{3/2}}[\frac{1}{k}\sum_{t=1}^{k}\xi_1(Y_t^{(1)},\omega_{1,k})-\frac{1}{n-k}\sum_{t=k+1}^{k_1}\xi_1(Y_{t}^{(1)},\omega_{k+1,n})-\frac{1}{n-k}\sum_{t=k_1+1}^{n}\xi_2(Y_{t}^{(2)},{\omega_{k+1,n}})]
\\&+\frac{k(n-k)}{n^{3/2}}[r_{1,k}(\omega_{1,k})-r_{k+1,n}(\omega_{k+1,n})]\\:=&D_{n1}(k)+D_{n2}(k)+D_{n3}(k).
\end{flalign*}

\noindent Note  
$
D_n(k)^2\leq 3\sum_{i=1}^{3}D_{ni}(k)^2,
$
where under Assumption \ref{ass_influence} and \ref{ass_remainder}, it is easy to see   $\max_{k\in M_{n1}}D_{n2}(k)^2\leq O_p(1)$ and $\max_{k\in M_{n1}}D_{n3}(k)^2=o_p(1)$. By Assumption \ref{ass_theta}, we have $\max_{k\in M_{n1}}D_{n1}(k)^2\leq Cn\delta^2$, the result is clear.

(2) We decompose 
$
V_n(k)=L_{n}(k)+R_{n1}(k)+R_{n2}(k),
$
where 
\begin{flalign*}
L_{n}(k)=&\sum_{i=1}^{k}\frac{i^2(k-i)^2}{n^2k^2}(\widehat{\theta}_{1,i}-\widehat{\theta}_{i+1,k})^2,\\
R_{n1}(k)=&\sum_{i=k+1}^{k_1}\frac{(n-i+1)^2(i-k-1)^2}{n^2(n-k)^2}(\widehat{\theta}_{i,n}-\widehat{\theta}_{k+1,i-1})^2,\\
R_{n2}(k)=&\sum_{i=k_1+1}^{n}\frac{(n-i+1)^2(i-k-1)^2}{n^2(n-k)^2}(\widehat{\theta}_{i,n}-\widehat{\theta}_{k+1,i-1})^2.
\end{flalign*}
For $k\leq k_1$, we further decompose $R_{n2}(k)$ such that
\begin{flalign*}
&R_{n2}(k)\\=&n^{-2}\sum_{i=k_1+1}^{n}\frac{(n-i+1)^2(i-k-1)^2}{(n-k)^2}\Big\{\Big[\theta_2-\theta(\omega_{k+1,i-1})\Big]\\&+\Big[\frac{1}{n-i+1}\sum_{t=i}^{n}\xi_2(Y_t^{(2)},\omega_{i,n})-\frac{1}{i-k-1}\sum_{t=k+1}^{k_1}\xi_1(Y_t^{(1)},\omega_{k+1,i-1})-\frac{1}{i-k-1}\sum_{t=k_1+1}^{i-1}\xi_2(Y_t^{(2)},\omega_{k+1,i-1})\Big]\\&+\Big[r_{i,n}(\omega_{i,n})-r_{k+1,i-1}(\omega_{k+1,i-1})\Big]\Big\}^2,
\end{flalign*}
where $\omega_{k+1,i-1}=(\frac{k_1-k}{i-k-1},\frac{i-k_1-1}{i-k-1})^{\top}$ and $\omega_{i,n}=(0,1)^{\top}$.

\noindent Note that $V_n(k)\geq R_{n2}(k)$, we have $R_{n2}(k)^{-1}\geq V_n(k)^{-1}$, hence it suffices to show $\max_{k\in M_{n1}}R_{n2}(k)^{-1}\leq O_p(n[\delta\iota_n]^{-2})$.

\noindent Denote $A_k=\sum_{t=k_1+1}^{n}a_t(k)^2$, $B_k=\sum_{t=k_1+1}^{n}[b_{t}(k)+c_{t}(k)]^2$ and $C_k=-2\sum_{t=k_1+1}^{n}a_t(k)[b_{t}(k)+c_{t}(k)]$, where 
\begin{flalign*}
a_t(k)=&\frac{(n-i+1)(i-k-1)}{(n-k)}\Big[\theta_2-\theta(\omega_{k+1,i-1})\Big],\\
b_{t}(k)=&\frac{(n-i+1)(i-k-1)}{(n-k)}\Big\{\frac{1}{n-i+1}\sum_{t=i}^{n}\xi_2(Y_t^{(2)},\omega_{i,n})\\&-\frac{1}{i-k-1}\Big[\sum_{t=k+1}^{k_1}\xi_1(Y_t^{(1)},\omega_{k+1,i-1})+\sum_{t=k_1+1}^{i-1}\xi_2(Y_t^{(2)},\omega_{k+1,i-1})\Big]\Big\},\\
c_{t}(k)=&\frac{(n-i+1)(i-k-1)}{(n-k)}\Big[r_{i,n}(\omega_{i,n})-r_{k+1,i-1}(\omega_{k+1,i-1})\Big],
\end{flalign*}
then we get 
$$
R_{n2}(k)=n^{-2}[A_k+B_k+C_k].
$$
By Hua's identity, we obtain 
\begin{flalign}\label{hua}
R_{n2}(k)^{-1}=n^2A_k^{-1}-n^{2}A_k^{-1}[1+(B_k+C_k)^{-1}A_k]^{-1}.
\end{flalign}
Note that there exists some constant $0<c_1<c_2<\infty$ independent of $k$ such that 
$
c_1n^3\leq \sum_{i=k_1+1}^{n}(n-i+1)^2\leq c_2n^3,
$
and by Assumption \ref{ass_theta}, we have 
$
C_1\frac{(k_1-k)^2}{(i-k-1)^2}\leq \Big[\theta_2-\theta(\omega_{k+1,i-1})\Big]^2\leq C_2\frac{(k_1-k)^2}{(i-k-1)^2},
$
then it follows that 
$$
c_1C_1\frac{(k_1-k)^2n^3\delta^2}{(n-k)^2}\leq A_k\leq c_2C_2\frac{(k_1-k)^2n^3\delta^2}{(n-k)^2},
$$
or 
equivalently, \begin{equation}\label{supAinv}
\max_{k\in M_{n1}}(A_k^{-1})\leq Cn^{-1}[\delta\iota_n]^{-2}.
\end{equation}
In addition, by (\ref{hua}), we have 
$
R_{n2}(k)^{-1}=\frac{n^2A_k^{-1}}{1+A_k^{-1}[B_k+C_k]},
$
that is 
\begin{flalign}\label{supV3inv}
\max_{k\in M_{n1}}R_{n2}(k)^{-1}\leq  \frac{Cn[\delta\iota_n]^{-2}}{1+\min_{k\in M_{n1}}\{A_k^{-1}[B_k+C_k]\}}\leq  \frac{Cn[\delta\iota_n]^{-2}}{1-2\max_{k\in M_{n1}}\{A_k^{-1/2}B_k^{1/2}\}}.
\end{flalign}
where the second inequality holds by Cauchy-Schwarz inequality, that $-2A_k^{-1/2}B_k^{1/2}\leq A_k^{-1}C_k\leq2 A_k^{-1/2}B_k^{1/2}$ and $A_k^{-1}B_k\geq 0$.
Using (\ref{supAinv}), we have 
$$
\max_{k\in M_{n1}}\{A_k^{-1}B_k\}\leq \max_{k\in M_{n1}} (A_k^{-1})(\max_{k\in M_{n1}}B_k)\leq Cn[\delta\iota_n]^{-2}\max_{k\in M_{n1}}n^{-2}B_k.
$$
Under Assumptions  \ref{ass_influence} and \ref{ass_remainder}, we can show that 
\begin{flalign*}
n^{-2}\max_{k\in M_{n1}}B_k=O_p(1).
\end{flalign*}
Hence, using the fact that $n\delta^{-2}\iota_n^{-2}=o(1)$, we obtain $\max_{k\in M_{n1}}\{A_k^{-1}B_k\}=o_p(1)$. Therefore, in view of (\ref{supV3inv}), we obtain 
\begin{equation}\label{V3invbound}
\max_{k\in M_{n1}}R_{n2}(k)^{-1}\leq O_p(n[\delta\iota_n]^{-2}).
\end{equation}

(3)
Under Assumption \ref{ass_influence} and \ref{ass_remainder}, we can show that $V_{n}(k_1)\to_{D} V(\tau_1),$
where 
$
V(\tau_1)=\sigma_1^2 \int_0^{\tau_1} \{B^{(1)}(r)-r/u B^{(1)}(\tau_1)\}^2dr +\sigma_2^2 \int_{\tau_1}^{1} \{B^{(2)}(1)-B^{(2)}(r)-(1-r)/(1-\tau_1)(B^{(2)}(1)-B^{(2)}(\tau_1))\}^2dr.
$
In addition, the similar arguments used in proving (1) give us the second part of (3).\qed

\subsection{Proof of Theorem \ref{thm1}}\label{subsec: outline_proof}
To ease the presentation, we assume $\theta_{a,b}=(\theta_1,\cdots,\theta_{m_o+1})\omega_{a,b}$ in Assumption \ref{ass_no3} as the residual $o(1/\sqrt{n})$ is asymptotically negligible as long as $\iota_n^{-2}\delta^{-2}n\to 0$ as $n\to\infty$. The proof under Assumption \ref{ass_no3star} is similar based on the argument in the proof of Theorem \ref{thm_onechange}.


Let $M_n=\{k||k-k_i|>\iota_n, \forall i=1,\cdots, m_o\}$ denote the set of time points that are at least of $\iota_n$ points away from the true change-point locations. 
The basic idea of the consistency proof is as follows. Based on the location of $k$ and its local window $(t_1,t_2)\in H_{1:n}(k)$, our analysis of $T_n(t_1,k,t_2)$ boils down to the following three scenarios:

\begin{enumerate}
\item When $k$ is a true change-point $k_i$, using the fact that $\epsilon<\epsilon_o$, we have that the smallest local-window $(k_i-h+1,k_i+h)$ contains only one single change-point $k_i$. Thus, the test statistic $T_{1,n}(k_i)$ at $k_i$ is at least as large as $T_n(k_i-h+1,k_i,k_i+h)$, which is shown to be of order $O_p^s(n\delta^2)$ due to the inflation of the contrast statistic $D_n(k_i-h+1,k_i,k_i+h)$;
\item When $k\in M_n$ and the local-window $(t_1,t_2)$ contains no change-points, we use the invariance principle to show that  $T_n(t_1,k,t_2)$ is of order $O_p(1)$, a smaller order compared with $O_p(n\delta^2)$;
\item When $k\in M_n$ and  the local-window $(t_1,t_2)$ contains some change-points, we further show that the presence of the change-points causes the inflation of the self-normalizer $V_n(t_1,k,t_2)$, which in turn causes $T_n(t_1,k,t_2)$ to take a smaller order compared with $O_p(n\delta^2)$.
\end{enumerate}
Note that the three scenarios imply that asymptotically the distance between the estimated change-point $\widehat{k}_i$ and the corresponding true location $k_i$ is of order $O(\iota_n)$. Together with the fact that $\epsilon < \epsilon_o$ and $\iota_n=o(n)$, it can be shown that the impact of the estimation error $|\widehat{k}_i-k_i|$ is negligible for the subsequent change-point estimation, which ensures the same convergence rate $\iota_n$ for all the estimated change-points by SNCP.


The major part of the proof focuses on establishing the result in scenario 3, where $k\in M_n$ and the local-window $(t_1,t_2)$ contains some change-points. Different analysis is required depending on the number and relative locations of the change-points contained in the local-window $(t_1, t_2)$, which makes the proof rather complicated.


The proof proceeds step-by-step, from the case of single change-point to the case of two change-points, then to the case of three  or more change-points.  Every step builds upon the result obtained in the previous step. For example, when there are two change-points, but if the local-window  $(t_1,t_2)$ of $k$ only contains one change-point or both change-points appear on the same side of  $k$ (such as $k_1<k_2<k$), then we can show the analysis reduce to the case of single change-point.

\subsubsection{No change-point}
When $m_0=0$, note $\omega_{a,b}\equiv1$ (as we only have one stationary segment), it follows that 
$$
\widehat{\theta}_{a,b}=\theta_1+\bar{\xi}_{a,b}(1)+r_{a,b}(1).$$
Then we have  
\begin{flalign*}
D_n(t_1,k,t_2)=&\frac{(k-t_1+1)(t_2-k)}{(t_2-t_1+1)^{3/2}}\Big([\bar{\xi}_{t_1,k}(1)-\bar{\xi}_{k+1,t_2}(1)]+[r_{t_1,k}(1)-r_{k+1,t_2}(1)]\Big),\\
L_n(t_1,k,t_2)=&\sum_{i=t_1}^{k}\frac{(i-t_1+1)^2(k-i)^2}{(t_2-t_1+1)^{2}(k-t_1+1)^2}\Big([\bar{\xi}_{t_1,i}(1)-\bar{\xi}_{i+1,k}(1)]+[r_{t_1,i}(1)-r_{i+1,k}(1)]\Big)^2,\\
R_n(t_1,k,t_2)=&\sum_{i=k+1}^{t_2}\frac{(t_2-i+1)^2(i-1-k)^2}{(t_2-t_1+1)^{2}(t_2-k)^2}\Big([\bar{\xi}_{i,t_2}(1)-\bar{\xi}_{k+1,i-1}(1)]+[r_{i,t_2}(1)-r_{k+1,i-1}(1)]\Big)^2,\\
V_n(t_1,k,t_2)=&L_n(t_1,k,t_2)+R_n(t_1,k,t_2).
\end{flalign*}
Note that under Assumption \ref{ass_no2}, we can show 
$$
\frac{(k-t_1+1)(t_2-k)}{(t_2-t_1+1)^{3/2}}\Big|r_{t_1,k}(1)-r_{k+1,t_2}(1)\Big|\leq \frac{(k-t_1+1)}{(t_2-t_1+1)^{1/2}}\Big|r_{t_1,k}(1)\Big|+\frac{(t_2-k)}{(t_2-t_1+1)^{1/2}}\Big|r_{k+1,t_2}(1)\Big|=o_p(1),
$$
and 
$$\frac{(i-t_1+1)(k-i)}{(t_2-t_1+1)(k-t_1+1)}\Big|r_{t_1,i}(1)-r_{i+1,k}(1)\Big|\leq \Big|\frac{(i-t_1+1)}{(t_2-t_1+1)}r_{t_1,i}(1)\Big|+\Big|\frac{(k-i)}{(t_2-t_1+1)}r_{i+1,k}(1)\Big|=o_p(n^{-1/2})$$
uniformly for $t_1\leq i\leq k$, and 
$$\frac{(t_2-i+1)(i-1-k)}{(t_2-t_1+1)(t_2-k)}\Big|r_{i,t_2}(1)-r_{k+1,i-1}(1)\Big|\leq \Big|\frac{(t_2-i+1)}{(t_2-t_1+1)}r_{i,t_2}(1)\Big|+\Big|\frac{(i-1-k)}{(t_2-t_1+1)}r_{k+1,i-1}(1)\Big|=o_p(n^{-1/2}),$$
uniformly for $k+1 \leq i \leq t_2$.

\noindent Continuous mapping theorem and Assumption \ref{ass_no1} indicates that 
$$\max_{k=h,\ldots,n-h} T_{1,n}(k) \to_D \sup_{u\in(\epsilon,1-\epsilon)}\max_{(u_1,u_2)\in H_{\epsilon}(u)}\frac{D(u_1,u,u_2)^2}{V(u_1,u,u_2)}=O_p^s(1),$$
where \begin{align*}
D(u_1,u,u_2)&=\frac{1}{(u_2-u_1)^{1/2}}\left\{B(u)-B(u_1)-\frac{u-u_1}{u_2-u_1}(B(u_2)-B(u_1)) \right\},\\
L(u_1,u,u_2)&=\frac{1}{(u_2-u_1)^2}\int_{u_1}^{u}\left[B(s)-B(u_1)-\frac{s-u_1}{u-u_1}(B(u)-B(u_1)) \right]^2ds,\\
R(u_1,u,u_2)&=\frac{1}{(u_2-u_1)^2}\int_{u}^{u_2}\left[B(u_2)-B(s)-\frac{u_2-s}{u_2-u}(B(u_2)-B(u)) \right]^2ds, \\
V(u_1,u,u_2)&=L(u_1,u,u_2)+R(u_1,u,u_2).
\end{align*}
Therefore, 
\begin{equation}\label{eq:first_round_SNCP_0}
P(\widehat{m}=0)=P(\max_{k=h,\ldots,n-h} T_{1,n}(k)<K_n)\to1.
\end{equation}

In the following, we first analyze the behavior of $\widehat{k}=\arg\max_{k=h,\ldots,n-h} T_{1,n}(k)$ generated by applying SNCP to the time series $\{Y_t\}_{t=1}^{n}$ when $m_o>0$ and prove that
\begin{align}
\label{eq:first_round_SNCP}
P\left(\max_{k=h,\ldots,n-h} T_{1,n}(k)>K_n \text{ and }\min_{1\leq i\leq m_o}|k_i-\widehat{k}|<\iota_n\right) \to 1.
\end{align}
In other words, when $m_o>0$, SNCP can detect the change and the estimated change-point $\widehat{k}$ converges to one of the true change-points with rate $\iota_n$.

\noindent Note $\min_{1\leq i\leq m_o+1}(k_i-k_{i-1})>\lfloor\epsilon n\rfloor=h$, thus we can easily show that
\begin{align*}
\frac{T_{1,n}(k_i)}{n\delta^2} \geq \frac{D_{n}(k_i-h+1,k_i,k_i+h)^2}{V_{n}(k_i-h+1,k_i,k_i+h) n\delta^2}
\to_D \frac{\epsilon c_i^2}{8V(\tau_i-\epsilon,\tau_i,\tau_i+\epsilon)}=O_p^{s}(1), \text{ for } i=1,\ldots,m_o.
\end{align*}
Note that $K_n/(n\delta^2)\to 0$, thus we have $P(\max_{k=h,\ldots,n-h} T_{1,n}(k)>K_n) \to 1.$

\noindent Therefore, to prove \eqref{eq:first_round_SNCP}, we only need to focus on the set $$M_n=\{k| \min_{i=1,\ldots,m_o}|k-k_i|>\iota_n \},$$
and show that
\begin{align}
\label{eq:first_round_SNCP_null}
\max_{k \in M_n} \frac{T_{1,n}(k)}{n\delta^2}=o_p(1).
\end{align}
In the following, we prove \eqref{eq:first_round_SNCP_null} progressively for one~($m_o=1$), two~($m_o=2$) and multiple change-points~($m_o\geq 3$) cases, by building proofs gradually upon previous arguments.

\subsubsection{One change-point}
We can decompose $M_n=M_{n1}\bigcup M_{n2}$, where $M_{n1}=\{k|k_1-k>\iota_n\}$ and $M_{n2}=\{k|k-k_1>\iota_n\}$. By symmetry, we only need to prove the result for $M_{n1}$. We  choose $c_n$ such that
\begin{equation}\label{cn}
c_n^{-3}n^4\delta^{-2}\iota_n^{-2}\to 0 \quad\text{and}\quad n^{-1}c_n\to 0 \quad\text{as}\quad n\to \infty.
\end{equation}
We mention here that such choice of $c_n$ is always possible, for example, we can choose $c_n=n^{5/4}(\iota_n\delta)^{-1/2}$.

\noindent Then we decompose $H_{1:n}(k)$ as:
\begin{flalign*}
&(1)~H_{1:n}^0(k)=H_{1:n}(k)\cap \{(t_1,t_2)|t_2\leq k_1\},\\
&(2)~H_{1:n}^1(k,c_n)=H_{1:n}(k)\cap \{(t_1,t_2)|k_1+1\leq t_2\leq k_1+c_n\},\\
&(3)~H_{1:n}^2(k,c_n)=H_{1:n}(k)\cap \{(t_1,t_2)| t_2> k_1+c_n\}.
\end{flalign*}

$(\mathrm{1})$ On $H_{1:n}^0(k)$, there is no change-point, hence it follows
\begin{flalign*}
\max_{k\in M_{n1}}\max_{(t_1,t_2)\in H_{1:n}^0(k)} \frac{D_n(t_1,k,t_2)^2}{V_n(t_1,k,t_2)n\delta^2}=\frac{O_p(1)}{n\delta^2}=o_p(1).
\end{flalign*}

On $H_{1:n}^1(k,c_n)$ and $H_{1:n}^2(k,c_n)$ , we have  
\begin{flalign*}
\begin{split}
&D_n(t_1,k,t_2)\\=&\frac{(k-t_1+1)(t_2-k)}{(t_2-t_1+1)^{3/2}}(\widehat{\theta}_{t_1,k}-\widehat{\theta}_{k+1,t_2})
\\=&\frac{(k-t_1+1)(t_2-k)}{(t_2-t_1+1)^{3/2}}\Big[\theta_1-(\frac{k_1-k}{t_2-k}\theta_1+\frac{t_2-k_1}{t_2-k}\theta_2)\Big]\\&+\frac{(k-t_1+1)(t_2-k)}{(t_2-t_1+1)^{3/2}}\Big[\bar{\xi}_{t_1,k}(\omega_{t_1,k})-\bar{\xi}_{k+1,t_2}(\omega_{k+1,t_2})\Big]\\&+\frac{(k-t_1+1)(t_2-k)}{(t_2-t_1+1)^{3/2}}\Big[r_{t_1,k}(\omega_{t_1,k})-r_{k+1,t_2}(\omega_{k+1,t_2})\Big]
\\:=&D_{n1}(t_1,k,t_2)+D_{n2}(t_1,k,t_2)+D_{n3}(t_1,k,t_2),
\end{split}
\end{flalign*}
where $\omega_{t_1,k}=(1,0)^{\top}$ and $\omega_{t_1,t_2}=(\frac{k_1-k}{t_2-k},\frac{t_2-k_1}{t_2-k})^{\top}$.

\noindent Note that \begin{align*}
\max_{k\in M_{n1}}\max_{(t_1,t_2)\in \{H_{1:n}^1(k,c_n)\cup H_{1:n}^2(k,c_n)\}}\frac{D_{n}(t_1,k,t_2)^2}{n\delta^2}
\leq3\sum_{i=1}^{3}\max_{k\in M_{n1}}\max_{(t_1,t_2)\in \{H_{1:n}^1(k,c_n)\cup H_{1:n}^2(k,c_n)\}}\frac{D_{ni}(t_1,k,t_2)^2}{n\delta^2}.
\end{align*}
Under Assumptions \ref{ass_no1} and \ref{ass_no2}, on $H_{1:n}^1(k,c_n)\cup H_{1:n}^2(k,c_n)$, it is easy to see that 
\begin{flalign}\label{Dn23}
\begin{split}
&\max_{k\in M_{n1}}\max_{(t_1,t_2)\in \{H_{1:n}^1(k,c_n)\cup H_{1:n}^2(k,c_n)\}}\frac{D_{n2}(t_1,k,t_2)^2}{n\delta^2}=o_p(1),\\
&\max_{k\in M_{n1}}\max_{(t_1,t_2)\in \{H_{1:n}^1(k,c_n)\cup H_{1:n}^2(k,c_n)\}}\frac{D_{n3}(t_1,k,t_2)^2}{n\delta^2}=o_p(1).
\end{split}
\end{flalign}

$(\mathrm{2})$ On $H_{1:n}^1(k,c_n)$, 
we have 
\begin{align*}
&\max_{k\in M_{n1}}\max_{(t_1,t_2)\in H_{1:n}^1(k,c_n)}\frac{D_{n}(t_1,k,t_2)^2}{V_{n}(t_1,k,t_2)n\delta^2}\leq
\max_{k\in M_{n1}}\max_{(t_1,t_2)\in H_{1:n}^1(k,c_n)}\frac{D_{n}(t_1,k,t_2)^2}{L_{n}(t_1,k,t_2)n\delta^2}
\\\leq&3\sum_{i=1}^{3}\max_{k\in M_{n1}}\max_{(t_1,t_2)\in H_{1:n}^1(k,c_n)}\frac{D_{ni}(t_1,k,t_2)^2}{L_{n}(t_1,k,t_2)n\delta^2}
\\\leq &3\sum_{i=1}^{3}[\max_{k\in M_{n1}}\max_{(t_1,t_2)\in H_{1:n}^1(k,c_n)}\frac{D_{ni}(t_1,k,t_2)^2}{n\delta^2}][\max_{k\in M_{n1}}\max_{(t_1,t_2)\in H_{1:n}^1(k,c_n)}\frac{1}{L_{n}(t_1,k,t_2)}]
\end{align*}
Note it is easy to see that $\max_{k\in M_{n1}}\max_{(t_1,t_2)\in H_{1:n}^1(k,c_n)}[{L_{n}(t_1,k,t_2)}]^{-1}=O_p^s(1)$ and by  (\ref{Dn23}), it suffices to show 
$$
\max_{k\in M_{n1}}\max_{(t_1,t_2)\in H_{1:n}^1(k,c_n)}\frac{D_{n}^{1}(t_1,k,t_2)^2}{n\delta^2}=o(1).
$$
Note that
\begin{flalign*}
\max_{k\in M_{n1}}\max_{(t_1,t_2)\in H_{1:n}^1(k,c_n)}\frac{D_{n1}(t_1,k,t_2)^2}{n\delta^2}\leq &C\max_{k\in M_{n1}}\max_{(t_1,t_2)\in H_{1:n}^1(k,c_n)}	\frac{(t_2-k_1)^2(k-t_1+1)^2 }{n(t_2-t_1+1)^{3}}\\\leq &C\max_{k\in M_{n1}}\max_{(t_1,t_2)\in H_{1:n}^1(k,c_n)}\frac{(c_n)^2n^2 }{n(2\epsilon n)^{3}}=O(\frac{c_n^2}{n^2})=o(1),
\end{flalign*}
where the last equality holds by (\ref{cn}), the result follows. 

$(\mathrm{3})$
On the set $H_{1:n}^2(k,c_n)$,
we focus on $R_n(t_1,k,t_2)$, where 
\begin{flalign*}
R_n(t_1,k,t_2)=&\Big[
\sum_{i=k+1}^{k_1}+\sum_{i=k_1+1}^{t_2}
\Big]\frac{(t_2-i+1)^2(i-1-k)^2}{(t_2-t_1+1)^{2}(t_2-k)^2}(\widehat{\theta}_{i,t_2}-\widehat{\theta}_{k+1,i-1})^2\\:=&R_{n1}(t_1,k,t_2)+R_{n2}(t_1,k,t_2).
\end{flalign*}
Since, \begin{align*}
&\max_{k\in M_{n1}}\max_{(t_1,t_2)\in H_{1:n}^2(k,c_n)}\frac{D_{n}(t_1,k,t_2)^2}{V_{n}(t_1,k,t_2)n\delta^2}\leq
\max_{k\in M_{n1}}\max_{(t_1,t_2)\in H_{1:n}^2(k,c_n)}\frac{D_{n}(t_1,k,t_2)^2}{R^{2}_{n}(t_1,k,t_2)n\delta^2}
\\\leq&3\sum_{i=1}^{3}\max_{k\in M_{n1}}\max_{(t_1,t_2)\in H_{1:n}^2(k,c_n)}\frac{D_{ni}(t_1,k,t_2)^2}{R_{n2}(t_1,k,t_2)n\delta^2}
\\\leq &3\sum_{i=1}^{3}[\max_{k\in M_{n1}}\max_{(t_1,t_2)\in H_{1:n}^2(k,c_n)}\frac{D_{ni}(t_1,k,t_2)^2}{n\delta^2}][\max_{k\in M_{n1}}\max_{(t_1,t_2)\in H_{1:n}^2(k,c_n)}\frac{1}{R_{n2}(t_1,k,t_2)}]
\end{align*}
and 
$
\max_{k\in M_{n1}}\max_{(t_1,t_2)\in H_{1:n}^{2}(k,c_n)}\frac{D_{n1}(t_1,k,t_2)^2}{n\delta^2}\leq O(1),
$
combined with Lemma \ref{lem_1} and (\ref{Dn23}), the result follows.

\subsubsection{Two change-points}
We now consider the case where there are two change-points at $(k_1,k_2)$. We can decompose $M_n=M_{n1}\cup M_{n2}\cup M_{n3}$, where $M_{n1}=\{k|k_1-k>\iota_n\}$, $M_{n2}=\{k|k-k_1>\iota_n \text{ and } k_2-k >\iota_n\}$ and $M_{n3}=\{k|k-k_2>\iota_n\}$. By symmetry, we only need to show the result for $M_{n1}$ and $M_{n2}$.

We first consider the set $M_{n1}=\{k|k_1-k>\iota_n\}$ and analyze the behavior of $T_n(t_1,k,t_2)$ on two subsets of $(t_1,t_2)\in H_{1:n}(k)$. Define
\begin{align*}
H_{1:n}^{1,0}(k)&=H_{1:n}(k)\cap \{(t_1,t_2)|t_2\leq k_2\},\\
H_{1:n}^{1,1}(k)&=H_{1:n}(k)\cap \{(t_1,t_2)|t_2>k_2\}.
\end{align*}

For the behavior of $T_{n}(t_1,k,t_2)$ on $(t_1,t_2)\in H_{1:n}^{1,0}(k)$, it reduces to one change-point case. 

For the behavior of $T_{n}(t_1,k,t_2)$ on $(t_1,t_2)\in H_{1:n}^{1,1}(k)$, we have 
\begin{flalign*}
R_n(t_1,k,t_2)=&\Big[\sum_{i=k+1}^{k_1}+\sum_{i=k_1+1}^{k_2}+\sum_{i=k_2+1}^{t_2}\Big]\frac{(t_2-i+1)^2(i-1-k)^2}{(t_2-t_1+1)^{2}(t_2-k)^2}(\widehat{\theta}_{i,t_2}-\widehat{\theta}_{k+1,i-1})^2\\:=&R_{n1}(t_1,k,t_2)+R_{n2}(t_1,k,t_2)+R_{n3}(t_1,k,t_2).
\end{flalign*}

Similar arguments used in the proof of one change-point case indicates that 
$$
\max_{k\in M_{n1}}\max_{(t_1,t_2)\in H_{1:n}^{1,1}(k)}\frac{[D_n(t_1,k,t_2)]^2}{n\delta^2}\leq O_p(1),
$$
the result follows by Lemma \ref{lem_2} that 
$$
\max_{k\in M_{n1}}\max_{(t_1,t_2)\in H_{1:n}^{1,1}(k)}\frac{T_{n}(t_1,k,t_2)}{n\delta^2}\leq \max_{k\in M_{n1}}\max_{(t_1,t_2)\in H_{1:n}^{1,1}(k)}\frac{[D_n(t_1,k,t_2)]^2}{n\delta^2}\max_{k\in M_{n1}}\max_{(t_1,t_2)\in H_{1:n}^{1,1}(k)}R_{n2}(t_1,k,t_2)^{-1}.
$$

We now focus on the set $M_{n2}=\{k|k-k_1>\iota_n \text{ and } k_2-k >\iota_n\}$ and analyze the behavior of $T_{n}(t_1,k,t_2)$ on three subsets of $(t_1,t_2)\in H_{1:n}(k)$. Define
\begin{align*}
H_{1:n}^{2,0}(k)&=H_{1:n}(k)\cap \{(t_1,t_2)|t_1>k_1\},\\
H_{1:n}^{2,1}(k)&=H_{1:n}(k)\cap \{(t_1,t_2)|t_1\leq k_1,t_2\leq k_2\},\\
H_{1:n}^{2,2}(k)&=H_{1:n}(k)\cap \{(t_1,t_2)|t_1\leq k_1,t_2> k_2\}.
\end{align*}

For the behavior of $T_n(t_1,k,t_2)$ on $(t_1,t_2)\in H_{1:n}^{2,0}(k)$ and $(t_1,t_2)\in H_{1:n}^{2,1}(k)$, it reduces to one change-point case. 
For $H_{1:n}^{2,2}(k)$, we further split it into two subsets
\begin{align*}
&(1)~H_{1:n}^{2,21}(k,c_n)=H_{1:n}^{2,2}(k)\cap \{(t_1,t_2)|t_1\leq k_1-c_n \text{ or }t_2> k_2+c_n\},\\
&(2)~H_{1:n}^{2,22}(k,c_n)=H_{1:n}^{2,2}(k)\cap \{(t_1,t_2)|t_1>k_1-c_n \text{ and }t_2\leq k_2+c_n\},
\end{align*}
where  $c_n$ satisfies (\ref{cn}).

(1) For $k\in M_{n2}$ and $(t_1,t_2)\in H_{1:n}^{2,21}(k,c_n)$, without loss of generality, we assume $t_2>k_2+c_n$, then we have 
\begin{flalign*}
R_n(t_1,k,t_2)=&\Big[\sum_{i=k+1}^{k_2}+\sum_{i=k_2+1}^{t_2}\Big]\frac{(t_2-i+1)^2(i-1-k)^2}{(t_2-t_1+1)^{2}(t_2-k)^2}(\widehat{\theta}_{i,t_2}-\widehat{\theta}_{k+1,i-1})^2\\:=&R_{n1}(t_1,k,t_2)+R_{n2}(t_1,k,t_2).
\end{flalign*}
A similar argument used in Lemma \ref{lem_1} would yield that 
$$
\max_{k\in M_{n2}}\max_{(t_1,t_2)\in H_{1:n}^{2,21}(k,c_n)} [R_{n2}(t_1,k,t_2)]^{-1}\leq (C+o_p(1))\frac{n^4}{\delta^2c_n^3\iota_n^2}=o_p(1).
$$
This is true since $\theta_{i,t_2}=\theta_3$ and $\theta_{k+1,i-1}=\frac{k_2-k}{i-1-k}\theta_2+\frac{i-1-k_2}{i-1-k}\theta_3$, and 
\begin{flalign*}
&\min_{k\in M_{n2}}\min_{(t_1,t_2)\in H_{1:n}^{2,21}(k,c_n)}\sum_{i=k_2+1}^{t_2}\frac{(t_2-i+1)^2(i-1-k)^2}{(t_2-t_1+1)^{2}(t_2-k)^2}({\theta}_{i,t_2}-{\theta}_{k+1,i-1})^2\\=&\min_{k\in M_{n2}}\min_{(t_1,t_2)\in H_{1:n}^{2,21}(k,c_n)}\sum_{i=k_2+1}^{t_2}\frac{(t_2-i+1)^2(k_2-k)^2}{(t_2-t_1+1)^{2}(t_2-k)^2}\delta_2^2>C\frac{(t_2-k_2)^3\iota_n^2\delta^2}{n^4}>C\frac{\iota_n^2c_n^3\delta^2}{n^4}
\end{flalign*}
It is easy to see that $\max_{k\in M_{n2}}\max_{(t_1,t_2)\in H_{1:n}^{2,21}(k,c_n)}[D_n(t_1,k,t_2)]^2=O_p(n\delta^2)$, the result follows.

(2) For $k\in M_{n2}$ and $(t_1,t_2)\in H_{1:n}^{2,22}(k,c_n)$, we have that 
\begin{flalign}\label{Dn2}
\begin{split}
D_n(t_1,k,t_2)=&\frac{(k-t_1+1)(t_2-k)}{(t_2-t_1+1)^{3/2}}(\widehat{\theta}_{t_1,k}-\widehat{\theta}_{k+1,t_2})
\\=&\frac{(k-t_1+1)(t_2-k)}{(t_2-t_1+1)^{3/2}}\Big[\frac{k_1-t_1+1}{k-t_1+1}\theta_1+\frac{k-k_1}{k-t_1+1}\theta_2-\frac{k_2-k}{t_2-k}\theta_2-\frac{t_2-k_2}{t_2-k}\theta_3\Big]\\&+\frac{(k-t_1+1)(t_2-k)}{(t_2-t_1+1)^{3/2}}\Big[\bar{\xi}_{t_1,k}(\omega_{t_1,k})-\bar{\xi}_{k+1,t_2}(\omega_{k+1,t_2})\Big]\\&+\frac{(k-t_1+1)(t_2-k)}{(t_2-t_1+1)^{3/2}}\Big[r_{t_1,k}(\omega_{t_1,k})-r_{k+1,t_2}(\omega_{k+1,t_2})\Big]
\\:=&D_{n1}(t_1,k,t_2)+D_{n2}(t_1,k,t_2)+D_{n3}(t_1,k,t_2),
\end{split}
\end{flalign}
where $\omega_{t_1,k}=(\frac{k_1-t_1+1}{k-t_1+1},\frac{k-k_1}{k-t_1+1},0)^{\top}$ and $\omega_{k+1,t_2}=(0,\frac{k_2-k}{t_2-k},\frac{t_2-k_2}{t_2-k})^{\top}$.

\noindent Since  $(k_1+1-t_1)\leq c_n$ and $t_2-k_2<c_n$ on $H_{1:n}^{2,22}(k,c_n)$, we have
\begin{flalign*}
\max_{k\in M_{n2}}\max_{(t_1,t_2)\in H_{1:n}^{2,22}(k,c_n)}\frac{[D_{n1}(t_1,k,t_2)]^2}{n\delta^2}\leq&\frac{2}{n\delta^2} \frac{(k-t_1+1)^2}{(t_2-t_1+1)^3}\Big[\frac{(t_2-k)^2(k_1+1-t_1)^2\delta_1^2}{(k-t_1+1)^2}+(t_2-k_2)^2\delta_2^2\Big]\\&
\leq C\frac{c_n^2}{n^2}=o(1).
\end{flalign*}
Note that
\begin{flalign*}
R_n(t_1,k,t_2)=&\Big[\sum_{i=k+1}^{k_2}+\sum_{i=k_2+1}^{t_2}\Big]\frac{(t_2-i+1)^2(i-1-k)^2}{(t_2-t_1+1)^{2}(t_2-k)^2}(\widehat{\theta}_{i,t_2}-\widehat{\theta}_{k+1,i-1})^2\\:=&R_{n1}(t_1,k,t_2)+R_{n2}(t_1,k,t_2),
\end{flalign*}
where in Lemma \ref{lem_3} we can show that 
$$
\max_{k\in M_{n2}}\max_{(t_1,t_2)\in H_{1:n}^{2,22}(k,c_n)}[R_{n1}(t_1,k,t_2)]^{-1}=O_p^s(1),
$$
the result is clear.

\subsubsection{Three or more change-points}\label{sec:three}
We now consider the case where there are multiple change-points at $(k_1,k_2,\ldots,k_{m_o})$. The proof for three or more change-points can largely be built based on the proof for one and two change-points case. We proceed as follows. For each $k\in M_n$, denote $k_U(k)=\min\{k_i|k_i>k, i=0,\ldots,m_o+1\}$ and $k_L(k)=\max\{k_i|k_i<k,i=0,\ldots,m_o+1\}$. We decompose $H_{1:n}(k)$ into five sets:
\begin{align*}
&H_{1:n}^1(k)=H_{1:n}(k)\cap \{(t_1,t_2)|t_1>k_L(k) \text{ and } t_2\leq k_U(k) \},\\
&H_{1:n}^2(k)=H_{1:n}(k)\cap \{(t_1,t_2)|k_{L-1}(k)<t_1\leq k_L(k) \text{ and } t_2\leq k_U(k) \},\\
&H_{1:n}^3(k)=H_{1:n}(k)\cap \{(t_1,t_2)|t_1> k_L(k) \text{ and } k_U(k)<t_2\leq k_{U+1}(k) \},\\
&H_{1:n}^4(k)=H_{1:n}(k)\cap \{(t_1,t_2)|k_{L-1}(k)<t_1\leq k_L(k) \text{ and } k_U(k) <t_2 \leq k_{U+1}(k) \},\\
&H_{1:n}^5(k)=H_{1:n}(k)\cap \{(t_1,t_2)|t_1\leq k_{L-1}(k), |k_{L-1}(k)|<\infty \text{ or } t_2 > k_{U+1}(k), |k_{U+1}(k)|<\infty  \},
\end{align*}
where $k_{L-1}=-\infty$ if $k_L=0$ and  $k_{U+1}=\infty$ if $k_U=n$.

For $H_{1:n}^1(k)$, it reduces to the no change-point case.

For $H_{1:n}^2(k)$ and $H_{1:n}^3(k)$, it reduces to the one change-point case.

For $H_{1:n}^4(k)$, it reduces to the two change-point case.

For $H_{1:n}^5(k)$, without loss of generality, we assume $t_2>k_{U+1}(k)$, $k_{U+1}(k)<\infty$.

Then, it follows that 
$$
R_n(t_1,k,t_2)\geq \sum_{i=k_U(k)+1}^{k_{U+1}(k)}\frac{(t_2-i+1)^2(i-1-k)^2}{(t_2-t_1+1)^{2}(t_2-k)^2}(\widehat{\theta}_{i,t_2}-\widehat{\theta}_{k+1,i-1})^2.
$$
Note that we can write 
\begin{flalign*}
{\theta}_{i,t_2}=&\frac{k_{U+1}(k)-i+1}{t_2-i+1}\theta_{U+1}+\frac{t_2-k_{U+1}(k)}{t_2-i+1}\theta_{k_{U+1}(k)+1,t_2},\\
{\theta}_{k+1,i-1}=&\frac{k_{U}(k)-k}{i-1-k}\theta_{U}+\frac{i-1-k_U(k)}{i-1-k}\theta_{U+1},
\end{flalign*}
then we have 
$$
{\theta}_{i,t_2}-{\theta}_{k+1,i-1}=\frac{t_2-k_{U+1}(k)}{t_2-i+1}[\theta_{k_{U+1}(k)+1,t_2}-\theta_{U+1}]+\frac{k_{U}(k)-k}{i-1-k}(\theta_{U+1}-\theta_{U}),
$$
similar arguments used in Lemma \ref{lem_2} will yield 

$$
\max_{k\in M_{n2}}\max_{(t_1,t_2)\in H_{1:n}^{5}(k)} [R_n(t_1,k,t_2)]^{-1}\leq (C+o_p(1))\frac{n^4}{\delta^2c_n^3\iota_n^2}=o_p(1),
$$
the result follows.

\subsubsection{Consistency}\label{sec:con}

To finish the consistency proof, we need to show that \eqref{eq:first_round_SNCP_0} and \eqref{eq:first_round_SNCP} still hold for applying SNCP on $\{Y_t\}_{t=1}^{\widehat{k}}$ and $\{Y_t\}_{t=\widehat{k}+1}^{n}$. WLOG, we prove the result for $\{Y_t\}_{t=1}^{\widehat{k}}$. In other words, when there are no other change-points in $\{Y_t\}_{t=1}^{\widehat{k}}$, we need to show that
\begin{align}
\label{eq:consistency_null}
P\left(\max_{k=h,\ldots,\widehat{k}-h} T_{1,\widehat{k}}(k)<K_n\right)\to 1,
\end{align}
and when there still exist other change-points, say $(k_1,\ldots,k_{m_o})$, in $Y_{1:\widehat{k}}$, we need to show that
\begin{align}
\label{eq:consistency_notnull}
P\left(\max_{k=h,\ldots,\widehat{k}-h} T_{1,\widehat{k}}(k)>K_n \text{ and }\min_{1\leq i\leq m_o}|k_i-\tilde{k}|<\iota_n\right) \to 1,
\end{align}
where $\tilde{k}=\arg\max_{k=h,\ldots,\widehat{k}-h} T_{1,\widehat{k}}(k).$

\noindent We first prove \eqref{eq:consistency_null}. Note that if there is no other change-points in $\{Y_t\}_{t=1}^{\widehat{k}}$, we know that $P(|\widehat{k}-k_1|<\iota_n)\to 1.$  By \eqref{eq:first_round_SNCP} we have
\begin{align*}
&P\left(\max_{k=h,\ldots,\widehat{k}-h} T_{1,\widehat{k}}(k)<K_n\right)\geq
P\left(\max_{k=h,\ldots,k_1+\iota_n-h} T_{1,k_1+\iota_n}(k)<K_n\right)-o(1)\\
=&P\left(\max_{k=h,\ldots,k_1+\iota_n-h} \max\limits_{(t_1,t_2)\in H_{1:(k_1+\iota_n)}(k)} \frac{D_{n}(t_1,k,t_2)^2}{V_{n}(t_1,k,t_2)}<K_n\right)-o(1)\\
\geq &P\left(\max_{k=h,\ldots,k_1+\iota_n-h} \max\limits_{(t_1,t_2)\in H_{1:(k_1+\iota_n)}(k)} \frac{D_{n}(t_1,k,t_2)^2}{L_{n}(t_1,k,t_2)}<K_n\right)-o(1).
\end{align*}
Using the same argument as the one used in one change-point case by expanding the $D_{n}(t_1,k,t_2)^2$ term in the numerator, it is straightforward to show that
\begin{align*}
&\max_{k=h,\ldots,k_1+\iota_n-h} \max\limits_{(t_1,t_2)\in H_{1:(k_1+\iota_n)}(k)}\frac{D_{n}(t_1,k,t_2)^2}{L_{n}(t_1,k,t_2)K_n}\\
\leq& \frac{C\iota_n^2\delta^2}{nK_n}\max_{k=h,\ldots,n-h} \max\limits_{(t_1,t_2)\in H_{1:n}(k)}\frac{1}{L_{n}(t_1,k,t_2)}+o_p(1)=o_p(1),
\end{align*}
where the last equality holds noting that we can choose $\iota_n$ small enough, say  $\iota_n\leq \left(n\delta^{-2}\log(n\delta^2)\right)^{1/2}$,  such that assumptions in Theorem \ref{thm1} are satisfied. Thus we have proved \eqref{eq:consistency_null}.

\noindent We now prove \eqref{eq:consistency_notnull}. Note that if there still exist other change-points, say $(k_1,\ldots,k_{m_o'})$ in $\{Y_t\}_{t=1}^{\widehat{k}}$, we know that either $(\mathrm{i})$ $P(0\leq k_{m_o'+1}-\widehat{k}<\iota_n)\to 1$ or $(\mathrm{ii})$ $P(0\leq \widehat{k}-k_{m_o'}<\iota_n)\to 1$.

$(\mathrm{i})$ 
Since  $k_{m_o'+1}-k_{m_o'}>n\epsilon_o>n\epsilon$  and $\iota_n=o(n)$, we have $n(\epsilon_o-\epsilon)>\iota_n$, hence $k_{m_o'+1}-k_{m_o'}>\iota_n+h$ and
\begin{align*}
&P(k_{m_o'}+h<\widehat{k})\geq P(k_{m_o'+1}-\iota_n<\widehat{k})\to 1.
\end{align*}
Thus, we have 
\begin{align*}
P\left(\max_{k=h,\ldots,\widehat{k}-h} T_{1,\widehat{k}}(k)>K_n \right) \geq 
P\left( \frac{D_{n}(k_{m_o'}-h+1,k_{m_o'},k_{m_o'}+h)^2}{V_{n}(k_{m_o'}-h+1,k_{m_o'},k_{m_o'}+h) }>K_n \right)-o(1) \to 1.
\end{align*}

$(\mathrm{ii})$ 
It is easy to see that $k_{m_o'-1}+h<k_{m_o'}<\widehat{k}$, then,
\begin{align*}
P\left(\max_{k=h,\ldots,\widehat{k}-h} T_{1,\widehat{k}}(k)>K_n \right) \geq 
P\left( \frac{D_{n}(k_{m_o'-1}-h+1,k_{m_o'-1},k_{m_o'-1}+h)^2}{V_{n}(k_{m_o'-1}-h+1,k_{m_o'-1},k_{m_o'-1}+h) }>K_n \right)-o(1) \to 1.
\end{align*}
Define $M_{\widehat{k}}=\{k|\min_{i=1,\ldots,m_o'}|k_i-k|>\iota_n, h\leq k\leq\widehat{k}-h\}$,  it suffices to show that
\begin{align*}
\max_{k\in M_{\widehat{k}}}\frac{T_{1,\widehat{k}}(k)}{n\delta^2}=o_p(1).
\end{align*}
Note that $P(M_{\widehat{k}}\subseteq M_n)\to 1$, thus
\begin{align*}
\max_{k\in M_{\widehat{k}}}\frac{T_{1,\widehat{k}}(k)}{n\delta^2}\leq \max_{k\in M_{n}}\frac{T_{1,n}(k)}{n\delta^2} =o_p(1).
\end{align*}

Thus, we have proved \eqref{eq:consistency_notnull}. The argument then goes on similarly till SNCP stops. Since there are finite number of change-points, SNCP will eventually stop.

\qed
\vspace{4mm}

\subsection{Lemmas}
\begin{lemma}\label{lem_1}
For the one-change point case,
$$
\max_{k\in M_{n1}}\max_{(t_1,t_2)\in H_{1:n}^{2}(k,c_n)}
R_{n2}(t_1,k,t_2)^{-1}=O_p(n^{4}\delta^{-2}c_n^{-3}\iota_n^{-2})=o_p(1).
$$
\end{lemma}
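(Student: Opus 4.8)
The plan is to localize the argument used in step (2) of the proof of Theorem \ref{thm_onechange}(ii) to the window $(t_1,t_2)$. Since $V_n(t_1,k,t_2)\ge R_n(t_1,k,t_2)\ge R_{n2}(t_1,k,t_2)$, it suffices to bound $R_{n2}^{-1}$ from above, uniformly over $k\in M_{n1}$ and $(t_1,t_2)\in H_{1:n}^2(k,c_n)$. Because every index $i\in\{k_1+1,\dots,t_2\}$ satisfies $i>k_1$ and $t_2>k_1$, the subsample $\{Y_t\}_{t=i}^{t_2}$ is pure (segment $2$) while $\{Y_t\}_{t=k+1}^{i-1}$ mixes segments $1$ and $2$ with weight $\omega_{k+1,i-1}=(\frac{k_1-k}{i-1-k},\frac{i-1-k_1}{i-1-k})^\top$. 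Writing $w_i=\frac{(t_2-i+1)(i-1-k)}{(t_2-t_1+1)(t_2-k)}$ and expanding $\widehat{\theta}_{i,t_2}-\widehat{\theta}_{k+1,i-1}$ via \eqref{theta_null}--\eqref{theta}, I would split each summand into a signal part $a_i=w_i[\theta_2-\theta(\omega_{k+1,i-1})]$, an influence part $b_i=w_i[\bar{\xi}_{i,t_2}-\bar{\xi}_{k+1,i-1}]$, and a remainder part $c_i=w_i[r_{i,t_2}-r_{k+1,i-1}]$, so that $R_{n2}=A+B+C$ with $A=\sum a_i^2$, $B=\sum(b_i+c_i)^2$, $C=2\sum a_i(b_i+c_i)$. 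Hua's identity then gives $R_{n2}^{-1}=A^{-1}/\{1+A^{-1}(B+C)\}$, exactly as in the global case.

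The signal lower bound is the quantitative heart. Under the linearity simplification $\theta(\omega_{k+1,i-1})=\sum_j\omega_{k+1,i-1}^{(j)}\theta_j$ adopted at the start of the proof (Assumption \ref{ass_no3}; the two-sided version under Assumption \ref{ass_no3star} is identical up to constants), one has $\theta_2-\theta(\omega_{k+1,i-1})=\frac{k_1-k}{i-1-k}\delta$ exactly, whence $a_i^2=\frac{(t_2-i+1)^2(k_1-k)^2}{(t_2-t_1+1)^2(t_2-k)^2}\delta^2$. Summing $\sum_{i=k_1+1}^{t_2}(t_2-i+1)^2\ge (t_2-k_1)^3/3\ge c_n^3/3$ (using $t_2>k_1+c_n$ on $H_{1:n}^2(k,c_n)$) and using $k_1-k>\iota_n$, $t_2-t_1+1\le n$, $t_2-k\le n$, I obtain $A\ge C\iota_n^2\delta^2 c_n^3/n^4$, i.e. $\max A^{-1}\le C\,n^4\delta^{-2}c_n^{-3}\iota_n^{-2}$, which is already the claimed order.

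It remains to show the denominator $1+A^{-1}(B+C)$ stays bounded away from $0$, for which I would prove $\max A^{-1}B=o_p(1)$. Here $B\le 2\sum b_i^2+2\sum c_i^2$; bounding $(t_2-i+1)^2\bar{\xi}_{i,t_2}^2$ and $(i-1-k)^2\bar{\xi}_{k+1,i-1}^2$ by the uniform rate $O_p(n)$ of Assumption \ref{ass_no1}(ii), and the remainders by $o_p(n)$ via Assumption \ref{ass_no2}, and exploiting $t_2-t_1+1\ge \epsilon n$, yields $\sum b_i^2=O_p(1)$ and $\sum c_i^2=o_p(1)$, so $\max B=O_p(1)$ uniformly. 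Combined with $\max A^{-1}=O(n^4\delta^{-2}c_n^{-3}\iota_n^{-2})=o(1)$ from \eqref{cn}, this gives $\max A^{-1}B=o_p(1)$, hence $\max A^{-1/2}B^{1/2}=o_p(1)$; Cauchy--Schwarz bounds $A^{-1}C\ge -2A^{-1/2}B^{1/2}$ and $A^{-1}B\ge 0$, so $1+A^{-1}(B+C)\ge 1-2A^{-1/2}B^{1/2}\ge 1/2$ with probability tending to one. Therefore $\max R_{n2}^{-1}\le (2+o_p(1))\max A^{-1}=O_p(n^4\delta^{-2}c_n^{-3}\iota_n^{-2})$, and the final $o_p(1)$ follows from the defining property $c_n^{-3}n^4\delta^{-2}\iota_n^{-2}\to 0$ in \eqref{cn}.

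The main obstacle I anticipate is making the bound on $\max B$ genuinely uniform over both the center $k$ and the window endpoints $(t_1,t_2)$: the partial-influence sums must be controlled by the single supremum rate of Assumption \ref{ass_no1}(ii) rather than pointwise, and the bookkeeping of the lengths $(t_2-i+1)$, $(i-1-k)$, $(t_2-t_1+1)$, $(t_2-k)$ must be arranged so that the sum over $i$ (up to order $n$ terms) is absorbed against the factor $(t_2-t_1+1)^2\ge \epsilon^2 n^2$ in the denominator. Everything else reduces to the computation already carried out in the global setting of Theorem \ref{thm_onechange}.
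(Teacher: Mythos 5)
Your proposal is correct and follows essentially the same route as the paper's proof: the same decomposition of $R_{n2}$ into signal, influence, and remainder parts, the same Hua-identity rearrangement $R_{n2}^{-1}=A^{-1}/\{1+A^{-1}(B+C)\}$, the same lower bound $A\geq C\iota_n^2\delta^2c_n^3/n^4$ from $\sum_{i=k_1+1}^{t_2}(t_2-i+1)^2\asymp(t_2-k_1)^3\geq c_n^3$ and $k_1-k>\iota_n$, and the same Cauchy--Schwarz control of the cross term via $\max A^{-1}B=o_p(1)$. The uniformity concern you flag is resolved exactly as you suggest, by invoking the supremum forms of Assumptions \ref{ass_no1}(ii) and \ref{ass_no2}.
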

\noindent\textsc{Proof of Lemma \ref{lem_1}}
We can see that 
\begin{flalign*}
R_{n2}(t_1,k,t_2)=&\sum_{i=k_1+1}^{t_2}\frac{(t_2-i+1)^2(i-1-k)^2}{(t_2-t_1+1)^{2}(t_2-k)^2}\Big([\theta_2-\theta_{k+1,i-1}]	+[\bar{\xi}_{i,t_2}(\omega_{i,t_2})-\bar{\xi}_{k+1,i-1}(\omega_{k+1,i-1})]\\&
+[r_{i,t_2}(\omega_{i,t_2})-r_{k+1,i-1}(\omega_{k+1,i-1})]\Big)^2,
\end{flalign*}
where $\omega_{i,t_2}=(0,1)^{\top}$ and $\omega_{k+1,i-1}=(\frac{k_1-k}{i-1-k},\frac{i-1-k_1}{i-1-k})^{\top}$.
Denote
\begin{flalign*}
A_n(t_1,k,t_2)=&\sum_{i=k_1+1}^{t_2}\frac{(t_2-i+1)^2(i-1-k)^2}{(t_2-k)^2}[\theta_2-\theta_{k+1,i-1}]^2,\\ B_n(t_1,k,t_2)=&\sum_{i=k_1+1}^{t_2}\frac{(t_2-i+1)^2(i-1-k)^2}{(t_2-k)^2}\Big([\bar{\xi}_{i,t_2}(\omega_{i,t_2})-\bar{\xi}_{k+1,i-1}(\omega_{k+1,i-1})]\\&
+[r_{i,t_2}(\omega_{i,t_2})-r_{k+1,i-1}(\omega_{k+1,i-1})]\Big)^2,\\
C_n(t_1,k,t_2)=&2\sum_{i=k_1+1}^{t_2}\frac{(t_2-i+1)^2(i-1-k)^2}{(t_2-k)^2}\Big([\bar{\xi}_{i,t_2}(\omega_{i,t_2})-\bar{\xi}_{k+1,i-1}(\omega_{k+1,i-1})]\\&
+[r_{i,t_2}(\omega_{i,t_2})-r_{k+1,i-1}(\omega_{k+1,i-1})]\Big)[\theta_2-\theta_{k+1,i-1}],
\end{flalign*} 
and by Hua's identity, we obtain 
\begin{flalign*}
R_{n2}(t_1,k,t_2)^{-1}=\frac{(t_2-t_1+1)^2[A_n(t_1,k,t_2)]^{-1}}{1+[B_n(t_1,k,t_2)+C_n(t_1,k,t_2)][A_n(t_1,k,t_2)]^{-1}}.
\end{flalign*}
Note that $(\theta_2-\theta_{k+1,i-1})=\frac{k_1-k}{i-1-k}\delta$, and 
we can find some constants $0<c_1<c_2<\infty$ independent of $\{t_1,t_2,k,k_1\}$ such that $c_1(t_2-k_1)^3<\sum_{i=k_1+1}^{t_2}{(t_2-i+1)^2}<c_2(t_2-k_1)^3$,
we have 
\begin{flalign*}
[A_n(t_1,k,t_2)]>c_1\frac{(k_1-k)^2(t_2-k_1)^{3}}{(t_2-k)^2}\delta^2.
\end{flalign*}
Hence, \begin{equation}\label{A}
\min_{k\in M_{n1}}\min_{(t_1,t_2)\in H_{1:n}^{2}(k,c_n)}[A_n(t_1,k,t_2)]\geq  c_1\frac{\iota_n^2c_n^3\delta^2}{n^2}.
\end{equation}
Under Assumptions \ref{ass_no1} and \ref{ass_no2}, we can show that 
\begin{flalign}\label{B}
(t_2-t_1+1)^{-2}\max_{k\in M_{n1}}\max_{(t_1,t_2)\in H_{1:n}^{2}(k,c_n)}B_n(t_1,k,t_2)\leq O_p(1),
\end{flalign}
and by Cauchy-Schwartz inequality, we have $$-2B_n(t_1,k,t_2)[A_n(t_1,k,t_2)]^{-1/2}\leq  C_n(t_1,k,t_2)[A_n(t_1,k,t_2)]^{-1}\leq 2[B_n(t_1,k,t_2)]^{1/2}[A_n(t_1,k,t_2)]^{-1/2}.$$ 

\noindent Hence, by (\ref{A}) and (\ref{B}), we obtain
\begin{flalign*}
&\max_{k\in M_{n1}}\max_{(t_1,t_2)\in H_{1:n}^{2}(k,c_n)}
R_{n2}(t_1,k,t_2)^{-1}\\\leq &\frac{(t_2-t_1+1)^2\max_{k\in M_{n1}}\max_{(t_1,t_2)\in H_{1:n}^{2}(k,c_n)}[A_n(t_1,k,t_2)]^{-1}}{1+\min_{k\in H_{1:n}^2(k,c_n)}[B_n(t_1,k,t_2)+C_n(t_1,k,t_2)][A_n(t_1,k,t_2)]^{-1}}\\\leq&  \frac{(t_2-t_1+1)^2\max_{k\in M_{n1}}\max_{(t_1,t_2)\in H_{1:n}^{2}(k,c_n)}[A_n(t_1,k,t_2)]^{-1}}{1-2\max_{k\in M_{n1}}\max_{(t_1,t_2)\in H_{1:n}^{2}(k,c_n)}[B_n(t_1,k,t_2)]^{1/2}[A_n(t_1,k,t_2)]^{-1/2}}
\\\leq& C\frac{n^4}{\iota_n^2c_n^3\delta^2}(1+o_p(1))=o_p(1),
\end{flalign*}
where the last inequality holds by (\ref{cn}).

\qed
\vspace{4mm}

\begin{lemma}\label{lem_2}
For the two change-point case,
$$
\max_{k\in M_{n1}}\max_{(t_1,t_2)\in H_{1:n}^{1,1}(k)}
R_{n2}(t_1,k,t_2)^{-1}=O_p(n\delta^{-2}\iota_n^{-2})=o_p(1)
$$
\end{lemma}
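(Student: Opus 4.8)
The plan is to mirror the proof of Lemma \ref{lem_1}: decompose $R_{n2}(t_1,k,t_2)$ into a deterministic signal part, a stochastic part, and a cross term, invoke Hua's identity, and reduce the whole statement to a sharp lower bound on the signal part. For $i\in[k_1+1,k_2]$ the left subsample $\{Y_t\}_{t=k+1}^{i-1}$ straddles only $k_1$ and the right subsample $\{Y_t\}_{t=i}^{t_2}$ straddles only $k_2$, so I would write $f(i)=\frac{(t_2-i+1)(i-1-k)}{t_2-k}\big(\theta_{i,t_2}-\theta_{k+1,i-1}\big)$ and set $A_n=\sum_{i=k_1+1}^{k_2}f(i)^2$, with $B_n$ collecting the influence-function and remainder contributions and $C_n$ the cross term, so that $R_{n2}=(t_2-t_1+1)^{-2}[A_n+B_n+C_n]$. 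Exactly as in the derivation of \eqref{B} in Lemma \ref{lem_1}, Assumptions \ref{ass_no1} and \ref{ass_no2} give $\max B_n=O_p((t_2-t_1+1)^2)$, and Cauchy--Schwarz gives $|C_n|\le 2\sqrt{A_nB_n}$; Hua's identity then yields $R_{n2}^{-1}=(t_2-t_1+1)^2A_n^{-1}(1+o_p(1))$ provided $B_n/A_n=o_p(1)$ uniformly. Since $(t_2-t_1+1)^2\le Cn^2$, everything reduces to showing $A_n\gtrsim n\iota_n^2\delta^2$ uniformly over $k\in M_{n1}$ and $(t_1,t_2)\in H_{1:n}^{1,1}(k)$; this bound forces $B_n/A_n\lesssim n/(\iota_n^2\delta^2)=o_p(1)$ by Assumption \ref{ass_K} and delivers $R_{n2}^{-1}=O_p(n\delta^{-2}\iota_n^{-2})$.

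The heart of the argument is the lower bound on $A_n$. Under the linearity of Assumption \ref{ass_no3}, $\theta_{i,t_2}=\theta_2+\frac{t_2-k_2}{t_2-i+1}\delta_2$ and $\theta_{k+1,i-1}=\theta_2-\frac{k_1-k}{i-1-k}\delta_1$, where $\delta_1=\theta_2-\theta_1=c_1\delta$ and $\delta_2=\theta_3-\theta_2=c_2\delta$. A direct computation then shows that $f(i)=\frac{(i-1-k)(t_2-k_2)}{t_2-k}\delta_2+\frac{(t_2-i+1)(k_1-k)}{t_2-k}\delta_1$ is an \emph{affine} function of $i$, with slope $s=\frac{(t_2-k_2)\delta_2-(k_1-k)\delta_1}{t_2-k}$. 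Writing $N=k_2-k_1\ge n\epsilon_o$, two elementary estimates for the sum of squares of an affine function over $[k_1+1,k_2]$ give $A_n\ge cNf(k_1+1)^2$ and $A_n\ge cs^2N^3$.

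The main obstacle is that the two change contributions --- $\delta_1$ from $k_1$ and $\delta_2$ from $k_2$ --- may cancel, so that $f$ could a priori be small throughout $[k_1+1,k_2]$; this is precisely where the affine structure is decisive. The key identity, obtained by eliminating $(t_2-k_2)\delta_2$, is $s=\frac{f(k_1+1)}{k_1-k}-\delta_1$, which decouples the two scenarios. Fix a small $\eta>0$. If $|f(k_1+1)|\ge\eta\iota_n\delta$, then $A_n\ge cNf(k_1+1)^2\gtrsim n\iota_n^2\delta^2$. Otherwise $|f(k_1+1)|<\eta\iota_n\delta$, and since $k_1-k>\iota_n$ the identity forces $|s+\delta_1|<\eta\delta$, whence $|s|\ge(|c_1|-\eta)\delta\gtrsim\delta$ for $\eta<|c_1|/2$, so that $A_n\ge cs^2N^3\gtrsim\delta^2n^3\ge n\iota_n^2\delta^2$. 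Either way $A_n\gtrsim n\iota_n^2\delta^2$ uniformly, which closes the argument. (The degenerate case $s=0$ is already transparent from the identity: there $f\equiv(k_1-k)\delta_1$, a constant of magnitude at least $\iota_n|c_1|\delta$.) The verification under Assumption \ref{ass_no3star} instead of \ref{ass_no3} is analogous, replacing the exact linear expressions for $\theta_{i,t_2}$ and $\theta_{k+1,i-1}$ by the two-sided bounds of \ref{ass_no3star}, exactly as in the passage from \ref{ass_theta} to the general argument in the proof of Theorem \ref{thm_onechange}.
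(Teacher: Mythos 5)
Your proposal is correct, and its skeleton (the $A_n+B_n+C_n$ decomposition of $R_{n2}$, Hua's identity, the uniform bound $B_n=O_p(n^2)$ from Assumptions \ref{ass_no1}--\ref{ass_no2}, Cauchy--Schwarz on the cross term, and the reduction to a deterministic lower bound $A_n\gtrsim n\iota_n^2\delta^2$) is exactly the paper's. Where you genuinely depart is in how that lower bound is proved. The paper splits on the sign of $\delta_1\delta_2$: when the two jumps have the same sign there is no cancellation and one summand can simply be dropped; when they have opposite signs the paper uses the ``minimum over the intercept'' bound $\min_a\sum_{i=0}^{N}(i+a)^2\gtrsim N^3$ together with $(x-y)^2\geq y^2$ (which is where the sign condition $xy<0$ enters). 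You instead exploit the affine structure of $f(i)$ directly, via the two facts $\sum f(i)^2\geq cN f(k_1+1)^2$ and $\sum f(i)^2\geq cs^2N^3$, and the identity $s=f(k_1+1)/(k_1-k)-\delta_1$ (which I verified), dichotomizing on $|f(k_1+1)|\gtrless\eta\iota_n\delta$ rather than on $\mathrm{sign}(\delta_1\delta_2)$. This is a sign-free, unified treatment of the cancellation issue that actually yields a stronger bound $A_n\gtrsim n^3\delta^2$ in the second regime; the paper's version is slightly more hands-on but arrives at the same $A_n\gtrsim n\iota_n^2\delta^2$. Two small points: the condition $n\iota_n^{-2}\delta^{-2}\to 0$ that you invoke to kill $B_n/A_n$ is the hypothesis on $\iota_n$ in Theorem \ref{thm1}(ii), not Assumption \ref{ass_K} itself; and your first elementary estimate is only valid because $f$ is affine, which you do state but should be flagged as essential, since it fails for general $f$.
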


\noindent\textsc{Proof of Lemma \ref{lem_2}}

For $i\in [k_1+1,k_2]$ we decompose  $\widehat{\theta}_{i,t_2}$ and $\widehat{\theta}_{k+1,i-1}$  as 
\begin{flalign*}
\widehat{\theta}_{i,t_2}=&\theta_{i,t_2}+\bar{\xi}_{i,t_2}(\omega_{i,t_2})+r_{i,t_2}(\omega_{i,t_2}),\\
\widehat{\theta}_{i,i-1}=&\theta_{k+1,i-1}+\bar{\xi}_{k+1,i-1}(\omega_{k+1,i-1})+r_{k+1,i-1}(\omega_{k+1,i-1}),
\end{flalign*}
where $\omega_{i,t_2}=(0,\frac{k_2-i+1}{t_2-i+1},\frac{t_2-k_2}{t_2-i+1})^{\top}$ and $\omega_{k+1,i-1}=(\frac{k_1-k}{i-1-k},\frac{i-1-k_1}{i-1-k},0)^{\top}$.

\noindent Note that,   $$\theta_{k+1,i-1}-\theta_{i,t_2}=\frac{k_1-k}{i-1-k}(\theta_1-\theta_2)+\frac{(t_2-k_2)}{(t_2-i+1)}(\theta_2-\theta_3),$$ 
and we denote 
\begin{flalign*}
A_n(t_1,k,t_2)=&\sum_{i=k_1+1}^{k_2}\frac{(t_2-i+1)^2(i-1-k)^2}{(t_2-k)^2}[\frac{(k_1-k)}{(i-1-k)}\delta_1+\frac{(t_2-k_2)}{(t_2-i+1)}\delta_2]^2
\end{flalign*} 

\noindent If $\delta_1\delta_2\geq 0$, then 
$$
A_n(t_1,k,t_2)\geq \sum_{i=k_1+1}^{k_2}(t_2-i+1)^2[\frac{k_1-k}{t_2-k}\delta_1]^2>C\frac{(k_2-k_1)^3(k_1-k)^2\delta_1^2}{(t_2-k)^2}=C{n\iota_n^2\delta_1^2}.
$$
If $\delta_1\delta_2<0$,  denote $x=(t_2-k_2)\delta_2$ and $y=(k_1-k)\delta_1$, then 
\begin{flalign*}
A_n(t_1,k,t_2)=&\frac{1}{(t_2-k)^2}\sum_{i=k_1+1}^{k_2}[(i-1-k)x+(t_2-i+1)y]^2\\>&\frac{1}{n^2}\sum_{i=k_1+1}^{k_2}\left[(x-y)i+(t_2+1)y-(1+k)x\right]^2\\
=&\frac{(x-y)^2}{n^2}\sum_{i=0}^{k_2-k_1}\left[i+\frac{(k_1-k)x+(t_2-k_1)y}{x-y}\right]^2\\
\geq& \frac{(x-y)^2}{n^2} \min_a\sum_{i=0}^{\epsilon n}(i+a)^2=Cn{(x-y)^2}>Cny^2>C{n\iota_n^2\delta_1^2}.
\end{flalign*}
Then, the rest follows from similar arguments (below (\ref{A})) used in Lemma \ref{lem_1}. \qed
\vspace{4mm}

\begin{lemma}\label{lem_3}
For the two change-point case,	$$\max_{k\in M_{n2}}\max_{(t_1,t_2)\in H_{1:n}^{2,22}(k,c_n)}[R_{n1}(t_1,k,t_2)]^{-1}=O_p^s(1).$$
\end{lemma}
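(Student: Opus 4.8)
The plan is to prove that $R_{n1}(t_1,k,t_2)$ is, uniformly over the admissible configurations, bounded away from both $0$ and $\infty$; recalling that $O_p^s(1)$ means boundedness above and below in probability, this is exactly the statement that $\min_{k,(t_1,t_2)}R_{n1}$ has order one. The first reduction I would make is geometric. Since $(t_1,t_2)\in H_{1:n}(k)$ forces $t_2-k\ge h$ and $k-t_1+1\ge h$ with $h=\lfloor n\epsilon\rfloor$, while $H_{1:n}^{2,22}(k,c_n)$ imposes $t_1>k_1-c_n$, $t_2\le k_2+c_n$ with $c_n=o(n)$ by \eqref{cn}, it follows that $k_2-k\ge h-c_n$ and $k-k_1\ge h-c_n$, both of exact order $n$. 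Hence $t_1/n\to\tau_1$, $t_2/n\to\tau_2$, and $k/n$ is confined asymptotically to the compact interval $[\tau_1+\epsilon,\tau_2-\epsilon]$; moreover, because $c_n=o(h)$, the interval $(k_2,k_2+c_n]$ contains at most one grid point, so for each $k$ there is at most one admissible pair $(t_1,t_2)$ and the inner maximum collapses.

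Next I would expand the summands. For $i\in[k+1,k_2]$ the block $\{Y_t\}_{t=k+1}^{i-1}$ is pure second segment, so \eqref{theta_null} applies, while $\{Y_t\}_{t=i}^{t_2}$ contains only $k_2$, so \eqref{theta} gives $\widehat\theta_{i,t_2}-\widehat\theta_{k+1,i-1}=g_i+G_i+\rho_i$, where the deterministic drift is $g_i=\frac{t_2-k_2}{t_2-i+1}\delta_2$, the stochastic part $G_i=\bar{\xi}_{i,t_2}(\omega_{i,t_2})-\frac{1}{i-1-k}\sum_{t=k+1}^{i-1}\xi_2(Y_t^{(2)})$ is an intra-segment-two CUSUM contrast, and $\rho_i$ collects the remainders. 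Since $t_2-k_2\le c_n=o(n)$, the segment-three leakage inside $\bar{\xi}_{i,t_2}(\omega_{i,t_2})$ is $O_p(\sqrt{c_n}/n)$, the weight $\omega_{i,t_2}$ is asymptotically the pure second-segment weight, and by Assumption \ref{ass_no2} the remainders are negligible; thus $G_i$ may be replaced by the segment-two partial-sum contrast up to $o_p(1)$ uniformly. The crucial structural observation is that the drift $g_i$ is one dimensional, proportional to the fixed shape $(t_2-i+1)^{-1}$.

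For the lower bound I would avoid any cancellation between $g$ and $G$ by projecting out the drift. Writing $R_{n1}=\|\mathbf g+\mathbf G\|_w^2$ for the weighted norm induced by the coefficients in $R_{n1}$, and decomposing $\mathbf G=\mathbf G_{\parallel}+\mathbf G_{\perp}$ into its components along and orthogonal to $\mathbf g$, one gets $R_{n1}\ge\|\mathbf G_{\perp}\|_w^2$, which is free of $\mathbf g$. I would then invoke Assumption \ref{ass_no1}(i) and the continuous mapping theorem: indexed by $u=\lim k/n\in[\tau_1+\epsilon,\tau_2-\epsilon]$, $\|\mathbf G_{\perp}\|_w^2$ converges weakly to a residual functional $\mathrm{Res}(u)=\int G^2-(\int G\,\widehat g_u)^2$ built from the second-segment Brownian motion $B^{(2)}$, where $\widehat g_u$ is the normalized drift shape. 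This residual is a.s.\ continuous and, being the part of a non-degenerate Gaussian process orthogonal to a single direction, is a.s.\ strictly positive for each $u$, so $\inf_u\mathrm{Res}(u)>0$. The matching upper bound on $\min R_{n1}$ is easier: for those $k$ whose unique window has $t_2-k_2=o(\sqrt n)$ the drift contributes $o(1)$ and $R_{n1}$ converges to a finite integral, so $\min R_{n1}=O_p(1)$; together these give $\max R_{n1}^{-1}=O_p^s(1)$.

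The main obstacle I anticipate is precisely the \emph{uniform} strict positivity of the residual $\mathrm{Res}(u)$ over the one-parameter family of drift directions $\{\widehat g_u\}$ — i.e.\ an anti-concentration statement ruling out that $\inf_{u\in[\tau_1+\epsilon,\tau_2-\epsilon]}\mathrm{Res}(u)$ is arbitrarily small with non-vanishing probability. This is genuinely different from Lemmas \ref{lem_1} and \ref{lem_2}, where the deterministic part $A_n$ was itself bounded below and supplied the estimate through Hua's identity; here the drift is negligible in the worst case and the bound must come from the roughness of the noise, so the orthogonal-projection device together with the continuity and non-degeneracy of the limiting Gaussian functional is what makes the argument go through.
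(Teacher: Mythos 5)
Your proposal is correct and follows essentially the same route as the paper: the paper's device of bounding $R_{n1}$ below by $\min_{a}\sum_{i}w_i\bigl[\tfrac{a}{t_2-i+1}+\bar{\xi}_{i,t_2}(\omega_{i,t_2})-\bar{\xi}_{k+1,i-1}(\omega_{k+1,i-1})\bigr]^2$ is exactly your orthogonal projection of the noise onto the complement of the one-dimensional drift shape $(t_2-i+1)^{-1}$, and the paper likewise concludes by showing (via the invariance principle and continuous mapping) that this residual converges to an a.s.\ positive, continuous functional of the second-segment Brownian motion, which settles the uniform lower bound you flag as the main obstacle.
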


\noindent\textsc{Proof of Lemma \ref{lem_3}}
\begin{flalign*}
R_{n1}(t_1,k,t_2)=&\sum_{i=k+1}^{k_2}\frac{(t_2-i+1)^2(i-1-k)^2}{(t_2-t_1+1)^{2}(t_2-k)^2}(\widehat{\theta}_{i,t_2}-\widehat{\theta}_{k+1,i-1})^2\\=&
\sum_{i=k+1}^{k_2}\frac{(t_2-i+1)^2(i-1-k)^2}{(t_2-t_1+1)^{2}(t_2-k)^2}\Big([{\theta}_{i,t_2}-{\theta}_{k+1,i-1}]+[\bar{\xi}_{i,t_2}(\omega_{i,t_2})-\bar{\xi}_{k+1,i-1}(\omega_{k+1,i-1})]\\&+[r_{i,t_2}(\omega_{i,t_2})-r_{k+1,i-1}(\omega_{k+1,i-1})]\Big)^2.
\end{flalign*}
where $\omega_{i,t_2}=(0,\frac{k_2-i+1}{t_2-i+1},\frac{t_2-k_2}{t_2-i+1})^{\top}$ and $\omega_{k+1,i-1}=(0,1,0)^{\top}$.

\noindent Note that uniformly on $M_{n2}(t_1,t_2)$ and $H_{1:n}^{2,22}(k,c_n)$, we have  by Assumption \ref{ass_no2}
$$
\frac{(t_2-i+1)(i-1-k)}{(t_2-t_1+1)(t_2-k)}\Big|r_{i,t_2}(\omega_{i,t_2})-r_{k+1,i-1}(\omega_{k+1,i-1})\Big|=o_p(n^{-1/2}).
$$
Hence, it suffices to consider 
$$
\sum_{i=k+1}^{k_2}\frac{(t_2-i+1)^2(i-1-k)^2}{(t_2-t_1+1)^{2}(t_2-k)^2}[{\theta}_{i,t_2}-{\theta}_{k+1,i-1}+\bar{\xi}_{i,t_2}(\omega_{i,t_2})-\bar{\xi}_{k+1,i-1}(\omega_{k+1,i-1})]^2.
$$
Since $t_2-k>\epsilon n$ and $t_2-k_2<c_n=o(n)$, we have $k_2-k>\epsilon n/2$, thus we have 
\begin{flalign*}
&\sum_{i=k+1}^{k_2}\frac{(t_2-i+1)^2(i-1-k)^2}{(t_2-t_1+1)^{2}(t_2-k)^2}[{\theta}_{i,t_2}-{\theta}_{k+1,i-1}+\bar{\xi}_{i,t_2}(\omega_{i,t_2})-\bar{\xi}_{k+1,i-1}(\omega_{k+1,i-1})]^2\\
>&\sum_{i=k+1}^{k+\epsilon n/2}\frac{(t_2-i+1)^2(i-1-k)^2}{(t_2-t_1+1)^{2}(t_2-k)^2}[{\theta}_{i,t_2}-{\theta}_{k+1,i-1}+\bar{\xi}_{i,t_2}(\omega_{i,t_2})-\bar{\xi}_{k+1,i-1}(\omega_{k+1,i-1})]^2
\\=& \sum_{i=k+1}^{k+\epsilon n/2}\frac{(t_2-i+1)^2(i-1-k)^2}{(t_2-t_1+1)^{2}(t_2-k)^2}[\frac{(t_2-k_2)\delta_2}{t_2-i+1}+\bar{\xi}_{i,t_2}(\omega_{i,t_2})-\bar{\xi}_{k+1,i-1}(\omega_{k+1,i-1})]^2
\\\geq &\min_{a}\sum_{i=k+1}^{k+\epsilon n/2}\frac{(t_2-i+1)^2(i-1-k)^2}{(t_2-t_1+1)^{2}(t_2-k)^2}[\frac{a}{t_2-i+1}+\bar{\xi}_{i,t_2}(\omega_{i,t_2})-\bar{\xi}_{k+1,i-1}(\omega_{k+1,i-1})]^2\\:=&R_{n1}^*(t_1,k,t_2).
\end{flalign*}
The above quadratic function  will be minimized at $$a^*= -\frac{\sum_{i=k+1}^{k+\epsilon n/2} (i-1-k)^2(t_2-i+1)[\bar{\xi}_{i,t_2}(\omega_{i,t_2})-\bar{\xi}_{k+1,i-1}(\omega_{k+1,i-1})]}{\sum_{i=k+1}^{k+\epsilon n/2}(i-1-k)^2}.$$

So, if $\bar{\xi}_{i,t_2}(\omega_{i,t_2})-\bar{\xi}_{k+1,i-1}(\omega_{k+1,i-1})=O_p^s(n^{-1/2})$, we can show that $R^*_{n1}=O_p^{s}(1)$. 

For example, in the case of smooth function model,  we have  
$$
\bar{\xi}_{i,t_2}^{(2,3)}=\frac{\partial H(\mu_{i,t_2})}{\partial\mu}^{\top}\frac{1}{t_2-i+1}[\sum_{t=i}^{k_2}(Z_t^{(2)}-\mu_{i,t_2})+\sum_{t=k_2+1}^{t_2}(Z_t^{(3)}-\mu_{i,t_2})].
$$

Note that $\mu_{i,t_2}=\frac{k_2-i+1}{t_2-i+1}\mu_z^{(2)}+\frac{t_2-k_2}{t_2-i+1}\mu_z^{(3)}$, hence for each $k<i\leq k+\epsilon n/2$, 
\begin{flalign*}
\frac{\partial H(\mu_{i,t_2})}{\partial\mu}=&\frac{\partial H(\mu_z^{(2)})}{\partial\mu}+\frac{1}{2}(\frac{t_2-k_2}{t_2-i+1})^2 (\mu_z^{(2)}-\mu_z^{(3)})^{\top}\frac{\partial^2H(\tilde{\mu})}{\partial\mu\partial\mu^{\top}}
(\mu_z^{(2)}-\mu_z^{(3)})
\\=&\frac{\partial H(\mu_z^{(2)})}{\partial\mu}+O(\frac{c_n^2}{n^2})
\end{flalign*}
for some $\tilde{\mu}=u\mu_z^{(2)}+(1-u)\mu_{i,t_2}$.
Hence, as long as $\|\frac{\partial^2H(\tilde{\mu})}{\partial\mu\partial\mu^{\top}}\|$ is bounded, since  $|t_2-k_2|<c_n$, we can show that  
\begin{flalign*}
\bar{\xi}_{i,t_2}=[\frac{\partial H(\mu_z^{(2)})}{\partial \mu}+o(1)]^{\top}\{\frac{1}{t_2-i+1}[\sum_{t=i}^{k_2}(Z_t^{(2)}-\mu_{z}^{(2)}-o(1)]+O_p(n^{-1}c_n^{1/2})\}
\end{flalign*}
Hence, it follows that 
$$
\frac{t_2-i+1}{\sqrt{n}}\bar{\xi}_{i,t_2}=\frac{1}{\sqrt{n}}\sum_{t=i}^{t_2}\xi_2(Y_t^{(2)})+o_p(1),
$$
where $o_p(1)$ holds uniformly for $k<i\leq k+\epsilon n/2$ and $\xi_2(Y_t^{(2)})$ is defined in Assumption \ref{ass_no1}.

Assuming that $t_1/n\to u_1, k/n\to u, t_2/n\to u_2$~(by definition $u_1\leq u-\epsilon$ and $u_2\geq u+\epsilon$), by Assumption \ref{ass_no1} it is straightforward to show that
\begin{align*}
\frac{a^*}{n^{1/2}} \to_D -\frac{24}{\epsilon^3}\int_{u}^{u+\epsilon/2}(s-u)\sigma_2\left[(s-u)({B}^{(2)}(u_2)-{B}^{(2)}(s))-(u_2-s)({B}^{(2)}(s)-{B}^{(2)}(u))\right]ds:=A^*(u,u_2,\epsilon),
\end{align*}
here the $\sum_{t=k_2+1}^{t_2}Z_t^{(3)}-\mu_{i,t_2}$ will not contribute to the asymptotic distribution since $|t_2-k_2|<c_n$ while $c_n/n\to0$.
Therefore, we have
\begin{align*}
&R_{n1}(t_1,k,t_2)>R_{n1}^{*}(t_1,k,t_2)
\to_D\frac{1}{(u_2-u_1)^2(u_2-u)^2}\cdot\\
&\int_{u}^{u+\epsilon/2}\left[(s-u)\sigma_2(B^{(2)}(u_2)-B^{(2)}(s))-(u_2-s)(B^{(2)}(s)-B^{(2)}(u))+(s-u)A^*(u,u_2,\epsilon) \right]^2ds.
\end{align*}
The result follows by the fact that $$\min_{k\in M_{n2}}\min_{(t_1,t_2)\in H_{1:n}^{2,22}(k,c_n)}R_{n1}(t_1,k,t_2)>\min_{k\in M_{n2}}\min_{(t_1,t_2)\in H_{1:n}^{2,22}(k,c_n)}R_{n1}^{*}(t_1,k,t_2)=O_p^s(1).$$ \qed

\section{Consistency of SNCP for multivariate mean change}\label{sec:consistencymean}
In this section, we provide detailed proof of Theorem \ref{thm_onechange_multi} in the main text, which gives the consistency of SNCP for multivariate mean change.

The proof essentially follows the same logic as the one in Section \ref{subsec: outline_proof} for Theorem \ref{thm1}. However, compared to the univariate proof, substantial technical complication arises due to the vector/multivariate nature of the parameter, which makes the self-normalizer $V^*_n(t_1,k,t_2)$ a matrix in $\mathbb{R}^{d\times d}$. Thus, to establish scenarios 1-3 listed at the beginning of Section \ref{subsec: outline_proof}, the technical argument needed is significantly different, which is indeed much more challenging than the univariate proof in Section \ref{subsec: outline_proof}, as it requires the analysis of random matrix and its inverse. Two main technical tools that will be used repeatedly in the proof are a matrix Cauchy-Schwartz inequality in \cite{tripathi1999matrix} (restated in Lemma \ref{lem_tripathi}) and the Sherman-Morrison formula, which quantifies the impact of a rank-one update to a matrix.

\subsection{No change-point case}
The proof of Theorem \ref{thm_onechange_multi}(i) follows standard arguments using the invariance principle. In addition, in this case~(i.e. the no change-point scenario), we have $\max_{k=1,\cdots,n}T_{1,n}(k)=O_p(1)$. Thus, for any threshold $K_n\to\infty$, we have 
$$
\lim\limits_{n\to\infty} P\Big(\max_{k=1,\cdots,n}T_{1,n}(k)<K_n\Big)=1.
$$
In the following, we focus on the proof of Theorem \ref{thm_onechange_multi}(ii). In what follows, denote $S^{X}_{a,b}=\sum_{t=a}^{b}X_t$.

\subsection{One change-point case}\label{sec:meanone}
We can decompose $M_n=M_{n1}\cup M_{n2}$, where $M_{n1}=\{k|k_1-k>\iota_n\}$ and $M_{n2}=\{k|k-k_1>\iota_n\}$. By symmetry, we only need to prove the result for $M_{n1}$. 
Let $c_n$ satisfy (\ref{cn}),   and recall that $\delta_1=\delta\eta_1=n^{-\kappa}\eta_1$, where $\eta_1\in\mathbb{R}^d/\{\bf{0}\}$.

\noindent We decompose $H_{1:n}(k)$ as:
\begin{flalign*}
&(1)~H_{1:n}^0(k)=H_{1:n}(k)\cap \{(t_1,t_2)|t_2\leq k_1\},\\
&(2)~H_{1:n}^1(k,c_n)=H_{1:n}(k)\cap \{(t_1,t_2)|k_1+1\leq t_2\leq k_1+c_n\},\\
&(3)~H_{1:n}^2(k,c_n)=H_{1:n}(k)\cap \{(t_1,t_2)| t_2> k_1+c_n\}.
\end{flalign*}

$(\mathrm{1})$ On $H_{1:n}^0(k)$, there is no change-point. Hence using the invariance principle, it follows
\begin{flalign*}
(n\delta^2)^{-1}\max_{k\in M_{n1}}\max_{(t_1,t_2)\in H_{1:n}^0(k)} D_n(t_1,k,t_2)^{\top}V_n(t_1,k,t_2)^{-1}D_n(t_1,k,t_2)\leq (n\delta^2)^{-1}\max_{k=1,\cdots,n}T_{1,n}(k)=o_p(1).
\end{flalign*}

\noindent Now, on $H_{1:n}^1(k,c_n)$ and  $H_{1:n}^2(k,c_n)$, by simple calculation, we have
\begin{flalign*}
D_n(t_1,k,t_2)=&-\frac{(k-t_1+1)(t_2-k_1)}{(t_2-t_1+1)^{3/2}}\delta\eta_1+\frac{(k-t_1+1)(t_2-k)}{(t_2-t_1+1)^{3/2}}\Big[\frac{S^{X}_{t_1,k}}{k-t_1+1}-\frac{S^{X}_{k+1,t_2}}{t_2-k}\Big]\\:=&-\frac{(k-t_1+1)(t_2-k_1)}{(t_2-t_1+1)^{3/2}}\delta\eta_1+D_n^{X}(t_1,k,t_2).
\end{flalign*}

$(\mathrm{2})$ On $H_{1:n}^1(k,c_n)$,
Cauchy Schwarz inequality indicates that
\begin{flalign*}
&\max_{k\in M_{n1}}\max_{(t_1,t_2)\in H_{1:n}^1(k,c_n) }(n\delta^2)^{-1}D_n(t_1,k,t_2)^{\top}V_n(t_1,k,t_2)^{-1}D_n(t_1,k,t_2)\\
\leq &2\max_{k\in M_{n1}}\max_{(t_1,t_2)\in H_{1:n}^1(k,c_n) }\frac{(t_2-k_1)^2(k-t_1+1)^2}{n(t_2-t_1+1)^3}{\eta_1^{\top}V_n(t_1,k,t_2)^{-1}\eta_1}
\\&+2\max_{k\in M_{n1}}n^{2\kappa-1}{D_n^{X}(t_1,k,t_2)^{\top}V_n(t_1,k,t_2)^{-1}D_n^{X}(t_1,k,t_2)}.
\end{flalign*}
Note that $V_n(t_1,k,t_2)^{-1}\leq L_n(t_1,k,t_2)^{-1}$, and since $(t_1,k)$ contains no change-points when $k\in M_{n1}$, one can easily verify that 
$$\max_{k\in M_{n1}}\max_{(t_1,t_2)\in H_{1:n}^1(k,c_n) }{\eta_1^{\top}L_n(t_1,k,t_2)^{-1}\eta_1}=O_p(1),$$ 
using the invariance principle.  Recall $t_2-k_1\leq c_n$ when $(t_1,t_2)\in H_{1:n}^{1}(k,c_n)$, hence 
\begin{flalign}\label{hn1}
\begin{split}
&\max_{k\in M_{n1}}\max_{(t_1,t_2)\in H_{1:n}^1(k,c_n) }(n\delta^2)^{-1}D_n(t_1,k,t_2)^{\top}L_n(t_1,k,t_2)^{-1}D_n(t_1,k,t_2)\\
\leq &2c_n^2 \frac{n^2\epsilon^2}{n^4}{\eta_1^{\top}L_n(t_1,k,t_2)^{-1}\eta_1}+2\max_{k\in M_{n1}}n^{2\kappa-1}{D_n^{X}(t_1,k,t_2)^{\top}L_n(t_1,k,t_2)^{-1}D_n^{X}(t_1,k,t_2)}.
\\=&C\frac{c_n}{n}^2{\eta_1^{\top}L_n(t_1,k,t_2)^{-1}\eta_1}+O_p(n^{2\kappa-1})=o_p(1),
\end{split}
\end{flalign}
where the last equality holds by the fact that $n^{2\kappa-1}=o(1)$ and ${D_n^{X}(t_1,k,t_2)^{\top}L_n(t_1,k,t_2)^{-1}D_n^{X}(t_1,k,t_2)}=O_p(1)$.

(3)  On $H_{1:n}^2(k,c_n)$, similarly, we have 
\begin{flalign*}
&\max_{k\in M_{n1}}\max_{(t_1,t_2)\in H_{1:n}^2(k,c_n) }(n\delta^2)^{-1}D_n(t_1,k,t_2)^{\top}V_n(t_1,k,t_2)^{-1}D_n(t_1,k,t_2)\\
\leq &2\max_{k\in M_{n1}}\max_{(t_1,t_2)\in H_{1:n}^2(k,c_n) }\frac{(t_2-k_1)^2(k-t_1+1)^2}{n(t_2-t_1+1)^3}{\eta_1^{\top}V_n(t_1,k,t_2)^{-1}\eta_1}
\\&+2\max_{k\in M_{n1}}n^{2\kappa-1}{D_n^{X}(t_1,k,t_2)^{\top}V_n(t_1,k,t_2)^{-1}D_n^{X}(t_1,k,t_2)}
\\\leq& 2\max_{k\in M_{n1}}\max_{(t_1,t_2)\in H_{1:n}^2(k,c_n) }C{\eta_1^{\top}V_n(t_1,k,t_2)^{-1}\eta_1}
+2\max_{k\in M_{n1}}n^{2\kappa-1}{D_n^{X}(t_1,k,t_2)^{\top}L_n(t_1,k,t_2)^{-1}D_n^{X}(t_1,k,t_2)}
\end{flalign*}
By Lemma \ref{lem_mean1},    we can see that 
$$
\max_{k\in M_{n1}}\max_{(t_1,t_2)\in H_{1:n}^2(k,c_n) }(n\delta^2)^{-1}D_n(t_1,k,t_2)^{\top}V_n(t_1,k,t_2)^{-1}D_n(t_1,k,t_2)\leq O_p(\frac{n^{4+2\kappa}}{c_n^3\iota_n^2})+o_p(1)=o_p(1).
$$


Therefore, using the result in (1)-(3),   when there is one-change point, $$\max_{k\in M_{n1}}\max_{(t_1,t_2)\in H_{1:n}}(n\delta^2)^{-1}D_n(t_1,k,t_2)^{\top}V_n(t_1,k,t_2)^{-1}D_n(t_1,k,t_2)=o_p(1).$$

\subsection{Two change-points}
We now consider the case where there are two change-points at $(k_1,k_2)$. We can decompose $M_n=M_{n1}\cup M_{n2}\cup M_{n3}$, where $M_{n1}=\{k|k_1-k>\iota_n\}$, $M_{n2}=\{k|k-k_1>\iota_n \text{ and } k_2-k >\iota_n\}$ and $M_{n3}=\{k|k-k_2>\iota_n\}$. By symmetry, we only need to show the result for $M_{n1}$ and $M_{n2}$.

(i) We first consider the set $M_{n1}=\{k|k_1-k>\iota_n\}$ and analyze the behavior of $T_n(t_1,k,t_2)$ on two subsets of $(t_1,t_2)\in H_{1:n}(k)$. Define
\begin{align*}
H_{1:n}^{1,0}(k)=&H_{1:n}(k)\cap \{(t_1,t_2)|t_2\leq k_2\},\\
H_{1:n}^{1,1}(k)=&H_{1:n}(k)\cap \{(t_1,t_2)|t_2>k_2\}.
\end{align*}

For the behavior of $T_{n}(t_1,k,t_2)$ on $(t_1,t_2)\in H_{1:n}^{1,0}(k)$, it reduces to the one change-point case. 

For the behavior of $T_{n}(t_1,k,t_2)$ on $(t_1,t_2)\in H_{1:n}^{1,1}(k)$, we first notice that 
\begin{flalign*}
&D_n(t_1,k,t_2)\\=&\frac{(k-t_1+1)(t_2-k)}{(t_2-t_1+1)^{3/2}}\Big[\frac{t_2-k_1}{t_2-k}\delta_1+\frac{t_2-k_2}{t_2-k}\delta_2\Big]+\frac{(k-t_1+1)(t_2-k)}{(t_2-t_1+1)^{3/2}}\Big[\frac{S^{X}_{t_1,k}}{(k-t_1+1)}-\frac{S^{X}_{k+1,t_2}}{(t_2-k)}\Big]
\\=&\frac{(k-t_1+1)(t_2-k_1)}{(t_2-t_1+1)^{3/2}}\delta_1+\frac{(k-t_1+1)(t_2-k_2)}{(t_2-t_1+1)^{3/2}}\delta_2+\frac{(k-t_1+1)(t_2-k)}{(t_2-t_1+1)^{3/2}}\Big[\frac{S^{X}_{t_1,k}}{(k-t_1+1)}-\frac{S^{X}_{k+1,t_2}}{(t_2-k)}\Big]
\\:=&\sum_{i=1}^{3}D_n^{(i)}(t_1,k,t_2).
\end{flalign*}
where
\begin{flalign*}
D_n^{(1)}(t_1,k,t_2)=&\frac{(k-t_1+1)}{(t_2-t_1+1)^{3/2}}(t_2-k_1)\eta_1\delta,\\ D_n^{(2)}(t_1,k,t_2)=&\frac{(k-t_1+1)}{(t_2-t_1+1)^{3/2}}(t_2-k_2)\eta_2\delta,\\ D_n^{(3)}(t_1,k,t_2)=&\frac{(k-t_1+1)(t_2-k)}{(t_2-t_1+1)^{3/2}}\Big[\frac{S^{X}_{t_1,k}}{(k-t_1+1)}-\frac{S^{X}_{k+1,t_2}}{(t_2-k)}\Big].	
\end{flalign*} 
Note that by Cauchy Schwarz inequality, we have 
$$
D_n(t_1,k,t_2)^{\top}V_n(t_1,k,t_2)^{-1}D_n(t_1,k,t_2)\leq 3\sum_{i=1}^{3}D_n^{(i)}(t_1,k,t_2)^{\top}V_n(t_1,k,t_2)^{-1}D_n^{(i)}(t_1,k,t_2)
$$
Hence, it suffices to show that $$\max_{k\in M_{n1}}\max_{(t_1,t_2)\in H_{1:n}^{1,1}(k) }D_n^{(i)}(t_1,k,t_2)^{\top}V_n(t_1,k,t_2)^{-1}D_n^{(i)}(t_1,k,t_2)=o_p(n\delta^2)$$
for $i=1,2,3$.

(1) We first show that when $(t_1,t_2)\in H_{1:n}^{1,1}(k)$, 
$$\max_{k\in M_{n1}}\max_{(t_1,t_2)\in H_{1:n}^{1,1}(k) }D_n^{(1)}(t_1,k,t_2)^{\top}L_n(t_1,k,t_2)^{-1}D_n^{(1)}(t_1,k,t_2)=o_p(n\delta^2).$$  
Note $k-t_1=O(n)$ and $t_2-t_1=O(n)$, so it suffices to show that 
\begin{flalign*}
&\max_{k\in M_{n1}}\max_{(t_1,t_2)\in H_{1:n}^{1,1}(k) }\eta_1^{\top}V_n(t_1,k,t_2)^{-1}\eta_1=o_p(1),
\end{flalign*}
which is implied by Lemma \ref{lem_mean2}. 

(2) 
We further split $H_{1:n}^{1,1}(k)$ into $H_{1:n}^{1,1}(k)=H_{1:n}^{1,11}(k)\cup H_{1:n}^{1,12}(k)$ such that 
\begin{align*}
H_{1:n}^{1,11}(k)&=H_{1:n}^{1,1}(k)\cap \{t_2-k_2=o(n)\},\\
H_{1:n}^{1,12}(k)&=H_{1:n}^{1,1}(k)\cap \{t_2-k_2=O(n)\}.
\end{align*}
For $(t_1,t_2)\in H_{1:n}^{1,11}(k)$, using the fact that $t_2-k_2=o(n)$, we have that 
\begin{flalign*}
&\max_{k\in M_{n1}}\max_{(t_1,t_2)\in H_{1:n}^{1,11}(k) }(n\delta^2)^{-1}D_n^{(2)}(t_1,k,t_2)^{\top}V_n(t_1,k,t_2)^{-1}D_n^{(2)}(t_1,k,t_2)\\\leq& \max_{k\in M_{n1}}\max_{(t_1,t_2)\in H_{1:n}^{1,11}(k) }(n\delta^2)^{-1}D_n^{(2)}(t_1,k,t_2)^{\top}L_n(t_1,k,t_2)^{-1}D_n^{(2)}(t_1,k,t_2)\\=&\max_{(t_1,t_2)\in H_{1:n}^{1,11}(k) }O_p(\frac{(t_2-k_2)^2}{n^2})=o_p(1).
\end{flalign*}
For $(t_1,t_2)\in H_{1:n}^{1,12}(k)$, using similar arguments used in Lemma \ref{lem_mean2}, we can obtain that 
$$\max_{k\in M_{n1}}\max_{(t_1,t_2)\in H_{1:n}^{1,12}(k) }\eta_2^{\top}V_n(t_1,k,t_2)^{-1}\eta_2\leq O_p(\frac{n^{1+2\kappa}}{n^2}).$$
Hence, results above indicate that 
$$\max_{k\in M_{n1}}\max_{(t_1,t_2)\in H_{1:n}^{1,1}(k) }D_n^{(2)}(t_1,k,t_2)^{\top}V_n(t_1,k,t_2)^{-1}D_n^{(2)}(t_1,k,t_2)=o_p(n\delta^2).$$ 

(3) Using $V_n(t_1,k,t_2)^{-1}\leq L_n(t_1,k,t_2)^{-1}$ and the invariance principle, we  see that $$\max_{k\in M_{n1}}\max_{(t_1,t_2)\in H_{1:n}^{1,1}(k) }D_n^{(3)}(t_1,k,t_2)^{\top}L_n(t_1,k,t_2)^{-1}D_n^{(3)}(t_1,k,t_2)=O_p(1)=o_p(n\delta^2).$$

Therefore, using the result in (1)-(3),  we have  , $$\max_{k\in M_{n1}}\max_{(t_1,t_2)\in H_{1:n}^{1,1}}(n\delta^2)^{-1}D_n(t_1,k,t_2)^{\top}V_n(t_1,k,t_2)^{-1}D_n(t_1,k,t_2)=o_p(1).$$

(ii)  We now focus on the set $M_{n2}=\{k|k-k_1>\iota_n \text{ and } k_2-k >\iota_n\}$ and analyze the behavior of $T_{n}(t_1,k,t_2)$ on three subsets of $(t_1,t_2)\in H_{1:n}(k)$. Define
\begin{align*}
H_{1:n}^{2,0}(k)&=H_{1:n}(k)\cap \{(t_1,t_2)|t_1>k_1\},\\
H_{1:n}^{2,1}(k)&=H_{1:n}(k)\cap \{(t_1,t_2)|t_1\leq k_1,t_2\leq k_2\},\\
H_{1:n}^{2,2}(k)&=H_{1:n}(k)\cap \{(t_1,t_2)|t_1\leq k_1,t_2> k_2\}.
\end{align*}

For the behavior of $T_n(t_1,k,t_2)$ on $(t_1,t_2)\in H_{1:n}^{2,0}(k)$ and $(t_1,t_2)\in H_{1:n}^{2,1}(k)$, it reduces to one change-point case.

Forthe behavior of $T_n(t_1,k,t_2)$ on $(t_1,t_2)\in H_{1:n}^{2,2}(k)$, we have that  \begin{flalign*}
&D_n(t_1,k,t_2)\\=&-\frac{(k_1-t_1+1)(t_2-k)}{(t_2-t_1+1)^{3/2}}\delta\eta_1-\frac{(k-t_1+1)(t_2-k_2)}{(t_2-t_1+1)^{3/2}}\delta\eta_2+\frac{(k-t_1+1)(t_2-k)}{(t_2-t_1+1)^{3/2}}[\frac{S^{X}_{t_1,k}}{(k-t_1+1)}-\frac{S^{X}_{k+1,t_2}}{(t_2-k)}]\\:=&\sum_{i=1}^{3}D_n^{(i)}(t_1,k,t_2),
\end{flalign*}
and 
\begin{flalign*}
L_n(t_1,k,t_2)=&\Big[\sum_{i=t_1}^{k_1}+\sum_{i=k_1+1}^{k}\Big]\frac{(i-t_1+1)^2(k-i)^2}{(t_2-t_1+1)^{2}(k-t_1+1)^2}\Big\{\frac{S_{t_1,i}}{i-t_1+1}-\frac{S_{i+1,k}}{k-i}\Big\}^{\otimes 2}\\
:=&L_{n1}(t_1,k,t_2)+L_{n2}(t_1,k,t_2),\\
R_n(t_1,k,t_2)=&\Big[\sum_{i=k+1}^{k_2}+\sum_{i=k_2+1}^{t_2}\Big]\frac{(t_2-i+1)^2(i-1-k)^2}{(t_2-t_1+1)^{2}(t_2-k)^2}\Big\{\frac{S_{i,t_2}}{t_2-i+1}-\frac{S_{k+1,i-1}}{i-k-1}\Big\}^{\otimes 2}\\:=&R_{n1}(t_1,k,t_2)+R_{n2}(t_1,k,t_2).
\end{flalign*}
For $k-k_1$ and $k_2-k$, since $k_2-k+k-k_1=k_2-k_1=O(n)$, without loss of generality, we can assume $k_2-k=O(n)$, and for now we focus on $R_{n1}(t_1,k,t_2)$, where
$$
R_{n1}(t_1,k,t_2)=\sum_{i=k+1}^{k_2}\frac{(t_2-i+1)^2(i-1-k)^2}{(t_2-t_1+1)^{2}(t_2-k)^2}\Big\{\frac{S^X_{i,t_2}}{t_2-i+1}-\frac{S^X_{k+1,i-1}}{i-k-1}+\frac{(t_2-k_2)}{t_2-i+1}\delta_2\Big\}^{\otimes 2}
$$
Let $x_i=\frac{(i-1-k)}{(t_2-t_1+1)(t_2-k)}$, $y_i=\frac{(t_2-i+1)(i-1-k)}{(t_2-t_1+1)(t_2-k)}\Big\{\frac{S^X_{i,t_2}}{t_2-i+1}-\frac{S^X_{k+1,i-1}}{i-k-1}\Big\}$, and $z=(t_2-k_2)\delta_2$, and invoke Lemma \ref{lem_lambda}, we have that 
\begin{flalign*}
&R_{n1}(t_1,k,t_2)\\\geq& \sum_{i=k+1}^{k_2}\frac{(t_2-i+1)^2(i-1-k)^2}{(t_2-t_1+1)^{2}(t_2-k)^2}\Big\{\frac{S^X_{i,t_2}}{t_2-i+1}-\frac{S^X_{k+1,i-1}}{i-k-1}\Big\}^{\otimes 2}-
\Big(\frac{1}{n}\sum_{i=k+1}^{k_2}n^2\frac{(i-1-k)^2}{(t_2-t_1+1)^{2}(t_2-k)^2}\Big)^{-1}\\&\times
\Big\{\frac{n^{4}}{(t_2-t_1+1)^2(t_2-k)^2}\frac{1}{n}\sum_{i=k+1}^{k_2}\frac{(t_2-i+1)}{n}\Big[\frac{i-k-1}{n}\frac{S^X_{i,t_2}}{\sqrt{n}}-\frac{t_2-i+1}{n}\frac{S^X_{k+1,i-1}}{\sqrt{n}}\Big]\Big\}^{\otimes 2}
\\:=&A_n.
\end{flalign*}
Note here $k_2-k=O(n)$, hence we can show that the RHS of the above inequality will converge in distribution to 
\begin{flalign*}
&\Sigma_X^{1/2}\int_{u}^{\tau_2}\frac{1}{(u_2-u_1)^2(u_2-u)^2}\Big\{(s-u)[\mathcal{B}_d(u_2)-\mathcal{B}_d(s)]-(u_2-s)[\mathcal{B}_d(s)-\mathcal{B}_d(u)]\Big\}^{\otimes 2}ds\Sigma_X^{1/2}\\-& 
\Sigma_X^{1/2}	\Big(\int_{u}^{\tau_2}\frac{(s-u)^2}{(u_2-u_1)^2(u_2-u)^2}ds\Big)^{-1}\\&\times
\Big(\frac{1}{(u_2-u_1)^2(u_2-u)^2}\int_{u}^{\tau_2}(u_2-s)\Big\{(s-u)[\mathcal{B}_d(u_2)-\mathcal{B}_d(s)]-(u_2-s)[\mathcal{B}_d(s)-\mathcal{B}_d(u)]\Big\}ds\Big)^{\otimes 2}\Sigma_X^{1/2},
\end{flalign*}
which is of order $O_p(1)$ and positive definite almost surely (by the fact that independent Brownian Motions are linearly uncorrelated almost surely).  This implies that 
$$
V_n(t_1,k,t_2)^{-1}\leq R_{n1}(t_1,k,t_2)^{-1}\leq A_n^{-1}=O_p(1).
$$
Note that $D_n^{(3)}(t_1,k,t_2)=O_p(1)$, hence
$$
\max_{k\in M_{n1}}\max_{(t_1,t_2)\in H_{1:n}^{2,2}(k,c_n) }D_n^{(3)}(t_1,k,t_2)^{\top}V_n(t_1,k,t_2)^{-1}D_n^{(3)}(t_1,k,t_2)
\leq D_n^{(3)}(t_1,k,t_2)^{\top}A_n^{-1}D_n^{(3)}(t_1,k,t_2)=O_p(1),
$$
this implies that $\max_{k\in M_{n1}}\max_{(t_1,t_2)\in H_{1:n}^{2,2}(k,c_n) }(n\delta^2)^{-1}D_n^{(3)}(t_1,k,t_2)^{\top}V_n(t_1,k,t_2)^{-1}D_n^{(3)}(t_1,k,t_2)=o_p(1)$.

\noindent We further split $H_{1:n}^{2,2}(k)$ into four subsets:
\begin{align*}
&(1)~H_{1:n}^{2,21}(k,c_n)=H_{1:n}^{2,2}(k)\cap \{(t_1,t_2)|t_1\leq k_1-c_n \text{ and }t_2> k_2+c_n\},\\
&(2)~H_{1:n}^{2,22}(k,c_n)=H_{1:n}^{2,2}(k)\cap \{(t_1,t_2)|t_1>k_1-c_n \text{ and }t_2\leq k_2+c_n\},\\
&(3)~H_{1:n}^{2,23}(k,c_n)=H_{1:n}^{2,2}(k)\cap \{(t_1,t_2)|t_1\leq k_1-c_n \text{ and }t_2\leq k_2+c_n\},\\
&(4)~H_{1:n}^{2,24}(k,c_n)=H_{1:n}^{2,2}(k)\cap \{(t_1,t_2)|t_1> k_1-c_n \text{ and }t_2> k_2+c_n\},
\end{align*}
where  $c_n$ satisfies (\ref{cn}).

\noindent Using the Cauchy-Schwarz inequality, it suffices to show that 
for $j=1,2,3,4$, and $i=1,2$, 
$$
\max_{k\in M_{n1}}\max_{(t_1,t_2)\in H_{1:n}^{2,2j}(k,c_n) }D_n^{(i)}(t_1,k,t_2)^{\top}V_n(t_1,k,t_2)^{-1}D_n^{(i)}(t_1,k,t_2)=o_p(n\delta^2).
$$
To proceed, we remark here that 
\begin{flalign*}
L_{n1}(t_1,k,t_2)=&\sum_{i=t_1}^{k_1}\frac{(i-t_1+1)^2(k-i)^2}{(t_2-t_1+1)^{2}(k-t_1+1)^2}\Big\{\frac{S^X_{t_1,i}}{i-t_1+1}-\frac{S^X_{i+1,k}}{k-i}-\frac{(k-k_1)}{k_1-i}\delta_1\Big\}^{\otimes 2},\\
R_{n2}(t_1,k,t_2)=&\sum_{i=k_2+1}^{t_2}\frac{(t_2-i+1)^2(i-1-k)^2}{(t_2-t_1+1)^{2}(t_2-k)^2}\Big\{\frac{S^X_{i,t_2}}{t_2-i+1}-\frac{S^X_{k+1,i-1}}{i-k-1}+\frac{(k_2-k)}{i-k-1}\delta_2\Big\}^{\otimes 2}.
\end{flalign*}

(1) For $k\in M_{n2}$ and $(t_1,t_2)\in H_{1:n}^{2,21}(k,c_n)$, $(t_1,k)$ contains one change-point $k_1$. Hence, 
a similar argument as in Lemma \ref{lem_mean1} would yield that 
\begin{flalign}\label{L1}
\max_{k\in M_{n1}}\max_{(t_1,t_2)\in H_{1:n}^{2,21}(k,c_n) }\eta_1^{\top}V_{n}(t_1,k,t_2)^{-1}\eta_1\leq \max_{k\in M_{n1}}\max_{(t_1,t_2)\in H_{1:n}^{2,21}(k,c_n) }\eta_1^{\top}L_{n1}(t_1,k,t_2)^{-1}\eta_1= O_p(\frac{n^{4+2\kappa}}{c_n^3\iota_n^2}).
\end{flalign}
Similarly, $(k,t_2)$ contains one change-point $k_2$, hence 
\begin{flalign}\label{R2}
\max_{k\in M_{n1}}\max_{(t_1,t_2)\in H_{1:n}^{2,21}(k,c_n) }\eta_2^{\top}R_{n2}(t_1,k,t_2)^{-1}\eta_2\leq O_p(\frac{n^{4+2\kappa}}{c_n^3\iota_n^2}).
\end{flalign}
Therefore, recall $c_n$ satisfies (\ref{cn}), so that $\frac{n^{4+2\kappa}}{c_n^3\iota_n^2}\to 0$  as $n\to\infty$. Hence,  we have that $$(n\delta^2)^{-1}\max_{k\in M_{n1}}\max_{(t_1,t_2)\in H_{1:n}^{2,21}(k,c_n) }\sum_{i=1}^{2}D_n^{(i)}(t_1,k,t_2)^{\top}V_n(t_1,k,t_2)^{-1}D_n^{(i)}(t_1,k,t_2)=o_p(1).$$

(2)
Note that in this case, 
\begin{flalign}\label{L2}
\begin{split}
&(n\delta^2)^{-1}\max_{k\in M_{n1}}\max_{(t_1,t_2)\in H_{1:n}^{2,22}(k,c_n) }D_n^{(1)}(t_1,k,t_2)^{\top}V_n(t_1,k,t_2)^{-1}D_n^{(1)}(t_1,k,t_2)\\\leq&\max_{k\in M_{n1}}\max_{(t_1,t_2)\in H_{1:n}^{2,22}(k,c_n) } C\frac{(k_1-t_1)^2}{n^2}\eta_1 A_n^{-1}\eta_1=O_p(\frac{c_n^2}{n^2})=o_p(1)
\end{split}
\end{flalign}
and 
\begin{flalign}\label{R1}
\begin{split}
&(n\delta^2)^{-1}\max_{k\in M_{n1}}\max_{(t_1,t_2)\in H_{1:n}^{2,22}(k,c_n) }D_n^{(2)}(t_1,k,t_2)^{\top}V_n(t_1,k,t_2)^{-1}D_n^{(2)}(t_1,k,t_2)\\\leq&\max_{k\in M_{n1}}\max_{(t_1,t_2)\in H_{1:n}^{2,22}(k,c_n) } C\frac{(t_2-k_2)^2}{n^2}\eta_1 A_n^{-1}\eta_1=O_p(\frac{c_n^2}{n^2})=o_p(1).
\end{split}
\end{flalign}

(3)
Using (\ref{L1}) and (\ref{R1}), the result follows.

(4)
Using (\ref{R2}) and (\ref{L2}), the result follows.

\subsection{Three or more change-points and consistency arguments}

By  using the results shown in the cases of  one change-point and two change-points, the argument follows from subsection \ref{sec:three} and \ref{sec:con} with minor modifications, and thus is omitted.

\subsection{Lemmas}
\begin{lemma}\label{lem_tripathi}
Let $x_i\in\mathbb{R}^p$ and $y_i\in\mathbb{R}^q$, $i=1,\cdots,n$ such that, almost surely, $\|x_i\|^2<\infty$ and $\|y_i\|^2<\infty$. If $\sum_{i=1}^{n}y_iy_i^{\top}$ is non-singular, then, almost surely,
$$
\Big[\sum_{i=1}^{n}x_iy_i^{\top}\Big]\Big[\sum_{i=1}^{n}y_iy_i^{\top}\Big]^{-1}\Big[\sum_{i=1}^{n}x_iy_i^{\top}\Big]\leq \sum_{i=1}^{n}x_ix_i^{\top}.
$$
The equality holds if and only if $x^{\top}a_i+y^{\top}b_i=0$ almost surely, $i=1,\cdots,n$, for some $(a,b)\in\mathbb{R}^{p}\times \mathbb{R}^{q}$.
\end{lemma}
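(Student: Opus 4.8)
The plan is to recast the claim in the Loewner (positive semidefinite) order and reduce it to a single positive semidefiniteness statement. Write $A=\sum_{i=1}^n x_ix_i^\top$, $B=\sum_{i=1}^n x_iy_i^\top$, and $C=\sum_{i=1}^n y_iy_i^\top$, so that $A\in\mathbb{R}^{p\times p}$, $B\in\mathbb{R}^{p\times q}$, $C\in\mathbb{R}^{q\times q}$; for the dimensions to match, the last factor in the displayed inequality is to be read as $B^\top=\sum_i y_ix_i^\top$. The inequality to be proved is then $BC^{-1}B^\top\leq A$, i.e.\ $A-BC^{-1}B^\top$ is positive semidefinite. First I would note that $C$ is not merely non-singular but in fact positive definite, being an invertible sum of the rank-one positive semidefinite matrices $y_iy_i^\top$; hence $C^{-1}$ exists and is positive definite, and $BC^{-1}B^\top$ is well defined.

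The core step is a completion-of-squares argument. Fix an arbitrary $v\in\mathbb{R}^p$ and set $w=C^{-1}B^\top v\in\mathbb{R}^q$. Since each summand below is a genuine square, I would start from the elementary identity
\begin{equation*}
0\leq \sum_{i=1}^n\big(v^\top x_i-w^\top y_i\big)^2 = v^\top A v-2\,v^\top B w+w^\top C w,
\end{equation*}
obtained by expanding the square and collecting the three resulting sums into $A$, $B$, and $C$. Substituting $w=C^{-1}B^\top v$ and using $C^{-1}CC^{-1}=C^{-1}$ collapses the right-hand side to $v^\top A v-v^\top BC^{-1}B^\top v$. As $v$ is arbitrary, this yields $v^\top(A-BC^{-1}B^\top)v\geq0$ for every $v$, which is exactly the asserted Loewner inequality. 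An equivalent and perhaps cleaner packaging is to observe that the block matrix $M=\sum_i (x_i^\top,y_i^\top)^\top(x_i^\top,y_i^\top)$ is positive semidefinite as a sum of outer products, and then invoke the Schur complement criterion: since $C\succ0$, the statement $M\succeq0$ is equivalent to the Schur complement $A-BC^{-1}B^\top\succeq0$. Either route delivers the inequality.

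For the equality characterization I would trace back through the same identity. The matrix equality $BC^{-1}B^\top=A$ holds iff $v^\top(A-BC^{-1}B^\top)v=0$ for all $v$, which by the displayed identity forces $\sum_i(v^\top x_i-w_v^\top y_i)^2=0$ with $w_v=C^{-1}B^\top v$, hence $v^\top x_i=w_v^\top y_i$ almost surely for each $i$. Taking $v$ over a basis of $\mathbb{R}^p$ and assembling the corresponding coefficient vectors produces the stated linear-dependence relation of the form $x_i^\top a+y_i^\top b=0$ (a.s.) for suitable $(a,b)$; conversely, any such relation makes each square vanish and forces equality, giving the ``if and only if''.

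The only genuine subtlety—rather than a serious obstacle—is the measure-theoretic bookkeeping: the almost-sure finiteness hypotheses $\|x_i\|^2,\|y_i\|^2<\infty$ must be used to restrict attention to the full-measure event on which $A$, $B$, $C$ are finite matrices, so that the purely algebraic argument (and the existence of $C^{-1}$) applies pointwise in $\omega$. Once $C^{-1}$ is available the reasoning is entirely linear-algebraic, requiring no analytic estimates or limiting arguments, so the main work is simply keeping the transposes and dimensions consistent.
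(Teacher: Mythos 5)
Your proof of the inequality is correct and complete, but it is worth noting that the paper does not actually prove this lemma at all: its ``proof'' consists of a citation to Tripathi (1999), where the result appears as a matrix Cauchy--Schwarz inequality. Your completion-of-squares argument (equivalently, the Schur-complement criterion applied to the positive semidefinite block matrix $\sum_i (x_i^\top,y_i^\top)^\top(x_i^\top,y_i^\top)$) is exactly the standard proof of that cited result, so you have in effect supplied the missing self-contained argument; you are also right that the last factor in the display must be read as $\sum_i y_i x_i^\top$ for the dimensions to match, a typo inherited by the paper. Two small caveats. First, your claim that $C=\sum_i y_iy_i^\top$ is ``positive definite'' needs the non-singularity hypothesis, which you do invoke --- a sum of rank-one PSD matrices is only PSD in general --- so phrase it as ``PSD and invertible, hence positive definite.'' Second, your equality characterization diverges from the stated one: full matrix equality $A=BC^{-1}B^\top$ forces one linear relation $v^\top x_i = w_v^\top y_i$ for \emph{each} basis vector $v$ of $\mathbb{R}^p$ (so $p$ relations in general), whereas the lemma as written (and Tripathi's original) asserts a \emph{single} pair $(a,b)$, which characterizes only the weaker statement that $A-BC^{-1}B^\top$ is singular. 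The lemma's own phrasing is garbled ($x^\top a_i + y^\top b_i$ with the indices misplaced), and since the paper uses only the inequality --- in Lemmas \ref{lem_lambda} and \ref{lem_mean1} --- this discrepancy is harmless, but you should not claim the single-pair condition is equivalent to matrix equality.
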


\noindent\textsc{Proof of Lemma \ref{lem_tripathi}}

See \cite{tripathi1999matrix} and remarks therein.

\begin{lemma}\label{lem_lambda}
Let $x_i\in\mathbb{R}$ such that $\sum_{i=1}^{n}x_i^2\neq 0$, $y_i\in\mathbb{R}^d$, $i=1,\cdots,n$, and $z\in\mathbb{R}^d$. Furthermore, let $$A_n=\sum_{i=1}^{n}\Big(y_iy_i^{\top}+x_izy_i^{\top}+x_iy_iz^{\top}+x_i^2zz^{\top}\Big),$$
and 
$$B_n=\sum_{i=1}^{n}y_iy_i^{\top}-\frac{\Big(\sum_{i=1}^{n}x_iy_i\Big)\Big(\sum_{i=1}^{n}x_iy_i\Big)^{\top}}{\sum_{i=1}^{n}x_i^2}.$$
Then,
$$
A_n-B_n\geq0,\quad\mbox{and}\quad B_n\geq 0,
$$
here a square matrix $A\geq 0$ indicates that $A$ is semi-positive definite.

\end{lemma}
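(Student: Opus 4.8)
The final statement to prove is Lemma~\ref{lem_lambda}, a purely algebraic/matrix-inequality claim: with $x_i\in\mathbb{R}$, $y_i\in\mathbb{R}^d$, $z\in\mathbb{R}^d$, defining $A_n$ and $B_n$ as in the statement, one has $A_n-B_n\geq 0$ and $B_n\geq 0$. The plan is to handle the two assertions separately, since each reduces to a standard Cauchy--Schwarz-type argument.

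\textbf{Proving $B_n\geq 0$.} First I would observe that $B_n$ is exactly the Schur-complement/Cauchy--Schwarz residual obtained by regressing the $y_i$ onto the scalars $x_i$. The cleanest route is to apply the matrix Cauchy--Schwarz inequality of \cite{tripathi1999matrix} restated in Lemma~\ref{lem_tripathi}, taking the vectors $x_i\in\mathbb{R}^p$ there to be our $y_i\in\mathbb{R}^d$ and the vectors $y_i\in\mathbb{R}^q$ there to be our scalars $x_i\in\mathbb{R}$ (so $q=1$). Lemma~\ref{lem_tripathi} then gives
\begin{align*}
\Big[\sum_{i=1}^n y_i x_i\Big]\Big[\sum_{i=1}^n x_i^2\Big]^{-1}\Big[\sum_{i=1}^n y_i x_i\Big]^{\top}\leq \sum_{i=1}^n y_i y_i^{\top},
\end{align*}
which, after noting $\sum_i x_i^2$ is a nonzero scalar by hypothesis, is precisely $B_n\geq 0$. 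This is the short part.

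\textbf{Proving $A_n-B_n\geq 0$.} The key structural observation is that $A_n$ is a perfect sum of squares: each summand factors as
\begin{align*}
y_i y_i^{\top}+x_i z y_i^{\top}+x_i y_i z^{\top}+x_i^2 z z^{\top}=(y_i+x_i z)(y_i+x_i z)^{\top},
\end{align*}
so that $A_n=\sum_{i=1}^n (y_i+x_i z)(y_i+x_i z)^{\top}$, manifestly positive semidefinite. I would then set $w_i=y_i+x_i z$ and compute $B_n$ in terms of the $w_i$. Since $\sum_i x_i w_i=\sum_i x_i y_i+(\sum_i x_i^2)z$, one checks by direct substitution that
\begin{align*}
\sum_{i=1}^n w_i w_i^{\top}-\frac{\big(\sum_i x_i w_i\big)\big(\sum_i x_i w_i\big)^{\top}}{\sum_i x_i^2}=\sum_{i=1}^n y_i y_i^{\top}-\frac{\big(\sum_i x_i y_i\big)\big(\sum_i x_i y_i\big)^{\top}}{\sum_i x_i^2}=B_n,
\end{align*}
because the rank-one $z z^{\top}$ and cross terms cancel exactly against the correction term. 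Thus $A_n-B_n$ equals the rank-one matrix $\big(\sum_i x_i w_i\big)\big(\sum_i x_i w_i\big)^{\top}/\sum_i x_i^2$, which is positive semidefinite whenever $\sum_i x_i^2>0$. Hence $A_n-B_n\geq 0$.

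\textbf{Main obstacle.} There is no deep analytic difficulty here; the risk is entirely bookkeeping. The delicate step is verifying the cancellation in the $w_i$-substitution so that $B_n$ is genuinely invariant under the shift $y_i\mapsto y_i+x_i z$ and that the leftover term is exactly the stated rank-one matrix. I would carry out that substitution carefully, tracking the $zz^{\top}$, $z(\sum x_i y_i)^{\top}$, and symmetric cross terms, to confirm they vanish. Once the sum-of-squares factorization of $A_n$ and the shift-invariance of $B_n$ are established, both inequalities follow immediately from positive semidefiniteness of sums of outer products, so no further estimates or limiting arguments are needed.
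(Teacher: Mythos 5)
Your proposal is correct and follows essentially the same route as the paper: the $B_n\geq 0$ part invokes the matrix Cauchy--Schwarz inequality of Lemma~\ref{lem_tripathi} exactly as the paper does, and your identity $A_n-B_n=\big(\sum_i x_i w_i\big)\big(\sum_i x_i w_i\big)^{\top}/\sum_i x_i^2$ with $w_i=y_i+x_i z$ is precisely the matrix form of the paper's completed square $\big[(\sum_i x_i^2)^{1/2}a^{\top}z+(\sum_i x_i^2)^{-1/2}(\sum_i x_iy_i^{\top})a\big]^2$ in the quadratic form $a^{\top}A_na$. The cancellation you flag as the delicate step does go through, so no gap.
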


\noindent\textsc{Proof of Lemma \ref{lem_lambda}}

For any $a\in\mathbb{R}^d$, we have 
\begin{flalign*}
&a^{\top} A_{n}(t_1,k,t_2)a\\=&	a^{\top}\Big(\sum_{i=1}^{n}y_iy_i^{\top}\Big)a+2(a^{\top}z)\Big(\sum_{i=1}^{n}x_iy_i^{\top}a\Big)+\Big(\sum_{i=1}^{n}x_i^2\Big)(a^{\top}z)^2
\\=&a^{\top}\Big(\sum_{i=1}^{n}y_iy_i^{\top}\Big)a+\Big[\big(\sum_{i=1}^{n}x_i^2\big)^{1/2}a^{\top}z+\big(\sum_{i=1}^{n}x_i^2\big)^{-1/2}\big(\sum_{i=1}^{n}x_iy_i^{\top}\big)a\Big]^2-\big(\sum_{i=1}^{n}x_i^2\big)^{-1}\big(\sum_{i=1}^{n}x_iy_i^{\top}a\big)^2\\
=&a^{\top}B_na+\Big[\big(\sum_{i=1}^{n}x_i^2\big)^{1/2}a^{\top}z+\big(\sum_{i=1}^{n}x_i^2\big)^{-1/2}\big(\sum_{i=1}^{n}x_iy_i^{\top}\big)a\Big]^2
\\\geq& a^{\top}B_na.
\end{flalign*}
Note the above results hold for any $a\in\mathbb{R}^d$, this implies that $A_n-B_n$ is semi-positive definite. 

In addition, by Lemma \ref{lem_tripathi}, since $\sum_{i=1}^{n}x_i^2>0$, we have 
$$
\Big(\sum_{i=1}^{n}x_iy_i\Big)\Big(\sum_{i=1}^{n}x_i^2\Big)^{-1}\Big(\sum_{i=1}^{n}x_iy_i\Big)^{\top}\leq \sum_{i=1}^{n}y_iy_i^{\top}.
$$

$\hfill \square$

\begin{lemma}\label{lem_mean1}
For the one change-point case, 
\begin{flalign}\label{eq:suffice1}
\max_{k\in M_{n1}}\max_{(t_1,t_2)\in H_{1:n}^2(k,c_n) }{\eta_1^{\top}V_n(t_1,k,t_2)^{-1}\eta_1}\leq O_p(\frac{n^{4+2\kappa}}{c_n^3\iota_n^2})=o_p(1).
\end{flalign}
\end{lemma}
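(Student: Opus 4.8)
The plan is to exploit the positive semidefinite ordering of the self-normalizer together with the fact that, throughout $H_{1:n}^2(k,c_n)$, the deterministic mean jump carried by the right block of $V_n$ points exactly in the direction $\eta_1$. Write $R_n(t_1,k,t_2)=\sum_{i=k+1}^{t_2}\frac{(t_2-i+1)^2(i-1-k)^2}{(t_2-t_1+1)^2(t_2-k)^2}\big(\widehat\theta_{i,t_2}-\widehat\theta_{k+1,i-1}\big)^{\otimes 2}$ and split it at $k_1$ into $R_{n1}$ (over $i\in[k+1,k_1]$) and $R_{n2}$ (over $i\in[k_1+1,t_2]$). For $k\in M_{n1}$ and $i>k_1$, a direct computation for the mean functional gives $\widehat\theta_{i,t_2}-\widehat\theta_{k+1,i-1}=a_i\,\delta\eta_1+\zeta_i$, where $a_i=\tfrac{k_1-k}{i-1-k}$ and $\zeta_i=\tfrac{S^{X}_{i,t_2}}{t_2-i+1}-\tfrac{S^{X}_{k+1,i-1}}{i-1-k}$ is purely stochastic (recall $\delta=n^{-\kappa}$). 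Since $V_n=L_n+R_{n1}+R_{n2}$ is a sum of outer products, $V_n(t_1,k,t_2)\ge R_{n2}(t_1,k,t_2)$ in the positive semidefinite order, so from the variational identity $\eta_1^\top A^{-1}\eta_1=\sup_{v}\,(v^\top\eta_1)^2/(v^\top A v)$ I obtain
\[
\eta_1^\top V_n(t_1,k,t_2)^{-1}\eta_1\ \le\ \Big(\ \inf_{v:\,v^\top\eta_1=1}\ \sum_{i=k_1+1}^{t_2} w_i\,\big(a_i\delta+v^\top\zeta_i\big)^2\ \Big)^{-1},\qquad w_i=\tfrac{(t_2-i+1)^2(i-1-k)^2}{(t_2-t_1+1)^2(t_2-k)^2}.
\]
This reformulation is robust to possible singularity of $R_{n2}$, since it only requires the infimum to be positive, and it is the matrix analogue of Hua's identity used in the univariate Lemma \ref{lem_1}.

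Next I would lower-bound the infimum. Writing $v=\eta_1/\|\eta_1\|^2+c$ with $c\in\eta_1^\perp$, the inner sum becomes a $w$-weighted least-squares residual of the response $a_i\delta+v_0^\top\zeta_i$ regressed on the stochastic regressors $\{c^\top\zeta_i\}_{c\in\eta_1^\perp}$. The deterministic component $a_i\delta$ lives outside the (stochastic) column span of the regressors, so the residual should retain the full order of the signal energy $\delta^2\sum_{i}w_i a_i^2$. A routine estimate gives $\sum_{i=k_1+1}^{t_2} w_i a_i^2\asymp \frac{(k_1-k)^2(t_2-k_1)^3}{(t_2-t_1+1)^2(t_2-k)^2}\gtrsim \frac{\iota_n^2 c_n^3}{n^4}$, using $k_1-k>\iota_n$ on $M_{n1}$, $t_2-k_1>c_n$ on $H_{1:n}^2(k,c_n)$, and $(t_2-t_1+1),(t_2-k)=O(n)$. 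Hence, modulo the stochastic corrections, the infimum is of order $\delta^2\iota_n^2c_n^3/n^4=n^{-2\kappa}\iota_n^2c_n^3/n^4$, whose reciprocal is $O_p\!\big(n^{4+2\kappa}/(c_n^3\iota_n^2)\big)$, the asserted rate.

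The main obstacle is making the last sentence precise: showing, \emph{uniformly} over all admissible triples $(t_1,k,t_2)$, that the least-squares regression over $\eta_1^\perp$ and the associated signal--noise cross terms erode the deterministic energy $\delta^2\sum_i w_i a_i^2$ only by a factor $1-o_p(1)$. Following the split $a_i\delta+v_0^\top\zeta_i$ into deterministic and stochastic parts, I would bound the reduction due to regression using the matrix Cauchy--Schwarz inequality (Lemma \ref{lem_tripathi}), which controls the projection onto the stochastic Gram matrix $\Gamma=\sum_i w_i (P_\perp\zeta_i)(P_\perp\zeta_i)^\top$; the invariance principle assumed in Theorem \ref{thm_onechange_multi} then guarantees that $\Gamma$ and the cross term $\sum_i w_i a_i P_\perp\zeta_i$ are of the right (smaller) order and that $\Gamma$ is bounded below, all uniformly. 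This plays the role of the $B_n,C_n$ bounds in Lemma \ref{lem_1} but is genuinely harder because of the matrix inverse $\Gamma^{-1}$; Lemma \ref{lem_lambda} is not directly applicable here, as it discards rather than exploits the signal direction.

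Finally, combining the pieces yields $\max_{k\in M_{n1}}\max_{(t_1,t_2)\in H_{1:n}^2(k,c_n)}\eta_1^\top V_n(t_1,k,t_2)^{-1}\eta_1=O_p\!\big(n^{4+2\kappa}/(c_n^3\iota_n^2)\big)$, and the choice of $c_n$ in \eqref{cn}, namely $c_n^{-3}n^4\delta^{-2}\iota_n^{-2}\to0$ with $\delta^{-2}=n^{2\kappa}$, gives $n^{4+2\kappa}/(c_n^3\iota_n^2)\to0$, establishing the $o_p(1)$ conclusion of \eqref{eq:suffice1}.
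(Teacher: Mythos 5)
Your setup is sound: the reduction $V_n\ge R_{n2}$ in the positive semidefinite order, the variational identity $\eta_1^\top V_n^{-1}\eta_1\le\big(\inf_{v:\,v^\top\eta_1=1}v^\top R_{n2}v\big)^{-1}$, and the computation of the signal energy $\delta^2\sum_i w_i a_i^2\asymp n^{-2\kappa}\iota_n^2c_n^3/n^4$ are all correct and match the orders in the paper. But the step you yourself flag as "the main obstacle" is a genuine gap, and the tools you propose for it do not close it. Completing the square in $v$ gives $\inf_{v^\top\eta_1=1}v^\top R_{n2}v\ge \delta^2\big(\sum_i w_ia_i^2-u^\top\Gamma^{-1}u\big)$ with $u=\sum_i w_ia_i\zeta_i$ and $\Gamma=\sum_i w_i\zeta_i\zeta_i^\top$; Lemma \ref{lem_tripathi} applied to this pair yields only $u^\top\Gamma^{-1}u\le\sum_i w_ia_i^2$, i.e.\ the useless lower bound $0$. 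What is actually needed is a \emph{strict}, \emph{uniform-in-$(t_1,k,t_2)$} improvement of Cauchy--Schwarz — equivalently, that the deterministic vector $(w_i^{1/2}a_i)_i$ keeps an angle bounded away from zero from the random span of the regressors — and neither the matrix Cauchy--Schwarz inequality nor the invariance principle as invoked delivers this: the crude bound is $\|PA\|^2\le\|A\|^2\,\lambda_{\max}(\Gamma)/\lambda_{\min}(\Gamma)=\|A\|^2\,O_p(1)$, which permits erosion by a constant factor, or even total erosion in the worst case.

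The paper's proof of Lemma \ref{lem_mean1} circumvents exactly this difficulty by applying Lemma \ref{lem_tripathi} in the opposite direction: with $x_i\equiv 1$ it gives $Z_y^\top Z_{yy}^{-1}Z_y\le 1$, where $Z_y$ is the \emph{average} of the vectors $y_i$ and $Z_{yy}$ their Gram matrix, so that $(t_2-k_1)Z_y^\top V_n^{-1}Z_y\le 1$. Since $Z_y$ decomposes as a deterministic multiple of $\eta_1$ of known size $\asymp (k_1-k)(t_2-k_1)n^{-2-\kappa}$ plus a stochastic remainder $H_n$, the elementary inequality $\tfrac34 a^\top Ma-3b^\top Mb\le (a+b)^\top M(a+b)$ isolates $\eta_1^\top V_n^{-1}\eta_1$, and the nuisance term $(t_2-k_1)H_n^\top V_n^{-1}H_n$ is controlled separately through $V_n^{-1}\le L_n^{-1}=O_p(1)$, using that the left window $(t_1,k)$ contains no change-point for $k\in M_{n1}$. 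This turns the matrix problem into two scalar inequalities and never requires lower-bounding a random Gram matrix or quantifying the angle between a deterministic direction and a random subspace. If you want to salvage your least-squares route, you would need to prove a uniform non-degeneracy statement of that latter kind, which is substantially harder than anything currently in your sketch; I would recommend adopting the averaging trick instead.
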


\noindent\textsc{Proof of Lemma \ref{lem_mean1}}

Note that $V_n(t_1,k,t_2)=L_n(t_1,k,t_2)+R_n(t_1,k,t_2)$, and we further decompose 
\begin{flalign*}
R_n(t_1,k,t_2)=&[\sum_{i=k+1}^{k_1}+\sum_{i=k_1+1}^{t_2}]\frac{(t_2-i+1)^2(i-1-k)^2}{(t_2-t_1+1)^2(t_2-k)^2}\Big\{\frac{S_{i,t_2}}{t_1-i+1}-\frac{S_{k+1,i-1}}{i-k-1}\Big\}^{\otimes 2}
\\:=&R_{n1}(t_1,k,t_2)+R_{n2}(t_1,k,t_2),
\end{flalign*}
where 
\begin{flalign*}
R_{n2}(t_1,k,t_2)=&\sum_{i=k_1+1}^{t_2}\frac{(t_2-i+1)^2(i-1-k)^2}{(t_2-t_1+1)^2(t_2-k)^2}\Big\{\frac{S^X_{i,t_2}}{(t_2-i+1)}-\frac{S^X_{k+1,i-1}}{(i-k-1)}+\frac{k_1-k}{i-k-1}\delta\eta_1\Big\}^{\otimes 2}\\
=&(t_2-k_1)\frac{1}{t_2-k_1}\sum_{i=k_1+1}^{t_2}y_i(t_1,k,t_2)y_i(t_1,k,t_2)^{\top}
\\:=&(t_2-k_1)Z_{yy}(k_1,t_2)
\end{flalign*}
with
\begin{flalign*}
y_i(t_1,k,t_2)=\frac{(t_2-i+1)(i-1-k)}{(t_2-t_1+1)(t_2-k)}\Big\{\frac{S^X_{i,t_2}}{(t_2-i+1)}-\frac{S^X_{k+1,i-1}}{(i-k-1)}+\frac{k_1-k}{i-k-1}\delta\eta_1\Big\}.
\end{flalign*}
Now, let 
\begin{flalign*}
Z_{y}(k_1,t_2)=&\frac{1}{t_2-k_1}\sum_{i=k_1+1}^{t_2}y_i(t_1,k,t_2)
\\=&\frac{n^{3/2}}{(t_2-k_1)(t_2-t_1+1)(t_2-k)}\sum_{i=k_1+1}^{t_2}\Big\{\frac{(i-k-1)}{n}\frac{S^X_{i,t_2}}{\sqrt{n}}-\frac{(t_2-i+1)}{n}\frac{S^X_{k+1,i-1}}{\sqrt{n}}\Big\}
\\&+\frac{(k_1-k)n^{-\kappa} }{(t_2-k_1)(t_2-t_1+1)(t_2-k)}\sum_{i=k_1+1}^{t_2}(t_2-i+1)\eta_1
\end{flalign*}
Recall that $t_2-t_1+1=O(n)$ and $t_2-k=O(n)$, we have that 
\begin{flalign*}
Z_{y}(k_1,t_2)=&\frac{C}{\sqrt{n}}\frac{1}{t_2-k_1}\sum_{i=k_1+1}^{t_2}\Big\{\frac{(i-k-1)}{n}\frac{S^X_{i,t_2}}{\sqrt{n}}-\frac{(t-i+1)}{n}\frac{S^X_{k+1,i-1}}{\sqrt{n}}\Big\}
+\frac{C(k_1-k)n^{-\kappa} }{n^2}(t_2-k_1)\eta_1\\
:=&H_n(t_1,k,t_2)+\frac{C(k_1-k)n^{-\kappa} }{n^2}(t_2-k_1)\eta_1
\end{flalign*}
In addition, by Lemma \ref{lem_tripathi}, we have 
$$
Z_{y}^{\top} Z_{yy}^{-1}Z_{y}\leq 1.
$$
This implies that 
$$
(t_2-k_1)Z_{y}^{\top}R_{n2}(t_1,k,t_2)^{-1}Z_{y}\leq 1.
$$
By the inequality that $\frac{3a^2}{4}-3b^2\leq (a+b)^2$, we have 
\begin{flalign*}
\frac{3C^2(t_2-k_1)^3(k_1-k)^2n^{-2\kappa}}{4n^4}\eta_1^{\top}V_n(t_1,k,t_2)^{-1}\eta_1-3(t_2-k_1)H(t_1,k,t_2)^{\top}V_n(t_1,k,t_2)^{-1}H_n(t_1,k,t_2)\leq 1
\end{flalign*}
That is, 
\begin{flalign}\label{e1Ve1}
\eta_1^{\top}V_n(t_1,k,t_2)^{-1}\eta_1\leq \frac{4n^{4+2\kappa}}{3C^2(t_2-k_1)^3(k_1-k)^2}\Big[3(t_2-k_1)H(t_1,k,t_2)^{\top}V_n(t_1,k,t_2)^{-1}H(t_1,k,t_2)+1\Big]
\end{flalign}
By the invariance principle, we can show that  
$$
\frac{1}{t_2-k_1}\sum_{i=k_1+1}^{t_2}\Big\{\frac{(i-k-1)}{n}\frac{S^X_{i,t_2}}{\sqrt{n}}-\frac{(t-i+1)}{n}\frac{S^X_{k+1,i-1}}{\sqrt{n}}\Big\}=O_p(1),$$
hence 
$$
\sqrt{t_2-k_1}H(t_1,k,t_2)=O_p(\frac{C\sqrt{t_2-k_1}}{\sqrt{n}}).
$$
Therefore, note that $V_{n}(t_1,k,t_2)^{-1}\leq L_n(t_1,k,t_2)^{-1}$, using (\ref{e1Ve1}), the fact that $c_n<t_2-k_1\leq n$ when $(t_1,t_2)\in H^{2}_{1:n}(k,c_n)$, and $k_1-k>\iota_n$, we have 
$$
\max_{k\in M_{n1}}\max_{(t_1,t_2)\in H_{1:n}^2(k,c_n) }\eta_1^{\top}V_n(t_1,k,t_2)^{-1}\eta_1\leq\frac{4n^{4+2\kappa}}{3C^2(t_2-k_1)^3(k_1-k)^2} \Big[1+O_p(1)\Big]=O_p(\frac{n^{4+2\kappa}}{\iota_n^2c_n^3})=o_p(1),
$$
where the last equality holds by (\ref{cn}). 

$\hfill \square$

\begin{lemma}\label{lem_mean2}
For the two change-point case, 
\begin{flalign}
\max_{k\in M_{n1}}\max_{(t_1,t_2)\in H_{1:n}^{1,1}(k) }\eta_1^{\top}V_n(t_1,k,t_2)^{-1}\eta_1\leq O_p(\frac{n^{1+2\kappa}}{\iota_n^2})=o_p(1).
\end{flalign}
\end{lemma}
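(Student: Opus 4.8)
The plan is to adapt the argument of Lemma \ref{lem_mean1}. For $k\in M_{n1}$ and $(t_1,t_2)\in H_{1:n}^{1,1}(k)$ the left block $(t_1,k)$ again contains no change-point, but the right block $(k+1,t_2)$ now contains \emph{both} $k_1$ and $k_2$. The geometric gain over the one change-point case is that the minimum-spacing assumption $\epsilon<\epsilon_o$ forces $k_2-k_1=O(n)$, hence $t_2-k_1\ge k_2-k_1=O(n)$ (with $t_2-k_1\le n$); consequently the cubic window factor that read $c_n^3$ in the one change-point bound $n^{4+2\kappa}/[(t_2-k_1)^3(k_1-k)^2]$ is upgraded to $n^3$, and together with $k_1-k>\iota_n$ this produces exactly $O_p(n^{1+2\kappa}/\iota_n^2)$. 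Since $\delta=n^{-\kappa}$, the rate condition $\iota_n^{-2}\delta^{-2}n\to0$ of Theorem \ref{thm_onechange_multi} then gives $n^{1+2\kappa}/\iota_n^2=\iota_n^{-2}\delta^{-2}n\to0$, i.e.\ $o_p(1)$.

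Because $V_n=L_n+R_n$ with $L_n\ge0$ in the positive semidefinite order, we have $V_n\ge R_n$ and hence $\eta_1^{\top}V_n^{-1}\eta_1\le\eta_1^{\top}R_n^{-1}\eta_1$, so it suffices to lower bound $R_n$ in the $\eta_1$-direction. First I would restrict the defining sum of $R_n$ to the index range $i\in(k_1+1,k_2]$, which has length $k_2-k_1=O(n)$ and over which the contrast $\widehat{\theta}_{i,t_2}-\widehat{\theta}_{k+1,i-1}$ has deterministic mean difference
\begin{equation*}
\frac{k_1-k}{i-1-k}\,\delta\eta_1+\frac{t_2-k_2}{t_2-i+1}\,\delta\eta_2,
\end{equation*}
plus a stochastic term controlled by the invariance principle. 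The $\eta_1$-coefficient $\tfrac{k_1-k}{i-1-k}$ is precisely the profile from Lemma \ref{lem_mean1}, now summed over an $O(n)$-long window.

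The genuine new difficulty, and the main obstacle, is the $\eta_2$-contamination: the second change-point injects a deterministic component of the same order $\delta$ in the independent direction $\eta_2$, so the inflation of $R_n$ is no longer rank one along $\eta_1$. To disentangle the two directions inside the random matrix $R_n$ I would invoke the projection inequality of Lemma \ref{lem_lambda}, taking the nuisance direction $z\propto\delta\eta_2$ with scalar weights $x_i=\tfrac{t_2-k_2}{t_2-i+1}$ and letting $y_i$ collect the $\eta_1$-part together with the stochastic term; this lower bounds $R_n$ by the residual matrix $B_n$ obtained after regressing out the $\eta_2$-profile. The $\eta_1$-inflation survives in $B_n$ precisely because the two coefficient profiles $\tfrac{k_1-k}{i-1-k}$ and $\tfrac{t_2-k_2}{t_2-i+1}$ are not proportional in $i$, and the construction degenerates gracefully to the pure rank-one situation of Lemma \ref{lem_mean1} when $\eta_1\parallel\eta_2$.

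Finally, on this residualized lower bound I would run the two ingredients of Lemma \ref{lem_mean1}: the matrix Cauchy--Schwarz inequality (Lemma \ref{lem_tripathi}) to bound the normalized average of the summands by $1/(k_2-k_1)$, and the scalar inequality $\tfrac{3a^2}{4}-3b^2\le(a+b)^2$ to peel the deterministic $\eta_1$-part off the stochastic noise, the latter being $O_p(1)$ after normalization by the invariance principle and the crude bound $V_n^{-1}\le L_n^{-1}$. Tracking the factors $k_2-k_1=O(n)$ and $k_1-k>\iota_n$ then yields the claimed rate. I expect the most delicate steps to be the bookkeeping of the matrix inverse after the rank-one regression (handled via the Sherman--Morrison formula) and establishing the required bounds \emph{uniformly} over all admissible windows $(t_1,t_2)\in H_{1:n}^{1,1}(k)$ and all $k\in M_{n1}$, rather than for a single fixed triple.
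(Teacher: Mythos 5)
Your proposal follows the same architecture as the paper's proof: reduce to the block of $R_n$ indexed by $i\in(k_1+1,k_2]$ (of length $O(n)$ by the minimum-spacing assumption, which is what upgrades the $c_n^3$ of Lemma \ref{lem_mean1} to $n^3$), apply Lemma \ref{lem_lambda} with the nuisance direction $z=(t_2-k_2)\delta_2$ to regress out the $\eta_2$-contamination, note that the non-proportionality of the profiles $(t_2-i+1)$ and $(i-1-k)$ keeps the residualized $\eta_1$-signal alive, and close with Sherman--Morrison and the invariance principle, uniformly over $k$ and the admissible windows. The rate accounting $n^{4+2\kappa}/(n^3\iota_n^2)=n^{1+2\kappa}/\iota_n^2$ is exactly the paper's.

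One caveat on your final step. The paper does \emph{not} re-run the Lemma \ref{lem_mean1} device (the bound $Z_y^{\top}Z_{yy}^{-1}Z_y\le 1$ followed by $\tfrac{3a^2}{4}-3b^2\le(a+b)^2$) on the residualized matrix; instead it writes the residualized lower bound as $A_n^X+\delta^2\zeta_n\eta_1\eta_1^{\top}+\delta\eta_1 H_n^{\top}+\delta H_n\eta_1^{\top}$ and applies the exact double Sherman--Morrison identity of Lemma \ref{lem_inverse}, which yields $\eta_1^{\top}A_n^{-1}\eta_1=\eta_1^{\top}(A_n^X)^{-1}\eta_1\,/\,[\,C_n+\delta^2\zeta_n\,\eta_1^{\top}(A_n^X)^{-1}\eta_1\,]$, and then sends $A_n^X$, $\tfrac{n}{(k_1-k)^2}\zeta_n$ and $\tfrac{n^{1/2}}{k_1-k}H_n$ to their FCLT limits. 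The distinction is not cosmetic: the averaging trick, applied literally to the residualized summands, controls the inverse only in the direction of the \emph{mean} of the residualized $\eta_1$-profile, and that mean can vanish for particular window configurations (the residuals after regressing on the $x$-profile change sign), whereas the quantity the paper's bound actually exploits is the residual sum of squares $\zeta_n$, which is strictly positive whenever the two profiles are non-proportional. If you carry out the Sherman--Morrison bookkeeping you already flagged, this degeneracy disappears and your argument coincides with the paper's.
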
 
\noindent\textsc{Proof of Lemma \ref{lem_mean2}}

Recall
\begin{flalign*}
R_n(t_1,k,t_2)=\Big[\sum_{i=k+1}^{k_1}+\sum_{k_1+1}^{k_2}+\sum_{k_2+1}^{t_2}\Big]\frac{(t_2-i+1)^2(i-1-k)^2}{(t_2-t_1+1)^2(t_2-k)^2}\Big\{\frac{S_{i,t_2}}{(t_2-i+1)}-\frac{S_{k+1,i-1}}{(i-k-1)}\Big\}^{\otimes 2},
\end{flalign*}
we focus on 
$$
R_{n2}(t_1,k,t_2)=\sum_{k_1+1}^{k_2}\frac{(t_2-i+1)^2(i-1-k)^2}{(t_2-t_1+1)^2(t_2-k)^2}\Big\{\frac{S^X_{i,t_2}}{(t_2-i+1)}-\frac{S^X_{k+1,i-1}}{(i-k-1)}+\frac{t_2-k_2}{t_2-i+1}\delta_2+\frac{k_1-k}{i-k-1}\delta_1\Big\}^{\otimes 2}.
$$

Define $x_i=\frac{(i-1-k)}{(t_2-t_1+1)(t_2-k)}$ , $y_i=\frac{(t_2-i+1)(i-1-k)}{(t_2-t_1+1)(t_2-k)}\Big\{\frac{S^X_{i,t_2}}{(t_2-i+1)}-\frac{S^X_{k+1,i-1}}{(i-k-1)}+\frac{k_1-k}{i-k-1}\delta_1\Big\}$ and $z=(t_2-k_2)\delta_2$, and invoke Lemma \ref{lem_lambda}, we obtain that 
\begin{flalign*}
\frac{1}{k_2-k_1}R_{n2}(t_1,k,t_2)\geq& \frac{1}{k_2-k_1}\sum_{k_1+1}^{k_2}y_iy_i^{\top}-\frac{\Big(\frac{1}{k_2-k_1}\sum_{k_1+1}^{k_2}x_iy_i\Big)\Big(\frac{1}{k_2-k_1}\sum_{k_1+1}^{k_2}x_iy_i\Big)^{\top}}{\frac{1}{k_2-k_1}\sum_{k_1+1}^{k_2}x_i^2}\\:=&Z_{yy}-(Z_{xx})^{-1}Z_{xy}Z_{xy}^{\top},
\end{flalign*}
Here 
\begin{flalign*}
&Z_{yy}\\=&\frac{1}{k_2-k_1}\sum_{k_1+1}^{k_2}y_iy_i^{\top}\\=&\frac{n^3}{(k_2-k_1)(t_2-t_1+1)^2(t_2-k)^2}\sum_{k_1+1}^{k_2}\Big\{\frac{(i-k-1)}{n}\frac{S^X_{i,t_2}}{\sqrt{n}}-\frac{(t_2-i+1)}{n}\frac{S^X_{k+1,i-1}}{\sqrt{n}}+\frac{(k_1-k)(t_2-i+1)}{n^{3/2}}\delta_1\Big\}^{\otimes 2},\\
&Z_{xy}\\=&\frac{1}{k_2-k_1}\sum_{k_1+1}^{k_2}x_iy_i\\=&\frac{n^{5/2}}{(k_2-k_1)(t_2-t_1+1)^2(t_2-k)^2}\sum_{k_1+1}^{k_2}\frac{i-1-k}{n}\Big\{\frac{(i-k-1)}{n}\frac{S^X_{i,t_2}}{\sqrt{n}}-\frac{(t_2-i+1)}{n}\frac{S^X_{k+1,i-1}}{\sqrt{n}}\Big\}
\\&+\frac{k_1-k}{k_2-k_1}\sum_{k_1+1}^{k_2}\frac{(t_2-i+1)(i-1-k)}{(t_2-t_1+1)^2(t_2-k)^2}\delta_1,\\
&Z_{xx}\\=&\frac{1}{(k_2-k_1)(t_2-t_1+1)^2(t_2-k)^2}\sum_{k_1+1}^{k_2}(i-1-k)^2
\end{flalign*}
To simplify the notation, we let $Q_i=\frac{(i-k-1)}{n}\frac{S^X_{i,t_2}}{\sqrt{n}}-\frac{(t_2-i+1)}{n}\frac{S^X_{k+1,i-1}}{\sqrt{n}}$, so that 
\begin{flalign*}
&(k_2-k_1)[Z_{yy}-(Z_{xx})^{-1}Z_{xy}Z_{xy}^{\top}]\\=&
\frac{n^3}{(t_2-t_1+1)^2(t_2-k)^2}\Big\{\sum_{k_1+1}^{k_2}Q_iQ_i^{\top}-\frac{n^2}{\sum_{k_1+1}^{k_2}(i-1-k)^2}\Big(\sum_{k_1+1}^{k_2}\frac{(i-k-1)}{n}Q_i\Big)\Big(\sum_{k_1+1}^{k_2}\frac{(i-k-1)}{n}Q_i\Big)^{\top}\Big\}\\
&+\frac{(k_1-k)^2}{(t_2-t_1+1)^2(t_2-k)^2}\Big\{{\sum_{k_1+1}^{k_2}(t_2-i+1)^2}-\frac{\Big(\sum_{k_1+1}^{k_2}(i-k-1)(t_2-i+1)\Big)^2}{\sum_{k_1+1}^{k_2}(i-k-1)^2}\Big\}\delta_1\delta_1^{\top}\\&+\frac{n^{3/2}(k_1-k)}{(t_2-t_1+1)^2(t_2-k)^2}\Big\{\sum_{k_1+1}^{k_2}(t_2-i+1)Q_i-\frac{n\Big(\sum_{k_1+1}^{k_2}\frac{i-1-k}{n}Q_i\Big)\Big(\sum_{k_1+1}^{k_2}(i-1-k)(t_2-i+1)\Big)}{\sum_{k_1+1}^{k_2}(i-1-k)^2}\Big\}
\delta_1^{\top}
\\&+\delta_1\frac{n^{3/2}(k_1-k)}{(t_2-t_1+1)^2(t_2-k)^2}\Big\{\sum_{k_1+1}^{k_2}(t_2-i+1)Q_i-\frac{n\Big(\sum_{k_1+1}^{k_2}\frac{i-1-k}{n}Q_i\Big)\Big(\sum_{k_1+1}^{k_2}(i-1-k)(t_2-i+1)\Big)}{\sum_{k_1+1}^{k_2}(i-1-k)^2}\Big\}^{\top}
\\:=&A_n^X(t_1,k,t_2)+\delta^2\zeta_n(t_1,k,t_2)\eta_1\eta_1^{\top}+\delta\eta_1H_n(t_1,k,t_2)^{\top}+\delta H_n(t_1,k,t_2)\eta_1^{\top}:=A_n(t_1,k,t_2)
\end{flalign*}
By Lemma \ref{lem_inverse}, and let 
\begin{flalign*}
C_n(t_1,k,t_2)=&\Big(1+\delta\eta_1^{\top}A_{n}^X(t_1,k,t_2)^{-1}H_n(t_1,k,t_2)\Big)^2\\&-\delta^2H_n(t_1,k,t_2)^{\top}A_{n}^X(t_1,k,t_2)^{-1}H_n(t_1,k,t_2)\eta_1^{\top}A_{n}^X(t_1,k,t_2)^{-1}\eta_1,  
\end{flalign*}
we obtain that 
\begin{flalign*}
\eta_1^{\top}A_n(t_1,k,t_2)^{-1}\eta_1
=&\frac{\eta_1^{\top}A_{n}^X(t_1,k,t_2)^{-1}\eta_1}{C_n(t_1,k,t_2)+\eta_1^{\top}A_{n}^X(t_1,k,t_2)^{-1}\eta_1\delta^2\zeta_n(t_1,k,t_2)}
\\=&\frac{n^{1+2\kappa}}{(k_1-k)^2}\frac{\eta_1^{\top}A_{n}^X(t_1,k,t_2)^{-1}\eta_1}{\frac{n^{1+2\kappa}}{(k_1-k)^2}C_n(t_1,k,t_2)+\eta_1^{\top}A_{n}^X(t_1,k,t_2)^{-1}\eta_1\frac{n}{(k_1-k)^2}\zeta_n(t_1,k,t_2)}
\end{flalign*}
Here, using the invariance principle, we can see that 
\begin{flalign*}
&\Big\{A_n^X(\lfloor nu_1\rfloor,\lfloor nu\rfloor,\lfloor nu_2\rfloor)\Big\}_{(0<u_1<u<u_2<1)}\\\Rightarrow &\Big\{\frac{1}{(u_2-u_1)^2(u_2-u)^2}\Sigma_X^{1/2}\Big[
\int_{\tau_1}^{\tau_2}\Big((s-u)(\mathcal{B}_d(u_2)-\mathcal{B}_d(s))-(u_2-s)(\mathcal{B}_d(s)-\mathcal{B}_d(u))\Big)^{\otimes 2}ds\\&-\Big({\int_{\tau_1}^{\tau_2}(s-u)^2ds}\Big)^{-1}\\&\times\Big(\int_{\tau_1}^{\tau_2}(s-u)^2(\mathcal{B}_d(u_2)-\mathcal{B}_d(s))-(u_2-s)(s-u)(\mathcal{B}_d(s)-\mathcal{B}_d(u))ds\Big)^{\otimes2}
\Big]\Sigma_X^{1/2}\Big\}_{(0<u_1<u<u_2<1)}\\
:=&\Big\{A^X(u_1,u,u_2)\Big\}_{(0<u_1<u<u_2<1)},\\
&\Big\{\frac{n}{(k_1-k)^2}\zeta_n(\lfloor nu_1\rfloor,\lfloor nu\rfloor,\lfloor nu_2\rfloor\Big\}_{(0<u_1<u<u_2<1)}\\\Rightarrow&\Big\{ \frac{1}{(u_2-u_1)^2(u_2-u)^2}\Big[\Big(\int_{\tau_1}^{\tau_2}(u_2-s)^2ds\Big)-\Big(\int_{\tau_1}^{\tau_2}(s-u)^2ds\Big)^{-1}\Big(\int_{\tau_1}^{\tau_2}(s-u)(u_2-s)ds\Big)^2\Big]\Big\}_{(0<u_1<u<u_2<1)}\\
:=&\Big\{\zeta(u_1,u,u_2)\Big\}_{(0<u_1<u<u_2<1)},\\
&\frac{n^{1/2}}{(k_1-k)}\Big\{H_n(\lfloor nu_1\rfloor,\lfloor nu\rfloor,\lfloor nu_2\rfloor)\Big\}_{(0<u_1<u<u_2<1)}\\\Rightarrow &\Big\{\frac{1}{(u_2-u_1)^2(u_2-u)^2}\Sigma_X^{1/2}\Big[
\Big(\int_{\tau_1}^{\tau_2}(s-u)(u_2-s)(\mathcal{B}_d(u_2)-\mathcal{B}_d(s))-(u_2-s)^2(\mathcal{B}_d(s)-\mathcal{B}_d(u))ds\Big)\\&-\Big({\int_{\tau_1}^{\tau_2}(s-u)^2ds}\Big)^{-1}\Big(\int_{\tau_1}^{\tau_2}(s-u)^2(\mathcal{B}_d(u_2)-\mathcal{B}_d(s))-(u_2-s)(s-u)(\mathcal{B}_d(s)-\mathcal{B}_d(u))ds\Big)\\&\times\Big(\int_{\tau_1}^{\tau_2}(s-u)(u_2-s)ds\Big)
\Big]\Big\}_{(0<u_1<u<u_2<1)}\\
:=&\Big\{H(u_1,u,u_2)\Big\}_{(0<u_1<u<u_2<1)}.
\end{flalign*}
Note that $\frac{n^{1/2+\kappa}}{k_1-k}\leq \frac{n^{1/2+\kappa}}{\iota_n}=o(1)$, this and above convergence results imply that 
\begin{flalign*}
&\frac{n^{1+2\kappa}}{(k_1-k)^2}\Big\{C_n(\lfloor nu_1\rfloor,\lfloor nu\rfloor,\lfloor nu_2\rfloor)\Big\}_{(0<u_1<u<u_2<1)}\\\Rightarrow& \Big\{\big(\eta_1^{\top}A^X(u_1,u,u_2)H(u_1,u,u_2)\big)^2-\big(\eta_1^{\top}A^X(u_1,u,u_2)\eta\big)\big(H(u_1,u,u_2)^{\top}A^X(u_1,u,u_2)H(u_1,u,u_2)\big)\Big\}_{(0<u_1<u<u_2<1)}
\end{flalign*}
Hence, we have $$\max_{k\in M_{n1}}\max_{(t_1,t_2)\in H_{1:n}^{1,1}(k) }\eta_1^{\top}A_n(t_1,k,t_2)^{-1}\eta_1=O_p(\frac{n^{1+2\kappa}}{(k_1-k)^2})\leq O_p(\frac{n^{1+2\kappa}}{\iota_n^2})=o_p(1).$$

$\hfill \square$

\begin{lemma}\label{lem_inverse}
For the invertible matrices $A_n(t_1,k,t_2)$ and $B_n(t_1,k,t_2)$ defined by 
\begin{flalign*}
A_n(t_1,k,t_2)= &B_{n}(t_1,k,t_2)+\delta^2\zeta_n(t_1,k,t_2)\eta\eta^{\top},\\
B_{n}(t_1,k,t_2)=&A^X_n(t_1,k,t_2)+\delta\eta_1 H_n(t_1,k,t_2)^{\top}+  \delta H_n(t_1,k,t_2) \eta^{\top},
\end{flalign*}
where $\eta\in\mathbb{R}^d/\{\bf{0}\}$ and $A^X_n(t_1,k,t_2)$ is invertible. Then
\begin{flalign*}
&\eta_1^{\top} A_{n}(t_1,k,t_2)^{-1}\eta_1=\frac{1}{\frac{1}{\eta_1^{\top}B_n(t_1,k,t_2)^{-1}\eta_1}+\zeta_n(t_1,k,t_2)\delta^2},\\
&\eta_1^{\top} B_{n}(t_1,k,t_2)^{-1}\eta_1	\\=&\frac{\eta_1^{\top}A_{n}^X(t_1,k,t_2)^{-1}\eta_1}{(1+\delta\eta_1^{\top}A_{n}^X(t_1,k,t_2)^{-1}H_n(t_1,k,t_2))^2-\delta^2H_n(t_1,k,t_2)^{\top}A_{n}^X(t_1,k,t_2)^{-1}H_n(t_1,k,t_2)\eta_1^{\top}A_{n}^X(t_1,k,t_2)^{-1}\eta_1}.
\end{flalign*}
\end{lemma}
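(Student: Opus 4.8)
\textbf{Proof proposal for Lemma \ref{lem_inverse}.}
Both identities are purely algebraic consequences of the standard matrix-inversion lemmas, so the plan is to invoke the Sherman--Morrison formula for the first identity and the Woodbury formula for the second. Throughout I identify $\eta=\eta_1$, which is the case relevant to the application in the proof of Lemma \ref{lem_mean2}. No probabilistic input is needed; the only content is to carry out the two rank-one/rank-two inversions and contract the resulting expressions with $\eta_1$ on both sides.

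For the first identity, I observe that $A_n(t_1,k,t_2)=B_n(t_1,k,t_2)+\delta^2\zeta_n(t_1,k,t_2)\,\eta_1\eta_1^{\top}$ is a rank-one update of $B_n(t_1,k,t_2)$. Applying Sherman--Morrison gives
$$
A_n^{-1}=B_n^{-1}-\frac{\delta^2\zeta_n\,B_n^{-1}\eta_1\eta_1^{\top}B_n^{-1}}{1+\delta^2\zeta_n\,\eta_1^{\top}B_n^{-1}\eta_1},
$$
where I suppress the arguments $(t_1,k,t_2)$ for brevity. Writing $u=\eta_1^{\top}B_n^{-1}\eta_1$ and contracting with $\eta_1$ on both sides yields $\eta_1^{\top}A_n^{-1}\eta_1=u-\dfrac{\delta^2\zeta_n u^2}{1+\delta^2\zeta_n u}=\dfrac{u}{1+\delta^2\zeta_n u}$, which after dividing numerator and denominator by $u$ is exactly the claimed expression $\big(1/u+\zeta_n\delta^2\big)^{-1}$.

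For the second identity, I note that $B_n=A_n^X+\delta\eta_1 H_n^{\top}+\delta H_n\eta_1^{\top}$ is a symmetric rank-two update of $A_n^X$, which I write as $B_n=A_n^X+UCU^{\top}$ with $U=[\eta_1,\,H_n]\in\mathbb{R}^{d\times 2}$ and $C=\delta\left(\begin{smallmatrix}0&1\\1&0\end{smallmatrix}\right)$. The Woodbury identity then gives $B_n^{-1}=(A_n^X)^{-1}-(A_n^X)^{-1}U\big(C^{-1}+U^{\top}(A_n^X)^{-1}U\big)^{-1}U^{\top}(A_n^X)^{-1}$. Introducing the scalars $a=\eta_1^{\top}(A_n^X)^{-1}\eta_1$, $b=\eta_1^{\top}(A_n^X)^{-1}H_n$, $c=H_n^{\top}(A_n^X)^{-1}H_n$, I have $C^{-1}=\tfrac1\delta\left(\begin{smallmatrix}0&1\\1&0\end{smallmatrix}\right)$ and $U^{\top}(A_n^X)^{-1}U=\left(\begin{smallmatrix}a&b\\b&c\end{smallmatrix}\right)$, so the inner $2\times2$ matrix is $\left(\begin{smallmatrix}a&1/\delta+b\\1/\delta+b&c\end{smallmatrix}\right)$ with determinant $ac-(1/\delta+b)^2$. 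Since $\eta_1^{\top}(A_n^X)^{-1}U=[a,\,b]$, contracting the Woodbury expression with $\eta_1$ reduces the problem to the explicit $2\times2$ computation $[a,b]\big(C^{-1}+U^\top(A_n^X)^{-1}U\big)^{-1}[a,b]^{\top}=\dfrac{a^2c-2ab/\delta-ab^2}{ac-(1/\delta+b)^2}$, and subtracting this from $a$ produces the key cancellation, leaving the numerator equal to $-a/\delta^2$. Clearing the factor $-\delta^2$ then gives $\eta_1^{\top}B_n^{-1}\eta_1=\dfrac{a}{(1+\delta b)^2-\delta^2 ac}$, which is precisely the asserted formula upon restoring $a,b,c$.

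The main obstacle is not conceptual but bookkeeping: one must keep the two rank-one pieces of $B_n$ organized (handling them together as a single symmetric rank-two update is cleaner than two sequential Sherman--Morrison steps, since the latter introduces asymmetric $u,v$ pairs and more terms), and one must verify the cancellation that collapses $aD-(a^2c-2ab/\delta-ab^2)$ to $-a/\delta^2$, where $D=ac-(1/\delta+b)^2$. Once this cancellation is checked the result follows immediately, so I expect the entire proof to be short.
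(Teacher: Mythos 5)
Your proposal is correct, and your verification of the key cancellation checks out: with $a=\eta_1^{\top}(A_n^X)^{-1}\eta_1$, $b=\eta_1^{\top}(A_n^X)^{-1}H_n$, $c=H_n^{\top}(A_n^X)^{-1}H_n$ and $D=ac-(b+1/\delta)^2$, one indeed has $aD-(a^2c-ab^2-2ab/\delta)=-a/\delta^2$, which yields $\eta_1^{\top}B_n^{-1}\eta_1=a/\bigl((1+\delta b)^2-\delta^2ac\bigr)$ as claimed. For the first identity your argument coincides with the paper's: both apply the Sherman--Morrison formula to the rank-one update $A_n=B_n+\delta^2\zeta_n\eta_1\eta_1^{\top}$ and contract with $\eta_1$. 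For the second identity you diverge: the paper inverts the symmetric rank-two perturbation by two sequential Sherman--Morrison steps, first passing through the intermediate (non-symmetric) matrix $C_n=A_n^X+\delta H_n\eta_1^{\top}$ to get $\eta_1^{\top}C_n^{-1}\eta_1=a/(1+\delta b)$, and then updating $B_n=C_n+\delta\eta_1H_n^{\top}$ to obtain $\eta_1^{\top}B_n^{-1}\eta_1=\eta_1^{\top}C_n^{-1}\eta_1/(1+\delta H_n^{\top}C_n^{-1}\eta_1)$, which collapses to the same final expression. Your single Woodbury step with $U=[\eta_1,H_n]$ and core $C=\delta\bigl(\begin{smallmatrix}0&1\\1&0\end{smallmatrix}\bigr)$ keeps the symmetry manifest and avoids the intermediate matrix, at the price of an explicit $2\times2$ inversion and the scalar cancellation above; the paper's route trades that for two simpler one-line updates whose quotients multiply out. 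Both are valid under the lemma's invertibility hypotheses (your Woodbury step needs $D\neq0$, which is equivalent to invertibility of $B_n$ given that $A_n^X$ is invertible), and your identification $\eta=\eta_1$ correctly resolves what is evidently a notational slip in the lemma statement, consistent with the application in Lemma \ref{lem_mean2}.
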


\noindent\textsc{Proof of Lemma \ref{lem_inverse}}

By the well-known Sherman-Morrison formula~(see Lemma \ref{lem_sherman}), we have that
\begin{flalign}\label{A_n}
\eta_1^{\top}A_{n}(t_1,k,t_2)^{-1}\eta_1=\frac{\eta_1^{\top}B_{n}(t_1,k,t_2)^{-1}\eta_1}{1+\eta_1^{\top}B_n(t_1,k,t_2)^{-1}\eta_1 \zeta_n(t_1,k,t_2)\delta^2}.
\end{flalign}
By further applying the Sherman-Morrison formula twice on $B_n(t_1,k,t_2)$, we can establish that
\begin{flalign*}
&\eta_1^{\top}B_{n}(t_1,k,t_2)^{-1}\eta_1\\=&\frac{\eta_1^{\top}A^X_{n}(t_1,k,t_2)^{-1}\eta_1}{(1+\delta\eta_1^{\top}A_{n}^X(t_1,k,t_2)^{-1}H_n(t_1,k,t_2))^2-\delta^2\eta_1^{\top}A_{n}^X(t_1,k,t_2)^{-1}\eta_1 H_n(t_1,k,t_2)^{\top}A_{n}^X(t_1,k,t_2)^{-1}H_n(t_1,k,t_2)}.
\end{flalign*}

Specifically, let $C_n(t_1,k,t_2)=A_{n}^X(t_1,k,t_2)+ H_n(t_1,k,t_2)\delta\eta_1^{\top}$, applying the Sherman-Morrison formula, we have
$$
C_{n}(t_1,k,t_2)^{-1}=A_{n}^X(t_1,k,t_2)^{-1}-\frac{A_{n}^X(t_1,k,t_2)^{-1}H_n(t_1,k,t_2)\delta\eta_1 ^{\top}A_{n}^X(t_1,k,t_2)^{-1}}{1+\delta\eta_1^{\top}A_{n}^X(t_1,k,t_2)^{-1}H_n(t_1,k,t_2)},
$$
and thus
\begin{flalign*}
\eta_1^{\top}C_{n}(t_1,k,t_2)^{-1}\eta_1=\frac{\eta_1^{\top}A^X_{n}(t_1,k,t_2)^{-1}\eta_1}{1+\delta\eta_1^{\top}A^X_{n}(t_1,k,t_2)^{-1}H_n(k)}.
\end{flalign*}

Plug the above equation into,
\begin{flalign}\label{B_n}
\begin{split}
&\eta_1^{\top}B_{n}(t_1,k,t_2)^{-1}\eta_1=\frac{\eta_1^{\top}C_{n}(t_1,k,t_2)^{-1}\eta_1}{1+H_n(k)^{\top}C_n(t_1,k,t_2)^{-1}\delta\eta_1},
\end{split}
\end{flalign}
the result follows.
$\hfill \square$

\begin{lemma}\label{lem_sherman}
\textsc{(Sherman-Morrison Formula)} Suppose $A\in\mathbb{R}^{d\times d}$ is an invertible square matrix and $u,v\in\mathbb{R}^d$  are column vectors, then $A+uv^{\top}$ is invertible if and only if $1+v^{\top}A^{-1}u\neq 0$, and in this case 
$$
(A+uv^{\top})^{-1}=A^{-1}-\frac{A^{-1}uv^{\top}A^{-1}}{1+v^{\top}A^{-1}u}.
$$ 
\end{lemma}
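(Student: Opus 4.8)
The statement to prove is Lemma \ref{lem_sherman}, the Sherman--Morrison formula: for an invertible $A \in \mathbb{R}^{d\times d}$ and column vectors $u,v \in \mathbb{R}^d$, the matrix $A + uv^\top$ is invertible if and only if $1 + v^\top A^{-1} u \neq 0$, in which case $(A+uv^\top)^{-1} = A^{-1} - \frac{A^{-1}uv^\top A^{-1}}{1 + v^\top A^{-1}u}$.

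Let me sketch how I would prove this standard identity.

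The plan is to proceed by direct verification of the proposed inverse, together with a short determinant argument for the invertibility criterion. This is the cleanest route because the candidate formula is already handed to us in the statement, so the bulk of the work is confirming it multiplies out to the identity.

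First I would establish the invertibility criterion. The key tool is the matrix determinant lemma (or equivalently a Schur-complement / block-matrix computation): for invertible $A$,
\[
\det(A + uv^\top) = \det(A)\,\bigl(1 + v^\top A^{-1} u\bigr).
\]
Since $\det(A) \neq 0$ by hypothesis, $\det(A + uv^\top) \neq 0$ precisely when $1 + v^\top A^{-1} u \neq 0$, which gives the stated equivalence. If one prefers to avoid citing the determinant lemma, the same conclusion follows from the block identity $\det\bigl(\begin{smallmatrix} A & -u \\ v^\top & 1\end{smallmatrix}\bigr)$ expanded two ways via Schur complements, which yields both $\det(A)(1 + v^\top A^{-1}u)$ and $\det(A + uv^\top)$.

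Next I would verify the explicit inverse under the assumption $1 + v^\top A^{-1}u \neq 0$. Denote $\beta = 1 + v^\top A^{-1} u$, a nonzero scalar, and let $B = A^{-1} - \beta^{-1} A^{-1} u v^\top A^{-1}$ be the candidate inverse. I would compute the product $(A + uv^\top) B$ and show it equals the identity. Carrying out the multiplication, $(A + uv^\top)B = I - \beta^{-1} u v^\top A^{-1} + u v^\top A^{-1} - \beta^{-1} u (v^\top A^{-1} u) v^\top A^{-1}$; collecting the last three terms and factoring out $u v^\top A^{-1}$ leaves the scalar coefficient $-\beta^{-1} + 1 - \beta^{-1}(v^\top A^{-1} u) = 1 - \beta^{-1}(1 + v^\top A^{-1}u) = 1 - \beta^{-1}\beta = 0$, so the product reduces to $I$. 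An identical computation of $B(A + uv^\top)$ confirms it is a two-sided inverse. There is no substantive obstacle here: the only subtlety is keeping the scalar $v^\top A^{-1}u$ (a number) distinct from the rank-one matrix $A^{-1}u v^\top A^{-1}$ during the algebra, and remembering that matrix products are associative so that $v^\top A^{-1} u$ can be pulled out as a scalar. Since the formula is a classical result used here only as a computational lemma, the proof is entirely elementary and the paper may reasonably just cite a standard reference rather than reproduce these steps in full.
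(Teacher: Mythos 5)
Your proof is correct and complete. The paper itself gives no proof of this lemma at all: it simply states the Sherman--Morrison formula as a classical fact and invokes it in the proof of Lemma \ref{lem_inverse} (``By the well-known Sherman--Morrison formula\ldots''), so there is nothing in the paper to compare against. Your two-part argument --- the matrix determinant lemma $\det(A+uv^{\top})=\det(A)\,(1+v^{\top}A^{-1}u)$ for the invertibility equivalence, followed by direct multiplication showing $(A+uv^{\top})B=I$ for the candidate $B$ --- is the standard and fully rigorous route, and the scalar cancellation $1-\beta^{-1}-\beta^{-1}v^{\top}A^{-1}u=0$ is carried out correctly. The only direction worth making explicit, if you wanted to avoid the determinant lemma entirely, is the ``only if'': when $1+v^{\top}A^{-1}u=0$ one necessarily has $u\neq 0$, and then $(A+uv^{\top})A^{-1}u=(1+v^{\top}A^{-1}u)\,u=0$ exhibits a nonzero vector in the kernel, so $A+uv^{\top}$ is singular.
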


\section{{Local Refinement of SNCP via CUSUM}}\label{sec:local_refine}
In this section, we further propose a simple and intuitive local refinement procedure for SNCP, which improves the localization error rate of SNCP from $O_p(n^{-1/2})$ to the optimal rate $O_p(n^{-1}).$

\Cref{sec:local_proc} proposes the local refinement procedure and further provides theoretical justification for the case of mean functional. \Cref{sec:local_simu} presents numerical evidence for the improvement brought by local refinement. Technical proofs are collected in \Cref{sec:local_theory}. All notations follow the definitions given in the main text unless otherwise noted.

\subsection{ A local refinement procedure for SNCP}\label{sec:local_proc}
For multiple change-point estimation in a univariate functional $\theta(\cdot)\in \mathbb{R}$, by \Cref{thm1}(ii), with probability going to 1, SNCP detects the correct number of change-points $m_o$ and achieves the localization error rate $\iota_n$ such that
\begin{align}\label{eq:sncp_local_rate}
\max_{1\leq i\leq m_o}|\widehat{k}_i-k_i|\leq \iota_n,
\end{align}
where $\iota_n$ is any sequence such that $\iota_n/n\to 0$ and $\iota_n^{-2}\delta^{-2}n\to 0$ as $n\to \infty$ and $\delta$ is the change size. Given a constant change size $\delta$, this implies that the best localization error rate delivered by SNCP is $O_p(n^{-1/2})$, which is slower than the optimal localization rate $O_p(n^{-1})$ for mean change~\citep{bai1994}.

Based on \eqref{eq:sncp_local_rate}, for any sequence $\widetilde{\iota}_n$ such that $\iota_n=O(\widetilde{\iota}_n)$ and $\widetilde{\iota}_n=o(n)$, we have that with probability going to 1, each local interval
\begin{align}\label{eq:local_interval}
[\widehat{k}_{i-1}+\widetilde{\iota}_n, \widehat{k}_{i+1}-\widetilde{\iota}_n] \text{ for } i=1,2,\cdots, m_o,
\end{align}
is of length at least $\epsilon_0 n$~(where $\epsilon_0$ is the minimum spacing between true change-points) and contains \textit{one and only one} change-point. (By convention, we define $\widehat{k}_0=-\widetilde{\iota}_n+1$ and $\widehat{k}_{m_o+1}=n+\widetilde{\iota}_n.$) In other words, SNCP can asymptotically isolate every single true change-point.

We then apply a CUSUM type statistic to each local interval in \eqref{eq:local_interval} to refine the estimated change-points $\{\widehat{k}_i\}_{i=1}^{m_o}$ by SNCP. Specifically, given a generic interval $[s,e]$, define the CUSUM statistic based on the subsample $\{Y_t\}_{t=s+1}^e$ as
\begin{align*}
T(k;s,e)=\frac{\sqrt{(k-s+1)(e-k)}}{e-s+1}\left[\widehat{\theta}_{k+1,e}-\widehat{\theta}_{s,k}\right].
\end{align*}

For each estimated change-point $\widehat{k}_i$ by SNCP, denote $[s_i^*,e_i^*]=[\widehat{k}_{i-1}+\widetilde{\iota}_n, \widehat{k}_{i+1}-\widetilde{\iota}_n]$ as its local interval in \eqref{eq:local_interval}, we define its locally refined estimator as
\begin{align}\label{eq:local_cusum}
\widetilde{k}_i^*={\arg\max}_{s_i^*+\widetilde{\iota}_n\leq k\leq e_i^*-\widetilde{\iota}_n}T^2(k;s_i^*, e_i^*), ~\text{ for } i=1,2,\cdots,m_o.
\end{align}

\textbf{Remark 1}: Note that the local refinement procedure works for any sequence $\widetilde{\iota}_n$ such that $\iota_n=O(\widetilde{\iota}_n)$ and $\widetilde{\iota}_n=o(n)$. In other words, it is essentially robust to the first stage localization error rate $\iota_n$ from SNCP. In practice, we can set $\widetilde{\iota}_n=\epsilon n/\log n$ to ensure that $\widetilde{\iota}_n$ is a valid sequence for the local refinement procedure, where $\epsilon$ is the window size parameter of SNCP.

\textbf{Remark 2}: Note that in \eqref{eq:local_cusum}, we define $\widetilde{k}_i^*$ as the argmax of CUSUM from $s_i^*+\widetilde{\iota}_n\leq k\leq e_i^*-\widetilde{\iota}_n$, i.e. we trim the first and last $\widetilde{\iota}_n$ data points in the local interval $[s_i^*,e_i^*]=[\widehat{k}_{i-1}+\widetilde{\iota}_n, \widehat{k}_{i+1}-\widetilde{\iota}_n]$. This is a technical consideration needed to control the random nature of the local interval, as it is based on the SNCP estimation $\{\widehat{k}_i\}_{i=1}^{m_o}$.

In the following, we show in \Cref{thm_refine} that the locally refined change-point estimator $\{\widetilde{k}_i^*\}_{i=1}^{m_o}$ achieves the optimal localization error rate for the mean functional, which is established by \cite{bai1994} and \cite{lavielle2000least} under the following data generating process for $\{Y_t \in \mathbb{R}\}_{t=1}^n$. Specifically, we have that
$$
Y_t=X_t+\mu_{i},\quad k_{i-1}+1\leq t\leq k_{i}, \quad \text{for}~ i=1,\cdots,m_o+1,
$$
where $\{X_t\}_{t=1}^n$ is a one dimensional stationary time series with $E(X_t)=0$, $k_0:=0<k_1<\cdots<k_{m_o}<k_{m_o+1}:=n$ denote the $m_o$ change-points, and $\mu_i\in\mathbb{R}$ denotes the mean of the $i$th segment. For $i=1,\cdots, m_o$, we assume that, $\mu_{i+1}-\mu_i=\delta_i=c_i\delta$ where $c_i\neq 0$ is a fixed constant. Thus, the overall change size is controlled by $\delta.$

Assuming the knowledge of $m_o=1$, \cite{bai1994} establishes the optimal $O_p(n^{-1})$ rate of an OLS-based change-point estimator when $\{X_t\}_{t=1}^n$ is a linear process. \cite{lavielle2000least} generalizes \cite{bai1994} by allowing $\{X_t\}_{t=1}^n$ to be a general short-range or long-range dependent process. Both works use the OLS-based estimator, which is equivalent to our CUSUM based statistic $T(k;s,e)$ for mean change.

To establish the optimal localization error rate, besides the assumptions required by SNCP, we introduce an additional assumption in \Cref{ass_moment}.
\begin{ass}\label{ass_moment}
There exists a fixed constant $0<C<\infty$ and $1<\phi<2$ such that for all $1\leq i,j \leq n$,
$$
E\left[\sum_{t=i}^j X_i\right]^2\leq C|j-i+1|^{\phi}.
$$
\end{ass}
\Cref{ass_moment} is adopted from Assumption H1 in \cite{lavielle2000least}, and holds for commonly used time series models such as linear processes~(e.g.  ARMA process), strongly mixing processes and certain long-range dependent processes. We refer to \cite{lavielle2000least} for more detailed examples.

A key consequence of \Cref{ass_moment} is a H\'ajek-R\'enyi type inequality~(see Lemma \ref{lem_HRin} in \Cref{sec:local_theory}), which is adopted from Theorem 1 and Lemma 2.2 in \cite{lavielle2000least} and plays a key role for controlling random fluctuations of the CUSUM statistic $T(k;s,e)$ around its expectation in the proof of \Cref{thm_refine}.

\begin{thm}\label{thm_refine}
Under the conditions of Theorem \ref{thm1} and \Cref{ass_moment}, we have that  
$$ |\widetilde{k}_i^*-k_i|=O_p(\delta^{2/(\phi-2)}), \text{ for } i=1,\cdots,m_o,$$
as long as $n^{2-\phi}\delta^{2}\to \infty$, where $1<\phi<2$ is defined in Assumption \ref{ass_moment}.
\end{thm}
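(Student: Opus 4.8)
The plan is to establish the optimal rate by conditioning on the high-probability event that SNCP has correctly isolated each change-point, and then reduce the problem to a single-change-point CUSUM localization on each local interval. By \Cref{thm1}(ii), with probability going to $1$ we have $\widehat m = m_o$ and $\max_{1\leq i\leq m_o}|\widehat k_i - k_i|\leq \iota_n$, where $\iota_n$ is a valid sequence with $\iota_n=o(n)$ and $\iota_n^{-2}\delta^{-2}n\to 0$. Choosing $\widetilde\iota_n$ with $\iota_n=O(\widetilde\iota_n)$ and $\widetilde\iota_n=o(n)$, on this event each local interval $[s_i^*,e_i^*]=[\widehat k_{i-1}+\widetilde\iota_n,\widehat k_{i+1}-\widetilde\iota_n]$ has length at least $\epsilon_o n(1+o(1))$ and contains exactly one true change-point $k_i$, and moreover the true change-point sits at distance at least $\epsilon_o n/2$ (say) from each endpoint after the additional $\widetilde\iota_n$ trimming in \eqref{eq:local_cusum}. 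Since all probabilistic statements in the conclusion are $O_p$ bounds, I would work throughout on this event and bound the localization error uniformly over $i$.

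On a fixed local interval $[s,e]$ containing the single change-point $k_i$ with mean shift $\delta_i=c_i\delta$, I would decompose $T(k;s,e)$ into its deterministic (signal) part and a noise part. For the mean functional, $\widehat\theta_{a,b}=\bar Y_{a,b}=\bar X_{a,b}+\text{(piecewise-constant mean)}$, so $T(k;s,e)=ET(k;s,e)+\text{noise}(k)$, where the signal term $ET(k;s,e)$ is a deterministic triangular function of $k$ that is maximized exactly at $k=k_i$ and, near its peak, decreases quadratically: $ET(k_i;s,e)^2 - ET(k;s,e)^2 \geq C\,\delta^2 |k-k_i|/n$ for some constant $C>0$ depending on $\epsilon_o,c_i$. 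The plan is then the standard contrast argument: since $\widetilde k_i^*$ maximizes $T^2$, we have $T^2(\widetilde k_i^*;s,e)\geq T^2(k_i;s,e)$, which upon rearranging forces the quadratic signal gap at $\widetilde k_i^*$ to be dominated by the fluctuation of the noise cross-term, giving
\begin{align*}
C\,\delta^2\,\frac{|\widetilde k_i^*-k_i|}{n}\leq \big|\,\text{noise}(\widetilde k_i^*)^2-\text{noise}(k_i)^2 + 2\,ET\cdot(\text{noise difference})\,\big|.
\end{align*}
The main work is to control the supremum of the noise cross-terms over $k$ in the local interval. Here I would invoke the H\'ajek-R\'enyi type inequality (Lemma \ref{lem_HRin}) that follows from \Cref{ass_moment}: it gives, for partial sums of $\{X_t\}$, a maximal bound of the form $E\max_{k}|k-k_i|^{-\beta}|\sum_{t}X_t|^2 \lesssim |k-k_i|^{\phi-2\beta}$-type control, which lets me bound $\sup_k |\text{cross-term}(k)|$ in terms of $|\widetilde k_i^*-k_i|^{\phi/2}\delta/n$ up to constants. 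Substituting this into the contrast inequality yields $\delta^2|\widetilde k_i^*-k_i|/n \lesssim \delta\,|\widetilde k_i^*-k_i|^{\phi/2}/n + |\widetilde k_i^*-k_i|^{\phi}/n^2$ with high probability, and solving this for $|\widetilde k_i^*-k_i|$ (treating $d=|\widetilde k_i^*-k_i|$ as the unknown and balancing the dominant term) gives $d=O_p(\delta^{2/(\phi-2)})$; the exponent $2/(\phi-2)$ is negative since $\phi<2$, so this is indeed a sharp localization width.

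The hard part will be the maximal-inequality step: because the local interval $[s_i^*,e_i^*]$ is itself random (built from the SNCP estimates $\widehat k_{i\pm1}$), I cannot directly apply the deterministic-interval H\'ajek-R\'enyi bound. The $\widetilde\iota_n$-trimming in \eqref{eq:local_cusum} is precisely what handles this, as flagged in Remark 2: it ensures $\widetilde k_i^*$ is separated from the random endpoints by at least $\widetilde\iota_n$, so the relevant partial sums involve only the genuinely stationary segment around $k_i$, and I can replace the random endpoints by deterministic ones lying in fixed neighborhoods (of size $\iota_n=o(n)$) of $k_{i-1},k_{i+1}$ at the cost of a uniform $o_p$ correction. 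I would carry out the maximal inequality on a slightly enlarged deterministic interval $[k_{i-1}+\epsilon_o n/4,\,k_{i+1}-\epsilon_o n/4]$ that, on the good event, contains $[s_i^*+\widetilde\iota_n,e_i^*-\widetilde\iota_n]$, so that the bound holds simultaneously for all admissible realizations of the random interval. The condition $n^{2-\phi}\delta^2\to\infty$ enters exactly to guarantee that the resulting localization width $\delta^{2/(\phi-2)}$ is $o(n)$ (so the estimate stays inside the local interval) and that the signal dominates the noise at scale $k_i$, closing the argument; the remaining steps (the explicit form of $ET$, its quadratic decay, and Taylor bookkeeping of the cross-terms) are routine and I would not grind through them.
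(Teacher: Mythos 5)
Your proposal follows essentially the same route as the paper's proof: condition on the SNCP isolation event from Theorem~\ref{thm1}, reduce to a single-change CUSUM contrast on each local interval, split $T(k;s,e)$ into the deterministic signal gap (linear in $|k-k_i|$, via Bai's inequality \eqref{bai14}) plus noise, control the noise maxima with the H\'ajek--R\'enyi bounds of Lemma~\ref{lem_HRin}, and neutralize the randomness of the endpoints by passing to deterministic surrogates (the paper implements this as a union bound over the class $B_i$ of admissible intervals, re-anchoring all partial sums at the true change-points $k_{i-1},k_{i+1}$). The only structural piece you elide is the paper's preliminary consistency step $|\widetilde{k}_i^*-k_i|=o_p(n)$ (Part~(1) of its proof), which must precede the rate argument because the normalizations $1/(k-s_i+1)$ and $1/(e_i-k)$ in the noise terms are only controllable at the $O_p(\delta^{2/(\phi-2)})$ scale once $k$ is known to lie within a fixed fraction of $n$ of $k_i$; this is precisely the bookkeeping you defer and it uses the same tools, so the approach is sound.
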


Several remarks are in order. First, Theorem \ref{thm_refine} implies that for a constant change size $\delta$, the locally refined estimator $\widetilde{k}_i^*/n$ achieves the $O_p(n^{-1})$ localization error rate, which is the optimal rate~(e.g.\ \cite{bai1994}, \cite{Bai1998} and \cite{lavielle2000least}). Second, with the further assumption that $\{X_t\}_{t=1}^n$ is a linear process~(see condition (B) in \cite{bai1994}), \Cref{thm_refine} holds for $\phi=1$, which states that 
$|\widetilde{k}_i^*-k_i|=O_p(\delta^{-2})$ given that $n\delta^{2}\to \infty$, and recovers the result in \cite{bai1994}.

The proof of Theorem \ref{thm_refine} builds on the arguments in \cite{bai1994} and \cite{lavielle2000least}, where the key component is the use of a H\'ajek-R\'enyi type inequality. However, our proof requires additional technical arguments as we further need to control the randomness of the local interval brought by $\{\widehat{k}_i\}_{i=1}^{m_o}$ of the first stage SNCP.  

\textbf{Remark 3}: Beyond the mean case, it is more difficult to establish the $O_p(n^{-1})$ rate for the CUSUM-based local refinement. The reason is as follows. Roughly speaking, the proof for the optimal rate of CUSUM for the mean case consists of two conditions: (A).\ the population version of CUSUM is maximized at the true change-point and has non-zero left and right derivatives, and (B).\ the random fluctuations around the population CUSUM is uniformly small. To establish (B), the H\'ajek-R\'enyi type inequality is needed. For (A), it is trivially true for the mean case due to linearity, but is much more difficult to be verified for a general functional $\theta(\cdot)$ without linearity.

Nevertheless, when dealing with a functional other than mean, other types of single change-point estimator available in the literature can be used in the local refinement step to help achieve a provably $O_p(n^{-1})$ rate. For example, for the quantile case, we may use the change-point estimator in \cite{Oka2011}. Given the true number of change-points, it estimates the change-point location via minimization of an objective function based on the quantile check loss (see equation (6) therein) and achieves the $O_p(n^{-1})$ rate under additional assumptions. To conserve space, we do not further pursue this direction in our current manuscript.

\textbf{Extension to vector-valued functionals}: The proposed local refinement procedure for SNCP can be easily extended to a vector-valued functional where $\bftheta(\cdot)\in \mathbb{R}^d$ with $d>1.$ Specifically, we use the same CUSUM statistic $T(k;s,e)$ defined above and modify $\widetilde{k}_i^*$ in \eqref{eq:local_cusum} as 
\begin{align*}
\widetilde{k}_i^*={\arg\max}_{s_i^*+\widetilde{\iota}_n\leq k\leq e_i^*-\widetilde{\iota}_n}\|T(k;s_i^*, e_i^*)\|_2, ~\text{ for } i=1,2,\cdots,m_o,
\end{align*}
where $\|\cdot\|_2$ is the $l_2$-norm. The theoretical result in \Cref{thm_refine} can be easily established for the case of mean change in a multivariate time series $\{Y_t \in \mathbb{R}^d\}_{t=1}^n$ for any fixed dimension $d>1.$ We omit the details to conserve space.



\textbf{An informal procedure to identify the functionals that changed}: For change-point estimation in vector-valued functionals, once a change-point is detected, one may want to further identify which features actually changed. One informal strategy for this task is to further conduct a subsequent SN-test. Specifically, for each estimated change-point $\widehat{k}_i$ by SNCP, based on its local interval $[\widehat{k}_{i-1}+\widetilde{\iota}_n, \widehat{k}_{i+1}-\widetilde{\iota}_n]$ as defined above in \eqref{eq:local_interval}, we can further conduct a single change-point SN test (see equation \eqref{statgeneral} in the main text) for each feature and determine if it is changed at this very change-point. 

This is an informal procedure as it is obviously subject to the multiple testing issues as we are conducting SN tests simultaneously for multiple features at multiple estimated change-points. However, we believe this procedure can informally shed some light on which feature may have actually changed.

\subsection{ Numerical evidence for local refinement}\label{sec:local_simu}
In this section, we conduct numerical experiments to illustrate the improvements brought by the local refinement procedure proposed in \Cref{sec:local_proc} for mean change in univariate and multivariate time series and for multi-parameter change.

We generate a $d$-dimensional time series $\{Y_t\}_{t=1}^n$ with piecewise constant mean under both single and multiple change-point settings and different change sizes.
\begin{align*}
&\text{(LR1)}: n=600, ~~ Y_t=\begin{cases} 
0+ X_{t}, & t\in [1,300],\\
{0.5/\sqrt{d}+ X_{t}}, & t\in [301,600].
\end{cases}\\
&\text{(LR2)}: n=600, ~~ Y_t=\begin{cases} 
0+ X_{t}, & t\in [1,300],\\
{1/\sqrt{d}+ X_{t}}, & t\in [301,600].
\end{cases}\\
&\text{(LR3)}: n=1000, ~~ Y_t=\begin{cases} 
0+ X_{t}, & t\in [1,333], [668,1000]\\
{0.5/\sqrt{d}+ X_{t}}, & t\in [334,667].
\end{cases}\\
&\text{(LR4)}: n=1000, ~~ Y_t=\begin{cases} 
0+ X_{t}, & t\in [1,333], [668,1000]\\
{1/\sqrt{d}+ X_{t}}, & t\in [334,667].
\end{cases}
\end{align*}
where $\{X_t\}$ is \textit{i.i.d.}\ standard $d$-variate normal $N(0,\mathbf I_d)$. 

(LR1) and (LR2) have one change-point where (LR2) has a larger change size. (LR3) and (LR4) has two change-points where (LR4) has a larger change size. For each simulation setting, we repeat the experiments 1000 times. Note that to be fair, the local refinement procedure is performed for all experiments where the estimated number of change-points by SNCP $\widehat{m}\geq 1$, as in practice we do not know the true number of change-points $m_o.$

\Cref{tab_local_d1} and \Cref{tab_local_d5} summarize the performance of SNCP and the local refinement (SN-LR) under change in mean for (LR1)-(LR4) with $d=1$ and $d=5$ respectively. \Cref{tab_local_mul1} further provides the performance of SNCP and the local refinement (SN-LR) under change in multi-parameter (mean + median) for (LR1)-(LR4) with $d=1.$ As can be seen, the local refinement procedure does improve the estimation accuracy of SNCP~(e.g.\ higher ARI and lower Hausdorff distance) to some extent across all simulation settings. Note that SN and SN-LR have exactly the same $\hat{m}-m_o$ by design.


\begin{table}[H]
\centering
\begin{tabular}{rlrrrrrrrrrrr}
\hline
\hline & & \multicolumn{7}{c}{$\hat{m}-m_o$} & &&& \\
\hline
Method & Model & $\leq -3$ & $-2$ & $-1$ & $0$ & $1$ & $2$ & $\geq 3$ & ARI & $d_1 \times 10^2$ & $d_2 \times 10^2$ & $d_H \times 10^2$ \\ \hline
SN & $(LR1)$ & 0 & 0 & 46 & 914 & 39 & 1 & 0 & 0.868 & 3.04 & 4.39 & 5.34 \\ 
SN-LR & & 0 & 0 & 46 & 914 & 39 & 1 & 0 & 0.885 & 2.51 & 3.89 & 4.81 \\ 
SN & $(LR2)$ & 0 & 0 & 0 & 952 & 46 & 2 & 0 & 0.961 & 1.93 & 0.71 & 1.93 \\ 
SN-LR &  & 0 & 0 & 0 & 952 & 46 & 2 & 0 & 0.971 & 1.65 & 0.47 & 1.65 \\ \hline
SN & $(LR3)$ & 0 & 6 & 132 & 847 & 15 & 0 & 0 & 0.859 & 2.38 & 6.64 & 6.84 \\ 
SN-LR &  & 0 & 6 & 132 & 847 & 15 & 0 & 0 & 0.878 & 1.88 & 6.24 & 6.46 \\ 
SN & $(LR4)$ & 0 & 0 & 0 & 974 & 26 & 0 & 0 & 0.964 & 1.20 & 0.84 & 1.20 \\ 
SN-LR &  & 0 & 0 & 0 & 974 & 26 & 0 & 0 & 0.980 & 0.88 & 0.47 & 0.88 \\
\hline\hline
\end{tabular}
\caption{Performance of SNCP and the local refinement (SN-LR) under change in mean for $d=1$.} 
\label{tab_local_d1}
\end{table}

\begin{table}[H]
\centering
\begin{tabular}{rlrrrrrrrrrrr}
\hline
\hline & & \multicolumn{7}{c}{$\hat{m}-m_o$} & &&& \\
\hline
Method & Model & $\leq -3$ & $-2$ & $-1$ & $0$ & $1$ & $2$ & $\geq 3$ & ARI & $d_1 \times 10^2$ & $d_2 \times 10^2$ & $d_H \times 10^2$ \\ \hline
SN & $(LR1)$ & 0 & 0 & 97 & 862 & 39 & 2 & 0 & 0.811 & 3.39 & 7.17 & 8.24 \\ 
SN-LR &  & 0 & 0 & 97 & 862 & 39 & 2 & 0 & 0.841 & 2.37 & 6.26 & 7.22 \\ 
SN & $(LR2)$ & 0 & 0 & 0 & 948 & 50 & 2 & 0 & 0.961 & 2.18 & 0.73 & 2.18 \\ 
SN-LR &  & 0 & 0 & 0 & 948 & 50 & 2 & 0 & 0.973 & 1.74 & 0.40 & 1.74 \\  \hline
SN & $(LR3)$ & 0 & 26 & 212 & 740 & 22 & 0 & 0 & 0.803 & 2.60 & 10.30 & 10.57 \\ 
SN-LR &  & 0 & 26 & 212 & 740 & 22 & 0 & 0 & 0.829 & 1.77 & 9.67 & 9.94 \\
SN & $(LR4)$ & 0 & 0 & 0 & 969 & 30 & 1 & 0 & 0.963 & 1.31 & 0.86 & 1.31 \\ 
SN-LR &  & 0 & 0 & 0 & 969 & 30 & 1 & 0 & 0.980 & 0.89 & 0.46 & 0.89 \\ 
\hline\hline
\end{tabular}
\caption{Performance of SNCP and the local refinement (SN-LR) under change in mean for $d=5$.} 
\label{tab_local_d5}
\end{table}

\begin{table}[H]
\centering
\begin{tabular}{rlrrrrrrrrrrr}
\hline
\hline & & \multicolumn{7}{c}{$\hat{m}-m_o$} & &&& \\
\hline
Method & Model & $\leq -3$ & $-2$ & $-1$ & $0$ & $1$ & $2$ & $\geq 3$ & ARI & $d_1 \times 10^2$ & $d_2 \times 10^2$ & $d_H \times 10^2$ \\ \hline
SN & $(LR1)$ & 0 & 0 & 83 & 879 & 38 & 0 & 0 & 0.826 & 3.12 & 6.41 & 7.27 \\ 
SN-LR &  & 0 & 0 & 83 & 879 & 38 & 0 & 0 & 0.846 & 2.52 & 5.81 & 6.67 \\ 
SN & $(LR2)$ & 0 & 0 & 0 & 953 & 47 & 0 & 0 & 0.962 & 1.90 & 0.72 & 1.90 \\ 
SN-LR &  & 0 & 0 & 0 & 953 & 47 & 0 & 0 & 0.969 & 1.75 & 0.54 & 1.75 \\  \hline
SN & $(LR3)$ & 0 & 13 & 191 & 768 & 27 & 1 & 0 & 0.826 & 2.59 & 8.87 & 9.23 \\ 
SN-LR &  & 0 & 13 & 191 & 768 & 27 & 1 & 0 & 0.838 & 2.28 & 8.71 & 9.10 \\ 
SN & $(LR4)$ & 0 & 0 & 0 & 962 & 37 & 1 & 0 & 0.963 & 1.39 & 0.83 & 1.39 \\ 
SN-LR &  & 0 & 0 & 0 & 962 & 37 & 1 & 0 & 0.975 & 1.15 & 0.58 & 1.15 \\ 
\hline\hline
\end{tabular}
\caption{Performance of SNCP and the local refinement (SN-LR) under change in multi-parameter (mean + median).} 
\label{tab_local_mul1}
\end{table}


\subsection{ Theoretical results for the mean functional}\label{sec:local_theory}
In this section, we provide the proof of \Cref{thm_refine} for the mean functional. Recall that $\widetilde{\iota}_n$ is any sequence such that $\iota_n=O(\widetilde{\iota}_n)$ and $\widetilde{\iota}_n=o(n)$, with $\iota_n$ being the localization error rate of SNCP.


Under the mean change setting, the CUSUM statistic on a generic interval $[s_i,e_i]$ takes the form:
\begin{align*}
T(k;s_i,e_i)=\frac{\sqrt{(k-s_i+1)(e_i-k)}}{e_i-s_i+1}\left[\frac{1}{e_i-k}\sum_{t=k+1}^{e_i}Y_t-\frac{1}{k-s_i+1}\sum_{t=s_i}^{k}Y_t\right],
\end{align*}
and we define
$\tilde{k}_i(s_i,e_i)=\arg\max_{s_i+\widetilde{\iota}_n\leq k\leq e_i-\widetilde{\iota}_n}T^2(k;s_i,e_i)$.

The locally refined estimator for the $i$th change-point $k_i$ is thus
\begin{equation}\label{refiner}
\tilde{k}_i^*=\tilde{k}_i(s_i^*,e_i^*),
\end{equation}
with $[s_i^*,e_i^*]=[\widehat{k}_{i-1}+\widetilde{\iota}_n, \widehat{k}_{i+1}-\widetilde{\iota}_n] \text{ for } i=1,2,\cdots, m_o.$

\Cref{lem_HRin} gives a H\'ajek-R\'enyi type inequality, which is adopted from Theorem 1 and Lemma 2.2 in \cite{lavielle2000least} and plays a key role for controlling random fluctuations of the CUSUM statistic $T(k;s_i,e_i)$ around its expectation in the proof of \Cref{thm_refine}.

\begin{lemma}\label{lem_HRin}
Suppose $\{X_t\}_{t\in\mathbb{Z}}$ satisfies Assumption \ref{ass_moment}, there then exists a uniform constant $C(\phi,X)$ that only depends on the form of $\{X_t\}$ and $\phi$, such that for any $\varepsilon>0$,
\begin{flalign*}
\text{(i)   }&P\left(\max _{1 \leqslant k\leqslant n} \frac{\left|S^X_{1, k}\right|}{\sqrt{k}} \geqslant \varepsilon\right) \leqslant C(\phi, X) \frac{\log (n)n^{\phi-1}}{\varepsilon^{2}},\\
\text{(ii)   }&\sup _{i \in \mathbb{Z}} P\left(\max _{i+1 \leqslant k \leqslant i+n}\left|S^X_{i, k}\right| \geqslant \varepsilon\right) \leqslant C(\phi, X) \frac{n^{\phi}}{\varepsilon^{2}}, \\
\text{(iii)   }&\sup _{i \in \mathbb{Z}} P\left(\max _{k \geqslant m+i-1} \frac{\left|S^X_{i, k}\right|}{k} \geqslant \varepsilon\right) \leqslant C(\phi, X) \frac{m^{\phi-2}}{\varepsilon^{2}},
\end{flalign*}
where $S^{X}_{i,k}=\sum_{t=i}^{k}X_t$.
\end{lemma}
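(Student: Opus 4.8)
The plan is to reduce all three maximal inequalities to a single Móricz-type maximal \emph{moment} inequality that upgrades the block second-moment control of Assumption \ref{ass_moment} into control of the running maximum, and then to derive the weighted statements (i) and (iii) from it by a dyadic blocking argument. Concretely, the engine I would first establish is
\[
\E\Big(\max_{i\leq k\leq j}\big|S^X_{i,k}\big|\Big)^2\leq K_\phi\, C\,|j-i+1|^{\phi},\qquad 1\leq i\leq j,
\]
where $C,\phi$ are as in Assumption \ref{ass_moment} and $K_\phi$ depends only on $\phi$. Given this, part (ii) is immediate: Markov's inequality yields $P(\max_{i+1\leq k\leq i+n}|S^X_{i,k}|\geq\varepsilon)\leq \varepsilon^{-2}\E(\max_{k}|S^X_{i,k}|)^2\leq K_\phi C\, n^{\phi}/\varepsilon^2$, and the bound is uniform in $i$ precisely because the moment bound in Assumption \ref{ass_moment} is uniform in position. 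Thus one may take $C(\phi,X)=K_\phi C$.

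The heart of the argument is therefore the moment inequality, which I would reprove by the standard bisection recursion. The key structural fact is that the length majorant $g(i,j)=j-i+1$ is additive, hence superadditive, $g(i,l)+g(l+1,j)=g(i,j)$, and Assumption \ref{ass_moment} reads $\E\big[(S^X_{i,j})^2\big]\leq C\,g(i,j)^{\phi}$ with exponent $\phi>1$; this is exactly the hypothesis of the Móricz/Serfling maximal-moment bound. Writing $M(i,j)=\max_{i\leq k\leq j}|S^X_{i,k}|$, I would split $[i,j]$ at a midpoint $l$, use $M(i,j)\leq M(i,l)\vee\big(|S^X_{i,l}|+M(l+1,j)\big)$, bound each piece by the induction hypothesis together with $\E[(S^X_{i,l})^2]\leq C g(i,l)^\phi$, and exploit $g(i,l)^\phi+g(l+1,j)^\phi\leq g(i,j)^\phi$ (valid since $a^\phi+b^\phi\leq(a+b)^\phi$ for $a,b>0$ when $\phi\geq 1$) to close the recursion with a constant $K_\phi$ determined only by $\phi>1$.

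For parts (i) and (iii) I would run a dyadic decomposition of the index range. For (i), cover $\{1,\dots,n\}$ by blocks $2^{r-1}\leq k<2^{r}$, $r=1,\dots,\lceil\log_2 n\rceil$; on each block $\sqrt k\geq 2^{(r-1)/2}$ and $\max_{2^{r-1}\leq k<2^r}|S^X_{1,k}|\leq M(1,2^r)$, so the moment inequality plus Markov give the per-block bound $K_\phi C\,2^{r\phi}/(\varepsilon^2 2^{r-1})=O\big(2^{r(\phi-1)}/\varepsilon^2\big)$. Summing the geometric series: for $\phi>1$ it is dominated by its last term and equals $O(n^{\phi-1}/\varepsilon^2)\leq C(\phi,X)\log(n)n^{\phi-1}/\varepsilon^2$, so the $\log n$ factor is harmless slack that becomes sharp only at the boundary $\phi=1$. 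For (iii) the same blocking is run toward $+\infty$: on $2^{r}\leq k<2^{r+1}$ with $2^{r}\gtrsim m$ one has $1/k\leq 2^{-r}$ and, using $i\geq 1$ so that the block length is at most $2^{r+1}$, the per-block bound is $K_\phi C\,2^{(r+1)\phi}/(\varepsilon^2 2^{2r})=O\big(2^{r(\phi-2)}/\varepsilon^2\big)$; since $\phi<2$ the tail sum over $r\gtrsim\log_2 m$ converges and is dominated by its first term $O(m^{\phi-2}/\varepsilon^2)$.

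I expect two bookkeeping points to be the main obstacle rather than any deep difficulty. The first is tracking the bisection constant $K_\phi$ so that it truly depends only on $\phi$ (and not on $n$, $i$, or the interval length), which requires the majorant exponent to stay strictly above $1$ and is where $\phi\in(1,2)$ is used essentially. The second is the weight-versus-length reconciliation in (iii): the denominator is the absolute index $k$ while the moment bound sees the block length $k-i+1$, and for $i\geq 1$ one has $k\geq k-i+1$, so the weight $1/k$ only helps; the $\sup_{i\in\mathbb{Z}}$ then reduces to the case $i\geq 1$ by the translation invariance of the position-free moment bound in Assumption \ref{ass_moment}, which is exactly what delivers the uniformity in $i$ in both (ii) and (iii).
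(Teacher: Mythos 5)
Your proposal is correct, but it takes a genuinely different route from the paper: the paper's entire proof is a citation, deducing (i) from Theorem 1 of \cite{lavielle2000least} with the weight choice $b_k=k^{-1/2}$ (the $\log n$ factor being exactly $\sum_{k\leq n}b_k^2=\sum_{k\leq n}k^{-1}$), and (ii)--(iii) directly from Lemma 2.2 there. What you do instead is reconstruct, from scratch, the machinery sitting underneath those cited results: the M\'oricz maximal second-moment inequality $\E\bigl(\max_{i\leq k\leq j}|S^X_{i,k}|\bigr)^2\leq K_\phi C|j-i+1|^\phi$ via bisection (where you correctly locate the role of $\phi>1$ in closing the recursion through $a^\phi+b^\phi\leq(a+b)^\phi$), followed by Markov for (ii) and dyadic blocking for the weighted statements (i) and (iii), with $\phi<2$ ensuring the tail geometric series in (iii) converges. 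What each approach buys: the paper's citation is economical and inherits the exact constants of \cite{lavielle2000least}; your argument is self-contained and, notably, yields a slightly \emph{sharper} version of (i), namely $O(n^{\phi-1}\varepsilon^{-2})$ with no logarithm, because the dyadic sum over blocks is dominated by its last term for $\phi>1$ — the $\log n$ in the stated lemma is slack inherited from applying the general-weight H\'ajek--R\'enyi theorem rather than exploiting the specific weights $k^{-1/2}$; since the lemma only claims an upper bound, this is harmless. Your two flagged bookkeeping points are also the right ones: the constant $K_\phi$ in the bisection must be interval- and position-free, and the $\sup_{i\in\mathbb{Z}}$ in (ii)--(iii) rests on the moment bound of Assumption \ref{ass_moment} depending only on the window length (strictly, the assumption is stated for $1\leq i,j\leq n$, so the extension to $\mathbb{Z}$ is implicit in both your proof and the paper's). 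The only place deserving one more line is your translation-invariance reduction in (iii): the weight $1/k$ is not itself translation invariant, so the reduction to $i\geq1$ should be stated as first dominating $1/k$ by the length weight $1/(k-i+1)$ (valid for $i\geq1$) and then shifting; this matches how the paper actually uses (iii) in the proof of Theorem \ref{thm_refine}, where the weight is the segment length $k_i-k$.
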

\begin{proof}
$(i)$ holds by letting $b_k=k^{-1/2}$ and notice that $\sum_{t=1}^{n}b_t^2=O(\log (n))$ in Theorem 1 in \cite{lavielle2000least}. (ii) and (iii) holds by Lemma 2.2 in \cite{lavielle2000least} directly.
\end{proof}
\medskip

\noindent\textbf{Proof of \Cref{thm_refine}}: Our proof consists of two steps: (1) the consistency of $\tilde{k}_i^*$; and (2) the localization error rate of $\tilde{k}_i^*$. The proof builds on the arguments in \cite{bai1994} and \cite{lavielle2000least}, where the key component is the use of the H\'ajek-R\'enyi type inequality in \Cref{lem_HRin}.

However, our proof requires additional technical arguments as we further need to control the randomness of the local interval brought by $\{\widehat{k}_i\}_{i=1}^{m_o}$ of the first stage SNCP.  

\begin{proof}
In the following, we provide the proof for a generic $i\in \{1,2,\cdots, m_o\}.$ For $M>0$, denote \begin{flalign*}
&A_i(M)=\{k: M<|k-k_i|\},\\
&B_{i}=\left\{[s_i,e_i]: k_{i-1}<s_i\leq k_{i-1}+2\iota_n, k_{i+1}-2\iota_n<e_i\leq  k_{i+1} \right\}.
\end{flalign*}
Recall by Theorem \ref{thm1}, we have $$
P(|\hat{k}_{j}-k_{j}|<\iota_n, \text{ for } j=i-1,i+1)=1-o(1).
$$
Thus, it follows $P([s_i^*,e_i^*] \in B_i)=1-o(1)$, and 
\begin{flalign}
\notag P(\tilde{k}_i^*\in A_i(M))\leq& P\left(\{\tilde{k}_i^*\in A_i(M)\}\cap \{[s_i^*,e_i^*]\in B_i\}\right)+o(1)
\\\leq &  \label{suff_AiM}P\left(\bigcup_{[s_i,e_i]\in B_i}\left\{\tilde{k}_i(s_i,e_i)\in A_i(M)\right\}\right)+o(1).
\end{flalign}

Without loss of generality, we assume $\mu_{i+1}-\mu_i=\delta_i>0$. In this case, $ET(k;s_i,e_i)\geq 0$ for $s_i\leq k\leq e_i$ and $[s_i,e_i]\in B_i$, and attains its maximum value at $k_i$. Otherwise, one can replace $T(k;s_i,e_i)$ by $-T(k;s_i,e_i)$ in the following analysis. 
The key difference, compared with proof of \cite{bai1994} and \cite{lavielle2000least}, is that we further need to control for CUSUM statistics evaluated on all intervals $[s_i,e_i]\in B_i$ uniformly while the referenced papers only have to deal with a single interval, as the CUSUM statistics there do not have a first stage randomness involved.


\textbf{Part (1)}. We first show the consistency of $\tilde{k}_i^*$. By \eqref{suff_AiM}, it suffices to show that, for any $\varepsilon>0$, 
$$
P\left(\bigcup_{[s_i,e_i]\in B_i}\left\{\tilde{k}_i(s_i,e_i)\in A_i(n\varepsilon)\right\}\right)=o(1),
$$
as $n\to\infty$.

Note  that
\begin{flalign}\label{suffdiffT}
&P\left(\bigcup_{[s_i,e_i]\in B_i}\left\{\tilde{k}_i(s_i,e_i)\in A_i(n\varepsilon)\right\}\right)\\\notag\leq& P\left(\bigcup_{[s_i,e_i]\in B_i}\left\{\max_{k\in[s_i+\widetilde{\iota}_n,e_i-\widetilde{\iota}_n]\cap A_i(n\varepsilon)}\left|T(k;s_i,e_i)\right|\geq\left|T(k_i;s_i,e_i)\right| \right\}\right).
\end{flalign}
By triangle inequality, we know that 
$$
\left|T(k;s_i,e_i)\right|-\left|T(k_i;s_i,e_i)\right|\leq 2\max_{s_i+\widetilde{\iota}_n\leq k\leq e_i-\widetilde{\iota}_n}\left|T(k;s_i,e_i)-ET(k;s_i,e_i) \right|+|ET(k;s_i,e_i)|-|ET(k_i;s_i,e_i)|.
$$
Using equation (14) in \cite{bai1994}, for some constant $C>0$ independent of $n$ and $\varepsilon$,
\begin{flalign}\label{bai14}
ET(k_i;s_i,e_i)-ET(k;s_i,e_i)\geq C(e_i-s_i+1)^{-1}|k-k_i|\delta_i.
\end{flalign}
Furthermore, 
$$
T(k;s_i,e_i)-ET(k;s_i,e_i)=\frac{\sqrt{(k-s_i+1)(e_i-k)}}{(e_i-s_i+1)}\left[\frac{\sum_{t=k+1}^{e_i}X_t}{(e_i-k)}-\frac{\sum_{t=s_i}^{k}X_t}{(k-s_i+1)}\right],
$$
and $$
\left|T(k;s_i,e_i)-ET(k;s_i,e_i)\right|\leq (e_i-s_i+1)^{-1/2}\left\{(k-s_i+1)^{-1/2}\left|\sum_{t=s_i}^{k}X_t\right|+(e_i-k)^{-1/2}\left|\sum_{t=k+1}^{e_i}X_t\right|\right\}.
$$
Hence, in view of (\ref{suffdiffT}), and that $e_i-s_i+1\leq n$, we have for some constant $C>0$ that is independent of $n$ and $\varepsilon$,
\begin{flalign}
&\notag P\left(\bigcup_{[s_i,e_i]\in B_i}\left\{\tilde{k}_i(s_i,e_i)\in A_i(n\varepsilon)\right\}\right)\\
\notag\leq &P\left(\bigcup_{[s_i,e_i]\in B_i}\left\{\max_{s_i+\widetilde{\iota}_n\leq k\leq e_i-\widetilde{\iota}_n}\frac{1}{\sqrt{k-s_i+1}}\left|\sum_{t=s_i}^k X_t\right|\geq (C/2) \delta_{i} (e_i-s_i+1)^{-1/2}n\varepsilon \right\}\right)\\&\notag+P\left(\bigcup_{[s_i,e_i]\in B_i}\left\{\max_{s_i+\widetilde{\iota}_n\leq k\leq e_i-\widetilde{\iota}_n}\frac{1}{\sqrt{e_i-k}}\left|\sum_{t=k+1}^{e_i} X_t\right| \geq (C/2) \delta_{i} (e_i-s_i+1)^{-1/2}n\varepsilon \right\}\right)
\\\notag\leq &P\left(\bigcup_{[s_i,e_i]\in B_i}\left\{\max_{s_i+\widetilde{\iota}_n\leq k\leq e_i-\widetilde{\iota}_n}\frac{1}{\sqrt{k-s_i+1}}\left|\sum_{t=s_i}^k X_t\right|\geq (C/2) \delta_{i} n^{1/2}\varepsilon \right\}\right)\\&\notag+P\left(\bigcup_{[s_i,e_i]\in B_i}\left\{\max_{s_i+\widetilde{\iota}_n\leq k\leq e_i-\widetilde{\iota}_n}\frac{1}{\sqrt{e_i-k}}\left|\sum_{t=k+1}^{e_i} X_t\right| \geq (C/2) \delta_{i} n^{1/2}\varepsilon \right\}\right)
\\\label{ineq_LM}:=&P_s+P_e.
\end{flalign}
Next, note that when $[s_i,e_i]\in B_i$ and $k-s_i\geq \widetilde{\iota}_n$, we have  $k-s_i\geq (s_i-k_{i-1})/2$ and $k-s_i\geq (k-k_{i-1})/3$, hence
\begin{flalign}
\notag
&\max_{s_i+\widetilde{\iota}_n\leq k\leq e_i-\widetilde{\iota}_n}\frac{1}{\sqrt{k-s_i+1}}\left|\sum_{t=s_i}^k X_t\right|\\\notag=&\max_{s_i+\widetilde{\iota}_n\leq k\leq e_i-\widetilde{\iota}_n}\frac{1}{\sqrt{k-s_i+1}}\left|\sum_{t=k_{i-1}+1}^{k} X_t-\sum_{t=k_{i-1}+1}^{s_i-1}X_t\right|\\\notag\leq& \max_{s_i+\widetilde{\iota}_n\leq k\leq e_i-\widetilde{\iota}_n}\frac{1}{\sqrt{k-s_i+1}}\left|\sum_{t=k_{i-1}+1}^{k} X_t\right|+\max_{s_i+\widetilde{\iota}_n\leq k\leq e_i-\widetilde{\iota}_n}\frac{1}{\sqrt{k-s_i+1}}\left|\sum_{t=k_{i-1}+1}^{s_i-1}X_t\right|
\\\notag\leq&\max_{s_i+\widetilde{\iota}_n\leq k\leq e_i-\widetilde{\iota}_n}\frac{\sqrt{3}}{\sqrt{k-k_{i-1}}}\left|\sum_{t=k_{i-1}+1}^{k} X_t\right|+\max_{s_i+\widetilde{\iota}_n\leq k\leq e_i-\widetilde{\iota}_n}\frac{\sqrt{2}}{\sqrt{s_i-k_{i-1}-1}}\left|\sum_{t=k_{i-1}+1}^{s_i-1}X_t\right|
\\\notag\leq&\max_{k_{i-1}+1\leq m\leq k_{i+1}}\frac{\sqrt{3}}{\sqrt{m-k_{i-1}}}\left|\sum_{t=k_{i-1}+1}^{m} X_t\right|+\max_{ k_{i-1}+1\leq m\leq k_{i+1} }\frac{\sqrt{2}}{\sqrt{m-k_{i-1}}}\left|\sum_{t=k_{i-1}+1}^{m}X_t\right|\\\leq& \label{bound_doublemax}\max_{k_{i-1}+1\leq m\leq k_{i+1}}\frac{4}{\sqrt{m-k_{i-1}}}\left|\sum_{t=k_{i-1}+1}^{m}X_t\right|.
\end{flalign}
Note (\ref{bound_doublemax}) holds for any pair $[s_i,e_i]\in B_i$,  hence, by Lemma \ref{lem_HRin} (i), 
\begin{flalign*}
P_s\leq &P\Big(\max_{k_{i-1}+1\leq m\leq k_{i+1}}\frac{4}{\sqrt{m-k_{i-1}}}\left|\sum_{t=k_{i-1}+1}^{m}X_t\right|\geq C/2\delta_in^{1/2}\varepsilon\Big)\\\leq& C(\phi,X)\frac{64(k_{i+1}-k_{i-1})^{\phi-1}\log(k_{i+1}-k_{i-1})}{C^2n\delta_i^2\varepsilon^2}\leq C n^{\phi-2}\log(n)\delta_i^{-2}.
\end{flalign*}
Similarly, we can show that
$$
P_e\leq C n^{\phi-2}\log(n)\delta_i^{-2}.
$$
By (\ref{ineq_LM}), this implies the consistency of $\tilde{k}_i^*$ given that $n^{\phi-2}\delta_{i}^{-2}\log(n)\to 0$.


\textbf{Part (2)}. Next, we derive the localization error rate  $|\tilde{k}_i^*-k_i|=O_p(\delta_{i}^{2/(\phi-2)}).$ By \eqref{suff_AiM}, it suffices to show that, for any $\varrho>0$, there exists $M_{\varrho}$ and $n_{\varrho}$ large enough, such that for $n>n_{\varrho}$ and $M>M_{\varrho}$, $$
P\left(\bigcup_{[s_i,e_i]\in B_i}\left\{\tilde{k}_i(s_i,e_i)\in A_i(M\delta_{i}^{2/(\phi-2)})\right\}\right)<\varrho.
$$

For any $\varepsilon\in(0,\epsilon_o/2)$,
\begin{flalign*}
&P\left(\bigcup_{[s_i,e_i]\in B_i}\left\{\tilde{k}_i(s_i,e_i)\in A_i(M\delta_{i}^{2/(\phi-2)})\right\}\right)\\
\leq & P\left(\bigcup_{[s_i,e_i]\in B_i}\left\{\max_{k\in[s_i+\varepsilon n,e_i-\varepsilon n]\cap A_i(M\delta_{i}^{2/(\phi-2)})}T^2(k;s_i,e_i)\geq T^2(k_i;s_i,e_i) \right\}\right)\\&+P\left(\bigcup_{[s_i,e_i]\in B_i}\left\{\max_{k\in\{[s_i,s_i+\varepsilon n)\bigcup (e_i-\varepsilon n,e_i]\}\cap A_i(M\delta_{i}^{2/(\phi-2)})}T^2(k;s_i,e_i)\geq T^2(k_i;s_i,e_i)\right\}\right),\end{flalign*}
by the proof of consistency, we know the second term can be bounded by $\varrho/5$ for $n$ large enough.  

It suffices to bound the first term by $4\varrho/5$. Note that $[s_i+\varepsilon n,e_i-\varepsilon n]\subset [k_{i-1}+\varepsilon n/2,k_{i+1}-\varepsilon n/2]$, denote $D_i(M)=\left\{[k_{i-1}+\varepsilon n/2,k_{i+1}-\varepsilon n/2]\cap A_i(M\delta_{i}^{2/(\phi-2)})\right\}$, we have
\begin{flalign*}
& P\left(\bigcup_{[s_i,e_i]\in B_i}\left\{\max_{k\in[s_i+\varepsilon n,e_i-\varepsilon n]\cap A_i(M\delta_{i}^{2/(\phi-2)})}T^2(k;s_i,e_i)\geq T^2(k_i;s_i,e_i)\right\}\right)\\
\leq & P\left(\bigcup_{[s_i,e_i]\in B_i}\left\{\max_{k\in D_i(M)}T^2(k;s_i,e_i)\geq T^2(k_i;s_i,e_i)\right\}\right)\\ \leq&
P\left(\bigcup_{[s_i,e_i]\in B_i}\left\{\max_{k\in D_i(M) }[T(k;s_i,e_i)-T(k_i;s_i,e_i)]\geq 0\right\}\right)\\&+
P\left(\bigcup_{[s_i,e_i]\in B_i}\left\{\min_{k\in D_i(M)}[T(k;s_i,e_i)+T(k_i;s_i,e_i)]\leq 0\right\}\right):=P_1+P_2.
\end{flalign*}

Define $b(k;s_i,e_i)=\frac{(k-s_i+1)^{1/2}(e_i-k)^{1/2}}{(e_i-s_i+1)}$. For $P_1$, on the event $\{T(k;s_i,e_i)\geq T(k_i;s_i,e_i)\}$, we have 
\begin{flalign*}
&T(k;s_i,e_i)-ET(k;s_i,e_i)-T(k_i;s_i,e_i)+ET(k_i;s_i,e_i)\\=&\frac{b(k_i;s_i,e_i)}{k_i-s_i+1}\sum_{t=s_i}^{k_i}X_t-\frac{b(k;s_i,e_i)}{k-s_i+1}\sum_{t=s_i}^{k}X_t-
\left(\frac{b(k_i;s_i,e_i)}{e_i-k_i}\sum_{t=k_i+1}^{e_i}X_t-\frac{b(k;s_i,e_i)}{e_i-k}\sum_{t=k+1}^{e_i}X_t\right)\\\geq& ET(k_i;s_i,e_i)-ET(k;s_i,e_i)\geq C\delta_i\frac{|k_i-k|}{(e_i-s_i+1)},
\end{flalign*}
where the last inequality holds by (\ref{bai14}).

Therefore, we have for some constant $C>0$ independent of $n$ and $M$,
\begin{flalign*}
P_1\leq& P\left(\bigcup_{[s_i,e_i]\in B_i}\left\{\max_{k\in D_i(M)} \frac{e_i-s_i+1}{|k_i-k|}
\left|\frac{b(k_i;s_i,e_i)}{k_i-s_i+1}\sum_{t=s_i}^{k_i}X_t-\frac{b(k;s_i,e_i)}{k-s_i+1}\sum_{t=s_i}^{k}X_t\right|>C\delta_{i}/2
\right\} \right)
\\&+P\left(\bigcup_{[s_i,e_i]\in B_i}\left\{\max_{k\in D_i(M)} \frac{e_i-s_i+1}{|k_i-k|}
\left|\frac{b(k_i;s_i,e_i)}{e_i-k_i}\sum_{t=k_i+1}^{e_i}X_t-\frac{b(k;s_i,e_i)}{e_i-k}\sum_{t=k+1}^{e_i}X_t\right|>C\delta_{i}/2
\right\} \right)\\:=&P_{11}+P_{12}.
\end{flalign*}
We only deal with $P_{11}$ as $P_{12}$ is similar.   Denote 
$$
G(k;s_i,e_i)= 
\left(\frac{b(k_i;s_i,e_i)}{k_i-s_i+1}\sum_{t=s_i}^{k_i}X_t-\frac{b(k;s_i,e_i)}{k-s_i+1}\sum_{t=s_i}^{k}X_t\right).
$$
Without loss of generality, we let $k<k_i$, then
$$
G(k;s_i,e_i)=\frac{(k-k_i)b(k_i;s_i,e_i)}{(k-s_i+1)(k_i-s_i+1)}\sum_{t=s_i}^{k_i}X_t+\frac{b(k_i;s_i,e_i)-b(k;s_i,e_i)}{k-s_i+1}\sum_{t=s_i}^{k}X_t+\frac{b(k_i;s_i,e_i)}{k-s_i+1}\sum_{t=k+1}^{k_i}X_t.
$$
Using the fact that $0<b(k;s_i,e_i)\leq 1$,  $|b(k_i;s_i,e_i)-b(k;s_i,e_i)|\leq C|k-k_i|/(e_i-s_i+1)$ for some constant $C>0$, and that $k-s_i+1\geq \varepsilon n/4$ for $k\in D_i(M)$, we obtain that for $k\in D_i(M)$,
\begin{flalign*}
&\frac{e_i-s_i+1}{|k_i-k|}\left|G(k;s_i,e_i)\right|\\\leq& \frac{4n}{n\varepsilon(k_i-s_i+1)}\left|\sum_{t=s_i}^{k_i}X_t\right|+\frac{C}{k-s_i+1}\left|\sum_{t=s_i}^{k}X_t\right|+\frac{4(e_i-s_i+1)}{\varepsilon n(k_i-k)}\left|\sum_{t=k+1}^{k_i}X_t\right|\\\leq &
\frac{8}{\varepsilon\epsilon_o n}\left|\sum_{t=s_i}^{k_i}X_t\right|+\frac{4C}{\varepsilon n}\left(\left|\sum_{t=k_{i-1}+1}^{k}X_t\right|+\left|\sum_{t=k_{i-1}+1}^{s_{i}-1}X_t\right|\right)+\frac{4}{\varepsilon (k_i-k) }\left|\sum_{t=k+1}^{k_i}X_t\right|,
\end{flalign*}

Hence using $|k-k_i|\leq n$, we have 
\begin{flalign*}
P_{11}=&P\left(\bigcup_{[s_i,e_i]\in B_i}\left\{\max_{k\in D_i(M)}\frac{e_i-s_i+1}{|k_i-k|}|G(k;s_i,e_i)|>C\delta_i/2 \right\}\right)
\\\leq & P\left(\frac{8}{\varepsilon\epsilon_on}\max_{k_{i-1}<s_i\leq k_{i}-1}\left|\sum_{t=s_i}^{k_i}X_t\right|>C\delta_i /6\right)+
P\left(\frac{4C}{\varepsilon n }\max_{k_{i-1}<k\leq k_{i+1}}\left|\sum_{t=k_{i-1}+1}^{k}X_t\right|>C\delta_i /12\right)\\&+P\left(\frac{4C}{\varepsilon n }\max_{k_{i-1}<s_i\leq k_{i-1}}\left|\sum_{t=k_{i-1}+1}^{s_i-1}X_t\right|>C\delta_i /12\right)+P\left(\max_{k\leq k_i-M\delta_{i}^{2/(\phi-2)}}\frac{4}{\varepsilon(k_i-k)}\left|\sum_{t=k+1}^{k_i}X_t\right|>C\delta_i /6\right),
\\\leq &C(\phi,X) \left[\frac{C (k_i-k_{i-1})^{\phi}}{\varepsilon^2\epsilon_o^2n^2\delta_i^2}+\frac{C (k_{i+1}-k_{i-1})^{\phi}}{\varepsilon^2n^2\delta_i^2}+\frac{C[M\delta_i^{2/(\phi-2)}]^{\phi-2}}{\delta_i^2\varepsilon^2}\right]\leq C[n^{\phi-2}+M^{\phi-2}],
\end{flalign*}
where  the last inequality holds by applying Lemma \ref{lem_HRin} (ii) to the first three terms, and Lemma \ref{lem_HRin} (iii) to the forth term.
Therefore, when $M$ is large, we can see $P_{11}<\varrho/5$, similarly $P_{12}<\varrho/5.$ Hence $P_1<2\varrho/5$.

For $P_2$, recall we assume $\delta_i>0$ and $ET(k;s_i,e_i)\geq 0$ for all $k$.  On the event $\{T(k;s_i,e_i)+T(k_i;s_i,e_i)\leq0\}$, we have 
$$
T(k;s_i,e_i)-ET(k;s_i,e_i)+T(k_i;s_i,e_i)-ET(k_i;s_i,e_i)\leq -ET(k_i;s_i,e_i),
$$
which implies either $T(k;s_i,e_i)-ET(k;s_i,e_i)\leq -ET(k_i;s_i,e_i)/2$ or 
$T(k_i;s_i,e_i)-ET(k_i;s_i,e_i)\leq -ET(k_i;s_i,e_i)/2$. Hence
\begin{flalign*}
&\left \{\min_{k\in D_i(M)}T(k;s_i,e_i)+T(k_i;s_i,e_i)\leq 0 \right\}\\\subset& \left\{\min_{k\in D_i(M)}\Big[T(k;s_i,e_i)-ET(k;s_i,e_i)\Big]\leq -ET(k_i;s_i,e_i)/2\right\}\cup \Big\{T(k_i;s_i,e_i)-ET(k_i;s_i,e_i)\leq -ET(k_i;s_i,e_i)/2\Big\}\\\subset&
\left\{\max_{k\in D_i(M)}\Big|T(k;s_i,e_i)-ET(k;s_i,e_i)\Big|\geq ET(k_i;s_i,e_i)/2\right\}\cup \Big\{\Big|T(k_i;s_i,e_i)-ET(k_i;s_i,e_i)\Big|\geq ET(k_i;s_i,e_i)/2\Big\}.
\end{flalign*}
Thus, we have 
\begin{flalign*}
P_2\leq &2P\left(\bigcup_{[s_i,e_i]\in B_i}\left\{\max_{k\in D_i(M)}\left|T(k;s_i,e_i)-ET(k;s_i,e_i)\right|\geq ET(k_i;s_i,e_i)/2\right\}\right)\\= &
2P\left(\bigcup_{[s_i,e_i]\in B_i}\left\{\max_{k\in D_i(M)}\left|\frac{b(k;s_i,e_i)}{k-s_i+1}\sum_{t=s_i}^{k}X_t-\frac{b(k;s_i,e_i)}{e_i-k}\sum_{t=k+1}^{e_i}X_t\right|\geq ET(k_i;s_i,e_i)/2 \right\}
\right)\\\leq &
2P\left(\bigcup_{[s_i,e_i]\in B_i}\left\{\max_{k\in D_i(M)}\left|\frac{b(k_i;s_i,e_i)}{k-s_i+1}\sum_{t=s_i}^{k}X_t\right|\geq C\delta_i \right\}
\right)\\&+2P\left(\bigcup_{[s_i,e_i]\in B_i}\left\{\max_{k\in D_i(M)}\left|\frac{b(k;s_i,e_i)}{e_i-k}\sum_{t=k+1}^{e_i}X_t\right|\geq C\delta_i \right\}
\right)
\\\leq &2P\left(\bigcup_{[s_i,e_i]\in B_i}\left\{\max_{k_{i-1}+\varepsilon n /2 \leq k\leq k_{i+1}} \frac{1}{k-s_i+1}\left|\sum_{t=s_i}^{k}X_t\right|\geq C\delta_i\right\}\right)\\&+2P\left(\bigcup_{[s_i,e_i]\in B_i}\left\{\max_{k_{i-1}<k\leq k_{i+1}-\varepsilon n/2} \frac{1}{e_i-k}\left|\sum_{t=k+1}^{e_i}X_t\right|\geq C\delta_i\right\}\right)\\:=&2[P_{21}+P_{22}],
\end{flalign*}
where the second inequality holds by noting $ET[k_i;s_i,e_i]=\frac{(e_i-k_i)^{1/2}(k_i-s_i+1)^{1/2}}{e_i-s_i+1}\delta_i\geq 4C\delta_i$ for some constant $C$, and the third inequality holds by $0<b(k;s_i,e_i)\leq 1$. 

We only deal with $P_{21}$ as $P_{22}$ is similar. Note that for $[s_i,e_i]\in B_i,$  $k-k_{i-1}\leq k-s_i+1$. Hence 
\begin{flalign*}
P_{21}\leq &
P\left(\max_{k_{i-1} <s_i\leq k_{i-1}+2\widetilde{\iota}_n}\max_{k_{i-1}+\varepsilon n/2\leq k} \frac{1}{k-k_{i-1}}\left|\sum_{t=s_i}^{k}X_t\right|\geq C\delta_i\right)\\\leq &
P\left(\max_{k_{i-1}+\varepsilon n/2\leq k} \frac{1}{k-k_{i-1}}\left|\sum_{t=k_{i-1}+1}^{k}X_t\right|\geq C\delta_i/2\right)\\&+P\left(\max_{k_{i-1} <s_i\leq k_{i-1}+2\widetilde{\iota}_n} \frac{2}{\varepsilon n}\left|\sum_{t=k_{i-1}+1}^{s_{i}-1}X_t\right|\geq C\delta_i/2\right)\\\leq &
C(\phi,X)\left[\frac{4(\varepsilon n/2)^{\phi-2}}{C^2\delta_i^2}+\frac{16(2\widetilde{\iota}_n)^\phi}{C^2\delta_i^2n^2\varepsilon^2}\right]=Cn^{\phi-2}\delta_i^{-2}<\varrho/10,
\end{flalign*}
where the second inequality holds by noting 
$\left|\sum_{t=s_i}^{k}X_t\right|\leq \left|\sum_{t=k_{i-1}+1}^{k}X_t\right|+\left|\sum_{t=k_{i-1}+1}^{s_i-1}X_t\right|$ and that $k-k_{i-1}>\varepsilon n/2$, and the last inequality holds by Lemma \ref{lem_HRin} and that $\widetilde{\iota}_n<n$. 
Similarly, we have $P_{22}<\varrho/10$, hence $P_2<2\varrho/5$. This completes the proof.
\end{proof}
\newpage






\begin{spacing}{1.14}
\bibliographystyle{apalike}
\bibliography{reference}

\begin{thebibliography}{}

\bibitem[Andrews, 1991]{andrew1991}
Andrews, D. W.~K. (1991).
\newblock Heteroskedasticity and autocorrelation consistent covariance matrix
  estimation.
\newblock {\em Econometrica}, 59(3):817--858.

\bibitem[Andrews, 1993]{andrews1993}
Andrews, D. W.~K. (1993).
\newblock Tests for parameter instability and structural change with unknown
  change point.
\newblock {\em Econometrica}, 61(4):821--856.

\bibitem[Aue et~al., 2014]{Aue2014}
Aue, A., Cheung, R. C.~Y., Lee, T.~C., and Zhong, M. (2014).
\newblock Segmented model selection in quantile regression using the minimum
  description length principle.
\newblock {\em Journal of the American Statistical Association},
  109(507):1241--1256.

\bibitem[Aue and Horv\'{a}th, 2013]{aue:13}
Aue, A. and Horv\'{a}th, L. (2013).
\newblock Structural breaks in time series.
\newblock {\em Journal of Time Series Analysis}, 34(1):1--16.

\bibitem[Aue et~al., 2009]{Aue2009}
Aue, A., Hörmann, S., Horváth, L., and Reimherr, M. (2009).
\newblock Break detection in the covariance structure of multivariate time
  series models.
\newblock {\em Annals of Statistics}, 37(6B):4046--4087.

\bibitem[Bai, 1994]{bai1994}
Bai, J. (1994).
\newblock Least squares estimation of a shift in linear processes.
\newblock {\em Journal of Time Series Analysis}, 15(5):453--472.

\bibitem[Bai and Perron, 1998]{Bai1998}
Bai, J. and Perron, P. (1998).
\newblock Estimating and testing linear models with multiple structural
  changes.
\newblock {\em Econometrica}, 66(1):47--78.

\bibitem[Bai and Perron, 2003]{Bai2003}
Bai, J. and Perron, P. (2003).
\newblock Computation and analysis of multiple structural change models.
\newblock {\em Journal of Applied Econometrics}, 18(1):1--22.

\bibitem[Baranowski et~al., 2019]{baranowski2019narrowest}
Baranowski, R., Chen, Y., and Fryzlewicz, P. (2019).
\newblock Narrowest-over-threshold detection of multiple change points and
  change-point-like features.
\newblock {\em Journal of the Royal Statistical Society: Series B},
  81(3):649--672.

\bibitem[Betken and Wendler, 2018]{Betken2018}
Betken, A. and Wendler, M. (2018).
\newblock Subsampling for general statistics under long range dependence with
  application to change point analysis.
\newblock {\em Statistica Sinica}, 28(3):1199--1224.

\bibitem[Bhattacharya et~al., 1978]{bhattacharya1978validity}
Bhattacharya, R.~N., Ghosh, J.~K., et~al. (1978).
\newblock On the validity of the formal edgeworth expansion.
\newblock {\em Annals of Statistics}, 6(2):434--451.

\bibitem[Bibinger et~al., 2017]{Bibinger2017}
Bibinger, M., Jirak, M., and Vetter, M. (2017).
\newblock Nonparametric change-point analysis of volatility.
\newblock {\em The Annals of Statistics}, 45(4):1542--1578.

\bibitem[Billingsley, 1968]{Billingsley1968}
Billingsley, P. (1968).
\newblock {\em Convergence of Probability Measures}.
\newblock John Wiley \& Sons.

\bibitem[Brodsky and Darkhovsky, 2013]{brodsky2013nonparametric}
Brodsky, E. and Darkhovsky, B.~S. (2013).
\newblock {\em {Nonparametric Methods in Change Point Problems}}.
\newblock Springer Netherlands.

\bibitem[Casini et~al., 2021]{Casini2021theory}
Casini, A., Deng, T., and Perron, P. (2021).
\newblock Theory of low frequency contamination from nonstationarity and
  misspecification: Consequences for {HAR} inference.
\newblock {\em arXiv preprint arXiv:2103.01604}.

\bibitem[Casini and Perron, 2019]{Casini2019structural}
Casini, A. and Perron, P. (2019).
\newblock Structural breaks in time series.
\newblock In {\em Oxford Research Encyclopedia of Economics and Finance}.

\bibitem[Casini and Perron, 2021a]{Casini2021change}
Casini, A. and Perron, P. (2021a).
\newblock Change-point analysis of time series with evolutionary spectra.
\newblock {\em arXiv preprint arXiv:2106.02031}.

\bibitem[Casini and Perron, 2021b]{casini2021minimax}
Casini, A. and Perron, P. (2021b).
\newblock Minimax {MSE} bounds and nonlinear {VAR} prewhitening for long-run
  variance estimation under nonstationarity.
\newblock {\em arXiv preprint arXiv:2103.02235}.

\bibitem[Chan and Walther, 2013]{Chan2013}
Chan, H.~P. and Walther, G. (2013).
\newblock Detection with the scan and the average likelihood ratio.
\newblock {\em Statistica Sinica}, 23(1):409--428.

\bibitem[Chan et~al., 2021]{Chan2021}
Chan, N.~H., Ng, W.~L., and Yau, C.~Y. (2021).
\newblock A self-normalized approach to sequential change-point detection for
  time series.
\newblock {\em Statistica Sinica}, 31(1):491--517.

\bibitem[Cho and Fryzlewicz, 2012]{Cho2012}
Cho, H. and Fryzlewicz, P. (2012).
\newblock Multiscale and multilevel technique for consistent segmentation of
  nonstationary time series.
\newblock {\em Statistica Sinica}, 22(1):207--229.

\bibitem[Crainiceanu and Vogelsang, 2007]{cv2007}
Crainiceanu, C. and Vogelsang, T. (2007).
\newblock Nonmonotonic power for tests of mean shift in a time series.
\newblock {\em Journal of Statistical Computation and Simulation},
  77(6):457--476.

\bibitem[Csörgő and Horváth, 1997]{Csoergoe1997}
Csörgő, M. and Horváth, L. (1997).
\newblock {\em Limit Theorems in Change-Point Analysis}.
\newblock Wiley Series in Probability and Statistics. Wiley.

\bibitem[Davis et~al., 2006]{Davis2006}
Davis, R., Lee, T. C.~M., and Rodriguez-Yam, G. (2006).
\newblock Structural break estimation for nonstationary time series models.
\newblock {\em Journal of the American Statistical Association},
  101(473):223--239.

\bibitem[Dette and G\"osmann, 2020]{dette2020a}
Dette, H. and G\"osmann, J. (2020).
\newblock A likelihood ratio approach to sequential change point detection.
\newblock {\em Journal of the American Statistical Association},
  115(531):1361--1377.

\bibitem[Dette et~al., 2020]{Dette2020b}
Dette, H., Kokot, K., and Volgushev, S. (2020).
\newblock Testing relevant hypotheses in functional time series via
  self-normalization.
\newblock {\em Journal of Royal Statistical Society: Series B}, 82(3):629--660.

\bibitem[Eichinger and Kirch, 2018]{Eichinger2018}
Eichinger, B. and Kirch, C. (2018).
\newblock A {MOSUM} procedure for the estimation of multiple random change
  points.
\newblock {\em Bernoulli}, 24(1):526--564.

\bibitem[Elsner et~al., 2008]{Elsner2008}
Elsner, J.~B., Kossin, J.~P., and Jagger, T.~H. (2008).
\newblock Increasing intensity of the strongest tropical cyclones.
\newblock {\em Nature}, 455:92–95.

\bibitem[Embrechts et~al., 1997]{Embrechts1997}
Embrechts, P., Klüppelberg, C., and Mikosch, T. (1997).
\newblock {\em Modelling Extremal Events for Insurance and Finance}.
\newblock Springer-Verlag Berlin Heidelberg.

\bibitem[Frick et~al., 2014]{Frick2014}
Frick, K., Munk, A., and Sieling, H. (2014).
\newblock Multiscale change point inference.
\newblock {\em Journal of the Royal Statistical Society: Series B},
  76(3):495--580.

\bibitem[Fryzlewicz, 2014]{Fryzlewicz2014}
Fryzlewicz, P. (2014).
\newblock Wild binary segmentation for multiple change-point detection.
\newblock {\em Annals of Statistics}, 42(6):2243--2281.

\bibitem[Fryzlewicz, 2020]{Fryzlewicz2020}
Fryzlewicz, P. (2020).
\newblock Detecting possibly frequent change-points: Wild binary segmentation 2
  and steepest-drop model selection.
\newblock {\em Journal of the Korean Statistical Society (with discussion)},
  49(4):1027--1070.

\bibitem[Fryzlewicz and Subba-Rao, 2014]{Fryzlewicz2014a}
Fryzlewicz, P. and Subba-Rao, S. (2014).
\newblock Multiple‐change‐point detection for auto‐regressive conditional
  heteroscedastic processes.
\newblock {\em Journal of the Royal Statistical Society: Series B},
  76(5):903--924.

\bibitem[Galeano and Wied, 2017]{Galeano2017}
Galeano, P. and Wied, D. (2017).
\newblock Dating multiple change points in the correlation matrix.
\newblock {\em Test}, 26(2):331--352.

\bibitem[Hall, 2013]{hall2013bootstrap}
Hall, P. (2013).
\newblock {\em {The Bootstrap and Edgeworth Expansion}}.
\newblock Springer-Verlag New York.

\bibitem[Hampel et~al., 1986]{Hampel1986}
Hampel, F.~R., Ronchetti, E.~M., Rousseeuw, P.~J., and Stahel, W.~A. (1986).
\newblock {\em Robust Statistics: The Approach Based on Influence Functions}.
\newblock John Wiley, New York.

\bibitem[Harchaoui and L{\'e}vy-Leduc, 2010]{harchaoui2010multiple}
Harchaoui, Z. and L{\'e}vy-Leduc, C. (2010).
\newblock Multiple change-point estimation with a total variation penalty.
\newblock {\em Journal of the American Statistical Association},
  105(492):1480--1493.

\bibitem[Hoga, 2018]{Hoga2018}
Hoga, Y. (2018).
\newblock A structural break test for extremal dependence in $\beta$-mixing
  random vectors.
\newblock {\em Biometrika}, 105(3):627–643.

\bibitem[Jiang et~al., 2020]{jiang2020time}
Jiang, F., Zhao, Z., and Shao, X. (2020).
\newblock Time series analysis of {COVID-19} infection curve: A change-point
  perspective.
\newblock {\em Journal of Econometrics, to appear}.

\bibitem[Jiang et~al., 2022]{jiang2022modelling}
Jiang, F., Zhao, Z., and Shao, X. (2022).
\newblock Modelling the {COVID-19} infection trajectory: A piecewise linear
  quantile trend model.
\newblock {\em Journal of the Royal Statistical Society: Series B, to appear}.

\bibitem[Killick et~al., 2012]{Killick2012}
Killick, R., Fearnhead, P., and Eckley, I. (2012).
\newblock Optimal detection of change-points with a linear computational cost.
\newblock {\em Journal of the American Statistical Association},
  107(500):1590--1598.

\bibitem[Korkas and Fryzlewicz, 2017]{Korkas2017}
Korkas, K.~K. and Fryzlewicz, P. (2017).
\newblock Multiple change-point detection for non-stationary time series using
  wild binary segmentation.
\newblock {\em Statistica Sinica}, 27(1):287--311.

\bibitem[Kovacs et~al., 2020]{Kovacs2020}
Kovacs, S., Li, H., B\"uhlmann, P., and Munk, A. (2020).
\newblock A seeded binary segmentation: A general methodology for fast and
  optimal change point detection.
\newblock {\em arxiv: https://arxiv.org/abs/2002.06633}.

\bibitem[K\"{u}nsch, 1989]{Kunsch1989}
K\"{u}nsch, H.~R. (1989).
\newblock The jackknife and the bootstrap for general stationary observations.
\newblock {\em Annals of Statistics}, 17(3):1217--1241.

\bibitem[Lavielle and Moulines, 2000]{lavielle2000least}
Lavielle, M. and Moulines, E. (2000).
\newblock Least-squares estimation of an unknown number of shifts in a time
  series.
\newblock {\em Journal of Time Series Analysis}, 21(1):33--59.

\bibitem[Longin and Solnik, 2002]{Longin2002}
Longin, F. and Solnik, B. (2002).
\newblock Extreme correlation of international equity markets.
\newblock {\em Journal of Finance}, 56(2):649--676.

\bibitem[Matteson and James, 2014]{Matteson2014}
Matteson, D. and James, N. (2014).
\newblock A nonparametric approach for multiple change-point analysis of
  multivariate data.
\newblock {\em Journal of the American Statistical Association},
  109(505):334--345.

\bibitem[Merlev{\`e}de et~al., 2009]{merlevede2009bernstein}
Merlev{\`e}de, F., Peligrad, M., and Rio, E. (2009).
\newblock Bernstein inequality and moderate deviations under strong mixing
  conditions.
\newblock In {\em High dimensional probability V: the Luminy volume}, pages
  273--292. Institute of Mathematical Statistics.

\bibitem[Morey and Agresti, 1984]{Morey1984}
Morey, L.~C. and Agresti, A. (1984).
\newblock The measurement of classification agreement: An adjustment to the
  rand statistic for chance agreement.
\newblock {\em Educational and Psychological Measurement}, 44(1):33–37.

\bibitem[{National Research Council}, 2013]{Council2013}
{National Research Council} (2013).
\newblock {\em Frontiers in Massive Data Analysis}.
\newblock The National Academies Press, Washington, DC.

\bibitem[Niu et~al., 2016]{Niu2016}
Niu, Y.~S., Hao, N., and Zhang, H. (2016).
\newblock Multiple change-point detection: a selective overview.
\newblock {\em Statistical Science}, 31(4):611--623.

\bibitem[Niu and Zhang, 2012]{Niu2012}
Niu, Y.~S. and Zhang, H. (2012).
\newblock The screening and ranking algorithm to detect {DNA} copy number
  variations.
\newblock {\em Annals of Applied Statistics}, 6(3):1306--1326.

\bibitem[Oka and Qu, 2011]{Oka2011}
Oka, T. and Qu, Z. (2011).
\newblock Estimating structural changes in regression quantiles.
\newblock {\em Journal of Econometrics}, 162(2):248--267.

\bibitem[Olshen et~al., 2004]{Olshen2004}
Olshen, A.~B., Venkatraman, S., Lucito, R., and Wigler, M. (2004).
\newblock Circular binary segmentation for the analysis of array-based {DNA}
  copy number data.
\newblock {\em Biostatistics}, 5(4):557--572.

\bibitem[Phillips, 1987]{phillips1987time}
Phillips, P.~C. (1987).
\newblock Time series regression with a unit root.
\newblock {\em Econometrica}, 55(2):277--301.

\bibitem[Pires and Branco, 2002]{pires2002partial}
Pires, A.~M. and Branco, J.~A. (2002).
\newblock Partial influence functions.
\newblock {\em Journal of Multivariate Analysis}, 83(2):451--468.

\bibitem[Poon et~al., 2004]{Poon2004}
Poon, S.-H., Rockinger, M., and Tawn, J. (2004).
\newblock Extreme value dependence in financial markets: Diagnostics, models,
  and financial implications.
\newblock {\em Review of Financial Studies}, 17(2):581–610.

\bibitem[Preuss et~al., 2015]{Preuss2015}
Preuss, P., Puchstein, R., and Dette, H. (2015).
\newblock Detection of multiple structural breaks in multivariate time series.
\newblock {\em Journal of the American Statistical Association},
  110(510):654--668.

\bibitem[Qu, 2008]{Qu2008}
Qu, Z. (2008).
\newblock Testing for structural change in regression quantiles.
\newblock {\em Journal of Econometrics}, 146:170--184.

\bibitem[Shao, 1995]{Shao1995}
Shao, Q.-M. (1995).
\newblock On a conjecture of révész.
\newblock {\em Proceedings of the American Mathematical Society},
  123(2):575--582.

\bibitem[Shao, 2010]{shao2010self}
Shao, X. (2010).
\newblock A self-normalized approach to confidence interval construction in
  time series.
\newblock {\em Journal of the Royal Statistical Society: Series B},
  72(3):343--366.

\bibitem[Shao, 2015]{shao2015self}
Shao, X. (2015).
\newblock Self-normalization for time series: a review of recent developments.
\newblock {\em Journal of the American Statistical Association},
  110(512):1797--1817.

\bibitem[Shao and Zhang, 2010]{shao2010testing}
Shao, X. and Zhang, X. (2010).
\newblock Testing for change points in time series.
\newblock {\em Journal of the American Statistical Association},
  105(491):1228--1240.

\bibitem[Tartakovsky et~al., 2014]{tartakovsky2014sequential}
Tartakovsky, A., Nikiforov, I., and Basseville, M. (2014).
\newblock {\em {Sequential Analysis: Hypothesis Testing and Change-point
  Detection}}.
\newblock CRC Press.

\bibitem[Tripathi, 1999]{tripathi1999matrix}
Tripathi, G. (1999).
\newblock A matrix extension of the {Cauchy-Schwarz} inequality.
\newblock {\em Economics Letters}, 63(1):1--3.

\bibitem[Truong et~al., 2020]{truong2020}
Truong, C., Oudre, L., and Vayatis, N. (2020).
\newblock Selective review of offline change point detection methods.
\newblock {\em Signal Processing}, 167.

\bibitem[Vanegas et~al., 2021]{Vanegas2020}
Vanegas, L.~J., Behr, M., and Munk, A. (2021).
\newblock Multiscale quantile segmentation.
\newblock {\em Journal of the American Statistical Association, to appear}.

\bibitem[Verzelen et~al., 2020]{Verzelen2020}
Verzelen, N., Fromont, M., Lerasle, M., and Reynaud-Bouret, P. (2020).
\newblock Optimal change-point detection and localization.
\newblock {\em arXiv preprint arXiv:2010.11470}.

\bibitem[Vostrikova, 1981]{vostrikova1981detecting}
Vostrikova, L.~Y. (1981).
\newblock Detecting “disorder” in multidimensional random processes.
\newblock In {\em Doklady Akademii Nauk}, volume 259, pages 270--274. Russian
  Academy of Sciences.

\bibitem[Wang et~al., 2020]{wangEJS2020}
Wang, D., Yu, Y., and Rinaldo, A. (2020).
\newblock Univariate mean change point detection: Penalization, {CUSUM} and
  optimality.
\newblock {\em Electronic Journal of Statistics}, 14(1):1917--1961.

\bibitem[Wied et~al., 2012]{Wied2012}
Wied, D., Krämer, W., and Dehling, H. (2012).
\newblock Testing for a change in correlation at an unknown point in time using
  an extended functional delta method.
\newblock {\em Econometric Theory}, 28(3):570--589.

\bibitem[Wu and Zhou, 2019]{wuzhou2019}
Wu, W. and Zhou, Z. (2019).
\newblock Multiscale jump testing and estimation under complex temporal
  dynamics.
\newblock {\em arXiv preprint arXiv:1909.06307}.

\bibitem[Wu, 2005]{Wu2005}
Wu, W.~B. (2005).
\newblock On the bahadur representation of sample quantiles for dependent
  sequences.
\newblock {\em Annals of Statistics}, 33:1934--1963.

\bibitem[Wu and Zhao, 2007]{wuzhao2007}
Wu, W.~B. and Zhao, Z. (2007).
\newblock Inference of trends in time series.
\newblock {\em Journal of the Royal Statistical Society: Series B},
  69(3):391--410.

\bibitem[Wu and Zhou, 2011]{Wu2011}
Wu, W.~B. and Zhou, Z. (2011).
\newblock Gaussian approximations for non-stationary multiple time series.
\newblock {\em Statistica Sinica}, 21(3):1397--1413.

\bibitem[Yau and Zhao, 2016]{Yau2016}
Yau, C.~Y. and Zhao, Z. (2016).
\newblock Inference for multiple change points in time series via likelihood
  ratio scan statistics.
\newblock {\em Journal of the Royal Statistical Society: Series B},
  78(4):895–916.

\bibitem[Zhang and Lavitas, 2018]{Zhang2018}
Zhang, T. and Lavitas, L. (2018).
\newblock Unsupervised self-normalized change-point testing for time series.
\newblock {\em Journal of the American Statistical Association},
  113(522):637--648.

\bibitem[Zhang and Wu, 2011]{Zhang2011}
Zhang, T. and Wu, W.~B. (2011).
\newblock Testing parametric assumptions of trends of a nonstationary time
  series.
\newblock {\em Biometrika}, 98(3):599--614.

\end{thebibliography}
\end{spacing}

\end{document}